\renewcommand{\kbldelim}{\left(}
\renewcommand{\kbrdelim}{\right)}
\newtheorem{theorem}{Theorem}
\newtheorem{definition}[theorem]{Definition}
\newtheorem{claim}[theorem]{Claim}
\newtheorem{lemma}[theorem]{Lemma}
\newtheorem{corollary}[theorem]{Corollary}
\newtheorem{open}{Open problem}
\newcommand{\qedsymb}{\hfill{\rule{2mm}{2mm}}}
\renewenvironment{proof}[1][]{\begin{trivlist}
\item[\hspace{\labelsep}{\bf\noindent Proof#1:\/}] }{\qedsymb\end{trivlist}}
\def\R{\mathbb{R}}
\def\Q{\mathbb{Q}}
\newcommand{\rank}{\mathop{\mathrm{rank}}}
\newcommand{\Rbool}{{\rank}_\mathbb{B}}
\newcommand{\Rbin}{{\rank}_{\mathrm{bin}}}
\newcommand{\Rreal}{{\rank}_\mathbb{R}}
\newcommand{\Rnon}{{\rank}_{+}}
\renewcommand{\kbldelim}{\left(}
\renewcommand{\kbrdelim}{\right)}
\newcommand{\cellnode}[2][]{%
  \tikz[remember picture,overlay,baseline=(#2.base)]%
    \node[inner sep=0pt,outer sep=0pt,minimum size=0pt,#1] (#2) {};%
}
\providecommand{\keywords}[1]{\textbf{\textit{Key Words---}} #1}
\begin{document}

\title{Mathematical and computational perspectives on the Boolean and binary rank and their relation to the real rank }
\author{Michal Parnas\thanks{michalp@mta.ac.il.} \\
School of Computer Science \\
   The Academic College of Tel-Aviv-Yaffo \\
}
\date{}

\maketitle

\begin{abstract}
This survey provides a comprehensive overview of the study of the binary and Boolean rank from both a mathematical and a computational perspective, with particular
emphasis on their relationship to the real rank.
We review the basic definitions of these rank functions and present the main alternative formulations of the binary and Boolean rank,
together with their computational complexity and their deep connection to the field of communication complexity.
We summarize key techniques used to establish lower and upper bounds on the binary and Boolean rank,
including  methods from linear algebra, combinatorics and graph theory, isolation sets,
the probabilistic method, kernelization, communication protocols and the query to communication lifting technique.
Furthermore, we highlight the main mathematical properties of these ranks in comparison with those of the real rank,
and discuss several non-trivial bounds on the rank of specific families of matrices.
Finally, we present algorithmic approaches for computing and approximating these rank functions,
such as parameterized algorithms, approximation algorithms, property testing and approximate Boolean matrix factorization (BMF).
Together, the results presented outline the current theoretical knowledge in this area and suggest directions for further research.

\keywords{Real rank, Boolean rank,  Binary rank,   Biclique cover number,  Biclique partition number.}

\end{abstract}

\newpage

\tableofcontents
\newpage

\section{Introduction}

The {\em rank} of a matrix over the real numbers is a fundamental concept in mathematics which originated from the area of linear algebra, and in particular from the study of linear equations.
The solution of linear equations is a practical problem which was studied already by Egyptian mathematicians as early as 1650 B.C.,
when they considered problems which can be described as a linear equation with one unknown.
There is also evidence from about 300 B.C. of the ability to solve equations with two unknowns.
But it was only in the 19'th century that the theory of matrices evolved into its modern form.
Works by Cauchy, Cayley, Sylvester, Frobenius, Hamilton and other mathematicians from that period, formalized the notion of a matrix, the operations which can be performed on it,
its properties, the concepts of vectors and determinants, as well as defined the rank of a matrix (see a survey by Corry~\cite{Corry}).

Since then the rank of a matrix found numerous applications, which are not related to linear equations, for measuring algebraic and combinatorial properties in mathematics and in computer science.
For example, let $G$ be a directed graph with $n$ vertices and $c$ connected components whose incidence matrix is $M$. Then it can be shown that the rank of $M$ is equal to $n-c$.
See a book by Bapat~\cite{bapat2010graphs} for a proof of  this claim and other interesting connections between the rank and graph theory.
The rank is also a crucial measure in fields such as communication complexity, information theory, learning algorithms,  error correcting codes, circuit complexity and more.

Furthermore, there are variations of the rank concept with interesting connections between them,
such as the term rank studied by Ryser~\cite{ryser1958term}, the min-rank defined by Haemers~\cite{haemers1979some},
the sign rank (see~\cite{razborov2010sign} and references within), the approximate $\epsilon$-rank studied by Alon, Lee, Shraibman and Vempala~\cite{alon2013approximate},
the rank of a spanoid introduced by  Dvir, Gopi, Gu and Wigderson~\cite{ dvir2020spanoids},
and recently the $\pm$-rank of $0,\pm 1$ matrices defined by Brualdi and Dahl~\cite{BRUALDI2026318}.

The rank of a matrix $M$ over the real numbers, $\R$, is usually defined as the maximal number of linearly independent rows or columns of $M$.
To introduce the topic of this survey we consider the following equivalent definition of the real rank, which will be denoted here by $\Rreal(M)$.
Given an $n\times m$ matrix $M$ over $\R$, its rank, $\Rreal(M)$, is the minimal integer $d$ for which there exist real matrices $A$ and $B$ of size $n \times d$ and $d \times m$,
respectively, such that $M = A \cdot B$, where the operations are over $\R$.

In a similar way, consider the following rank functions defined over a semi-ring (see e.g. Gregory and Pullman~\cite{GregoryPullman}):
\begin{itemize}
\item
The {\em non-negative rank} of a non-negative matrix $M$  of size $n\times m$, denoted by $\Rnon(M)$,
is the minimal $d$ for which there exist non-negative matrices $A$ and $B$ of size $n \times d$ and $d \times m$ respectively,
such that $M = A \cdot B$, where the operations are over the reals.
\item The {\em binary rank} of a $0,1$ matrix $M$ of size $n\times m$, denoted by $\Rbin(M)$, is
the minimal $d$ for which there exist $0,1$ matrices $A$ and $B$ of size $n \times d$ and $d \times m$ respectively, such that $M = A \cdot B$, where the operations are over the integers.
\item The {\em Boolean rank} of a $0,1$ matrix  $M$, denoted by $\Rbool(M)$, is defined similarly to the binary rank, but here the operations are under Boolean arithmetic
  (namely, $0+x=x+0=x$, $1+1=1 \cdot 1 = 1$ and $x \cdot 0 = 0 \cdot x = 0$).
\end{itemize}
Such decompositions $M=A \cdot B$, for $A$ and $B$ of size $n \times d$ and $d \times m$, where $d$ is equal to the specific rank studied,
are called {\em optimal} for the given rank.
In this survey we focus on the last two rank functions: the Boolean and the binary rank and their connection to the real rank, but the techniques
and results presented may be of general interest and importance to the study of other rank functions.

As to the relationship between these three rank functions on a given $0,1$ matrix $M$: by definition, it always holds that
$$\Rreal(M) \leq \Rbin(M) \;\; \text{and} \;\; \Rbool(M) \leq \Rbin(M),$$
whereas $\Rbool(M)$ can be smaller or larger than $\Rreal(M)$.
The identity matrix is a trivial example where all three ranks are equal,
and the binary and Boolean rank are always equal to the real rank for real rank $1$ or $2$.
However, there exist families of matrices for which the Boolean rank is exponentially smaller than the binary and real rank,
whereas, for other families, the Boolean and binary rank are quasi polynomially larger than the real rank.
Tight bounds on the possible gaps between the three rank functions are yet to be found.

As we show in this survey, the Boolean and binary rank have several equivalent formulations,
each of which highlights a different way of interpreting the rank of the matrix.
These formulations include covering or partitioning the ones of a $0,1$ matrix by monochromatic rectangles,
biclique partition or cover of the edges of a bipartite graph, the intersection number of a matrix, the set basis problem and communication complexity.
These different viewpoints reveal rich connections to classic fields of research in mathematics, such as graph theory, algebra and combinatorics,
as well as to many areas of research in computer science, both theoretical and practical, such as
communication complexity, clustering and tiling databases, data mining, machine learning, bioinformatics and more.

These connections show that questions about rank functions often reappear in seemingly unrelated areas of research.
Thus, understanding these equivalences provides a common language for comparing techniques and transferring insights across fields of mathematics and computer science.
We will show that the methods developed by the different communities complement one another, and when used together can resolve problems which were open for many years.
See also a survey of Miettinen and Neuman from 2021 on the various applications of the Boolean and binary rank in computer science~\cite{miettinen2021recent},
as well as a survey of Monson, Pullman and Rees~\cite{Monson} from 1995 on the mathematical applications and variations of these rank functions.

The real rank of a matrix, or the rank over any field for that matter, is well understood.
Many powerful techniques, mostly from linear algebra, were devised over the years and led to combinatorial and computational applications involving the real rank.
In particular, it is easy to compute the real rank of a matrix in polynomial time using Gaussian elimination.
As to the Boolean and binary rank, although they were studied extensively, both by mathematicians and by computer scientists,
their behaviour is far less understood, mainly because they are not defined over a field.
In particular, the Boolean and binary rank are $NP$-hard to compute and the binary and Boolean rank of only a few concrete families of $0,1$ matrices has been computed precisely.

This survey presents mathematical properties of these rank functions and their connection to the real rank,
along with relevant computational complexity and algorithmic results.
As the scope of this survey is limited, it cannot include all known results.
Some results are reformulated for clarity or emphasis, while others appear here for the first time.
We include proofs, as well as lower and upper bound techniques, to demonstrate the wide variety of methods used.
The proofs were selected for their elegance or significance, and some proofs are presented, although simple,
since the corresponding original proof or paper cannot be found online (for example~\cite{ Monson} and~\cite{Caen2}).
Another goal of this survey is to highlight the diverse results and methods which emerged independently across different fields of  mathematics and computer science.
Some of the results were independently rediscovered due to their publication in separate research communities.
A unified survey can, therefore, assist those interested in this topic to gain a comprehensive understanding of the progress made to date.

\paragraph{Overview:}
In Section~\ref{Sec-notation} we present the needed notation.
Section~\ref{Sec:equivalent} describes the main alternative formulations of the binary and Boolean rank, thus, demonstrating their importance.
In Section~\ref{Sec-complexity} we discuss the computational complexity of the Boolean and binary rank, as well as show their fundamental connection to the field of communication complexity.
Section~\ref{Sec-lower-and-upper} summarizes key lower and upper bound techniques used to give bounds on the binary and Boolean rank.
These techniques include linear algebra, combinatorics and graph theory, isolation sets, the probabilistic method, kernelization,
communication protocols and the query to communication lifting technique.
In Section~\ref{Sec-properties} we consider mathematical properties of the binary and Boolean rank in comparison with those of the real rank,
as well as show non-trivial bounds on the binary and Boolean rank of specific families of matrices.
%In Section~\ref{Sec-families} we show that for certain families of matrices, it is possible to get non-trivial bounds on their binary and/or Boolean rank.
In Section~\ref{Sec-algorithms} we survey algorithmic directions taken to compute or approximate these rank functions,
including parameterized algorithms, approximation algorithms, property testing and approximate Boolean matrix factorization (BMF).
Finally, in Section~\ref{sec-discussion} we summarize some of the open problems suggested throughout the survey.

\section{Notation}
\label{Sec-notation}
Most matrices in this survey are $0,1$ matrices, unless stated otherwise.
The element on the $i$'th row and $j$'th column of a matrix $M$ is denoted by $M_{i,j}$.
The transpose of a matrix $M$ is denoted by $M^t$, and similarly for a vector.
Given a $0,1$ matrix $M$, its {\em complement} $\overline{M}$ is the matrix resulting by replacing each $0$ of $M$ with a $1$ and each $1$ with a $0$.
A matrix is called {\em circulant} if each row is a cyclic shift of the row that precedes it by one element to the right.

Most graphs in this survey are bipartite graphs $G = (V_1,V_2,E)$, where $V_1$ is a set of $n$ vertices on one side and $V_2$ is a set of $m$ vertices on the other side,
and $E$ is the set of edges between $V_1$ and $V_2$.
The {\em reduced  adjacency matrix} of $G$ is a $0,1$ matrix $M$ of size $n \times m$,
such that there is a row for each vertex in $V_1$ and a column for each vertex in $V_2$, and $M_{i,j} = 1$ if and only if $(i,j) \in E$.
In this case $M$ is also called a {\em bi-adjacency matrix}.

We consider in some cases general graphs $G = (V,E)$, where $V$ is a set of $n$ vertices and $E$ the set of edges.
The  {\em adjacency matrix} of  $G$  is a $0,1$ matrix $M$ of size $n \times n$, such that
$M_{i,j} = 1$ if and only if $(i,j) \in E$.

The {\em chromatic number} $\chi(G)$ of a graph $G = (V,E)$ is the minimum number of colors required to color all vertices in $V$, such that if $(i,j) \in E$  then $i$ and $j$ have different colors.

The notations $\widetilde{O}$ and $\widetilde{\Omega}$ are used to hide poly-logarithmic factors in the arguments of $O$ and $\Omega$.

\section{Alternative Formulations}
\label{Sec:equivalent}

The broad interest in the Boolean and binary rank and their applicability stems also from the following alternative formulations,
as well as their deep connection to the field of communication complexity which is described in Section~\ref{subsec-communication}.
See Figure~\ref{fig:Alternative} for an illustration of the  formulations described next.

%Let $M_{i,j}$ denote the element on the $i$'th row and $j$'th column of a matrix $M$.

\vspace{-0.5 cm}
\paragraph{Covering and partitioning with monochromatic rectangles:}  A {\em monochromatic combinatorial rectangle} in a $0,1$ matrix $M$ is a sub-matrix of $M$, all of whose entries have the same value.
The rows and columns of a rectangle do not have to be consecutive in $M$.
The {\em partition number} of $M$, denoted by $Partition_1(M)$, is the minimum number of monochromatic rectangles required to partition the ones of $M$,
and the {\em cover number} of $M$, denoted by $Cover_1(M)$, is the minimum number of monochromatic rectangles required to cover the ones of $M$, where rectangles may overlap.

\begin{lemma}[\cite{Gregory}]
Let $M$ be a $0,1$ matrix. Then $\Rbin(M) = Partition_1(M)$ and $\Rbool(M) = Cover_1(M)$.
\end{lemma}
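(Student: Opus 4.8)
The plan is to establish a direct dictionary between factorizations $M = A\cdot B$ and families of all-ones combinatorial rectangles, handling the binary and Boolean cases in parallel. Given $0,1$ matrices $A$ of size $n\times d$ and $B$ of size $d\times m$, for each $k\in\{1,\dots,d\}$ I would define the row set $R_k=\{i : A_{i,k}=1\}$ and the column set $C_k=\{j : B_{k,j}=1\}$, and let $T_k = R_k\times C_k$ be the associated rectangle. Conversely, from any family of $d$ rectangles $T_k=R_k\times C_k$ one builds $A,B$ by setting $A_{i,k}=1$ iff $i\in R_k$ and $B_{k,j}=1$ iff $j\in C_k$. These two operations are mutually inverse, so it suffices to translate the defining equation $M=A\cdot B$ into a statement about the $T_k$.

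First I would observe that, in both arithmetics, the $(i,j)$ entry of $A\cdot B$ depends exactly on the indices $k$ with $i\in R_k$ and $j\in C_k$, i.e. on the rectangles $T_k$ that contain the cell $(i,j)$. In Boolean arithmetic $(A\cdot B)_{i,j}=1$ iff at least one such $k$ exists, so $M=A\cdot B$ holds iff every one-cell of $M$ lies in some $T_k$ and no zero-cell lies in any $T_k$. The latter condition says precisely that each $T_k$ is an all-ones (monochromatic) rectangle, and the former says the $T_k$ cover the ones of $M$. Hence factorizations of $M$ under Boolean arithmetic with inner dimension $d$ correspond exactly to covers of the ones of $M$ by $d$ monochromatic rectangles, which gives $\Rbool(M)=Cover_1(M)$.

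For the binary rank the same correspondence applies, but now $(A\cdot B)_{i,j}=\sum_{k} A_{i,k}B_{k,j}$ counts the number of rectangles $T_k$ containing $(i,j)$. Since $M$ is a $0,1$ matrix, $M=A\cdot B$ forces this count to be $0$ on zero-cells and exactly $1$ on one-cells: again each $T_k$ must be all-ones (a cell with count at least $1$ is a one-cell), and now the $T_k$ not only cover but partition the ones of $M$. Conversely, any partition of the ones of $M$ into $d$ all-ones rectangles yields, via the construction above, a product whose $(i,j)$ entry equals the number of parts containing $(i,j)$, which is exactly $M_{i,j}$. Thus $\Rbin(M)=Partition_1(M)$.

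The only point requiring care — and the closest thing to an obstacle — is the implication that each $T_k$ is monochromatic: this uses crucially that $M$ is $0,1$-valued, so that a positive contribution $A_{i,k}B_{k,j}=1$ to entry $(i,j)$ cannot be cancelled by other terms and forces $M_{i,j}=1$. Once this is in place, the difference between the Boolean and binary statements reduces to the difference between ``at least one'' and ``exactly one'' covering rectangle, i.e. between a cover and a partition, and no further argument is needed.
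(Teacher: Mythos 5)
Your proof is correct and follows essentially the same route as the paper: the correspondence between rank-one terms $a^t\cdot b$ in a factorization $M=A\cdot B$ and all-ones rectangles $R_k\times C_k$, with integer arithmetic forcing a partition and Boolean arithmetic giving a cover. You spell out both directions and the monochromaticity check more explicitly than the paper does, but the underlying argument is the same.
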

\begin{proof}
Assume that $M$ is a matrix of size $n \times m$.
Each monochromatic rectangle in $M$ can be defined by a $0,1$ matrix of rank $1$ which is a product of two $0,1$ vectors: $a^t \cdot b$,
where $a^t$ is a column vector of length $n$ and $b$ is a row vector of length $m$.
The lemma follows by noticing that if the partition or cover of the ones of $M$ includes $d$ rectangles then:
$$M = \sum_{ i= 1}^d a^t_i \cdot b_i = A \cdot B,$$
where the vectors $a^t_1,...,a^t_d$ are the columns of a matrix $A$ of size $n \times d$, and $b_1,...,b_d$ are the rows of a matrix $B$
of size $d \times m$,
and the operations are either the regular addition in case of disjoint rectangles, or the Boolean operations, that is $1+1 = 1$, in case of rectangles which may overlap.
The minimality of a partition or cover by rectangles corresponds to the optimality of a decomposition $A \cdot B$.
\end{proof}

\vspace{-0.5 cm}
\paragraph{Biclique edge cover and partition of bipartite graphs:}
The Boolean or binary rank of a $0,1$ matrix $M$ is equal to the minimal number of bicliques needed to cover or partition, respectively, all the
edges of a bipartite graph $G$ whose {\em reduced adjacency matrix} is $M$.
Specifically, if $G$ is a bipartite graph with $n$ and $m$ vertices on each side, then $M$ has $n$ rows and $m$ columns,
and $M_{i,j} = 1$  if and only if $(i,j)$ is an edge of the graph $G$.
A monochromatic combinatorial rectangle of ones in $M$ corresponds to a biclique in $G$ (see Gregory, Pullman, Jones and Lundgren~\cite{Gregory}).

\vspace{-0.5 cm}
\paragraph{Subset intersection:}
The {\em intersection number} of a $0,1$-matrix $M$ is the smallest integer $d$ such that it is possible to assign
subsets  of $ \{1,...,d\}$ to the rows and columns of  $M$,
where if $X_i$ and $Y_j$ are the subsets assigned to row $i$ and column $j$, respectively, then $X_i \cap Y_j \neq \emptyset$ if and only if $M_{i,j} = 1$.
Notice that for each $1 \leq t \leq d$, the set of all entries $M_{i,j}$ for which $t \in X_i \cap Y_j$, defines a monochromatic rectangle of ones.
Since the rectangles defined by the subsets cover all ones of $M$, we get a cover of size $d$, and thus, the intersection number of $M$ coincides with its Boolean rank.
An analogue of the intersection number, where $|X_i \cap Y_j| = 1$ if and only if $M_{i,j} = 1$, and the intersection is empty otherwise,
coincides with the binary rank.

\vspace{-0.5 cm}
\paragraph{Set basis problem:}  Let $\mathcal{S} = \{S_1,S_2,...,S_m\}$ be a family of sets over some universe $U$.
A {\em basis} for $S$ is a family $\mathcal{B} = \{B_1,...,B_d\}$ of sets, such that for each $S_i \in \mathcal{S}$ there exists a subset of sets from $\mathcal{B}$ whose union equals $S_i$.
The goal is to find a basis with minimal cardinality for $\mathcal{S}$. A variation of this problem is the {\em normal set basis problem}.
Here a basis $\mathcal{B}$ for $\mathcal{S}$ is called {\em normal} if for each $S_i \in \mathcal{S}$ there exists a subset of pairwise disjoint sets from $\mathcal{B}$ whose union equals $S_i$,
and the goal is again to find a minimal normal basis for $\mathcal{S}$. See Stockmeyer~\cite{stockmeyer1975set} and Jiang and Ravikumar~\cite{jiang1993minimal}.

Given a $0,1$ matrix $M$ of size $n \times m$ we can define a set basis problem as follows. The universe from which the set elements are chosen is $U = \{1,2,...,n\}$.
Let $\mathcal{S} = \{S_1,...,S_m\}$, where $S_j = \{i_1,...,i_t\}$ if column $j$ of $M$ has ones in rows $i_1,...,i_t$ and zeros elsewhere.
A rectangle cover of size $d$ of the ones of $M$ corresponds to a set basis of size $d$ for $\mathcal{S}$, and a rectangle partition corresponds to a normal set basis for $\mathcal{S}$.
To verify this, let $M = X \times Y$ be a Boolean decomposition of $M$, where $X$ is of size $n \times d$ and $Y$ of size $d \times m$.
The columns of $X$ can be considered as the basis for $\mathcal{S}$, and the rows of $Y$ indicate which subsets of $\mathcal{S}$ are covered by each subset in the basis. That is,
 $Y_{i,j} = 1$ if and only if column $X_i$ is included in the union of the subsets covering $S_j$.
Similarly, if $X\cdot Y$ is a binary decomposition of $M$ then  both the rectangles and the basis subsets are disjoint.

\renewcommand{\kbldelim}{(}% Left delimiter
\renewcommand{\kbrdelim}{)}% Right delimiter

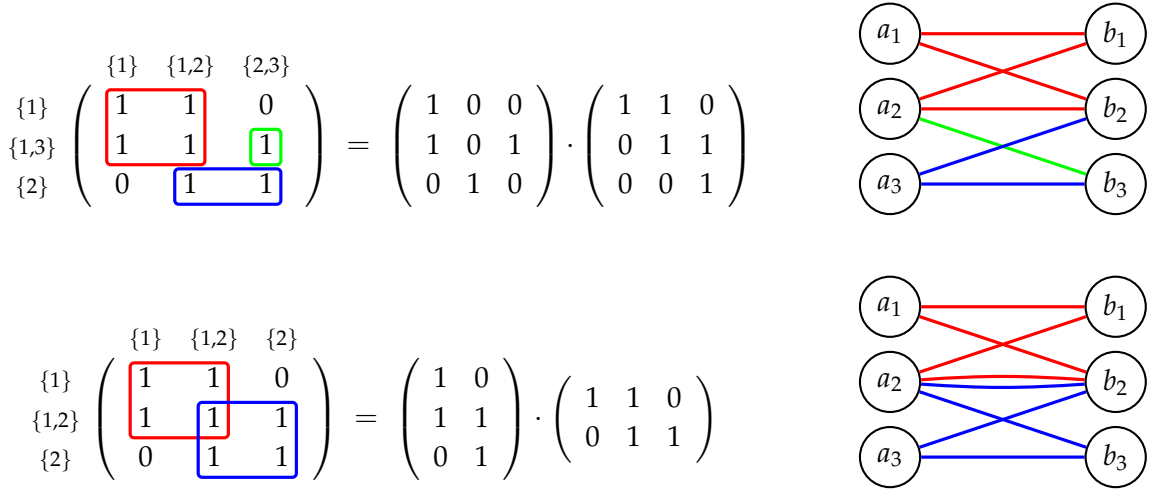
\begin{figure}[htb!]
\centering

% =================== TOP ROW ===================
\begin{minipage}[c]{0.6\textwidth}
\centering
\[
\kbordermatrix{
    & \{1\} & \{1,2\} & \{2,3\} \\
  \{1\}   & \cellnode{c11}1 & \cellnode{c12}1 & 0 \\
  \{1,3\} & \cellnode{c21}1 & \cellnode{c22}1 & \cellnode{gr11}\cellnode{gr12}\cellnode{gr21}\cellnode{gr22}1 \\
  \{2\}   & 0 & \cellnode{g11}\cellnode{g12}1 & \cellnode{g21}\cellnode{g22}1
}
\begin{tikzpicture}[overlay,remember picture]
  % First, create fitted helper nodes (invisible) so we can access corners
  \node[fit=(c11)(c12)(c21)(c22), inner xsep=3pt, inner ysep=3pt] (RedFit) {};
  \node[fit=(g11)(g12)(g21)(g22), inner xsep=3pt, inner ysep=3pt] (BlueFit)  {};
   \node[fit=(gr11)(gr12)(gr21)(gr22), inner xsep=3pt, inner ysep=3pt] (GreenFit)  {};

  % draw rectangles
  \draw[red,very thick,rounded corners=1.5pt]
    ([yshift=+6pt]RedFit.north west)
      rectangle
    ([xshift=+5pt,yshift=-1pt]RedFit.south east);

  \draw[blue,very thick,rounded corners=1.5pt]
    ([yshift=+6pt]BlueFit.north west)
      rectangle
    ([xshift=+5pt,yshift=-1pt]BlueFit.south east);

      \draw[green,very thick,rounded corners=1.5pt]
    ([yshift=+6pt]GreenFit.north west)
      rectangle
    ([xshift=+5pt,yshift=-1pt]GreenFit.south east);

\end{tikzpicture}
\;=\;
\left(
  \begin{array}{ccc}
    1 & 0 & 0 \\
    1 & 0 & 1 \\
    0 & 1 & 0
  \end{array}
\right)
\cdot
\left(
  \begin{array}{ccc}
    1 & 1 & 0 \\
    0 & 1 & 1 \\
    0 & 0 & 1
  \end{array}
\right)
\]

\end{minipage}\hfill
\begin{minipage}[c]{0.38\textwidth}
\centering
\begin{tikzpicture}[thick, every node/.style={circle, draw, minimum size=0.8cm, inner sep=0pt}]
  \node (a1) at (0,2) {$a_1$};
  \node (a2) at (0,1) {$a_2$};
  \node (a3) at (0,0) {$a_3$};

  \node (b1) at (3,2) {$b_1$};
  \node (b2) at (3,1) {$b_2$};
  \node (b3) at (3,0) {$b_3$};

  \draw[red, very thick]   (a1) -- (b1);
  \draw[red, very thick]   (a1) -- (b2);
  \draw[red, very thick]   (a2) -- (b1);
  \draw[red, very thick]   (a2) -- (b2);
  \draw[green, very thick] (a2) -- (b3);
  \draw[blue, very thick]  (a3) -- (b2);
  \draw[blue, very thick]  (a3) -- (b3);
\end{tikzpicture}
\end{minipage}

\vspace{2em}

% =================== BOTTOM ROW (unchanged) ===================
\begin{minipage}[c]{0.6\textwidth}
\centering
\[
\kbordermatrix{
    & \{1\} & \{1,2\} & \{2\} \\
  \{1\}   & \cellnode{b11}1 & \cellnode{b12}1 & 0 \\
  \{1,2\} & \cellnode{b21}1 & \cellnode{b22}\cellnode{r11}1 & \cellnode{r12}1 \\
  \{2\}   & 0 & \cellnode{r21}1 & \cellnode{r22}1
}
\begin{tikzpicture}[overlay,remember picture]
  % First, create fitted helper nodes (invisible) so we can access corners
  \node[fit=(b11)(b12)(b21)(b22), inner xsep=3pt, inner ysep=3pt] (RedFit) {};
  \node[fit=(r11)(r12)(r21)(r22), inner xsep=3pt, inner ysep=3pt] (BlueFit)  {};

  % Draw rectangles
  \draw[red,very thick,rounded corners=1.5pt]
    ([yshift=+6pt]RedFit.north west)
      rectangle
    ([xshift=+5pt,yshift=-1pt]RedFit.south east);

  \draw[blue,very thick,rounded corners=1.5pt]
    ([yshift=+6pt]BlueFit.north west)
      rectangle
    ([xshift=+5pt,yshift=-1pt]BlueFit.south east);
\end{tikzpicture}
\;=\;
\left(
  \begin{array}{cc}
    1 & 0 \\
    1 & 1 \\
    0 & 1 \\
  \end{array}
\right)
\cdot
\left(
  \begin{array}{ccc}
    1 & 1 & 0 \\
    0 & 1 & 1 \\
  \end{array}
\right)
\]
\end{minipage}\hfill
\begin{minipage}[c]{0.38\textwidth}
\centering
\begin{tikzpicture}[thick, every node/.style={circle, draw, minimum size=0.8cm, inner sep=0pt}]
  \node (a1) at (0,2) {$a_1$};
  \node (a2) at (0,1) {$a_2$};
  \node (a3) at (0,0) {$a_3$};
  \node (b1) at (3,2) {$b_1$};
  \node (b2) at (3,1) {$b_2$};
  \node (b3) at (3,0) {$b_3$};
  \draw[red, very thick]               (a1) -- (b1);
  \draw[red, very thick]               (a1) -- (b2);
  \draw[red, very thick]               (a2) -- (b1);
  \draw[blue, very thick]              (a2) -- (b3);
  \draw[blue, very thick]              (a3) -- (b2);
  \draw[blue, very thick]              (a3) -- (b3);
  \draw[red, very thick, bend left=4]  (a2) to (b2);
  \draw[blue,  very thick, bend right=4] (a2) to (b2);
\end{tikzpicture}
\end{minipage}
\caption{An example of a $3 \times 3$ matrix which has  binary rank $3$ and Boolean rank $2$.
The top part of the figure presents an optimal binary decomposition of the matrix, and a partition of the ones into $3$ rectangles.
Next to it is the corresponding bipartite graph with a biclique edge partition of size $3$.
Each biclique has a different color, as that of the corresponding rectangle in the partition.
\newline
The lower part of the figure shows an optimal Boolean decomposition of the matrix, and a cover of the ones by $2$ rectangles.
Next to it is the corresponding bipartite graph with a biclique edge cover of size $2$.
The edge between $a_2$ and $b_2$ in this bipartite graph belongs to both the red and the blue bicliques, and, therefore, is represented twice in the figure.
\newline
Above and to the left of each of the two matrices is a labeling of the rows and columns which corresponds to the subset intersection problem.
For the first matrix each label is a subset of $\{1,2,3\}$, and for the second matrix a subset of $\{1,2\}$.
\newline
Finally, if we let $\mathcal{S} = \{\{1,2\},\{1,2,3\},\{2,3\}\}$, then  $\{\{1,2\},\{2\},\{3\}\}$ is a minimal normal set basis for $\mathcal{S}$
and $\{\{1,2\},\{2,3\}\}$ is a minimal set basis for $\mathcal{S}$.}
\label{fig:Alternative}	
\end{figure}

\section{Complexity Aspects of the Rank Functions}

Beyond the differences in the algebraic definition of the real, Boolean and binary rank functions, and the various combinatorial interpretations of the Boolean and binary rank,
these rank functions exhibit also significantly different computational behavior.
While the real rank is efficiently computable using  Gaussian elimination, the Boolean and binary rank are in general hard to determine.
At the same time, these three rank functions are tightly intertwined with the models of communication complexity,
where they serve as key measures for understanding the amount of information which must be exchanged when computing a function of two variables.
This section examines and illustrates the computational complexity of these rank functions,
as well as their interpretation within communication complexity.

\label{Sec-complexity}
\subsection{Computational Complexity}

The real rank of a matrix is  easy to compute in polynomial time using Gaussian elimination,
but both the Boolean and the binary rank were shown to be $NP$-complete.
Orlin~\cite{orlin1977contentment} showed in 1977 that computing the Boolean rank is $NP$-complete by showing a polynomial reduction
from the {\em clique cover problem} of a graph to the biclique edge cover problem of a bipartite graph.
The clique cover problem requires to find the minimum number of cliques which cover all vertices of a given graph.
Since the first problem was shown to be $NP$-complete~\cite{Karp1972}, it follows that the latter is also $NP$-complete.

\begin{lemma}[\cite{orlin1977contentment}]
There is a polynomial reduction from the clique cover problem of a graph to the biclique edge cover problem of a bipartite graph.
\end{lemma}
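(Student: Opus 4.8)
The plan is to follow Orlin's reduction. Given a graph $G=(V,E)$ with $V=\{1,\dots,n\}$, I build a bipartite graph $H=(A,B,F)$ on two copies $A=\{a_1,\dots,a_n\}$ and $B=\{b_1,\dots,b_n\}$ of $V$, putting an edge $(a_i,b_j)\in F$ exactly when $i=j$ or $\{i,j\}\in E$. Equivalently, the reduced adjacency matrix of $H$ is $M=I_n\vee A_G$, where $A_G$ is the adjacency matrix of $G$ and $\vee$ is the entrywise Boolean OR; this is clearly computable in time polynomial in the size of $G$. By the earlier lemma, the minimum number of bicliques covering the edges of $H$ equals $Cover_1(M)=\Rbool(M)$, so it remains to match this quantity with the minimum number of cliques covering $G$.

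The heart of the argument is a correspondence between cliques of $G$ and bicliques of $H$. A clique $C\subseteq V$ of $G$ yields the biclique $\{a_i:i\in C\}\times\{b_j:j\in C\}$ of $H$: every pair $(a_i,b_j)$ with $i,j\in C$ is an edge of $H$, since either $i=j$ (a diagonal edge, present by construction) or $\{i,j\}\in E$ (because $C$ is a clique). Conversely, from any biclique $S\times T$ of $H$ with $S\subseteq A$ and $T\subseteq B$, its ``diagonal'' $C(S,T):=\{i:a_i\in S\text{ and }b_i\in T\}$ is a clique of $G$: for distinct $i,j\in C(S,T)$ the pair $(a_i,b_j)$ is an edge of $H$ which, being off-diagonal, can only be present because $\{i,j\}\in E$.

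Using this, I would establish the two inequalities. From a biclique edge cover $R_1,\dots,R_k$ of $H$: each diagonal edge $(a_i,b_i)$ lies in some $R_\ell$, hence $i\in C(R_\ell)$, so the cliques $C(R_1),\dots,C(R_k)$ of $G$ cover $V$ and $G$ has a clique cover of size at most $k$. In the reverse direction one passes a clique cover of $G$ through the first half of the correspondence to obtain bicliques of $H$ of the same number; the one remaining thing to verify is that every off-diagonal edge $(a_i,b_j)$ of $H$ — i.e.\ every edge $\{i,j\}$ of $G$ — is then covered.

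The main obstacle is exactly this last check: a family of cliques covering the \emph{vertices} of $G$ need not place every \emph{edge} of $G$ inside one of its members (on $C_5$, three cliques cover the vertices but five are needed to cover the edges), so the biclique construction only yields the reverse inequality once the edges of $G$ are controlled. The resolution is to note that $\Rbool(M)$ in fact equals the least number of cliques of $G$ covering all of its vertices \emph{and} edges, and to set the reduction up from (or to pass in an initial step to) that edge-covering clique-cover variant, which is itself $NP$-hard; this is the delicate point Orlin's construction is designed to handle. Once it is settled, combining the two inequalities gives equality of the two optima, and hence the desired polynomial reduction.
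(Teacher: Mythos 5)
Your construction is the same as the paper's (the bipartite graph with bi-adjacency matrix $I_n\vee A_G$), and your extraction of the ``diagonal'' clique $C(S,T)$ from a biclique $S\times T$ cleanly proves the one direction that does hold: a biclique edge cover of size $k$ yields $k$ cliques covering the vertices of $G$. The obstacle you flag in the other direction is genuine, and it is precisely the point the paper's proof dismisses as ``easy to verify'': a family of cliques covering only the \emph{vertices} of $G$ induces bicliques that can miss the off-diagonal edges of $H_G$. Your $5$-cycle intuition can in fact be made concrete with the paper's own tools: for $G$ the $5$-cycle, the bi-adjacency matrix of $H_G$ is a column permutation of $D_{5,2}$, whose Boolean rank is $5$ by the isolation-set lemma of Subsection~\ref{subsec-isolation}, while three cliques cover the vertices of the $5$-cycle. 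So for this construction the biclique cover number is \emph{not} the vertex clique cover number, and a size-preserving correspondence with the clique cover problem as the paper defines it (covering vertices, the problem from Karp's list) cannot hold as stated.

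The genuine gap is that your proposed repair is asserted rather than proved: you claim $\Rbool(I_n\vee A_G)$ equals the least number of cliques covering all vertices \emph{and} edges of $G$, and you explicitly defer this (``once it is settled''). The inequality $\le$ is the easy half (your clique-to-biclique map covers everything once the cliques cover the edges), but the inequality $\ge$ does not follow from anything in your argument: a biclique $S\times T$ with $S\neq T$ covers off-diagonal edges $\{i,j\}$ with $i\in S\setminus T$ or $j\in T\setminus S$, and such edges need not lie in any of the extracted cliques $S_\ell\cap T_\ell$, so a biclique cover does not obviously convert into an equally small edge-covering clique family. Establishing (or circumventing) this is exactly the delicate step a correct reduction must handle, e.g.\ by extra gadgets forcing bicliques to be symmetric, or by taking a different route altogether such as the equivalence with Stockmeyer's set basis problem mentioned in the paper. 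Note also that even if your identity were proved, the source problem would become the edge-clique-cover variant, whose $NP$-hardness is not the Karp result the surrounding text invokes, so the statement and its use would need to be adjusted. As it stands, the proposal (like the paper's two-line verification) establishes only one of the two inequalities needed for an answer-preserving reduction.
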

\begin{proof}
Let $G = (V,E)$ be an undirected graph, where $V = \{1,2,...,n\}$ is the set of vertices.
Define a bipartite graph $H_G $ as follows. There are $n$ vertices $\{x_1,...,x_n\}$ on the left side and $n$ vertices  $\{y_1,...,y_n\}$ on the right side of $H_G$.
As to the edges of $H_G$,
there is an edge $(x_i,y_i)$ in $H_G$ for each $1 \leq i \leq n$,
and $(x_i,y_j)$ is an edge of $H_G$ if $(i,j)\in E$.
It is easy to verify that a clique cover of the vertices of $G$ induces a biclique edge cover of the same size for $H_G$ and vice versa.
\end{proof}

Actually, the Boolean rank problem can also be shown to be $NP$-complete  using a result of Stockmeyer~\cite{stockmeyer1975set} from 1975,
who proved that the set basis problem described in Section~\ref{Sec:equivalent} is $NP$-complete, since  both problems are equivalent.

There is also a nice polynomial reduction from the Boolean rank of a  $0,1$ matrix $M$  to the chromatic number, $\chi(G_M)$, of a graph $G_M$, as proved by the following lemma.
In Section~\ref{SubSec-approx} we show how this reduction can be used to give an approximation algorithm for the Boolean rank.
See Figure~\ref{fig:reduction} for an example of the reduction.

\begin{lemma}
\label{lem-boolean-color}
Let $M$ be a $0,1$ matrix of size $n \times m$. There exists a graph $G_M$ with $O(n\cdot m)$ vertices, such that  $\chi(G_M) = \Rbool(M)$.
\end{lemma}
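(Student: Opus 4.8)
The plan is to reduce to the covering formulation of the Boolean rank. By the earlier lemma $\Rbool(M)=Cover_1(M)$, it suffices to construct a graph whose chromatic number equals $Cover_1(M)$, the minimum number of all-ones combinatorial rectangles needed to cover the ones of $M$. I would take as the vertex set of $G_M$ the set of one-positions $\{(i,j):M_{i,j}=1\}$, which has size at most $nm$ and hence gives the $O(nm)$ bound, and join $(i,j)$ to $(k,l)$ by an edge exactly when these two ones \emph{cannot} lie in a common all-ones rectangle, i.e.\ when $M_{i,l}=0$ or $M_{k,j}=0$. (Two ones sharing a row or a column are automatically non-adjacent, since the entries forced to be $1$ are among the given ones.)

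For the inequality $\chi(G_M)\le\Rbool(M)$, I would start from an optimal cover of the ones of $M$ by rectangles $R_1,\dots,R_d$ with $d=\Rbool(M)$, and colour each vertex $(i,j)$ by some index $t$ with $(i,j)\in R_t$. If $(i,j)$ and $(k,l)$ receive the same colour $t$, then rows $i,k$ and columns $j,l$ all lie in the all-ones rectangle $R_t$, so $M_{i,l}=M_{k,j}=1$ and the two vertices are non-adjacent; hence the colouring is proper. For the reverse inequality $\Rbool(M)\le\chi(G_M)$, I would take a proper colouring with colour classes $C_1,\dots,C_c$, $c=\chi(G_M)$, and let $R_t$ be the submatrix on the set of rows and the set of columns occurring among the positions of $C_t$. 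Independence of $C_t$ means that whenever row $i$ occurs via some $(i,j)\in C_t$ and column $l$ occurs via some $(k,l)\in C_t$, the non-edge between $(i,j)$ and $(k,l)$ forces $M_{i,l}=1$, so $R_t$ is all-ones; and these rectangles cover every one of $M$, giving $Cover_1(M)\le c$. Combining the two bounds with $\Rbool(M)=Cover_1(M)$ yields $\chi(G_M)=\Rbool(M)$. I would also remark that each adjacency is decidable in constant time, so $G_M$ is constructible in polynomial time, which is what makes the reduction usable for the approximation algorithm in Section~\ref{SubSec-approx}.

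The only delicate point is calibrating the adjacency rule so that both directions close: the edge condition must be the disjunction ``$M_{i,l}=0$ or $M_{k,j}=0$'' rather than a conjunction, and one must handle the degenerate shared-row and shared-column cases, so that the rectangle induced by an independent colour class is genuinely monochromatic. Once the adjacency is fixed correctly, both implications are essentially bookkeeping, and I expect this calibration to be the main (indeed the only real) obstacle.
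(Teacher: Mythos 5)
Your proposal is correct and follows essentially the same route as the paper: you build the same conflict graph on the one-entries of $M$ (edge when $M_{i,\ell}=0$ or $M_{k,j}=0$), convert a rectangle cover into a proper colouring, and convert an independent colour class into a monochromatic rectangle spanned by its rows and columns. The calibration issue you flag (disjunction in the edge rule, shared row/column cases) is handled identically in the paper, so there is no substantive difference.
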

\begin{proof}
The reduction from $M$ to $G_M$ is as follows. The graph $G_M$ contains a vertex $x_{i,j}$ for each $i,j$ for which $M_{i,j} = 1$.
As to the edges of $G_M$, there is an edge between $x_{i,j}$ and $x_{k,\ell}$ if entries $(i,j)$ and $(k,\ell)$ cannot
belong to the same monochromatic rectangle of ones in $M$, that is, either $M_{i,\ell} = 0$ or $M_{k,j} = 0$ or both.
This means that vertices $x_{i,j}$ and $x_{k,\ell}$ cannot be colored with the same color since they are connected with an edge.
The graph $G_M$ is also called the {\em conflict graph} of $M$.

We now prove that $\chi(G_M) = \Rbool(M)$.
Consider a monochromatic rectangle cover $R_1,...,R_d$ of the ones of $M$.
Color vertex $x_{i,j}$ with color $t$ if entry $(i,j)$ is covered by rectangle $R_t$.
If an entry belongs to more than one  rectangle in the cover, color it arbitrarily with the color of one of these rectangles.
For every two entries $(i,j),(k,\ell)$ colored with the same color $t$ it holds that $(i,j),(k,\ell)\in R_t$.
Since $R_t$ is a monochromatic rectangle, then $M_{i,j} = M_{k,\ell} = M_{i,\ell} = M_{k,j} = 1$.
Thus, there is no edge between the corresponding vertices in $G_M$, and so, this is a coloring whose size is equal to the number of rectangles in the cover.

Similarly, a coloring of size $d$ in $G_M$  induces a cover of the ones of $M$ by $d$ monochromatic rectangles as follows.
First, place in $R_t$ all entries $(i,j)$ for which vertex $x_{i,j}$ has color $t$.
Next add to $R_t$ all entries in the sub-matrix induced by the entries added to $R_t$ in the first stage. Again an entry can belong to more than one rectangle.
Finally, the entries added to $R_t$ in the second stage are also one entries of the matrix,
since if $(i,j), (k,\ell)$ were both added to $R_t$ in the first stage,
then by definition of $G_M$ there is no edge between vertices $x_{i,j}$ and $x_{k,\ell}$,
and this means that also $M_{i,\ell} = M_{k,j} = 1$. Thus, $R_t$ is a monochromatic rectangle as claimed.
\end{proof}

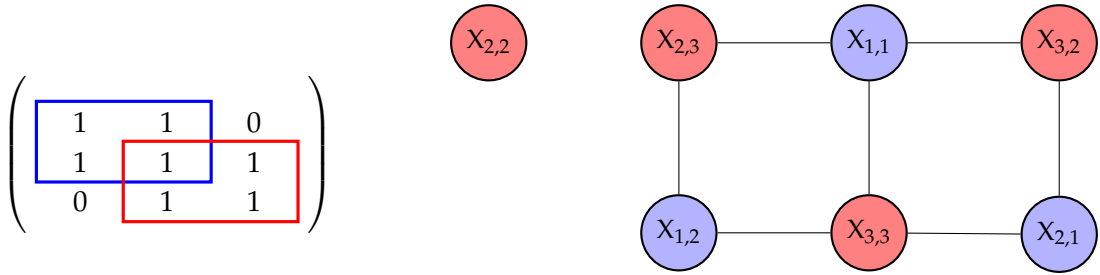
\begin{figure}[htb!]
\centering
\begin{minipage}[c]{0.25\textwidth}
\centering
\[
\left(
\begin{tikzpicture}[scale=1.1, baseline={(current bounding box.center)}]
  \matrix (m)[
    matrix of math nodes,
    nodes in empty cells,
    minimum width=width("998888"),
  ] {
    1  & 1  & 0   \\
    1  & 1  & 1   \\
    0   & 1 & 1   \\
    } ;
  \draw[blue, very thick] (m-2-1.south west) rectangle (m-1-2.north east);
  \draw[red, very thick] (m-3-2.south west) rectangle (m-2-3.north east);
\end{tikzpicture}
\right)
\]
\end{minipage}
\hspace{1cm}
\begin{minipage}[c]{0.6\textwidth}
\centering
\begin{tikzpicture}[
    node distance=1.5cm,
    every node/.style={circle, draw=black, thick, minimum size=1cm, font=\small, inner sep=0pt},
    rednode/.style={fill=red!50},
    bluenode/.style={fill=blue!30}
]
% Graph vertices
\node[rednode] (x22) at (0,0) {X$_{2,2}$};
\node[rednode, right=1.5cm of x22] (x23) {X$_{2,3}$};
\node[bluenode, below=1.5cm of x23] (x12) {X$_{1,2}$};
\node[bluenode, right=1.5cm of x23] (x11) {X$_{1,1}$};
\node[rednode, right=1.5cm of x11] (x32) {X$_{3,2}$};
\node[bluenode, below=1.52cm of x32] (x21) {X$_{2,1}$};
\node[rednode, below=1.5cm of x11] (x33) {X$_{3,3}$};
% Edges
\draw (x23) -- (x12);
\draw (x23) -- (x11);
\draw (x11) -- (x32);
\draw (x11) -- (x33);
\draw (x33) -- (x21);
\draw (x32) -- (x21);
\draw (x33) -- (x12);

\end{tikzpicture}
\end{minipage}
\caption{The reduction described in Lemma~\ref{lem-boolean-color} from the Boolean rank problem of the matrix on the left to the vertex coloring problem of the graph on the right.
The ones of the matrix can be covered by the blue and the red monochromatic rectangles as shown. The vertices of the graph are, thus, also colored with two colors.
Vertices $x_{2,2},x_{2,3},x_{3,2},x_{3,3}$ are colored in red and the remaining vertices are colored in blue.
Note that $x_{2,2}$ can be colored either in red or in blue, and the corresponding one in the matrix belongs to both the red and the blue rectangles.}
\label{fig:reduction}	
\end{figure}

Jiang and Ravikumar~\cite{jiang1993minimal} proved in 1993 that the binary rank is $NP$-complete by showing a polynomial reduction from the vertex cover problem
to the normal set basis problem which is equivalent to the binary rank problem.
This reduction was used in 2025 by Hirahara,  Ilango  and Loff~\cite{hirahara2025communication} to prove
that determining  the deterministic communication complexity  of a $0,1$ matrix is also $NP$-hard (see Subsection~\ref{subsec-communication}).
In their proof they use the formulation of the binary rank of a matrix instead of the normal set basis problem used in~\cite{jiang1993minimal}.

As computing the binary and Boolean rank exactly is $NP$-hard, approximation algorithms are required.
However, it was shown by Chalermsook,  Heydrich,  Holm, and  Karrenbauer~\cite{chalermsook2014nearly} that it is hard to approximate the
Boolean rank\footnote{A similar hardness result was claimed in~\cite{chalermsook2014nearly} for the binary rank,
but the authors later stated that it contains a flaw, and therefore, their result holds only for the Boolean rank.} of a $0,1$ matrix $M$ of size $n \times n$  to within a
factor of $n^{1 - \epsilon}$ for any given $\epsilon > 0$. This is not surprising due to the close relationship of the Boolean rank and the chromatic number
as shown above. Indeed, Zuckerman~\cite{zuckerman2007linear} proved in 2007 that it is $NP$-hard to approximate the chromatic number of a graph with $n$ vertices
to within $n^{1 - \epsilon}$, for any given $\epsilon > 0$.
In Section~\ref{SubSec-approx}, we present simple approximation algorithms  which achieve an approximation ratio of $n /\log n$ for the Boolean rank and a ratio of $O(n /\log^2 n)$ for the binary rank.

Other directions of relaxing these problems include parameterized algorithms and property testing, as discussed in Section~\ref{Sec-algorithms}.
In the context of parameterized algorithms, it was shown that the binary and Boolean rank are {\em fixed-parameter tractable} with respect to the rank parameter $d$.

Furthermore, due to the hardness of these problems, there are results which compute the binary and Boolean rank of specific families of matrices or give lower and upper bounds.
For example,  Gregory~\cite{gregory1989biclique} and Haviv and Parnas~\cite{haviv2023binary} considered the binary rank of circulant matrices.
Regular matrices were studied by Pullman~\cite{Pullman88}, Gregory, Pullman, Jones and  Lundgren~\cite{Gregory}, Hefner,  Henson,  Lundgren and  Maybee~\cite{HefnerHLM90} and
by Haviv and Parnas~\cite{haviv2022mfcs, haviv2023binary}.
Boyer~\cite{boyer1998rank} considered the biclique edge partition of the complement of a path in a complete bipartite graph.
Guo, Huynh and Macchia~\cite{guo2018biclique} computed the biclique edge cover size of grids,
and Amilhastre,  Janssen and Vilarem~\cite{AMILHASTRE1998125} gave a polynomial time algorithm for finding the minimum biclique edge cover of bipartite domino-free graphs.
These results use a variety of tools and techniques from various fields of mathematics and computer science, as  demonstrated in subsequent sections of the survey.

\subsection{Communication Complexity}
\label{subsec-communication}
The Boolean and binary rank play a fundamental role in communication complexity, a field introduced in 1979 in a seminal paper by Yao~\cite{Yao79}.
The communication complexity problem in its basic form can be described as a game between two players, Alice and Bob,
who want to compute together some function $f(i,j)$.
Both players have access to a common matrix $M$, where $M$ is usually a $0,1$ matrix representing the function $f$ they want to compute.
Alice is given a row index $i$ and Bob receives a column index $j$, and their shared goal is to determine the value of $f(i,j) = M_{i,j}$,
while exchanging as little information as possible.
The information is shared between them by transmitting $0,1$ bits in rounds, according to a {\em communication protocol} they agreed upon in advance, until both players can determine $M_{i,j}$.
The {\em cost} of a communication protocol is the number of bits transmitted in the worst case between Alice and Bob while using this protocol,
over all input pairs $i,j$ of row and column indices.

The following communication complexity measures are used throughout the survey:
\begin{itemize}
\item
The {\em deterministic communication complexity} of $M$, denoted by $D(M)$,
is the minimum cost of a deterministic communication protocol for $M$.
\item
The {\em non-deterministic} communication complexity of $M$, denoted by $N_1(M)$, is defined similarly with the exception that the communication is non-deterministic.
That is, the players are given some "proof" or {\em certificate},
and  there exists a certificate and a transmission of bits between them where they both {\em accept} if and only if $M_{i,j}=1$.
The non-deterministic communication complexity is defined as the number of bits of the certificate given to Alice and Bob, in addition to the number of bits transmitted between them until they reach a decision.
\item
The {\em unambiguous} non-deterministic communication complexity of $M$, denoted by $U(M)$, further requires
that each input $i,j$ has at most one accepting computation.
\item The {\em co-non-deterministic} communication complexity of $M$, denoted by $N_0(M)$, is defined similarly to the non-deterministic communication complexity,
but here there exists a certificate and a transmission of bits between Alice and Bob where they both {\em accept} if and only if $M_{i,j}=0$.
Thus, the co-non-deterministic communication complexity is equal to the non-deterministic communication complexity of the complement matrix $\overline{M}$, that is, $N_0(M) = N_1(\overline{M})$.
\end{itemize}
The communication complexity of a $0,1$ matrix $M$ is strongly related to partitions and covers of $M$ by monochromatic combinatorial rectangles.
Recall that $Partition_1(M)$ and $Cover_1(M)$ denote the minimum number of monochromatic rectangles required to partition or cover the ones of $M$, respectively,
and let $Cover_0(M)$ denote the minimum number of monochromatic rectangles required to cover the zeros of $M$.

Results of Yao~\cite{Yao79}, Aho, Ullman and Yannakakis~\cite{aho1983notions} and  Yannakakis~\cite{Yannakakis91}
relate the above communication complexity measures to the partition and cover numbers as  summarized by the following theorem.
For additional background on the various communication complexity measures, see the books by Kushilevitz and Nisan~\cite{KN97} and Rao and Yehudayoff~\cite{rao2020communication}
on communication complexity, as well as Chapter 4 in a book by Jukna~\cite{jukna2012boolean}.

\begin{theorem}
\label{theo-communication-partition}
Let $M$ be a $0,1$ matrix. Then:
\begin{itemize}
\item
$\log_2 (Partition_1(M)) \leq D(M)  \leq O(\log^2 Partition_1(M))$.
\item
$N_1(M) = \lceil \log_2 (Cover_1(M))  \rceil $.
\item
$U(M) = \lceil \log_2 (Partition_1(M)) \rceil$.
\item
$N_0(M) = \lceil \log_2 (Cover_0(M))\rceil = \lceil \log_2 (Cover_1(\overline{M}))  \rceil $.
\end{itemize}
\end{theorem}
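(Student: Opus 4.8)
The plan is to prove each of the four items by translating the relevant communication model into the combinatorial language of monochromatic rectangles, via the \emph{rectangle property}: the set of inputs $(i,j)$ consistent with a fixed transcript of a protocol (for a non-deterministic protocol, relative to a fixed certificate) is always a combinatorial rectangle. Items two through four and the lower bound of item one are short; the upper bound $D(M) \le O(\log^2 Partition_1(M))$ is the substantial one, which I address last.

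For the second item I would prove both inequalities directly. Given a cover of the ones of $M$ by $c = Cover_1(M)$ monochromatic rectangles $A_1 \times B_1, \dots, A_c \times B_c$, a non-deterministic protocol certifies that $M_{i,j}=1$ by naming, with $\lceil \log_2 c\rceil$ bits, an index $t$ such that $i \in A_t$ and $j \in B_t$; Alice and Bob each verify their half, and such a certificate exists exactly when $(i,j)$ lies in the union of the rectangles, i.e.\ when $M_{i,j}=1$. Conversely, from a non-deterministic protocol of cost $k$, fixing the certificate $z$ makes the set of inputs on which both players accept a combinatorial rectangle $R_z$; correctness forces each $R_z$ to consist of one-entries and forces $\bigcup_z R_z$ to be exactly the set of one-entries of $M$, so the $R_z$ form a cover of the ones of size at most $2^k$. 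Hence $N_1(M) = \lceil \log_2 Cover_1(M)\rceil$. The third item is the same argument with one extra observation: unambiguity says that every one-entry is accepted under exactly one certificate, which is precisely the statement that the rectangles $R_z$ \emph{partition} the ones of $M$, so $U(M) = \lceil \log_2 Partition_1(M)\rceil$. The fourth item is then immediate: as already noted $N_0(M) = N_1(\overline M)$, and an all-ones rectangle of $\overline M$ is an all-zeros rectangle of $M$, so $N_0(M) = N_1(\overline M) = \lceil \log_2 Cover_1(\overline M)\rceil = \lceil \log_2 Cover_0(M)\rceil$.

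For the lower bound of the first item I would use the rectangle property for deterministic protocols: by induction on the protocol tree, the inputs reaching any node form a combinatorial rectangle, so the leaves of a cost-$D(M)$ protocol partition $X\times Y$ into at most $2^{D(M)}$ rectangles, each monochromatic by correctness; restricting to the leaves labelled $1$ gives a partition of the ones of $M$ into at most $2^{D(M)}$ monochromatic rectangles, whence $Partition_1(M) \le 2^{D(M)}$.

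The upper bound $D(M) \le O(\log^2 Partition_1(M))$ is the hard part, and I would obtain it through the recursive protocol of Aho, Ullman and Yannakakis. Fixing a rectangle partition of $M$ witnessing $Partition_1(M)$ (and, if needed, a cheap cover of the zeros), the players maintain a ``live'' rectangle $A\times B$ — the rows still consistent with Alice's bits and the columns still consistent with Bob's — that is guaranteed to contain the true input. In each phase, one player names, using $O(\log Partition_1(M))$ bits, a rectangle of the fixed family that meets $A\times B$ in a sub-rectangle that is large on that player's side; the other player replies with a single bit indicating whether its coordinate lies in it. A positive reply pins $(i,j)$ inside a monochromatic rectangle and settles $M_{i,j}$; a negative reply lets the players shrink $A\times B$ so that a suitable potential — e.g.\ the number of rectangles of the family still meeting the live region — drops by a constant factor. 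After $O(\log Partition_1(M))$ phases the live rectangle is monochromatic, for a total of $O(\log^2 Partition_1(M))$ bits. The step I expect to be the main obstacle is the accompanying ``balancing'' lemma, which must guarantee that at every stage one of the players has a rectangle of the required size; this is precisely where the partition structure is used — each one-entry lies in a unique rectangle, and disjointness forces the column-sets of the rectangles passing through any fixed row to be pairwise disjoint — and where the potential has to be chosen so that both the per-phase cost and the number of phases come out as $O(\log Partition_1(M))$.
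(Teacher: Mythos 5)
Your proposal is correct, and for the only substantial item it is at bottom the same argument the paper uses, just packaged differently. The paper states the four bullets without proof (they are credited to Yao, Aho--Ullman--Yannakakis and Yannakakis), and proves the one non-trivial claim, $D(M) \leq O(\log^2 Partition_1(M))$, only later (Theorem~\ref{theo-rec-reduction}): it maps each rectangle of an optimal $1$-partition to a vertex, joins two vertices when the rectangles share a row, observes that Alice's row yields a clique and Bob's column an independent set, and then invokes Yannakakis's $O(\log^2 m)$ clique vs.\ independent set protocol. Your ``recursive AUY-style protocol'' is exactly that protocol unfolded directly on the rectangle family, and your balancing lemma is exactly the paper's halving step; the disjointness fact you single out (rectangles of a partition that share a row have disjoint column sets) is precisely what makes the rectangles through Bob's column an independent set. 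Two adjustments would make your sketch match the working protocol: the live object should be a \emph{set of candidate rectangles} (the surviving vertex set), not a combinatorial sub-rectangle $A \times B$ of the matrix, and the dichotomy is that either Alice has a candidate rectangle through her row sharing a row with at most half of the candidates, or Bob has one through his column sharing a row with at least half, or else no candidate contains both coordinates and they output $0$ --- so the parenthetical ``cheap cover of the zeros'' is not needed. Your treatments of the lower bound of the first bullet and of items two through four are the standard ones and are fine (with the usual caveat that the exact equalities with the ceiling hold under the paper's convention that, given a certificate, acceptance is decided locally by each player, and that for the converse one should index rectangles by certificate--transcript pairs rather than by certificates alone); the paper itself leaves these to the cited references, so there you are supplying proofs the survey omits rather than diverging from it.
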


By the discussion in Section~\ref{Sec:equivalent}, $\Rbool(M) = Cover_1(M)$ and $\Rbin(M) = Partition_1(M)$. Thus,
the Boolean rank determines the non-deterministic complexity, and the binary rank determines the unambiguous non-deterministic complexity,
and gives an approximation up to a polynomial of the deterministic communication complexity.

G\"{o}\"{o}s, Pittasi and Watson~\cite{Goos} gave in 2015 an almost matching lower bound to
the upper bound on the deterministic communication complexity in terms of the partition number and proved the following theorem.
See also the discussion in Section~\ref{subsec-protocols}.

\begin{theorem}[\cite{Goos}]
\label{theo-deterministic-partition}
There exists a family of matrices $M$, such that $D(M) \geq \widetilde{\Omega}(\log^2 Partition_1(M))$.
\end{theorem}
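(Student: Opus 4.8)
The plan is to obtain the separation by \emph{lifting} an analogous quadratic gap from query (decision‑tree) complexity to communication complexity, using the query‑to‑communication technique surveyed in Section~\ref{Sec-lower-and-upper}. Two ingredients are needed: (a) a family of total Boolean functions exhibiting a quadratic gap between deterministic decision‑tree depth and the logarithm of the subcube‑partition number of its $1$‑inputs, and (b) a deterministic lifting theorem that turns this into the desired communication statement while the partition structure is preserved (in the easy direction). Matching the upper bound $D(M)\le O(\log^2 Partition_1(M))$ of Theorem~\ref{theo-communication-partition}, this shows the bound is tight up to polylogarithmic factors.

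For ingredient (a) I would build a family of total functions $f_N\colon\{0,1\}^N\to\{0,1\}$, a ``pointer function'' in the spirit of the constructions of Ambainis et al.\ and of~\cite{Goos}, with two properties holding simultaneously: (i) large deterministic decision‑tree depth, $D^{\mathrm{dt}}(f_N)\ge\widetilde\Omega(N)$; and (ii) $f_N^{-1}(1)$ can be partitioned into $2^{\widetilde O(\sqrt N)}$ subcubes, each of codimension $\widetilde O(\sqrt N)$ — equivalently, $f_N$ has an unambiguous family of $2^{\widetilde O(\sqrt N)}$ one‑certificates, each of size $\widetilde O(\sqrt N)$. Setting $s=\sqrt N$ this is a quadratic gap, $D^{\mathrm{dt}}(f_N)\ge\widetilde\Omega(s^2)$ against a log‑partition measure $\widetilde O(s)$. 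Concretely I would lay the $N$ bits out on an $m\times m$ grid of ``cells'' with $N=\Theta(m^2)$, $m=s$, each cell carrying a value bit and pointer coordinates into a neighbouring column, and declare $f_N=1$ iff a certain pointer path through the grid is globally consistent. A $1$‑input is then certified by revealing one such path ($O(m)$ cells) plus $\widetilde O(m)$ auxiliary bits that rule out competing paths; one checks these certificates are pairwise inconsistent and number $2^{\widetilde O(m)}$, by a counting argument over choices of path and auxiliary data. Property (i) is an adversary argument: until the querier has inspected almost all cells, the adversary can reroute pointers through unread cells to keep alive both a completion to $f_N=1$ and one to $f_N=0$.

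For ingredient (b), compose $f_N$ with a gadget $g$ on alphabets of size $q=\mathrm{poly}(N)$ (e.g.\ the indexing gadget, whose Alice side has $\log q$ address bits), and let $F_N=f_N\circ g^{N}$ be the block composition. First, the deterministic lifting theorem of Raz and McKenzie, as applied in~\cite{Goos}, gives $D(F_N)\ge\Omega\!\bigl(D^{\mathrm{dt}}(f_N)\cdot\log q\bigr)=\widetilde\Omega(N)$, since any cheap protocol for $F_N$ can be simulated by a shallow decision tree for $f_N$. Second, a subcube partition of $f_N^{-1}(1)$ lifts to a rectangle partition of $F_N^{-1}(1)$ with only a mild blow‑up: a codimension‑$k$ subcube fixing coordinates $S$ to values $\beta$ becomes $\{(x,y):g(x_i,y_i)=\beta_i\text{ for all }i\in S\}$, and since for each $i$ the fiber $g^{-1}(\beta_i)$ is a union of at most $q$ combinatorial rectangles (group by Alice's symbol), this subcube lifts to at most $q^{k}$ rectangles. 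Hence $Partition_1(F_N)\le 2^{\widetilde O(\sqrt N)}\cdot q^{\widetilde O(\sqrt N)}=2^{\widetilde O(\sqrt N)}$ because $q=\mathrm{poly}(N)$, so $\log Partition_1(F_N)=\widetilde O(\sqrt N)$. Combining, $D(F_N)=\widetilde\Omega(N)=\widetilde\Omega\!\bigl((\sqrt N)^2\bigr)=\widetilde\Omega\!\bigl(\log^2 Partition_1(F_N)\bigr)$, as required.

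The main obstacle is the deterministic lifting theorem itself: converting a $c$‑bit protocol for $f\circ g^{N}$ into a depth‑$O(c/\log q)$ decision tree for $f$ requires the Raz--McKenzie simulation machinery — maintaining, round by round, a ``structured'' rectangle whose projection onto the $f$‑coordinates remains ``full'' on most coordinates, and extracting a query of $f$ precisely when communication forces the projection onto some coordinate to collapse. A secondary obstacle is engineering the pointer function so that (i) and (ii) hold together: the unambiguity requirement (ii) leaves little slack for the adversary argument behind (i), so the grid and the pointer/consistency rules must be tuned carefully. Once $f_N$ is in hand, lifting the subcube partition to a rectangle partition and the final arithmetic are routine.
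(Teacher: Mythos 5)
Your proposal follows essentially the same route as the paper (and the original G\"{o}\"{o}s--Pitassi--Watson argument it cites): build a total ``pointer'' function with a near-quadratic gap between deterministic decision-tree depth and the log of its unambiguous subcube partition of the $1$-inputs, then compose with an index-type gadget and invoke the deterministic Raz--McKenzie/GPW lifting theorem (Theorem~\ref{theo-goos}), observing that each lifted subcube of codimension $k$ splits into at most $q^{k}$ disjoint rectangles so that $\log Partition_1(F)$ stays $\widetilde{O}(\sqrt N)$ while $D(F)\geq\widetilde{\Omega}(N)$. The two ingredients you flag as obstacles are exactly the components the survey also treats as imported black boxes, so the approaches coincide.
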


The non-deterministic communication complexity is  $NP$-hard since it is determined by the Boolean rank which was proved to be $NP$-hard.
Yao~\cite{Yao79} asked already in 1979 if the deterministic communication complexity is  $NP$-hard,
and this was an open question for many years (see Kushilevitz and Weinreb~\cite{kushilevitz2009complexity} and Hirahara, Ilangoand and Loff~\cite{hirahara2021hardness}).
Finally, Hirahara,  Ilango  and Loff~\cite{hirahara2025communication} proved in 2025 that determining  the deterministic communication complexity of a $0,1$ matrix is  $NP$-hard.
Their proof uses the reduction  of Jiang and Ravikumar~\cite{jiang1993minimal} from the normal set basis problem to the binary rank problem.
Independently, Gaspers, He and Mackenzie~\cite{gaspers2025np} proved also in 2025 that the deterministic communication complexity is $NP$-hard by showing a reduction
from the $NP$-hard Vector Bin Packing problem.

It should by noted that the famous {\em log-rank conjecture} of Lov{\'a}sz and Saks~\cite{LovaszS88} suggests
that the deterministic communication complexity of a matrix $M$ is polynomially related to $\log_2 \Rreal(M)$.
Since $\log \Rbin(M)$ gives an approximation up to a polynomial of the deterministic communication complexity of $M$,
this conjecture in fact suggests that for every $0,1$ matrix $M$, $\log_2 \Rbin(M)$ and $\log_2 \Rreal(M)$ are polynomially related.
As to the current status of the log-rank conjecture, it is known that:
$$\log_2 \Rreal(M) \leq D(M) \leq O(\sqrt{\Rreal(M)}).$$
The lower bound on $D(M)$ in terms of the real rank was proved by Mehlhorn and Schmidt~\cite{mehlhorn1982vegas} in 1982.
The upper bound was proved in 2023 by Sudakov and Tomon~\cite{sudakov2023matrix} who improved by a factor of $\log_2 \Rreal(M)$ the previous upper bound given by Lovett~\cite{Shachar}.
Thus, there is still an exponential gap between the lower and upper bound as a function of the real rank.
See also a survey by Lovett~\cite{LovettA14} and by Lee and Shraibman~\cite{lee2023around},
as well as a recent paper of Hambardzumyan,  Lovett and Shirley~\cite{hambardzumyan2025log} about new formulations for the log-rank conjecture.

Due to these tight mutual connections between communication complexity and the Boolean and binary rank,
understanding the Boolean and binary rank can contribute to a better understanding of the different communication complexity measures.
On the other hand, techniques and results from communication complexity were already found useful for computing or giving bounds on
the Boolean and binary rank of concrete matrices (see Sections~\ref{subsec-lifting} and~\ref{Subsec-complemement}),
as well as used by the approximation algorithm for the binary rank described in Section~\ref{SubSec-approx}.

\section{Lower and Upper Bound techniques}
\label{Sec-lower-and-upper}

In this section we present some central techniques for establishing upper and lower bounds on the Boolean and binary rank of matrices.
As we demonstrate, these techniques draw from various fields of mathematics, such as algebra, combinatorics and graph theory,
alongside tools developed within the computer science community.
This variety of approaches has helped resolve problems which were open for many years, although many challenges remain.

Methods presented in this section will reappear in Section~\ref{Sec-properties}, where we explore properties of the Boolean and binary rank and present examples of
bounds for specific families of $0,1$ matrices, as well as in Section~\ref{Sec-algorithms}
which focuses on algorithms for computing or approximating these rank functions.
The applicability of the methods presented is demonstrated on the following two families of matrices, the family $C_n$ and the family $D_{n,k}$, described next.
See Figure~\ref{fig:D84} for an illustration of both families, and Table~\ref{tab:gaps} which summarizes the results presented in this Section
regarding these families.

\vspace{-0.5 cm}
\paragraph{The family $C_n$:} Let $C_n$ be the $0,1$  matrix of size $n \times n$, for $n \geq 2$, defined by an all-zero main diagonal and ones elsewhere.
The matrix $C_n$ is, thus, the complement of the identity matrix, and is also the reduced adjacency matrix of the {\em crown graph},
i.e. a complete bipartite graph from which a perfect matching was removed.

The matrix $C_n$ represents also the two-party communication problem of {\em non-equality}.
Here the two players, Alice and Bob, receive a row index $i$ and a column index $j$, respectively,
and their goal is to communicate as few bits as possible and compute the Non-Equality function:
$$ \text{Non-Equality}(i,j) = \left\{
                         \begin{array}{cc}
                           1 & \text{if} \; i \neq j \\
                           0 & \text{otherwise} \\
                         \end{array}
                         \right.
$$
Thus, the matrix $C_n$ captures exactly the decision that should be made by Alice and Bob on input $i,j$, since entry $(i,j)$ of $C_n$ is $1$ if and only if $i \neq j$.
In a similar way, the identity matrix captures the two-party communication problem of {\em equality}. In this case, Alice and Bob should accept if and only if $i = j$.

\vspace{-0.5 cm}
\paragraph{The family $D_{n,k}$:} Let $D_{n,k}$ be the $0,1$ circulant matrix of size $n \times n$, defined by a first row which starts with $n-k$ consecutive ones followed by $k$ zeros.
Note that $C_n = D_{n,1}$ up to a permutation of the columns.

\begin{figure}[htb!]
%\captionsetup{width=0.9\textwidth}
$$
D_{6,3} = \left(
  \begin{array}{cccccc}
    1 & 1 & 1 & 0 & 0 & 0  \\
    0 & 1 & 1 & 1 & 0 & 0  \\
    0 & 0 & 1 & 1 & 1 & 0 \\
    0 & 0 & 0 & 1 & 1 & 1  \\
    1 & 0 & 0 & 0 & 1 & 1 \\
    1 & 1 & 0 & 0 & 0 & 1 \\
   \end{array}
\right)
\;\;\;\;\;\;\;\;\;\;\;\;
C_{6} = \left(
  \begin{array}{cccccc}
    0 & 1 & 1 & 1 & 1 & 1  \\
    1 & 0 & 1 & 1 & 1 & 1  \\
    1 & 1 & 0 & 1 & 1 & 1  \\
    1 & 1 & 1 & 0 & 1 & 1  \\
    1 & 1 & 1 & 1 & 0 & 1  \\
    1 & 1 & 1 & 1 & 1 & 0  \\
   \end{array}
\right)
$$
\caption{On the left is the matrix $D_{6,3}$ and on the right the matrix $C_6$.
The ones on the main diagonal of $D_{6,3}$ form an isolation set of size $6$, and thus, $\Rbin(D_{6,3}) = 6$.
The maximum isolation set in $C_6$ is of size $3$, but $\Rreal(C_6) = 6$ and thus, $\Rbin(C_6) = 6$ as well.
See Subsection~\ref{subsec-isolation} for the definition of isolation sets,
and Subsection~\ref{subsec-linear} for the real rank of $C_n$.}
\label{fig:D84}	
\end{figure}

\begin{table}[ht]
    \centering
    \begin{tabular}{|Sc|Sc|Sc|Sc|Sc|}
    \hline
        Matrix  & $\Rreal$ & $\Rbool $ & $\Rbin $  & Section \\ [6pt] \hline \hline
        $C_n$ & $n$ & $\Theta(\log n)$ & $n $ &  \ref{subsec-linear},~\ref{Subsec-combinatorics}\\ [6pt] \hline
        \shortstack{$D_{n,k}$ \\ $ k < n/2$}    & $ n - \gcd(n,k) + 1$ & $O(k \log n)$ &  \shortstack{at least \\ $ \min\{n, n - \gcd(n,k) + 2\}$} & \ref{subsec-linear} \\ [6pt] \hline
        \shortstack{$D_{n,k}$ \\ $ k \geq n/2$} & $ n - \gcd(n,k) + 1$ & $n$ & $n$ &  \ref{subsec-isolation} \\ [6pt] \hline
   %   Size $n \times n$ &     $n^{1/2 + o(1)}$ & $n$ & $n$   &  Section~\ref{subsec-isolation}  \\ [12pt] \hline
    %    \shortstack{Regular matrices \\ integer $d$}& at most $d$ & $d^{\widetilde{\Omega}(\log d)}$ & $d^{\widetilde{\Omega}(\log d)}$  & Section~\ref{Subsec-complemement} \\ [12pt] \hline
    \end{tabular}
    \caption{The gaps between the rank functions for the matrices $C_n$ and $D_{n,k}$.}
    \label{tab:gaps}
\end{table}

Before we describe the various techniques, recall that by the definition of the rank functions presented in the introduction, it always holds that
$$\Rreal(M) \leq \Rbin(M) \;\; \text{and} \;\; \Rbool(M) \leq \Rbin(M),$$
whereas $\Rbool(M)$ can be smaller or larger than $\Rreal(M)$.
The binary and Boolean rank are always equal to the real rank for real rank $1$ or $2$,
but when $\Rreal(M) \geq 3$ there are examples of matrices for which the rank functions differ.

As we show, the matrix $C_n = D_{n,1}$ is an example where the Boolean rank is exponentially smaller than the real and the binary rank,
and this is the maximal gap possible between these two rank functions since $\Rbin(M) \leq 2^{\Rbool(M)}$ for any $0,1$ matrix $M$.
Also, there are values of $n,k$ for which  the real, binary and Boolean rank all differ on $D_{n,k}$.
In Section~\ref{Subsec-complemement} we describe a result which shows that the Boolean and binary rank can be quasi-polynomially larger than the real rank.

\subsection{Linear algebra}
\label{subsec-linear}
As noted above, the real rank of a $0,1$ matrix always serves as a lower bound for its binary rank.
This simple observation has been used to obtain tight or nearly tight lower bounds on the binary rank of certain families of matrices.
In other cases, as we show next, more advanced linear algebraic techniques are required.

The prominent Graham Pollak theorem~\cite{GrahamP71} from 1971 gives a lower bound on the number of bicliques required to partition the edges of the complete graph $K_n$ on $n$ vertices.
Their original proof uses the following lemma of Witsenhausen.
Denote by  $\delta_{-}(M)$ and by $\delta_{+}(M)$ the number of negative and positive eigenvalues, respectively, of a matrix $M$.

\begin{lemma}
Let $G$ be a simple graph and let $M$ be its adjacency matrix.
The number of bicliques required to partition the edges of $G$ is at least $\max\{\delta_{-}(M),\delta_{+}(M)\}$.
\end{lemma}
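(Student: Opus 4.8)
The plan is to analyze a biclique edge partition of $G$ by translating it into an algebraic decomposition of the adjacency matrix $M$, and then applying Sylvester's law of inertia together with a dimension-counting argument. Suppose the edges of $G$ are partitioned into bicliques $B_1,\dots,B_d$, where $B_t$ has vertex classes $X_t$ and $Y_t$ (with $X_t\cap Y_t=\emptyset$). To each biclique associate the $0,1$ vectors $u_t,v_t\in\{0,1\}^n$ that are the indicator vectors of $X_t$ and $Y_t$. Because each edge $\{i,j\}$ of $G$ lies in exactly one $B_t$, and an undirected edge gets ``hit'' from both directions $(i\in X_t, j\in Y_t)$ or $(j\in X_t, i\in Y_t)$, we obtain the matrix identity
\[
M \;=\; \sum_{t=1}^d \bigl(u_t v_t^{\,t} + v_t u_t^{\,t}\bigr).
\]
First I would verify this identity carefully: the off-diagonal entry $(i,j)$ of the right-hand side counts the bicliques containing the edge $\{i,j\}$ (once, from whichever side), and the diagonal entries vanish since $X_t$ and $Y_t$ are disjoint so $u_t$ and $v_t$ have disjoint supports, making $(u_tv_t^{\,t}+v_tu_t^{\,t})_{ii}=0$.

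Next I would exploit this to bound $\delta_{+}(M)$. Consider the subspace $W=\{x\in\R^n : u_t^{\,t}x = 0 \text{ and } v_t^{\,t}x=0 \text{ for all } t\}$, which is cut out by at most $2d$ linear constraints, so $\dim W \ge n-2d$. For $x\in W$ we have $x^{\,t}Mx = \sum_t \bigl((u_t^{\,t}x)(v_t^{\,t}x) + (v_t^{\,t}x)(u_t^{\,t}x)\bigr) = 0$. Thus $M$ restricted to $W$ is identically zero as a quadratic form, so $W$ is a subspace on which $M$ is not positive definite; by the standard variational characterization of the signature, the number of positive eigenvalues satisfies $\delta_{+}(M) \le n - \dim W \le 2d$. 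To get the sharper factor-of-one bound $\delta_{+}(M)\le d$, I would instead use the subspace $W'=\{x : v_t^{\,t}x=0 \text{ for all }t\}$ of dimension $\ge n-d$: on $W'$, $x^{\,t}Mx = \sum_t (u_t^{\,t}x)(v_t^{\,t}x)\cdot 2 = 0$ as well, since every term has a factor $v_t^{\,t}x=0$. Hence $M$ is negative-semidefinite-or-zero on a subspace of dimension $\ge n-d$, which forces $\delta_{+}(M)\le d$. Applying the same argument to $-M$ (equivalently, swapping the roles of $u_t$ and $v_t$, or noting $-M$ has the analogous decomposition) gives $\delta_{-}(M)\le d$, and combining yields $d \ge \max\{\delta_{-}(M),\delta_{+}(M)\}$.

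The step I expect to be the main obstacle is pinning down the correct subspace and the correct semidefiniteness conclusion to extract a factor of $d$ rather than $2d$ — the naive ``kill all $2d$ linear forms'' approach only gives $2d$, and one must notice that killing just the $v_t$'s (or just the $u_t$'s) already annihilates the quadratic form, because the decomposition is a sum of rank-two pieces $u_tv_t^{\,t}+v_tu_t^{\,t}$ whose quadratic form $2(u_t^{\,t}x)(v_t^{\,t}x)$ vanishes as soon as one of the two linear factors does. Once that observation is in hand, the rest is a routine invocation of Sylvester's law of inertia: a real symmetric matrix that vanishes on a subspace of dimension $k$ can have at most $n-k$ positive (and at most $n-k$ negative) eigenvalues, since a positive-definite (resp.\ negative-definite) subspace must meet that subspace trivially.
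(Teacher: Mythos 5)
Your proof is correct. The decomposition $M=\sum_{t=1}^{d}\bigl(u_t v_t^{\,t}+v_t u_t^{\,t}\bigr)$ is verified properly (disjointness of the two sides of each biclique kills the diagonal and prevents double counting, and the partition property gives exactly the adjacency matrix), and the key step is sound: the quadratic form of $M$ vanishes identically on the subspace $W'=\{x:\ v_t^{\,t}x=0 \text{ for all } t\}$ of dimension at least $n-d$, so any positive-definite (or negative-definite) eigenspace of $M$ meets $W'$ trivially, giving $\delta_{+}(M)\le d$ and $\delta_{-}(M)\le d$. Two cosmetic remarks: you do not need to pass to $-M$ or speak of ``negative-semidefinite-or-zero'' on $W'$ --- the form is identically zero there, and that single fact bounds both $\delta_{+}$ and $\delta_{-}$ at once; and your correct instinct that killing only the $d$ forms $v_t^{\,t}x$ (rather than all $2d$ forms) suffices is exactly what makes the bound $d$ rather than $2d$. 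For comparison: the paper states this lemma of Witsenhausen without proof and instead presents Tverberg's elementary proof of the Graham--Pollak special case ($G=K_n$), which replaces the inertia argument by a direct polynomial-identity and linear-equation count ($\sum_{i<j}x_ix_j$ expressed via the bicliques, then a nontrivial solution of $m+1<n$ homogeneous equations forcing $0=\sum x_i^2>0$). Your argument is more general --- it yields the eigenvalue lower bound for an arbitrary graph, from which Graham--Pollak follows by computing the spectrum of $K_n$ --- while Tverberg's route avoids any spectral machinery but only addresses the complete graph.
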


The Graham Pollak theorem follows immediately since $\max\{\delta_{-}(M),\delta_{+}(M)\} = n-1$, where $M$ is the adjacency matrix of $K_n$.
Several alternative proofs were found over the years for this theorem, such as a proof by
Tverberg~\cite{Tverberg1982OnTD} from 1982 and a proof by Peck~\cite{peck1984new} from 1984.
We now describe the elegant proof of Tverberg which uses elementary tools from linear algebra.

\begin{theorem}[\cite{GrahamP71},~\cite{Tverberg1982OnTD}]
Any biclique partition of the edges of $K_n$  includes at least $n-1$ bicliques.
\end{theorem}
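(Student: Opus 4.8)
# Proof Proposal for the Graham–Pollak Theorem (Tverberg's Proof)

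The plan is to follow Tverberg's linear-algebraic argument. Suppose the edges of $K_n$ are partitioned into bicliques $B_1, \dots, B_d$, where each $B_t$ has vertex classes $X_t, Y_t \subseteq \{1, \dots, n\}$ with $X_t \cap Y_t = \emptyset$. For each vertex $i$, introduce a real variable $z_i$, and consider the system of $d$ linear equations $\sum_{i \in X_t} z_i = 0$ for $t = 1, \dots, d$, together with the single equation $\sum_{i=1}^n z_i = 0$. If $d < n-1$, then this is a homogeneous system of fewer than $n$ equations in $n$ unknowns, so it has a nonzero solution $(z_1, \dots, z_n)$.

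The key step is to evaluate the quadratic form $\bigl(\sum_{i=1}^n z_i\bigr)^2$ in two ways. Expanding, one gets $\sum_i z_i^2 + 2\sum_{i<j} z_i z_j$. The crucial observation is that since $B_1, \dots, B_d$ partition the edges of $K_n$, every unordered pair $\{i,j\}$ appears in exactly one biclique $B_t$, meaning exactly one of $i \in X_t, j \in Y_t$ or $i \in Y_t, j \in X_t$ holds. Hence $\sum_{i<j} z_i z_j = \sum_{t=1}^d \bigl(\sum_{i \in X_t} z_i\bigr)\bigl(\sum_{j \in Y_t} z_j\bigr)$. Because our solution satisfies $\sum_{i \in X_t} z_i = 0$ for every $t$, each term on the right vanishes, so $\sum_{i<j} z_i z_j = 0$. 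On the other hand, the solution also satisfies $\sum_i z_i = 0$, so $0 = \bigl(\sum_i z_i\bigr)^2 = \sum_i z_i^2 + 2 \cdot 0 = \sum_i z_i^2$. This forces $z_i = 0$ for all $i$, contradicting that the solution is nonzero. Therefore $d \geq n-1$.

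For the matching upper bound (to see the theorem is tight, though only the lower bound is asserted), one can partition the edges of $K_n$ into $n-1$ stars: $B_t = (\{t\}, \{t+1, \dots, n\})$ for $t = 1, \dots, n-1$.

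The main subtlety to get right is the bookkeeping in the cross-term identity: one must be careful that each edge $\{i,j\}$ is counted exactly once across the bicliques (this is where the \emph{partition} hypothesis is used, as opposed to a cover), and that the product $\bigl(\sum_{i \in X_t} z_i\bigr)\bigl(\sum_{j \in Y_t} z_j\bigr)$ correctly enumerates precisely the ordered incidences within $B_t$ without double-counting or sign errors. Everything else is routine linear algebra.
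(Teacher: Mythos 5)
Your proof is correct and follows essentially the same route as the paper's: Tverberg's argument of building the homogeneous system $\sum_{i\in X_t} z_i = 0$ together with $\sum_i z_i = 0$, taking a nontrivial solution when the number of bicliques is below $n-1$, and deriving the contradiction $0 = \sum_i z_i^2 > 0$ from the partition identity for $\sum_{i<j} z_i z_j$. The extra remark on the tightness via a star partition is fine but not needed for the stated lower bound.
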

\begin{proof}
 Consider any biclique partition of the edges of $K_n$ into $m $ bicliques, $(A_i,B_i)$,  $1 \leq i \leq m$,
where $A_i$ is the set of vertices on one side of the biclique and $B_i$ is the set of vertices on the other side.
Assume by contradiction that $m < n-1 $, let the vertices of $K_n$ be labeled by $\{1,...,n\}$,
and define a variable $x_i$ for each vertex $i$ of $K_n$. Since the bicliques partition all edges of $K_n$ the following equation holds:
$$
\sum_{i < j} x_i x_j = \sum_{k = 1}^m \left(\sum_{i \in A_k} x_i \right) \left( \sum_{j \in B_k} x_j \right)
$$
where the sum on the left goes once over all edges of the graph $K_n$, and the sum on the right goes over all bicliques in the partition, and for each biclique $(A_k,B_k)$
goes over each vertex  $i \in A_k$ and each vertex  $j \in B_k$ and multiplies the corresponding variables, thus, again going once over all edges of the graph.
Now, consider the following system of linear equations:
$$
x_1 + ... + x_n = 0
$$
$$
\sum_{i \in A_k} x_i = 0\;\; \text{for} \;\; 1 \leq k \leq m
$$
There are $m + 1 \leq n-1$ equations by our assumption on $m$ and there are $n$ variables $x_i$. Thus, there is a non trivial solution to this system of equations
for which at least one of the variables $x_i $ is non-zero.
Finally consider the following:
$$
0 = (x_1 + ... + x_n)^2 = \sum_{i = 1}^n x_i^2 + 2 \sum_{i < j} x_i x_j = \sum_{i = 1}^n x_i^2 + 2 \sum_{k = 1}^m \left(\sum_{i \in A_k} x_i \right) \left( \sum_{j \in B_k} x_j \right)
 = \sum_{i = 1}^n x_i^2 > 0
$$
where the inequality holds since at least one of $x_i$'s is non-zero.
This is of course a contradiction, and hence, $m \geq n-1$ as claimed.
\end{proof}

Inspired by the Graham Pollak theorem, Alon, Saks and Seymour conjectured in 1991 that any graph $G$ whose edges can be partitioned into $k$ bicliques, has a chromatic number of at most $k+1$.
This conjecture holds for the complete graph $K_n$ whose chromatic number is $n$. It took around 20 years until this conjecture was refuted by Huang and Sudakov~\cite{HuangS12}.
See also Section~\ref{Subsec-complemement}.

Linear algebra was also used in the results of Gregory~\cite{gregory1989biclique} and Haviv and Parnas~\cite{haviv2023binary} who considered the binary rank of $D_{n,k}$.
The real rank of $D_{n,k}$ can be computed as follows (see Ingleton~\cite{ingleton1956rank} and Grady and Newman~\cite{GRADY199711}):

\begin{theorem}
\label{theo-realDnk}
For all integers $n > k \geq 0$, $\Rreal(D_{n,k}) = n-\gcd(n,k)+1.$
\end{theorem}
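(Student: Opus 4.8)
The plan is to exploit the circulant structure of $D_{n,k}$ via the discrete Fourier transform, which simultaneously diagonalizes all circulant matrices. Let $P$ be the $n\times n$ cyclic shift permutation matrix (for a suitable choice of direction), so that $D_{n,k} = I + P + P^2 + \cdots + P^{\,n-k-1} = g(P)$, where $g(x) = 1 + x + \cdots + x^{\,n-k-1}$. Since $P^n = I$ and the polynomial $x^n - 1$ has $n$ distinct complex roots, namely the $n$-th roots of unity, $P$ is diagonalizable and each $n$-th root of unity occurs as an eigenvalue of $P$ with multiplicity exactly one; the associated eigenvectors are the Vandermonde vectors $(1,\zeta,\zeta^2,\dots,\zeta^{n-1})^t$. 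Consequently $g(P)$ is diagonalizable as well, with eigenvalues $g(\zeta)$ as $\zeta$ runs over the $n$-th roots of unity, again each with multiplicity one. As $D_{n,k}$ is a real matrix, its rank over $\mathbb{R}$ equals its rank over $\mathbb{C}$, which for a diagonalizable matrix is the number of nonzero eigenvalues. So the first step reduces the statement to counting the $n$-th roots of unity $\zeta$ with $g(\zeta) = 0$.

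Second, I would identify those roots explicitly. For $\zeta \neq 1$ we have $g(\zeta) = (\zeta^{\,n-k} - 1)/(\zeta - 1)$, whereas $g(1) = n - k \neq 0$, using the hypothesis $n > k$. Hence an $n$-th root of unity $\zeta$ satisfies $g(\zeta) = 0$ if and only if $\zeta \neq 1$ and $\zeta^{\,n-k} = 1$, i.e. $\zeta$ is a common root of $x^n - 1$ and $x^{\,n-k} - 1$ other than $1$. The common roots of $x^n-1$ and $x^{n-k}-1$ are exactly the $d$-th roots of unity for $d = \gcd(n,n-k)$, of which there are exactly $d$. Using the elementary identity $\gcd(n,n-k) = \gcd(n,k)$, the number of $n$-th roots of unity killed by $g$ is $\gcd(n,k) - 1$.

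Combining the two steps, the number of nonzero eigenvalues of $D_{n,k}$ is $n - (\gcd(n,k) - 1) = n - \gcd(n,k) + 1$, which is the claimed value of $\Rreal(D_{n,k})$. I would also check the boundary case $k = 0$ separately: there $D_{n,0}$ is the all-ones matrix of rank $1$, and the formula gives $n - \gcd(n,0) + 1 = n - n + 1 = 1$ under the convention $\gcd(n,0) = n$, consistent with everything above.

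There is no serious obstacle; the only delicate points are bookkeeping. One must make sure the eigenvalue at $\zeta = 1$ is genuinely nonzero — this is precisely where $n > k$ is used — and one must count the common roots of unity carefully, including the degenerate $k=0$ case. If one prefers not to invoke "the DFT diagonalizes circulants" as a black box, the alternative is to write the eigenvalues directly as $\lambda_j = \sum_{\ell=0}^{n-k-1}\omega^{j\ell}$ with $\omega = e^{2\pi i/n}$ and $j = 0,\dots,n-1$, note that the Vandermonde eigenvectors $(1,\omega^j,\omega^{2j},\dots)^t$ are linearly independent so the $\lambda_j$ are all the eigenvalues with multiplicity, and then run the same root-of-unity count; this is essentially the approach of Ingleton~\cite{ingleton1956rank} and of Grady and Newman~\cite{GRADY199711}.
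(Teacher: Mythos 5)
Your proof is correct: writing $D_{n,k}=g(P)$ with $g(x)=1+x+\cdots+x^{n-k-1}$ for the cyclic shift $P$, counting the $n$-th roots of unity annihilated by $g$, and using $\gcd(n,n-k)=\gcd(n,k)$ gives exactly $n-\gcd(n,k)+1$ nonzero eigenvalues, and the diagonalizability of $P$ together with the field-independence of the rank of a real matrix makes the count legitimate. The survey itself does not prove Theorem~\ref{theo-realDnk}; it only cites it from Ingleton~\cite{ingleton1956rank} and Grady and Newman~\cite{GRADY199711}, and your Fourier/eigenvalue-counting argument is essentially the standard one from those sources, so there is nothing in the paper it diverges from.
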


This immediately implies that if $n$ and $k$ are relatively prime then $D_{n,k}$ has full rank over the reals, and thus, also binary rank $n$. In particular, $\Rbin(C_n) = n$,
since $C_n = D_{n,1}$.
However, when $\gcd(n,k) > 1$,  $\Rreal(D_{n,k}) < n$ and the question was raised whether  $\Rbin(D_{n,k}) > \Rreal(D_{n,k})$.
Gregory~\cite{gregory1989biclique}  proved in 1989 that $\Rbin(D_{n,2})=n$ for all $n \geq 3$,
and Haviv and Parnas~\cite{haviv2023binary}  generalized his result in 2023 and proved:

\begin{theorem}[\cite{haviv2023binary}]
\label{theo-binDnk}
For all integers $n > k >0$, $\Rbin(D_{n,k}) \geq \min\{\Rreal(D_{n,k})+1,n\}.$
\end{theorem}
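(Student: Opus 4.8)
\section*{Proof proposal}

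The plan is to argue by contradiction. Since $\Rbin(D_{n,k})\ge\Rreal(D_{n,k})$ always holds, and since $\Rreal(D_{n,k})=n-\gcd(n,k)+1$ by Theorem~\ref{theo-realDnk} (so the claimed bound is vacuous when $g:=\gcd(n,k)=1$, as then $\Rreal(D_{n,k})=n$), it suffices to assume $g\ge 2$ and rule out a binary decomposition $D_{n,k}=A\cdot B$ with $A\in\{0,1\}^{n\times d}$, $B\in\{0,1\}^{d\times n}$ and $d=n-g+1=\Rreal(D_{n,k})$. First I would record the rank consequences: since $\rank(AB)=d$ while $A$ has $d$ columns and $B$ has $d$ rows, both factors have rank exactly $d$; hence $\ker A=\{0\}$, $B$ is onto, $\ker B=\ker D_{n,k}$, and $\mathrm{col}(A)=\mathrm{col}(D_{n,k})$.

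Next I would use linear algebra to pin down $\ker D_{n,k}$. As $D_{n,k}$ is circulant, its eigenvalues are the values of the symbol $f(x)=1+x+\cdots+x^{n-k-1}$ at the $n$-th roots of unity, and the $g-1$ vanishing eigenvalues occur exactly at the non-trivial $g$-th roots of unity; concretely, $\ker D_{n,k}$ is the space of vectors that are $g$-periodic with zero sum over one period. Circulant matrices are normal, so $\mathrm{col}(D_{n,k})=(\ker D_{n,k})^{\perp}$, which consists of the $x$ for which $\sum_{i\equiv r\,(g)}x_i$ is independent of $r$. Applying this to the columns of $A$ and the rows of $B$ yields the key structural fact: in the monochromatic rectangle partition of the ones of $D_{n,k}$ given by the decomposition, every row-set and every column-set meets each of the $g$ residue classes modulo $g$ in the same number of elements (``$g$-balance'').

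I would then reorganize the rows and columns by residue modulo $g$. A short computation with the pattern $(D_{n,k})_{ij}=1\iff (j-i)\bmod n\in\{0,\dots,n-k-1\}$ shows that, after this reordering, $D_{n,k}$ becomes a $g\times g$ block matrix all of whose blocks are $m\times m$ with $m=n/g$, each block equal to either $D_{m,k'}$ or $D_{m,k'}P_m$ (a cyclic column shift by $P_m$), where $k'=k/g$ and $\gcd(m,k')=1$; by Theorem~\ref{theo-realDnk} every block is invertible. Writing $A$ in row-blocks $A_0,\dots,A_{g-1}$ and $B$ in column-blocks $B_0,\dots,B_{g-1}$, the product $A_rB_s$ is the $(r,s)$ block and hence invertible for all $r,s$; in particular each $A_r$ and each $B_s$ has full rank $m$, and from $\ker B=\ker D_{n,k}$ one gets $(B_{s+1}-B_s)\mathbf 1_m=0$ for every $s$.

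The heart of the argument is to extract a contradiction from this picture; set $E_s:=B_{s+1}-B_s$. Because consecutive column-blocks differ only through the single off-diagonal block that is shifted, one checks $A_rE_s=0$ for all $r\ne s+1$, whereas $A_{s+1}E_s=D_{m,k'}(I-P_m)$ has rank exactly $m-1$. The relations $A_rE_s=0$ $(r\ne s+1)$ force $\mathrm{col}(E_s)\subseteq\bigcap_{r\ne s+1}\ker A_r$, a subspace of dimension at most $m$ (it meets $\ker A_{s+1}$ only in $\ker A=\{0\}$). If $\rank(E_s)\ge m$ for some $s$, then $\mathrm{col}(E_s)$ equals this $m$-dimensional space and is transversal to $\ker A_{s+1}$, so $A_{s+1}$ is injective on it and $\rank(A_{s+1}E_s)=m$, contradicting rank $m-1$. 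Hence $\rank(E_s)=m-1$ for every $s$, each $\mathrm{col}(E_s)$ is $(m-1)$-dimensional and transversal to $\ker A_{s+1}$, and (using $A_0B_s=D_{m,k'}$ for all $s$) one even gets $\ker A_0=\mathrm{col}(E_0)\oplus\cdots\oplus\mathrm{col}(E_{g-2})$. The remaining task — which I expect to be the main obstacle — is to rule out this last, tightly constrained configuration; here one must exploit the genuine circulant structure of the blocks rather than merely their invertibility. One route is to observe that $\mathrm{rowspace}(B_{s+1}-B_s)$ is then forced to be the full hyperplane $\mathbf 1_m^{\perp}$ and to push the linear algebra further; another, in the spirit of Gregory's original $k=2$ argument (where $D_{m,1}=I_m$ makes everything transparent), is combinatorial: show that no $g$-balanced all-ones rectangle of $D_{n,k}$ can contain a one lying on the diagonal of a diagonal block $D^{(r,r)}=D_{m,k'}$, using the bound $|S'|+|T'|\le m-k'+1$ for all-ones rectangles $S'\times T'$ inside $D_{m,k'}$ together with the mismatch between the shifted and unshifted off-diagonal blocks. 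Since such diagonal ones exist (the main diagonal of $D_{n,k}$ is all ones), their non-coverability contradicts the fact that the rectangles partition the ones, completing the proof.
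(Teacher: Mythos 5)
Your setup is largely correct and nontrivial: reducing to the case $g=\gcd(n,k)\ge 2$ and ruling out a decomposition of size exactly $d=n-g+1$; the observation that $\mathrm{col}(A)=\mathrm{col}(D_{n,k})$ and $\mathrm{rowspace}(B)=\mathrm{rowspace}(D_{n,k})$ force every row-set and column-set of the rectangles to be $g$-balanced; the block reorganization into (shifts of) $D_{m,k'}$ with $m=n/g$, $k'=k/g$; and the rank computation giving $\rank(E_s)=m-1$ with $\ker A_0=\mathrm{col}(E_0)\oplus\cdots\oplus\mathrm{col}(E_{g-2})$ all check out. But the proof is not finished: the contradiction is never derived. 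You yourself flag ruling out the resulting "tightly constrained configuration" as the main obstacle and offer only two sketched routes; that is exactly where the real difficulty of the theorem sits, so what you have is a (correct) reduction, not a proof.

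Worse, the concrete combinatorial route you propose is false as stated. Take $D_{6,2}$ (so $g=2$, $m=3$, $k'=1$): the rectangle with rows $\{0,3\}$ and columns $\{0,3\}$ is an all-ones rectangle, its row-set and column-set each meet both residue classes modulo $2$ equally, and it contains the diagonal ones $(0,0)$ and $(3,3)$. So $g$-balanced rectangles can cover ones on the diagonal of a diagonal block, and no contradiction can be extracted from the balance of a single rectangle; any completion must be a global argument about the whole partition. The other route ("push the linear algebra further" from $\mathrm{rowspace}(B_{s+1}-B_s)=\mathbf{1}_m^{\perp}$) is only a hope, not an argument. Note also that the argument of Haviv and Parnas that the survey sketches takes a different shape: it starts from an \emph{arbitrary} binary decomposition $D_{n,k}=X\cdot Y$ into $d$ rank-one $0,1$ terms and shows, whenever $\Rreal(D_{n,k})<n$, that some term $x\cdot y$ can be deleted without changing the real rank, so the remaining terms give $d-1\ge\Rreal(D_{n,k}-x\cdot y)=\Rreal(D_{n,k})$, i.e.\ $d\ge\Rreal(D_{n,k})+1$; it neither assumes $d=\Rreal(D_{n,k})$ nor needs the block analysis. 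To salvage your route you would have to supply a genuinely new global contradiction for the constrained configuration; otherwise an argument of the removable-rectangle type is the way to go.
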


Both results of~\cite{gregory1989biclique} and~\cite{haviv2023binary} use the following similar idea.
Assume that $\Rbin(D_{n,k}) = d$, and let $D_{n,k} = X \cdot Y$ be any binary decomposition of $D_{n,k}$,
where $X$ is a $0,1$ matrix of size $n \times d$ and $Y$ is a $0,1$ matrix of size $d \times n$.
If $\Rreal(D_{n,k}) < n$, it is possible to show that there exist a column $x$ of $X$ and a row $y$ of $Y$,
such that $\Rreal(D_{n,k} - x\cdot y) = \Rreal(D_{n,k})$.
This implies that $X' \cdot Y'$ is a binary decomposition of size $d-1 \geq \Rreal(D_{n,k})$ for the matrix $D_{n,k} - x \cdot y$,
where $X'$ is the matrix $X$ without column $x$ and $Y'$ is the matrix $Y$ without row $y$. Thus, $d \geq \Rreal(D_{n,k})+ 1$.

Using the formulation of the binary rank as a partition of the ones of the matrix into monochromatic rectangles of ones, this idea can be reformulated as follows.
Let $P$ be some partition of the ones of $D_{n,k}$ into monochromatic rectangles.
If $\Rreal(D_{n,k}) < n$ it is possible to show that there exists a rectangle $R$ in $P$,
so that if we remove $R$ from $D_{n,k}$, that is, exchange all the ones in this rectangle into zeros, then the real rank of the resulting matrix is $\Rreal(D_{n,k})$.
Since the partition $P \setminus R$ covers all ones of the resulting matrix, then $P \setminus R$ has at least $\Rreal(D_{n,k})$ rectangles, and
thus, with $R$ the partition $P$ has at least $\Rreal(D_{n,k})+1$ rectangles as claimed.

The exact binary rank of $D_{n,k}$ for the various values of $n,k$ is still unknown.
Hefner,  Henson,  Lundgren, and  Maybee~\cite{HefnerHLM90} conjectured that $\Rbin(D_{n,3})=n$, for all $n \geq 6$.

\subsection{Isolation (fooling) Sets}
\label{subsec-isolation}
A useful tool for proving lower bounds on the binary and Boolean rank of $M$ is finding a large {\em isolation set} in $M$
or a {\em fooling set} as it is called in communication complexity.

\begin{definition}
An {\em isolation set} in a $0,1$ matrix $M$,  is  a subset of $1$ entries of $M$, such that no two of these ones belong to an all-one $2\times 2$ sub-matrix of $M$,
and no two of these ones are in the same row or column of $M$. The size of the maximum isolation set of $M$ is denoted by $i(M)$.
\end{definition}

Thus, no two ones of an isolation set can belong to the same monochromatic rectangle,
and therefore, $i(M)$ provides a lower bound on $\Rbin(M)$ and $\Rbool(M)$ (see, e.g, Beasley~\cite{BEASLEY20123469}).

It was shown by Shitov~\cite{SHITOV20132500} that the maximum isolation set problem is $NP$-hard, by showing a reduction from the maximum independent set problem of a graph,
which is known to be $NP$-hard.
However, for some matrices it is relatively easy to show that there is a large isolation set and this gives an immediate lower bound on the binary and Boolean rank.
For example, the matrix $D_{n,k}$ has an isolation set of size $n$ for $k \geq \lfloor n/2 \rfloor$, and thus:

\begin{lemma}
If $k \geq \lfloor n/2 \rfloor $ then $\Rbool(D_{n,k}) = \Rbin(D_{n,k}) = n$.
\end{lemma}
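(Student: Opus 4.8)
The plan is to separate the equality $\Rbool(D_{n,k}) = \Rbin(D_{n,k}) = n$ into an easy upper bound and a lower bound witnessed by a large isolation set. For the upper bound, observe that every $n \times n$ $0,1$ matrix $M$ satisfies $M = I_n \cdot M$, which is a binary decomposition of size $n$; together with the general inequality $\Rbool(M) \le \Rbin(M)$ this gives $\Rbool(D_{n,k}) \le \Rbin(D_{n,k}) \le n$ with no reference to the hypothesis on $k$. It therefore remains to prove $\Rbool(D_{n,k}) \ge n$, and for this I would exhibit an isolation set of size $n$ in $D_{n,k}$ and invoke the fact recorded after the definition of isolation sets, namely $i(M) \le \Rbool(M) \le \Rbin(M)$.

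The candidate isolation set is the set of diagonal entries $\{(i,i) : 1 \le i \le n\}$, as in the matrix $D_{6,3}$ of Figure~\ref{fig:D84}. First I would make the circulant structure explicit: row $i$ of $D_{n,k}$ has its ones exactly in the columns $i, i+1, \dots, i+n-k-1$ read modulo $n$, so that $M_{i,j} = 1$ if and only if $(j-i) \bmod n \in \{0,1,\dots,n-k-1\}$; in particular the diagonal difference $j-i \equiv 0$ always lands in this set, so every diagonal entry is a one. The $n$ diagonal entries obviously lie in pairwise distinct rows and pairwise distinct columns, so the only remaining condition to check is that no two of them, say $(i,i)$ and $(j,j)$ with $i \ne j$, lie together in an all-one $2\times 2$ submatrix; equivalently, that at least one of $M_{i,j}$ and $M_{j,i}$ equals $0$.

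This last point is exactly where the hypothesis $k \ge \lfloor n/2 \rfloor$ is used. Setting $d = (j-i) \bmod n$, so that $1 \le d \le n-1$ and $(i-j)\bmod n = n-d$, the description above gives $M_{i,j} = 1 \iff d \le n-k-1$ and $M_{j,i} = 1 \iff n-d \le n-k-1 \iff d \ge k+1$. Hence both entries can be ones simultaneously only if there is an integer $d$ with $k+1 \le d \le n-k-1$, which happens if and only if $k+1 \le n-k-1$, i.e.\ $k \le n/2 - 1$. Since for integral $k$ the condition $k \ge \lfloor n/2\rfloor$ is precisely the negation of $k \le n/2 - 1$ for both parities of $n$, under our hypothesis no such $d$ exists, so for every pair $i \ne j$ at least one of $M_{i,j}, M_{j,i}$ is $0$ and the diagonal is an isolation set of size $n$. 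Putting the two bounds together yields $n = i(D_{n,k}) \le \Rbool(D_{n,k}) \le \Rbin(D_{n,k}) \le n$, forcing equality everywhere. I do not expect a genuine obstacle here; the only step requiring a bit of care is matching the floor threshold $k \ge \lfloor n/2\rfloor$ to the strict inequality $k > n/2 - 1$ uniformly in the parity of $n$.
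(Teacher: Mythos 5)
Your proposal is correct and follows essentially the same route as the paper: the lower bound comes from the $n$ diagonal ones forming an isolation set (you simply spell out the modular-arithmetic verification that the paper labels ``easy to verify''), and your upper bound $M=I_n\cdot M$ is just the paper's observation that the ones of each row can be covered by a single monochromatic rectangle, phrased as a decomposition. No gaps.
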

\begin{proof}
It is easy to verify that when $k \geq \lfloor n/2 \rfloor $, the $n$ ones on the main diagonal of $D_{n,k}$ form an isolation set of size $n$.
The result follows, since the ones of every row of $D_{n,k}$ can each be covered  by a single monochromatic  rectangle.
\end{proof}
Note, that in this case using isolation sets gives a better lower bound on the binary rank than using the real rank as a lower bound,
since $\Rreal(D_{n,k}) = n-\gcd(n,k)+1$  can be smaller than $n$ for certain values of $n,k$.
However, there are other examples where there is a large gap between $\Rbool(M), \Rbin(M)$ and $i(M)$.
For example, $i(C_n) = 3$ for $n \geq 3$,
whereas de Caen, Gregory, and Pullman~\cite{Caen2} showed that $\Rbool(C_n) = \Theta(\log_2 n)$ and since $C_n = D_{n,1}$ then $\Rreal(C_n) = \Rbin(C_n)= n$.
See Figure~\ref{fig:D84} for an illustration.

The idea of an isolation set can be generalized to isolation sets of order $k$, or as they are sometimes called, {\em extended} isolation sets (see~\cite{dietzfelbinger1996comparison},\cite{orlitsky1988communication}).

\begin{definition}
A subset $F$ of one entries of a $0,1$ matrix $M$ is called an {\em isolation set of order $k$} for $M$, if any $k+1$ ones in $F$ cannot be contained in the same monochromatic rectangle of ones in $M$.
\end{definition}

The isolation sets discussed above are, thus, isolation sets of order $1$, and since at most $k$ ones of an isolation set of order $k$ can be contained in the same monochromatic rectangle, we get immediately:

\begin{claim}
Let $F$ be an isolation set of order $k$ in a $0,1$ matrix $M$. Then $\Rbool(M), \Rbin(M) \geq |F| /k$.
\end{claim}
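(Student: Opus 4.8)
The plan is to reduce everything to the rectangle‑cover formulation and then run a simple double‑counting argument. Recall from the lemma of Gregory quoted earlier that $\Rbool(M) = Cover_1(M)$ and $\Rbin(M) = Partition_1(M)$, and that trivially $Partition_1(M) \geq Cover_1(M)$ since every partition of the ones is in particular a cover of the ones; equivalently, $\Rbin(M) \geq \Rbool(M)$. So it suffices to prove the bound for $\Rbool(M)$, i.e. to show $Cover_1(M) \geq |F|/k$, and the bound on $\Rbin(M)$ will follow at once.

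First I would fix an optimal cover $R_1,\dots,R_d$ of the ones of $M$ by monochromatic rectangles of ones, with $d = Cover_1(M) = \Rbool(M)$. The key observation is that each $R_t$ contains at most $k$ entries of $F$: this is just the contrapositive of the definition of an isolation set of order $k$. Indeed, if some $R_t$ contained $k+1$ or more entries of $F$, then those $k+1$ ones of $F$ would all lie inside the single monochromatic rectangle $R_t$, contradicting the defining property of $F$. Hence $|R_t \cap F| \leq k$ for every $t$.

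Next I would count. Since $R_1,\dots,R_d$ cover all ones of $M$, they in particular cover every entry of $F$, so
$$
|F| \;\leq\; \sum_{t=1}^{d} |R_t \cap F| \;\leq\; \sum_{t=1}^{d} k \;=\; dk .
$$
Rearranging gives $d \geq |F|/k$, i.e. $\Rbool(M) \geq |F|/k$, and combined with $\Rbin(M) \geq \Rbool(M)$ this yields $\Rbin(M) \geq |F|/k$ as well.

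There is no real obstacle here — the whole argument is a one‑line averaging over the rectangles of an optimal cover. The only points worth stating carefully are (i) that a partition of the ones is also a cover, so the same counting applies verbatim to $Partition_1(M)$ if one prefers to argue about $\Rbin$ directly, and (ii) that "each rectangle meets $F$ in at most $k$ points" is exactly the contrapositive of the order‑$k$ isolation condition, so no additional combinatorial input is needed.
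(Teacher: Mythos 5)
Your proof is correct and matches the paper's reasoning: the paper derives the claim immediately from the observation that at most $k$ ones of an order-$k$ isolation set can lie in a single monochromatic rectangle, which is exactly your double-counting over an optimal cover. Your explicit note that a partition is also a cover (so $\Rbin(M) \geq \Rbool(M)$ handles the binary case) is a fine way to package the same argument.
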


Another direction of research which attracted attention is finding the maximal gap possible between $\Rreal(M)$ and  $i(M)$.
A result of Dietzfelbinger, Hromkovi{\v{c}} and Schnitger~\cite{dietzfelbinger1996comparison} proves that the gap can be at most quadratic:

\begin{theorem}[\cite{dietzfelbinger1996comparison}]
Let $M$ be a $0,1$ matrix. Then  $i(M) \leq \Rreal(M)^2$.
\end{theorem}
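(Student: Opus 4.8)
The plan is to turn the combinatorial isolation condition into a statement about the ranks of real matrices, and then conclude via the standard bound on the rank of a Hadamard (entrywise) product.

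First I would fix an optimal real factorization. Let $r=\Rreal(M)$ and write $M=A\cdot B$ over $\R$ with $A$ of size $n\times r$ and $B$ of size $r\times m$. For a row index $i$ let $u_i\in\R^r$ denote the $i$-th row of $A$, and for a column index $j$ let $v_j\in\R^r$ denote the $j$-th column of $B$, so that $M_{i,j}=\langle u_i,v_j\rangle$ under the standard inner product on $\R^r$. Now let $F=\{(i_1,j_1),\dots,(i_N,j_N)\}$ be a maximum isolation set, so $N=i(M)$ and, by definition, the $i_t$ are pairwise distinct and the $j_t$ are pairwise distinct. Let $P$ be the $N\times N$ submatrix of $M$ on rows $i_1,\dots,i_N$ and columns $j_1,\dots,j_N$, so $P_{t,s}=\langle u_{i_t},v_{j_s}\rangle$. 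Then $P$ is the product of the $N\times r$ matrix with rows $u_{i_1}^t,\dots,u_{i_N}^t$ and the $r\times N$ matrix with columns $v_{j_1},\dots,v_{j_N}$, hence $\rank(P)\le r$.

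Next I would read off exactly what the isolation property says about $P$. Since every $(i_t,j_t)\in F$ is a one, $P_{t,t}=M_{i_t,j_t}=1$. And for $t\neq s$, the entries $(i_t,j_t)$ and $(i_s,j_s)$ are in distinct rows and columns and, by the isolation property, do not lie in a common all-one $2\times 2$ submatrix; this means $M_{i_t,j_s}=0$ or $M_{i_s,j_t}=0$, i.e. $P_{t,s}\cdot P_{s,t}=0$. Therefore the Hadamard product satisfies $P\circ P^t=I_N$. Since $\rank(P\circ P^t)\le\rank(P)\cdot\rank(P^t)\le r^2$, we get $i(M)=N=\rank(I_N)\le r^2=\Rreal(M)^2$, as claimed.

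The one ingredient that needs justification is the Hadamard-rank inequality $\rank(X\circ Y)\le\rank(X)\cdot\rank(Y)$ for matrices of the same shape, and this is the only nontrivial step: writing rank decompositions $X=\sum_k x_k y_k^t$ and $Y=\sum_\ell w_\ell z_\ell^t$ gives $X\circ Y=\sum_{k,\ell}(x_k\circ w_\ell)(y_k\circ z_\ell)^t$, a sum of $\rank(X)\cdot\rank(Y)$ rank-one matrices. Everything else is bookkeeping — the crux is recognizing that the isolation condition is precisely what forces $P\circ P^t$ to be the identity, so that the rank of a low-rank-times-low-rank Hadamard product is pinned down from below by $N$.
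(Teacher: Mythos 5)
Your proof is correct, and it is essentially the standard argument for this bound (the one from the cited Dietzfelbinger--Hromkovi\v{c}--Schnitger paper, familiar in communication complexity as the fooling-set bound): restrict to the submatrix $P$ indexed by the isolation set, observe that the isolation condition forces $P \circ P^t = I_N$, and apply $\rank(X \circ Y) \leq \rank(X)\cdot\rank(Y)$. The survey itself only states the theorem with a citation and gives no proof, so there is nothing in the paper to diverge from; your write-up, including the justification of the Hadamard-rank inequality via rank-one decompositions, is complete and accurate.
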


A combination of results of de Caen, Gregory, Henson, Lundgren, and Maybee~\cite{realnonnegative} and Parnas and Shraibman~\cite{parnas2025study}
gives a simple family of $0,1$ matrices, as described in the following theorem,
with a multiplicative gap of almost $2$ between the real rank and the binary and Boolean rank and maximum isolation set, as the size of the matrix grows.
This gap can be amplified using Kronecker products (see Section~\ref{subsec-kronecker}).

\begin{theorem}[\cite{realnonnegative,parnas2025study}]
\label{theodoublegap}
For all $n \geq 4$, $n \neq 5$, there exists a $0,1$ matrix $M$ of size $n \times n$, where all rows and columns of $M$ are distinct,
such that $\Rreal(M) = \lfloor n/2\rfloor + 1$ and $\Rbin(M) = \Rbool(M) = i(M) = 2\lfloor n/2\rfloor$.
\end{theorem}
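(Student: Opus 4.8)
The plan is to write down an explicit matrix and then read off all four quantities from results already established in the survey. Put $k=\lfloor n/2\rfloor$, so the target is $\Rreal(M)=k+1$ and $\Rbin(M)=\Rbool(M)=i(M)=2k$. For even $n$ I would take $M=D_{n,n/2}$. For odd $n$ (hence $n\geq 7$, since $n\geq 4$ and $n\neq 5$) I would take the $n\times n$ matrix $M$ obtained from $D_{2k,k}$ by appending one all-zero row and one all-zero column. The two cases are uniform: $M$ is $D_{2k,k}$ padded by $n-2k\in\{0,1\}$ zero rows and columns.

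The real rank is immediate from Theorem~\ref{theo-realDnk}: $\Rreal(D_{2k,k})=2k-\gcd(2k,k)+1=2k-k+1=k+1$, and appending a zero row and column does not change the real rank, so $\Rreal(M)=k+1=\lfloor n/2\rfloor+1$. For the other three quantities I would use the main diagonal of the $D_{2k,k}$ block. Since $k\geq\lfloor 2k/2\rfloor$, the lemma on $D_{m,k}$ for $k\geq\lfloor m/2\rfloor$ (equivalently, a one-line check from the definition of isolation set) shows that the $2k$ ones on the diagonal of $D_{2k,k}$ form an isolation set, and these same ones form an isolation set of $M$ because the padding entries are zero. Hence $i(M)\geq 2k$, and therefore $\Rbool(M)\geq 2k$ and $\Rbin(M)\geq 2k$, using that the isolation number lower-bounds both. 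In the opposite direction $\Rbin(M)\leq 2k$ trivially, since all ones of $M$ lie in the $2k\times 2k$ block and are partitioned by its $2k$ rows (for even $n$ this is just $\Rbin(M)\leq n$). Sandwiching gives $i(M)=\Rbool(M)=\Rbin(M)=2k=2\lfloor n/2\rfloor$.

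Finally, all rows and columns of $M$ are distinct. In $D_{2k,k}$ the unique maximal block of ones in row $i$ starts at column $i$; since $0<k<2k$ this block is a proper nonempty cyclic interval, so its starting column is well defined, and distinct rows have their $1$-blocks in distinct positions and are therefore distinct. The transpose $D_{2k,k}^{t}$ is again a circulant of the same shape, so the columns are distinct too; and in the odd case each such row and column carries $k\geq 1$ ones, hence differs from the unique all-zero row (resp.\ column). This completes the verification.

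I do not expect a real obstacle beyond careful bookkeeping: the points that must be stated precisely are that the zero-padding simultaneously preserves the real rank, the isolation number, and the binary/Boolean rank, and keeps the rows and columns distinct. The one genuinely separate issue is the small-case handling: at $n=5$ we have $k=2$ and the family is degenerate, which is why $n=5$ is excluded in~\cite{realnonnegative,parnas2025study} (and $n\leq 3$ is outside the stated range); one should either check these by hand or simply follow the indexing of those sources. The substance of the theorem --- a simple family with real rank $k+1$ but Boolean and binary rank, and isolation number, all equal to $2k$ --- is carried entirely by Theorem~\ref{theo-realDnk} together with the diagonal isolation-set lower bound, once the right matrix has been identified.
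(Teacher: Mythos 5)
Your even case coincides with the paper's construction, $M=D_{n,n/2}$, and your verification via Theorem~\ref{theo-realDnk} together with the diagonal isolation set is fine. The problem is the odd case. The paper does not pad $D_{2k,k}$ with an all-zero row and column; it augments it with a \emph{nonzero} $0,1$ row and column that are nontrivial linear combinations of the rows and columns of $D_{2k,k}$ and are different from all existing rows and columns. That careful choice is the actual content of the odd case, and it is what your construction skips: once the added row is nonzero, the upper bounds $\Rbin(M)\leq 2k$ and $i(M)\leq 2k$ are no longer free (you can no longer cover the ones row by row using only $2k$ rows), so one must check that the extension keeps the binary rank and isolation number at $2k$ while preserving real rank $k+1$ and distinctness. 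Zero padding satisfies only the literal wording ``all rows and columns of $M$ are distinct'' and trivializes the non-degeneracy the theorem is after.

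The tell is $n=5$, which you noticed but explained away incorrectly. Under your construction nothing fails at $n=5$: $D_{4,2}$ padded with a zero row and column has real rank $3$, binary and Boolean rank and isolation number $4$, and pairwise distinct rows and columns, so your argument would ``prove'' the excluded case too. The exclusion is genuine precisely when zero (and repeated) rows and columns are disallowed, i.e.\ when $M$ is its own kernel: by Theorem~\ref{theoremLowerBound}, a $0,1$ matrix of real rank $3$ has a kernel with at most $4$ rows or columns, so no non-degenerate $5\times 5$ example exists; concretely, the only $0,1$ vectors in the row space of $D_{4,2}$ other than its rows are the zero vector and the all-ones vector, and extending by the all-ones row and column forces the real rank up to $4$. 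Under that intended reading your odd-$n$ matrices violate the hypothesis for every odd $n$, not just $n=5$, and the missing work is exactly the ``carefully selected'' nonzero row and column of the paper's proof together with the verification that all four quantities are unchanged.
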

\begin{proof}
The even sized matrices are the matrices $D_{n,n/2}$ of size $n \times n$.
The odd sized matrices are constructed by taking an even sized matrix $D_{n,n/2}$ and adding to it a carefully selected row and column
which are a linear combination of the rows and columns of $D_{n,n/2}$ and are different from all rows and columns of $D_{n,n/2}$.
\end{proof}

Friesen, Hamed,  Lee and Theis~\cite{friesen2015fooling} prove that for any $d \geq 1$ there exists a matrix $M_d$ over $\Q$,
such that $\Rreal(M_d) = d$ and $M_d$ has an isolation set of size $\binom{d+1}{2}$.
In this case the definition of an isolation set is such that $M_{i,j} \neq 1$ for any entry $(i,j)$ in the isolation set,
and the other requirements from an isolation set can be defined in a similar way to those of $0,1$ matrices.
Shigeta and  Amano~\cite{shigeta2015ordered} give a construction which almost matches the quadratic upper bound of~\cite{dietzfelbinger1996comparison}
between the real rank and the size of a maximum isolation set.

\begin{theorem}[\cite{shigeta2015ordered}]
There is a family of  $0,1$ matrices of size $n \times n$ with real rank $n^{1/2 + o(1)}$ and an isolation set of size $n$.
\end{theorem}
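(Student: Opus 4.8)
The plan is to reduce the statement to a clean combinatorial normal form, and then to build the matrices from a small ``admissible'' gadget by taking Kronecker powers. First, observe that permuting the rows and the columns of a matrix independently changes neither its real rank (we only multiply by permutation matrices) nor the size of a largest isolation set, and that an isolation set of size $n$ in an $n\times n$ matrix must hit every row and every column, i.e.\ it is a full permutation pattern. Hence, after such a permutation, we may assume the isolation set of size $n$ lies on the main diagonal. Unwinding the definition of an isolation set, this says precisely that $M_{i,i}=1$ for every $i$ and that for every $i\ne j$ at least one of $M_{i,j},M_{j,i}$ equals $0$; equivalently, the off-diagonal ones of $M$ form (a subset of the arc set of) a tournament on $\{1,\dots,n\}$. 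Call such a matrix \emph{admissible}. Note that every admissible matrix automatically satisfies $\Rbin(M)=n$, since its diagonal is an isolation set of size $n$, so the entire content of the theorem is to make the \emph{real} rank small. It therefore suffices to exhibit, for infinitely many $n$, an admissible $0,1$ matrix of order $n$ with $\Rreal(M)\le n^{1/2+o(1)}$.

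Second, the class of admissible matrices is closed under two useful operations. If $P$ and $Q$ are admissible of orders $p$ and $q$, then $P\otimes Q$ is admissible of order $pq$: its diagonal entries are products $P_{a,a}Q_{b,b}=1$, and for the tournament condition, if the first coordinates of the two index pairs differ then not both of the corresponding entries of $P$ are $1$, while if they agree then those entries of $P$ equal $1$ and the condition reduces to the tournament condition for $Q$. Moreover $\Rreal(P\otimes Q)=\Rreal(P)\cdot\Rreal(Q)$, and any principal submatrix of an admissible matrix is again admissible with no larger real rank. Consequently, if for some fixed admissible ``gadget'' $G$ of order $g$ one has $\Rreal(G)\le g^{\,\alpha}$, then for every $n$ the Kronecker power $G^{\otimes k}$ with $g^{k}\ge n$, restricted to $n$ of its rows and the matching columns, is admissible of order $n$ with real rank at most $g^{k\alpha}\le (ng)^{\alpha}=O_{g}\!\left(n^{\alpha}\right)$. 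Thus it is enough to produce, for every $\varepsilon>0$, a single admissible gadget whose real rank is at most its order raised to the power $\tfrac12+\varepsilon$; feeding a sequence of such gadgets with $\varepsilon\to 0$ into the above reductions yields the claimed $n^{1/2+o(1)}$ family.

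The heart of the proof — and the step I expect to be the real obstacle — is therefore the explicit construction of these low-rank admissible gadgets. The naive attempt is to take a circulant gadget $I+\sum_{s\in S}P^{s}$ over $\Z_{m}$, where $P$ is the cyclic shift and $S\cap(-S)=\emptyset$, which makes admissibility automatic; such a matrix has real rank equal to the number of nonvanishing Fourier coefficients of $\mathbf 1_{\{0\}\cup S}$, and one would like to choose $S$ so that this count is only $m^{1/2+o(1)}$. This runs into a genuine tension: a $0,1$ function with sparse spectrum must be a union of cosets of a subgroup, and as soon as it contains $0$ it must contain that whole subgroup, which, for a subgroup of size larger than $1$, contains an inverse pair and destroys admissibility. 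So the circulant route has to be replaced by a more delicate, non-circulant construction that trades a little spectral sparsity for the asymmetry constraint — plausibly through character sums over a finite field or a suitable combinatorial design — and it must use rank-$r$ factorizations with \emph{signed} (not $0,1$) factors, since, as noted above, every admissible matrix has binary rank equal to its order, so only a signed factorization can make the real rank small. Once a family of gadgets with exponents tending to $\tfrac12$ is available, the two reductions above finish the argument; and the resulting bound $\Rreal(M)=n^{1/2+o(1)}$ together with $i(M)=n$ is optimal up to the $n^{o(1)}$ factor, because any such $M$ is forced to satisfy $n=i(M)\le \Rreal(M)^{2}$ by the theorem of Dietzfelbinger, Hromkovi\v{c} and Schnitger recalled above.
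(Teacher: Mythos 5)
The survey states this theorem with only a citation to Shigeta and Amano and gives no proof, so there is nothing in the paper to compare against line by line; judged on its own, your proposal has a genuine gap at its core. Your normal form is correct (an isolation set of size $n$ must be a permutation pattern, so after permuting you may place it on the diagonal, and the condition becomes: all diagonal entries equal $1$ and for $i\neq j$ not both $M_{i,j}$ and $M_{j,i}$ equal $1$), and your closure observations are also correct (admissibility is preserved by Kronecker products and by principal submatrices, and $\Rreal$ is multiplicative under $\otimes$). But this machinery does not by itself produce anything: a single admissible gadget $G$ of order $g$ only yields the exponent $\alpha=\log_g \Rreal(G)$, which is a fixed constant strictly larger than $\tfrac12$, and to reach $n^{1/2+o(1)}$ you need a \emph{sequence} of admissible matrices whose exponents tend to $\tfrac12$ --- which is precisely the statement you are trying to prove. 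So the reduction is essentially circular, and the third paragraph, where the actual construction should appear, instead concedes that the circulant approach fails and gestures at ``character sums or a suitable combinatorial design'' without exhibiting anything. Nothing in the proposal rules out that every admissible matrix has real rank, say, $\Omega(n^{0.6})$.

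For the record, the missing ingredient in the cited work is a construction based on \emph{ordered biclique partitions} of the complete graph: Shigeta and Amano build ordered biclique partitions of $K_n$ with only $n^{1/2+o(1)}$ parts (via a recursive/product construction on the ordering), and each such partition translates directly into an admissible matrix $M=I+\sum_k a_k b_k^{t}$ whose real rank is at most one plus the number of parts while its diagonal remains an isolation set of size $n$. Your own closing remark that $n=i(M)\le \Rreal(M)^2$ forces near-optimality is correct and is a nice sanity check, but the theorem itself is not established by the proposal as written.
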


This theorem gives immediately an almost quadratic  gap between the real and the Boolean and binary  rank.
In Section~\ref{Subsec-complemement} we show a family of matrices with a quasi polynomial gap between the real and the Boolean and binary rank.
%Is there a matrix with a larger than quadratic gap between the real and the binary rank? It is still unknown.
Additional examples where isolations sets were used can be found in~\cite{beasley2012isolation, de1988boolean, haviv2022upper, HefnerHLM90, watson2016nonnegative}.

%{\bf To Do:} Add extended fooling sets. Add comparison of real lower bound and fooling set lower bound for examples.

\subsection{Combinatorics and Graph Theory}
\label{Subsec-combinatorics}
As described in Section~\ref{Sec:equivalent}, the Boolean and binary rank have alternative formulations which are closely related to problems in combinatorics and graph theory.
Thus, it is not surprising that these fields play an important role when studying the Boolean and binary rank.
One classic example is a theorem proved by Sperner~\cite{sperner1928satz} in 1928,
which was used many years later, in 1981, by de Caen, Gregory and Pullman~\cite{Caen2} to give a tight lower bound on the Boolean rank of the matrix $C_n$.

A {\em Sperner family} or an {\em anti-chain of sets}, is a family of sets for which none of the sets is a subset of another.
Using this definition Sperner proved the following well known theorem, which is named after him.

\begin{theorem}[Sperner's Theorem]
\label{thm:sperner}
Let $S$ be a Sperner family of subsets of $\{1,2,...,n\}$. Then, $|S| \leq {n \choose \lfloor n/2\rfloor}$.
\end{theorem}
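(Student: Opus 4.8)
The plan is to use the classical symmetric chain decomposition argument, which I find the cleanest of the several available proofs. First I would recall the definition of a \emph{symmetric chain} in the Boolean lattice $2^{\{1,\dots,n\}}$: a sequence of subsets $A_0 \subset A_1 \subset \cdots \subset A_\ell$ with $|A_{i+1}| = |A_i| + 1$ for each $i$, such that $|A_0| + |A_\ell| = n$. The key structural fact I would establish is that $2^{\{1,\dots,n\}}$ can be partitioned into such symmetric chains. Each symmetric chain passes through exactly one set of size $\lfloor n/2 \rfloor$, so the number of chains in any such decomposition equals the number of middle-layer sets, namely $\binom{n}{\lfloor n/2 \rfloor}$.

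The decomposition itself I would build by induction on $n$. The base case $n = 0$ (or $n = 1$) is the single trivial chain. For the inductive step, given a symmetric chain decomposition of $2^{\{1,\dots,n-1\}}$, each chain $A_0 \subset \cdots \subset A_\ell$ splits into two chains over $\{1,\dots,n\}$: the ``doubled'' chain $A_0 \subset \cdots \subset A_\ell \subset A_\ell \cup \{n\}$ and the ``shifted'' chain $A_0 \cup \{n\} \subset A_1 \cup \{n\} \subset \cdots \subset A_{\ell-1} \cup \{n\}$ (the latter omitted when $\ell = 0$). A short check confirms that both new chains are again symmetric and that together, over all original chains, they partition $2^{\{1,\dots,n\}}$. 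This combinatorial bookkeeping is the main point requiring care, though it is entirely routine.

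Once the symmetric chain decomposition is in hand, the theorem follows immediately: any Sperner family $S$ can contain at most one member from each chain, since the members of a chain are pairwise comparable under inclusion. Hence $|S|$ is at most the number of chains, which is $\binom{n}{\lfloor n/2\rfloor}$, as claimed. The only genuine obstacle is verifying the inductive construction of the chain decomposition carefully, in particular checking the symmetry condition $|A_0| + |A_\ell| = n$ is preserved and that no set is covered twice or left uncovered; everything after that is a one-line pigeonhole argument.
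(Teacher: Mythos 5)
Your proof is correct. Note that the paper does not prove this statement at all: it cites Sperner's theorem as a classical result and only \emph{uses} it (to lower bound the Boolean rank of anti-chain matrices such as $C_n$), so there is no in-paper argument to compare against. Your symmetric chain decomposition argument is one of the standard proofs, and all the steps check out: the inductive splitting of a chain $A_0\subset\cdots\subset A_\ell$ over $\{1,\dots,n-1\}$ into $A_0\subset\cdots\subset A_\ell\subset A_\ell\cup\{n\}$ and $A_0\cup\{n\}\subset\cdots\subset A_{\ell-1}\cup\{n\}$ preserves the symmetry condition ($|A_0|+|A_\ell|+1=n$ and $|A_0|+1+|A_{\ell-1}|+1=n$), covers every subset of $\{1,\dots,n\}$ exactly once, and each symmetric chain meets the middle layer exactly once, so the pigeonhole conclusion $|S|\leq\binom{n}{\lfloor n/2\rfloor}$ follows. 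If you want an even shorter self-contained argument, the LYM/Lubell double-counting of maximal chains gives the same bound in a few lines without constructing any decomposition, but your route is equally valid and has the added benefit of yielding the full chain decomposition of the Boolean lattice.
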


It is possible to identify each row of a $0,1$ matrix of size $n \times m$ with a subset of the elements of $\{1,2,...,m\}$,
where if a row has a one in column $j$ then the corresponding subset contains $j$ as an element.
Thus, the rows of a $0,1$ matrix are an {\em anti-chain} if no row is  contained in another row as a subset. Let
$$\sigma(n) = \min \left\{d\;|\; n \leq {d \choose \lfloor d/2\rfloor}\right\}.$$
Note that $\sigma(n) = (1+o(1)) \cdot \log_2 n$.
Using Sperner's theorem, de Caen, Gregory and Pullman~\cite{Caen2} proved the following general lower bound on the Boolean rank of matrices,
and used this theorem to give a tight bound on the Boolean rank of the matrix $C_n$.

\begin{theorem}[\cite{Caen2}]
\label{thm:caen}
Let $M$ be a $0,1$ matrix of size $n \times m$. If the rows of  $M$ form an anti-chain, then $\Rbool(M) \geq \sigma(n)$.
\end{theorem}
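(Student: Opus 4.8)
The plan is to use the Boolean decomposition of $M$ to associate to each row of $M$ a subset of $\{1,\dots,d\}$, where $d = \Rbool(M)$, and then to argue that distinct comparable rows of $M$ must receive distinct and in fact \emph{incomparable} subsets; Sperner's theorem then forces $n \le \binom{d}{\lfloor d/2\rfloor}$, which is exactly $d \ge \sigma(n)$ by definition of $\sigma$. Concretely, write $M = A \cdot B$ as an optimal Boolean decomposition, with $A$ of size $n \times d$ and $B$ of size $d \times m$, and for each row index $i$ let $X_i \subseteq \{1,\dots,d\}$ be the support of the $i$-th row of $A$ (equivalently, in the rectangle-cover language, $X_i$ is the set of rectangles among $R_1,\dots,R_d$ whose row-set contains $i$).

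First I would record the key compatibility fact coming from the Boolean product: for every row $i$ and column $j$, $M_{i,j} = 1$ if and only if there is some $t \in X_i$ with $B_{t,j} = 1$, i.e. $X_i$ meets the support of column $j$ of $B$. From this, if $X_i \subseteq X_{i'}$ then every column on which row $i$ of $M$ has a one is also a column on which row $i'$ has a one, so row $i$ of $M$ is (as a set) contained in row $i'$ of $M$. Now invoke the hypothesis that the rows of $M$ form an anti-chain: this containment forces $i = i'$. Contrapositively, for $i \ne i'$ the sets $X_i$ and $X_{i'}$ are incomparable, so $\{X_1,\dots,X_n\}$ is a Sperner family of subsets of a $d$-element set, and in particular the $X_i$ are pairwise distinct, giving $n$ distinct members of an anti-chain in $2^{\{1,\dots,d\}}$.

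Applying Sperner's Theorem (Theorem~\ref{thm:sperner}) to this family yields $n \le \binom{d}{\lfloor d/2\rfloor}$, and since $\sigma(n)$ is defined as the least $d'$ with $n \le \binom{d'}{\lfloor d'/2\rfloor}$, we conclude $d \ge \sigma(n)$, i.e. $\Rbool(M) \ge \sigma(n)$, as claimed. I expect the only delicate point to be the direction of the implication in the anti-chain step: one must be careful that $X_i \subseteq X_{i'}$ gives containment of the $i$-th row of $M$ inside the $i'$-th row (and not the reverse), and that the argument genuinely uses the \emph{Boolean} arithmetic — over the integers a containment of supports would not by itself control the product entries. Everything else is bookkeeping: translating the decomposition into the support sets and quoting Sperner.
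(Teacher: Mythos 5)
Your proposal is correct and follows essentially the same route as the paper: take an optimal Boolean decomposition $M = A\cdot B$, identify each row of the left factor $A$ with a subset of $\{1,\dots,d\}$, observe that the anti-chain property of the rows of $M$ forces these subsets to form a Sperner family, and apply Theorem~\ref{thm:sperner}. The only difference is that you spell out the transfer step (that $X_i \subseteq X_{i'}$ implies containment of the corresponding rows of $M$ under Boolean arithmetic), which the paper's proof states without detail.
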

\begin{proof}
Let $\Rbool(M) = d$. Thus, there exist $0,1$ matrices $X$ and $Y$, where $X$ is of size $n \times d$ and $Y$ is of size $d \times m$, such that $M = X\cdot Y$
and the operations are Boolean.
Since the rows of $M$ are an anti-chain, then the rows of $X$ are also an anti-chain. If we identify each $0,1$ row of length $d$ of $X$ with a subset of $\{1,...,d\}$,
then by Sperner's theorem, $n \leq {d \choose \lfloor d/2\rfloor}$. Hence, $d \geq \sigma(n)$ as required.
\end{proof}

\begin{corollary}[\cite{Caen2}]
\label{coro-Cn}
$\Rbool(C_n) = \sigma(n) = \Theta(\log n)$.
\end{corollary}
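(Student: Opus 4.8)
The plan is to prove the two inequalities $\Rbool(C_n)\ge \sigma(n)$ and $\Rbool(C_n)\le\sigma(n)$ separately, and then invoke the estimate $\sigma(n)=(1+o(1))\log_2 n$ recorded just before Theorem~\ref{thm:caen} to conclude $\sigma(n)=\Theta(\log n)$. The lower bound is essentially immediate: row $i$ of $C_n$ is the characteristic vector of $\{1,\dots,n\}\setminus\{i\}$, a subset of size $n-1$. Since all $n$ rows correspond to sets of the same cardinality, no row is contained in another, so the rows of $C_n$ form an anti-chain, and Theorem~\ref{thm:caen} gives $\Rbool(C_n)\ge\sigma(n)$ at once.

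For the matching upper bound I would use the rectangle-cover formulation $\Rbool(C_n)=Cover_1(C_n)$ from the lemma of~\cite{Gregory}. Set $d=\sigma(n)$, so that by definition $n\le\binom{d}{\lfloor d/2\rfloor}$. Hence one can pick distinct subsets $S_1,\dots,S_n\subseteq\{1,\dots,d\}$, each of size exactly $\lfloor d/2\rfloor$, since the middle layer of the Boolean lattice on $\{1,\dots,d\}$ has at least $n$ members. For each $t\in\{1,\dots,d\}$ define a combinatorial rectangle $R_t$ with row set $\{\,i : t\in S_i\,\}$ and column set $\{\,j : t\notin S_j\,\}$.

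Two routine checks then finish the argument. First, each $R_t$ is a monochromatic rectangle of ones: if $(i,j)\in R_t$ then $t\in S_i$ and $t\notin S_j$, so $S_i\ne S_j$, hence $i\ne j$ and $(C_n)_{i,j}=1$. Second, the $d$ rectangles $R_1,\dots,R_d$ cover every one of $C_n$: given $i\ne j$ we have $S_i\ne S_j$ with $|S_i|=|S_j|=\lfloor d/2\rfloor$, so $S_i\not\subseteq S_j$ and there is some $t\in S_i\setminus S_j$, which puts $(i,j)\in R_t$. Thus $Cover_1(C_n)\le d$, i.e. $\Rbool(C_n)\le\sigma(n)$, and combining with the lower bound yields $\Rbool(C_n)=\sigma(n)=\Theta(\log n)$.

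There is no genuine obstacle here; the only point requiring care is to match the anti-chain condition precisely to the middle layer of the Boolean lattice (so that Sperner's bound is tight) and to verify the two rectangle conditions above. Everything else is a direct consequence of Theorem~\ref{thm:caen}, Sperner's theorem (Theorem~\ref{thm:sperner}), and the rectangle-cover characterization of the Boolean rank.
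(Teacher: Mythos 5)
Your proof is correct and follows essentially the same route as the paper: the lower bound via the anti-chain criterion of Theorem~\ref{thm:caen}, and the upper bound by assigning distinct middle-layer subsets of $\{1,\dots,\sigma(n)\}$ to the rows, which is exactly the paper's factorization $C_n = X\cdot \overline{X}^t$ written out as an explicit rectangle cover (your $R_t$ are the rank-one terms of that product). The only cosmetic difference is using $\lfloor d/2\rfloor$-sets versus the paper's $\lceil d/2\rceil$-sets, which changes nothing since the two binomial coefficients coincide.
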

\begin{proof}
By Theorem~\ref{thm:caen},  $\Rbool(C_n) \geq \sigma(n)$. To prove an upper bound on $\Rbool(C_n)$: Let $X$ be a $0,1$ matrix of size $n \times \sigma(n)$,
whose rows are all distinct and each have exactly $\lceil \sigma(n)/2 \rceil$ ones. Let $Y  = \overline{X}^t$ be the transpose of the complement of $X$.
Therefore, $Y$ is of size $\sigma(n) \times n$ and has $\lfloor \sigma(n)/2 \rfloor$ ones in each column.
By construction, $C_n = X \times Y$. Therefore, $\Rbool(C_n) \leq \sigma(n)$, and we are done.
\end{proof}

Another general lower bound technique on the Boolean rank was given by Jukna and Kulikov~\cite{jukna2009covering} in 2009. They considered the graph formulation of the problem
and proved the following theorem:

\begin{theorem}[\cite{jukna2009covering}]
\label{theo-jukna2009}
Let $G = (V,E)$ be a graph with a matching of size $m$. Then at least $m^2/|E|$ bicliques are required to cover its edges.
\end{theorem}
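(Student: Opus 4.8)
The plan is to argue via the biclique edge cover formulation. Let $B_1,\dots,B_t$ be bicliques covering all of $E$, where $B_i$ has parts $X_i$ and $Y_i$, which we may take to be disjoint since $G$ is loopless, and let $\{e_1,\dots,e_m\}$ be a matching in $G$. First I would assign to each matching edge $e_j$ one biclique $\beta(j)\in\{1,\dots,t\}$ that covers it, and set $k_i=|\beta^{-1}(i)|$, the number of matching edges charged to $B_i$, so that $\sum_{i=1}^t k_i=m$.

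Next, fix $B_i$ together with the $k_i$ matching edges charged to it. Each such edge $e_j$, as an edge of $G$, lies in $X_i\times Y_i$, so exactly one endpoint $a_j$ of $e_j$ lies in $X_i$ and the other endpoint $b_j$ lies in $Y_i$. Since the matching edges charged to $B_i$ are pairwise disjoint, the $a_j$ are $k_i$ distinct vertices of $X_i$ and the $b_j$ are $k_i$ distinct vertices of $Y_i$; as $B_i$ is a biclique, every pair $a_j b_{j'}$ is an edge of $G$ contained in $B_i$. Call this set of edges the \emph{witness set} $W_i$. Because $X_i\cap Y_i=\emptyset$ and the $a_j$'s (resp. $b_{j'}$'s) are distinct, the ordered pairs $(a_j,b_{j'})$ yield $k_i^2$ distinct edges, so $|W_i|=k_i^2$.

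The crux — and the step I expect to require the most care — is that the witness sets $W_1,\dots,W_t$ are pairwise disjoint as subsets of $E$. An edge $f=a_j b_{j'}\in W_i$ joins an endpoint of $e_j$ to an endpoint of $e_{j'}$ (with $j=j'$ allowed, the diagonal case $f=e_j$), and since distinct matching edges have disjoint endpoint sets, $f$ determines the unordered pair $\{e_j,e_{j'}\}$ of matching edges it meets. If $f$ also lay in $W_{i'}$, the same reasoning would force both $e_j$ and $e_{j'}$ to be charged to $B_{i'}$; but each matching edge is charged to exactly one biclique, whence $i'=i$. Therefore $\sum_{i=1}^t k_i^2=\sum_{i=1}^t|W_i|\le|E|$.

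Finally I would combine $\sum_i k_i=m$ with $\sum_i k_i^2\le|E|$ through the Cauchy--Schwarz inequality: $m^2=\big(\sum_{i=1}^t k_i\big)^2\le t\cdot\sum_{i=1}^t k_i^2\le t\,|E|$, so $t\ge m^2/|E|$, as claimed. Apart from the disjointness claim, the only routine checks are that a biclique covering $e_j$ really puts its endpoints on opposite sides and that $|W_i|=k_i^2$ exactly, both of which follow from $X_i\cap Y_i=\emptyset$.
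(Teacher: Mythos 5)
Your proposal is correct and follows essentially the same route as the paper: charge each matching edge to one covering biclique, observe that the $k_i$ charged matching edges span a complete bipartite set of $k_i^2$ edges inside $B_i$ (the paper's $F_i$, your $W_i$), note these sets are pairwise disjoint (the paper phrases this as the $F_i$ being vertex-disjoint, you derive it from each edge determining which matching edges it meets), and finish with Cauchy--Schwarz. The only differences are presentational.
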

\begin{proof}
Assume that the edges of $G$ can be covered by $t$ bicliques $B_1,...,B_t$. Let $M_i$ be the edges of the matching covered by $B_i$,
where if an edge of the matching is covered by more than one biclique, we assign it arbitrarily to one of the bicliques in the cover.
Let $F_i$ be the edges of the biclique spanned by the vertices of $M_i$,  that is, for each edge in $F_i$, both endpoints of this edge are part of edges of $M_i$.

For each $1 \leq i \leq t$,  the edges of $F_i$ form a complete bipartite graph with $|M_i|$ vertices on each side, that is,  $|F_i| = |M_i| \cdot |M_i|$.
Furthermore, $F = F_1 \cup F_2 \cup ... \cup F_t$ is a union of vertex disjoint bicliques which contains all $m$ edges of the matching.
Using the Cauchy-Schwarz inequality we get:
$$
m^2 = (|M_1| + ... + |M_t|)^2 \leq t \cdot (|M_1|^2 + |M_2|^2 + ... + |M_t|^2) = t \cdot |F|.
$$
Hence, $t \geq m^2 /|F| \geq m^2 /|E|$ as claimed.
\end{proof}

Theis~\cite{theis2007some} showed that the bound of~\cite{jukna2009covering} is bounded above by the size of a maximum isolation set in the corresponding adjacency matrix of the graph.
However, as stated above, the maximum isolation set problem is $NP$-hard, whereas finding a maximum matching can be done in polynomial time.
Thus, the theorem of~\cite{jukna2009covering} can be useful, especially for sparse graphs.
For example, a $d$-regular bipartite graph with $n$ vertices on each side always has a perfect matching of size $n$.
Hence, by Theorem~\ref{theo-jukna2009}, the biclique edge cover of such a graph is at least $n^2/(d\cdot n) = n/d$.

\subsection{The Probabilistic method}
\label{Sec-prob}

The probabilistic method is a powerful method which gives a non-constructive proof of the existence of an object with a certain property.
The idea is to prove that an object chosen randomly from a set of objects has the required property with a non-zero probability, and therefore, there exists such an object.
See a book by Alon and Spencer~\cite{alon2016probabilistic} for a comprehensive description of the beautiful results which can be proved using the probabilistic method.

Alon~\cite{alon1986covering} used this method to prove an upper bound on the number of cliques needed to cover the edges of a graph
which is a complement of a graph in which the degree of each vertex is bounded by $d$.
We slightly modify his argument and give a general upper bound on the Boolean rank of a matrix which has at most $d$ zeros in each column
(a similar claim holds for the rows).
See also Jukna~\cite{jukna2009set} who proved a similar result  while using the equivalent terminology of biclique edge covers of bipartite graphs.

\begin{theorem}
\label{theo:Alon}
Let $M$ be a $0,1$ matrix of size $n \times n$, $n \geq 2$, with at most $d$ zeros in each column. Then $\Rbool(M) \leq c(d) \log_2 n$ where $c(d) =  3e(d+1)/\log_2 e $.
\end{theorem}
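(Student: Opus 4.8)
The plan is to use the probabilistic method directly on the rectangle-cover formulation $\Rbool(M) = Cover_1(M)$. We will randomly construct a collection of monochromatic rectangles of ones and show that with positive probability this collection covers every one of $M$, hence a cover of the claimed size exists.

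First I would set up the random construction. Each monochromatic rectangle of ones is determined by a subset of rows $S$ (the rows participating) together with the set of columns $T$ that have a one in every row of $S$; so it suffices to pick the row set $S$ at random. The natural choice is to include each row independently with probability $p$, for a value of $p$ to be tuned at the end (something like $p = 1/(d+1)$, guided by Alon's original argument). This yields a single random monochromatic rectangle $R(S)$. We then take $r$ independent copies $R(S_1),\dots,R(S_r)$, for $r$ to be chosen as $\approx c(d)\log_2 n$.

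Next, the core estimate. Fix a one entry $M_{i,j}=1$. For this entry to be covered by $R(S)$ we need: (a) row $i\in S$, and (b) for every other row $k$ with $M_{k,j}=0$, we have $k\notin S$ (so that column $j$ survives into the rectangle). Since column $j$ has at most $d$ zeros, there are at most $d$ such rows $k$, all distinct from $i$. By independence the probability that a single random rectangle covers $(i,j)$ is at least $p(1-p)^d$. Choosing $p=1/(d+1)$ makes $(1-p)^d \ge 1/e$, so one rectangle covers $(i,j)$ with probability at least $\tfrac{1}{e(d+1)}$. Hence the probability that \emph{none} of the $r$ independent rectangles covers $(i,j)$ is at most $\bigl(1-\tfrac{1}{e(d+1)}\bigr)^r \le \exp(-r/(e(d+1)))$. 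There are at most $n^2$ one-entries, so by a union bound the probability that some one-entry is left uncovered is at most $n^2 \exp(-r/(e(d+1)))$. Taking $r = \lceil c(d)\log_2 n\rceil$ with $c(d) = 3e(d+1)/\log_2 e$ gives $r/(e(d+1)) \ge 3\ln n > 2\ln n$, so the failure probability is strictly less than $1$; therefore a cover by $r$ monochromatic rectangles of ones exists, and $\Rbool(M) \le c(d)\log_2 n$.

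I do not expect a serious obstacle here — the argument is a standard covering-by-random-sets computation. The one point that needs a little care is verifying that $R(S)$ really is a legitimate monochromatic rectangle of ones (in particular nonempty and correctly "closed" on the column side), and bookkeeping the constant: one must check $3e(d+1)/\log_2 e$ is exactly the constant that comes out of converting $\exp(-r/(e(d+1))) < n^{-2}$ into a bound of the form $r \le c(d)\log_2 n$, i.e. translating between $\ln$ and $\log_2$. The hypothesis $n\ge 2$ is just to make $\log_2 n$ positive. If one wanted, the same argument with rows and columns swapped handles matrices with at most $d$ zeros per row.
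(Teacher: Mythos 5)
Your proposal is correct and follows essentially the same route as the paper's proof: choose each row independently with probability $1/(d+1)$, take the columns that are all-one on the chosen rows to get a random monochromatic rectangle, bound the per-entry coverage probability below by $\frac{1}{e(d+1)}$, and union-bound over the at most $n^2$ one-entries with $t \approx c(d)\log_2 n$ independent rectangles. The constant bookkeeping ($r/(e(d+1)) = 3\ln n$, failure probability at most $n^2\cdot n^{-3} < 1$) matches the paper's calculation exactly.
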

\begin{proof}
If $d = 0$ the theorem clearly holds as the Boolean rank is then $1$. Thus, assume that $d \geq 1$.
Let $R$ be a subset of row indices of $M$, where for each $i$, row $i$ is chosen independently and randomly with probability $1/(d+1)$.
Let $C$ be all column indices $j$ for which $M_{i,j} = 1$ for each $i \in R$.
Let $S  = R \times C$  be the resulting subset of pairs $(i,j)$.
Note that $S$ is a monochromatic rectangle of ones in $M$, since all columns in $C$ have only ones in the rows in $R$.

What is the probability that entry $(i,j)$, for which $M_{i,j} = 1$, is included in the rectangle $S$?
The probability that row $i$ is selected initially into $R$  is $1/(d+1)$.
The pair $(i,j)$ will be in $S$ if for all $i'\in R$ it holds $M_{i',j} = 1$, and thus, $j$ will be included in $C$ (that is, all rows in $R$ have a $1$ in column $j$).
But since there are at most $d$ zeros in each column of $M$, then $(i,j)$ will be in $S$ if we didn't select any of at most $d$ rows which have zeros in column $j$,
and therefore, $(i,j)$ will be in $S$ with probability at least:
$$
\frac{1}{(d+1)}\left( 1 - \frac{1}{(d+1)}\right)^{d} \geq \frac{1}{e (d+1)}
$$
Now if we choose $t = c(d)\log n $ rectangles $S_1,...,S_{t}$ independently at random as described above,
the probability that $(i,j)$ is not included in any of these $t$ rectangles is at most:
$$
\left(1- \frac{1}{e(d+1)} \right)^t \leq e^{-t/(e(d+1))} = e^{-3\log_2 n/\log_2 e } = n^{-3}.
$$
Since there are at most $n^2$ ones in $M$, the probability that one of them is not covered by these $t$ rectangles is at most:
$$n^2 \cdot n^{-3} < 1.$$
Hence, there exists at least one choice of rectangles $S_1,...,S_t$ for which all ones of $M$ are covered, and thus, $\Rbool(M) \leq t = c(d)\log_2 n$ as claimed.
\end{proof}

As an immediate corollary we get an upper bound on the Boolean rank of $D_{n,k}$, since it has $k$ zeros in each column (and row).

\begin{corollary}
\label{coro-boolDnk}
Let $n > k \geq 1$ be integers. Then $\Rbool(D_{n,k}) \leq O(k\log n)$.
\end{corollary}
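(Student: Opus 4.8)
The plan is to apply Theorem~\ref{theo:Alon} essentially verbatim. First I would observe that the matrix $D_{n,k}$ is a circulant $0,1$ matrix of size $n \times n$ whose first row has exactly $n-k$ ones and $k$ zeros; since every subsequent row is a cyclic shift of the first, every row of $D_{n,k}$ contains exactly $k$ zeros, and by the circulant structure every column contains exactly $k$ zeros as well. Thus $D_{n,k}$ is a $0,1$ matrix with at most $k$ zeros in each column, and the hypothesis of Theorem~\ref{theo:Alon} is met with $d=k$.

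Next I would invoke Theorem~\ref{theo:Alon} directly: it gives $\Rbool(D_{n,k}) \leq c(k)\log_2 n$, where $c(k) = 3e(k+1)/\log_2 e$. Since $c(k) = 3e(k+1)/\log_2 e = O(k)$, this yields $\Rbool(D_{n,k}) \leq O(k\log_2 n) = O(k\log n)$, which is exactly the claimed bound. A minor bookkeeping point is that Theorem~\ref{theo:Alon} is stated for $n \geq 2$ and $d \geq 1$ (the case $d=0$ being handled separately there); here $n > k \geq 1$ guarantees both $n \geq 2$ and $k \geq 1$, so there is nothing extra to check.

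I do not anticipate any real obstacle: the entire content of the corollary is the reduction of "at most $k$ zeros per column" to the hypothesis of the already-proved theorem, together with the trivial estimate $c(k) = O(k)$. The only thing that requires a word of justification is why $D_{n,k}$ has exactly (hence at most) $k$ zeros per column, and this is immediate from its definition as a circulant matrix: a cyclic shift permutes the entries of a row, so each row has the same number of zeros as the first row, namely $k$, and reading down any fixed column one encounters each of the $n$ cyclic shifts of the first row once, so each column also has exactly $k$ zeros.
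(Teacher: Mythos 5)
Your proposal is correct and matches the paper's argument exactly: the paper also derives this corollary immediately from Theorem~\ref{theo:Alon}, noting that $D_{n,k}$ has $k$ zeros in each column (and row). Your additional justification of the zero count via the circulant structure is fine and consistent with the paper.
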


In Section~\ref{subsec-kronecker} we demonstrate how to generalize the argument of Alon~\cite{alon1986covering} to give an upper bound on the Boolean rank of
the $k$'th Kronecker power of a matrix with at most $d$ zeros in each column (row).
Another nice example is given by Rao and Yehudayoff~\cite{rao2020communication} who use the probabilistic method to prove an upper bound on the Boolean rank
of the $k$-disjointness problem studied in the context of communication complexity.

\subsection{Kernelization}
\label{Sec-kernel}
The kernelization method is a central technique in parameterized algorithms,
where the goal is to extract a small {\em kernel} of the original problem whose size depends only on a chosen parameter of the problem, such as the rank $d$ of a matrix,
so that solving the problem on this reduced instance yields a solution to the full problem.
Similar ideas appear in other fields of computer science and mathematics, where a large input is replaced by a compact core.
Examples include core-sets in computational geometry and clustering, where  a small subset of points approximates the full input set of points.
Similarly, in property testing algorithms a randomly chosen small "skeleton" serves as an approximate representation of the large object tested.
In these settings the reduced core or skeleton does not have to be fully equivalent to the original input, rather, it approximates it according to the criteria relevant to each application.
As we describe in Section~\ref{Sec-algorithms}, this approach was used to provide parameterized algorithms for the binary and Boolean rank,
as well as to design approximation and property testing algorithms for these ranks.

\begin{figure}[htb!]
\centering
\begin{tikzpicture}[>=latex, every node/.style={font=\large}]
\matrix (A) [matrix of math nodes,
             nodes in empty cells,
             left delimiter={(},
             right delimiter={)},
             row sep=3pt, column sep=3pt]
{
    |[draw=blue, thick]| 1 & 1 & 0 & 0 & |[draw=blue, thick]| 1 & 1 & 0 \\
    |[draw=blue, thick]| 1 & 1 & 0 & 0 & |[draw=blue, thick]| 1 & 1 & 0 \\
    0 & 1 & 1 & 1 & 0 & 0 & 0 \\
    0 & 0 & 0 & 1 & 0 & 1 & 0 \\
    |[draw=blue, thick]| 1 & 0 & 1 & 1 & |[draw=blue, thick]| 1 & 0 & 1 \\
    0 & 1 & 0 & 0 & 0 & 1 & 0 \\
};
\end{tikzpicture}
\;\;\;\;\;\;\;\;
\begin{tikzpicture}[>=latex, every node/.style={font=\large}]
\matrix (B) [matrix of math nodes,
             nodes in empty cells,
             left delimiter={(},
             right delimiter={)},
             row sep=3pt, column sep=3pt]{
    |[draw=blue, thick]| 1 & 1 & 0 & 0 & 1 & 0  \\
    0 & 1 & 1 & 1 & 0 & 0  \\
    0 & 0 & 0 & 1 & 1 & 0 \\
    |[draw=blue, thick]| 1 & 0 & 1 & 1 & 0 & 1 \\
    0 & 1 & 0 & 0 & 1 & 0 \\
};
\end{tikzpicture}
\caption{On the left a $0,1$ matrix $M$ with identical rows and columns, and on the right its kernel $M'$.
The blue rectangle covering some of the ones of $M'$ can be extended in an obvious way to cover the ones in the duplicate rows and columns removed from $M$.}
\label{fig:kernel}
\end{figure}

A simple common observation used by these results is that the Boolean and binary rank of a matrix $M$ does not change if we apply the following operations,
sometimes called  {\em kernelization rules}, to the rows and columns of $M$:
\begin{enumerate}
\item Remove from $M$ all-zero rows and columns.
\item Remove from $M$ duplicate rows and columns.
\end{enumerate}
Of course, the real rank also does not change by this modification of $M$. The resulting smaller sub-matrix is called the {\em kernel} of $M$.
See Figure~\ref{fig:kernel} for an illustration.

When considering the equivalent problem of finding a minimum biclique edge cover or partition for a bipartite graph $G$ whose reduced adjacency matrix is $M$,
this means to perform the following operations on the graph:
\begin{enumerate}
\item Remove all vertices with no neighbors (which corresponds to removing all-zero rows and columns in $M$).
\item Remove all vertices but one from any set of vertices with the same neighbor set (which corresponds to removing duplicate rows and columns in $M$).
A pair of vertices which have the same neighbor set are sometimes called {\em twin} vertices.
\end{enumerate}

The resulting sub-graph is called the {\em kernel} of $G$.
Thus, the number of rows and columns of the kernel of a matrix $M$ is the same as the number of vertices on each side of the kernel of a bipartite graph $G$.

The following simple lemma bounds the size of the kernel as a function of the Boolean or binary rank of $M$, and shows that the number of rows/columns of any $0,1$ matrix $M$
is at most $2^{\Rbin(M)}$ and at most $2^{\Rbool(M)}$.

\begin{lemma}
\label{lem:binary_size of matrix}
Let $M$ be a $0,1$-matrix of binary or Boolean rank  $d$.
Then $M$ has at most $2^d$ distinct rows and at most $2^d$ distinct columns.
\end{lemma}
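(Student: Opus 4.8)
The plan is to read off the bound directly from a rank-$d$ factorization. Let $M$ be of size $n \times m$, and let $M = X \cdot Y$ be an optimal decomposition witnessing $\Rbin(M) = d$ (resp. $\Rbool(M) = d$), so that $X$ is an $n \times d$ $0,1$ matrix, $Y$ is a $d \times m$ $0,1$ matrix, and the product is taken over the integers (resp. under Boolean arithmetic). The single observation driving everything is that the $i$-th row of $M$ depends only on the $i$-th row of $X$: writing $X_i$ for the $i$-th row of $X$, we have $M_i = X_i \cdot Y$ in the relevant arithmetic, so $X_i = X_k$ forces $M_i = M_k$.

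From this, I would argue as follows. The matrix $X$ is a $0,1$ matrix with exactly $d$ columns, hence it has at most $2^d$ distinct rows. By the observation above, the assignment $X_i \mapsto M_i$ is a well-defined map from the set of distinct rows of $X$ onto the set of distinct rows of $M$; therefore $M$ has at most $2^d$ distinct rows. The statement for columns is symmetric: the $j$-th column of $M$ equals $X \cdot Y^j$, where $Y^j$ is the $j$-th column of $Y$, so equal columns of $Y$ yield equal columns of $M$, and since $Y$ is a $0,1$ matrix with $d$ rows it has at most $2^d$ distinct columns; hence $M$ has at most $2^d$ distinct columns.

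I do not expect a genuine obstacle here — the claim is essentially immediate from the definition of these ranks as $d$-dimensional factorizations. The only point worth flagging explicitly is that the argument is insensitive to the choice of semiring: whether the product $X \cdot Y$ is computed over $\Z$ or under Boolean arithmetic, each row of the product is still a function of the corresponding row of the left factor alone (and, dually, each column of a function of the corresponding column of the right factor). This also yields the accompanying remark that the kernel of $M$ has at most $2^{\Rbin(M)}$ rows and at most $2^{\Rbin(M)}$ columns, and likewise at most $2^{\Rbool(M)}$ of each, since passing to the kernel only deletes all-zero and duplicate lines and therefore leaves $\Rbin(M)$ and $\Rbool(M)$ unchanged.
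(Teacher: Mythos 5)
Your proof is correct and is essentially the paper's argument in different clothing: the paper assigns to each row the subset of the $d$ monochromatic rectangles it belongs to and notes that rows with the same subset are identical, which is exactly your observation that row $i$ of $M$ is determined by row $i$ of $X$ (the indicator vector of that subset), so both proofs reduce to counting the at most $2^d$ possible $0,1$ patterns of length $d$. The column case and the insensitivity to the choice of arithmetic ($\Z$ versus Boolean) are handled the same way in both.
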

\begin{proof}
If $\Rbool(M) = d$  then the  one entries of $M$ can be covered by $d$ monochromatic rectangles.
Any two rows that share a monochromatic rectangle must have one entries in the columns that belong to this rectangle.
Therefore, there are at most $2^d$ distinct rows in $M$ according to the monochromatic rectangles to which each row belongs.
A similar argument holds for the columns and for the binary rank.
\end{proof}

Nor et al.~\cite{nor2012mod} define a more elaborate kernelization for the biclique edge cover problem,
but the number of vertices on each side of the kernel sub-graph remains at most $2^d$.

Yet another bound can be achieved on the product of the number of distinct rows and distinct columns of a $0,1$ matrix $M$ as a function of the rank of $M$,
as proved by Bshouty~\cite{bshouty2023property}, using the following combinatorial lemma of Sgall~\cite{sgall1999bounds}.

\begin{lemma}[\cite{sgall1999bounds}]
Let $\cal A,\cal B$ be two families of subsets of $\{1,2,...,d\}$, such that for every $A \in \cal A$
and $B \in \cal B$ it holds that $|A \cap B| \leq s$. Then
$$|{\cal A}| \cdot |{\cal B}| \leq {d \choose \leq s }\cdot 2^d$$
where ${d \choose \leq s} = \sum_{i = 0}^d {d \choose i}$.
\end{lemma}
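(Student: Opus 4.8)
The plan is to prove the lemma by induction on $d$, the size of the ground set, at each step splitting both families according to whether they contain the element $d$. (Throughout, $\binom{d}{\le s}$ denotes $\sum_{i=0}^{s}\binom{d}{i}$, with the convention $\binom{d-1}{\le -1}=0$; the only feature used is the summed Pascal identity $\binom{d}{\le s}=\binom{d-1}{\le s}+\binom{d-1}{\le s-1}$.) For the base case $d=0$ both families are contained in $\{\emptyset\}$, so $|\mathcal{A}|\cdot|\mathcal{B}|\le 1=\binom{0}{\le s}2^{0}$.

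For the inductive step I would write $\mathcal{A}_0=\{A\in\mathcal{A}:d\notin A\}$ and $\mathcal{A}_1=\{A\setminus\{d\}:A\in\mathcal{A},\ d\in A\}$, both regarded as families of subsets of $[d-1]$, and define $\mathcal{B}_0,\mathcal{B}_1$ in the same way. Since this is the partition of $\mathcal{A}$ (resp.\ $\mathcal{B}$) according to whether $d$ is present, $|\mathcal{A}|=|\mathcal{A}_0|+|\mathcal{A}_1|$ and $|\mathcal{B}|=|\mathcal{B}_0|+|\mathcal{B}_1|$. A routine check of intersection sizes over $[d-1]$ shows that the three pairs $(\mathcal{A}_0,\mathcal{B}_0)$, $(\mathcal{A}_0,\mathcal{B}_1)$, $(\mathcal{A}_1,\mathcal{B}_0)$ still satisfy the bound ``$\le s$'', whereas for $(\mathcal{A}_1,\mathcal{B}_1)$ the common element $d$ has been deleted, so it satisfies ``$\le s-1$''. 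Applying the induction hypothesis over $[d-1]$ to these four pairs gives
$$|\mathcal{A}_0||\mathcal{B}_0|,\ |\mathcal{A}_0||\mathcal{B}_1|,\ |\mathcal{A}_1||\mathcal{B}_0|\ \le\ M,\qquad |\mathcal{A}_1||\mathcal{B}_1|\ \le\ m,$$
where $M:=\binom{d-1}{\le s}2^{d-1}$ and $m:=\binom{d-1}{\le s-1}2^{d-1}$, and note $m\le M$.

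The step I expect to be the main obstacle is the final combination: naively adding the four bounds gives $3M+m$, while the target $\binom{d}{\le s}2^{d}$ equals only $2M+2m$. The saving comes from the multiplicative relation $\big(|\mathcal{A}_0||\mathcal{B}_1|\big)\cdot\big(|\mathcal{A}_1||\mathcal{B}_0|\big)=\big(|\mathcal{A}_0||\mathcal{B}_0|\big)\cdot\big(|\mathcal{A}_1||\mathcal{B}_1|\big)\le Mm$. Setting $p=|\mathcal{A}_0||\mathcal{B}_1|$, $q=|\mathcal{A}_1||\mathcal{B}_0|$, we have $0\le p,q\le M$ and $pq\le Mm$ with $m\le M$; a short one-variable estimate (the maximum of $p+q$ on this region is attained at $p=M$, which forces $q\le m$) gives $p+q\le M+m$. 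Hence
$$|\mathcal{A}|\cdot|\mathcal{B}|=|\mathcal{A}_0||\mathcal{B}_0|+|\mathcal{A}_1||\mathcal{B}_1|+(p+q)\le M+m+(M+m)=2M+2m=\binom{d}{\le s}2^{d},$$
closing the induction.

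Everything except the combination step is bookkeeping about where the element $d$ lies, so the crux is really the observation that the four sub-counts obey a product constraint that exactly recovers the loss in the term-by-term bound. If that optimization turned out to be awkward to phrase cleanly, the fallback I would try is to carry a slightly stronger inductive invariant so the recursion closes term-by-term, but the product-constraint route above seems to be the shortest.
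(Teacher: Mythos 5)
Your argument is correct, and there is nothing in the paper to measure it against: the survey only states this lemma with a citation to Sgall and includes no proof, so your induction stands on its own. The element-splitting step is sound — the pairs $(\mathcal{A}_0,\mathcal{B}_0)$, $(\mathcal{A}_0,\mathcal{B}_1)$, $(\mathcal{A}_1,\mathcal{B}_0)$ still satisfy the bound $\le s$ over $[d-1]$, while $(\mathcal{A}_1,\mathcal{B}_1)$ satisfies $\le s-1$ because the common element $d$ is removed — and the summed Pascal identity indeed turns the target into $2M+2m$. The crux inequality also checks out: with $p,q\le M$ and $pq\le (a_0b_0)(a_1b_1)\le Mm$, either $q\le m$ (then $p+q\le M+m$ directly) or $q\in(m,M]$, in which case $p+q\le Mm/q+q\le M+m$ since $(q-M)(q-m)\le 0$; so the loss in the naive term-by-term bound is exactly recovered, as you say.

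Two small points to tidy up. First, when $s=0$ you invoke the induction hypothesis at parameter $s-1=-1$; the convention $\binom{d-1}{\le -1}=0$ gives the right value, but you should say explicitly why $|\mathcal{A}_1|\,|\mathcal{B}_1|=0$ in that case (if both were nonempty, any $A\ni d$ and $B\ni d$ would give $|A\cap B|\ge 1>s$), or equivalently admit $s=-1$ as a vacuous instance of the statement; as written, your $d=0$ base case only establishes the bound $\le 1$, which is the correct target only for $s\ge 0$. Second, you read $\binom{d}{\le s}$ as $\sum_{i=0}^{s}\binom{d}{i}$, which is the intended meaning; the survey's displayed definition with the sum running to $d$ is a typo (it would make the bound trivially $2^d\cdot 2^d$), so your convention is the right one to use.
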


Bhsouty defined the $s$-binary rank, $br_s(M)$, of a $0,1$ matrix $M$, to be the minimal number of monochromatic rectangles
required to cover the ones of $M$, such that each one is covered by at most $s$ monochromatic rectangles.
Thus, $br_1(M)  = \Rbin(M)$. Using Sgall's lemma Bhsouty proved the following lemma, which is used in Section~\ref{SubSec-parameter}.

\begin{lemma}[\cite{bshouty2023property}]
\label{lem-product-rows-cols}
Let $M$ be a $0,1$ matrix with $br_s(M) \leq d$, and let $R(M)$ and $C(M)$ be the number of distinct rows and columns of $M$, respectively. Then:
$$
R(M) \cdot C(M) \leq {d \choose \leq s }\cdot 2^d.
$$
In particular, if $\Rbin(M) \leq d$, then $R(M)\cdot C(M) \leq (d+1)\cdot 2^d$.
\end{lemma}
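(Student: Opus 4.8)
The plan is to translate the $s$-covering condition into a pair of set systems to which Sgall's lemma applies verbatim. Fix a cover of the ones of $M$ by monochromatic rectangles $R_1,\dots,R_d$ in which every one entry lies in at most $s$ of the $R_t$'s; such a cover exists because $br_s(M)\le d$. For each row $i$ of $M$ let $A_i\subseteq\{1,\dots,d\}$ be the set of indices $t$ for which $i$ is one of the rows defining $R_t$, and for each column $j$ let $B_j\subseteq\{1,\dots,d\}$ be defined analogously with respect to the columns of $R_t$.

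First I would record the two elementary facts that make this encoding faithful. Since each $R_t$ is a monochromatic rectangle of ones, $t\in A_i\cap B_j$ forces $(i,j)\in R_t$ and hence $M_{i,j}=1$; conversely, if $M_{i,j}=1$ then $(i,j)$ is covered by some $R_t$, so $A_i\cap B_j\ne\emptyset$. Thus $M_{i,j}=1$ if and only if $A_i\cap B_j\ne\emptyset$. In particular, if $A_i=A_{i'}$ then rows $i$ and $i'$ of $M$ are identical, so distinct rows of $M$ yield distinct subsets; writing $\mathcal{A}$ for the family of the distinct $A_i$'s gives $|\mathcal{A}|=R(M)$, and similarly $|\mathcal{B}|=C(M)$ for the family $\mathcal{B}$ of distinct $B_j$'s.

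Next I would verify the intersection bound required by Sgall's lemma: $|A_i\cap B_j|\le s$ for \emph{every} pair $(i,j)$. If $M_{i,j}=0$ the intersection is empty by the first fact above; if $M_{i,j}=1$, then each $t\in A_i\cap B_j$ names a rectangle $R_t$ containing $(i,j)$, and by the choice of the cover there are at most $s$ such rectangles. Hence $\mathcal{A},\mathcal{B}$ are families of subsets of $\{1,\dots,d\}$ with $|A\cap B|\le s$ for all $A\in\mathcal{A}$, $B\in\mathcal{B}$, and Sgall's lemma yields $R(M)\cdot C(M)=|\mathcal{A}|\cdot|\mathcal{B}|\le\binom{d}{\le s}\cdot 2^d$. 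For the special case, if $\Rbin(M)\le d$ then, using the rectangle-partition formulation, $br_1(M)\le d$, so applying the bound with $s=1$ and $\binom{d}{\le 1}=d+1$ gives $R(M)\cdot C(M)\le(d+1)\cdot 2^d$.

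The argument is short, and the only places where care is genuinely needed are the observation that the intersection bound $|A_i\cap B_j|\le s$ holds for \emph{all} pairs $(i,j)$ rather than only for the ones of $M$ — this is exactly what lets the two families be fed into Sgall's lemma without restriction — together with the check that distinct rows (columns) of $M$ map to distinct subsets, so that $|\mathcal{A}|$ and $|\mathcal{B}|$ are precisely $R(M)$ and $C(M)$. Neither point is deep, but both are essential to the reduction, so I expect those verifications, rather than any computation, to be the crux.
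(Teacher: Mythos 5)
Your proof is correct and follows the same route the paper indicates for this lemma (Bshouty's argument): encode each row and column by the set of covering rectangles it meets, observe that all pairwise intersections have size at most $s$, and invoke Sgall's lemma, with the special case $s=1$ coming from a rectangle partition. The only slight overstatement is the claim $|\mathcal{A}|=R(M)$ (identical rows could in principle receive different index sets), but since distinct rows do receive distinct sets, the inequality $R(M)\le|\mathcal{A}|$ that the argument actually needs holds, so the proof is fine.
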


The following lemma shows that it is possible to bound the size of the  kernel  also as a function of the real rank of the matrix.

\begin{lemma}
\label{clm:real_size of matrix}
Let $M$ be a $0,1$-matrix of real rank $d$. Then $M$ has at most $2^d$ distinct rows and at most $2^d$ distinct columns.
\end{lemma}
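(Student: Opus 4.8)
The plan is to exploit the fact that a basis of the column space, once restricted to the rows, already pins down every row, and then to use the $0,1$ structure to bound the number of possible restrictions. Since $\Rreal(M) = d$, there exist $d$ columns of $M$, say those indexed by a set $J$ with $|J| = d$, which form a basis of the column space of $M$. First I would record that every column $M_{\cdot,c}$ of $M$ can be written as a linear combination $M_{\cdot,c} = \sum_{j \in J}\alpha_j^{(c)} M_{\cdot,j}$ of these $d$ basis columns, for suitable real coefficients $\alpha_j^{(c)}$.

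Next I would show that the restriction of a row to the coordinates in $J$ determines the entire row. Suppose rows $i$ and $i'$ satisfy $M_{i,j} = M_{i',j}$ for every $j \in J$. Then for an arbitrary column index $c$,
$$M_{i,c} = \sum_{j \in J}\alpha_j^{(c)} M_{i,j} = \sum_{j \in J}\alpha_j^{(c)} M_{i',j} = M_{i',c},$$
so rows $i$ and $i'$ are identical. Hence the map that sends a row of $M$ to its restriction to the $d$ coordinates indexed by $J$ is injective on the set of distinct rows of $M$.

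Since $M$ is a $0,1$ matrix, each such restriction is a vector in $\{0,1\}^d$, and there are exactly $2^d$ of these. Therefore $M$ has at most $2^d$ distinct rows. Applying the same argument to $M^t$, whose real rank is also $d$, yields that $M$ has at most $2^d$ distinct columns, which completes the proof.

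I do not expect a genuine obstacle in this argument; the only point that deserves care is that the bound truly relies on the $0,1$ hypothesis — over an infinite field a matrix of rank $d$ may have arbitrarily many distinct rows — so the proof must combine the linear-algebraic fact that a $d$-element column basis controls each row with the finiteness of the set $\{0,1\}$ of entries.
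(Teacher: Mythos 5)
Your proof is correct and is essentially the paper's own argument, just transposed: the paper fixes a spanning set of $d$ rows and observes that each column is then determined by its $d$ entries in those rows (hence at most $2^d$ distinct columns), while you fix a column basis and show each row is determined by its restriction to those $d$ coordinates. The key idea — a rank-$d$ basis pins down every row/column, and the $0,1$ entries leave only $2^d$ possible restrictions — is identical in both.
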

\begin{proof}
If $\Rreal(M) = d$ then $M$ has $d$ rows $R_1,...,R_d$, such that each other row of $M$ is a linear combination of $R_1,...,R_d$.
But then each column of $M$ is determined by a fixed subset of $d$ coordinates in $R_1,...,R_d$.
Since $M$ is $0,1$ matrix, then $M$ has at most $2^d$ options for these $d$ coordinates, and therefore, at most $2^d$ distinct columns.
A similar proof holds for the rows.
\end{proof}

The Boolean and binary rank are obviously bounded by the number of different rows or columns the matrix has, or in other words, by the size of the kernel of the matrix.
Therefore, Lemma~\ref{lem:binary_size of matrix} and Lemma~\ref{clm:real_size of matrix} provide the following immediate simple upper bounds.

\begin{corollary}
\label{coro-trivial-upper}
Let $M$ be a $0,1$ matrix. Then:
\begin{itemize}
\item $\Rbin(M) \leq 2^{\Rbool(M)}$.
\item $\Rbool(M), \Rbin(M) \leq 2^{\Rreal(M)}$.
\end{itemize}
\end{corollary}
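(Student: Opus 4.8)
The plan is to derive both inequalities directly from the two kernelization lemmas, Lemma~\ref{lem:binary_size of matrix} and Lemma~\ref{clm:real_size of matrix}, together with the elementary fact that the ones of any $0,1$ matrix with $R$ nonzero rows can be partitioned into $R$ monochromatic rectangles, one per row.

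First I would record the observation that $\Rbin(M)$ and $\Rbool(M)$ are each at most the number of distinct nonzero rows of $M$. By the kernelization rules discussed just before Figure~\ref{fig:kernel}, deleting all-zero rows and columns and collapsing duplicate rows and columns changes none of $\Rreal$, $\Rbin$, $\Rbool$, so it suffices to bound the ranks of the kernel $M'$ of $M$. If $M'$ has $R$ rows, then taking for each row the monochromatic rectangle consisting of the ones in that row yields a partition of the ones of $M'$ into at most $R$ rectangles; hence $\Rbin(M) = \Rbin(M') \le R$, and a fortiori $\Rbool(M) = \Rbool(M') \le R$ since a partition is in particular a cover.

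For the first bullet, set $d = \Rbool(M)$ and apply Lemma~\ref{lem:binary_size of matrix}: the kernel of $M$ has at most $2^d$ distinct rows, so the previous paragraph gives $\Rbin(M) \le 2^d = 2^{\Rbool(M)}$. For the second bullet, set $d = \Rreal(M)$ and invoke Lemma~\ref{clm:real_size of matrix} instead: the kernel again has at most $2^d$ rows, so $\Rbool(M) \le \Rbin(M) \le 2^d = 2^{\Rreal(M)}$, where the first inequality is the standard bound $\Rbool(M) \le \Rbin(M)$ noted in the introduction.

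There is essentially no obstacle here; the only steps requiring a moment's attention are the claim that the two kernelization operations preserve all three rank functions — exactly the remark made just before Figure~\ref{fig:kernel} — and the trivial row-partition bound. One could also phrase the argument without explicitly passing to the kernel, since Lemma~\ref{lem:binary_size of matrix} and Lemma~\ref{clm:real_size of matrix} bound the number of \emph{distinct} rows directly, but routing through the kernel makes the ``at most $R$ rectangles'' step cleanest.
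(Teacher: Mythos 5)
Your argument is correct and is exactly the paper's route: the corollary is stated as an immediate consequence of Lemma~\ref{lem:binary_size of matrix} and Lemma~\ref{clm:real_size of matrix} together with the observation that the binary and Boolean rank are bounded by the number of distinct rows (one rectangle per row). Your kernel detour just spells out that trivial observation, so there is nothing to add.
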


For matrices with small constant real rank, a theorem of Parnas and Shraibman~\cite{parnas2025study} shows that the gap between the real and the binary and Boolean
rank is considerably smaller, and that the size of the kernel of such matrices is in fact smaller than $2^{\Rreal(M)}$. It follows from this theorem that
that the bound given in Theorem~\ref{theodoublegap} is in fact tight for matrices of size $n = 4,6,7$.

\begin{theorem}[\cite{parnas2025study}]
\label{theoremLowerBound}
Let $M$ be a $0,1$ matrix with $\Rreal(M) = d$.
\begin{itemize}
\item For $d = 3,4$ it holds that $d \leq \Rbin(M) \leq 2d-2$ and $d-1 \leq \Rbool(M), i(M) \leq 2d-2$.
\item
If $d = 3$ the kernel of $M$ has at most $4$ rows or columns, and if $d = 4$ the kernel has at most $7$ rows or columns.
%If $\Rreal(M) = 3$, then $\Rbin(M) = \Rbool(M) = 4$ if and only if $M$ contains $C_4$ as a sub-matrix.
\end{itemize}
\end{theorem}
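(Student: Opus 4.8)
The plan is to pass to the kernel, describe it in coordinates adapted to a basis of rows, bound its size by a finite case analysis, and then read off the stated inequalities.

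First I would invoke that the kernelization rules --- deleting zero and duplicate rows and columns --- change none of $\Rreal(M)$, $\Rbin(M)$, $\Rbool(M)$ or $i(M)$ (for $i(M)$, two ones lying in a duplicated pair of lines always form an all-ones $2\times 2$, so cannot both be in an isolation set), so we may assume $M$ equals its kernel: no zero line, all rows pairwise distinct, all columns pairwise distinct. Put $\Rreal(M)=d\in\{3,4\}$ and fix linearly independent rows $R_1,\dots,R_d$ of $M$. For each column $j$ record its restriction $v_j:=(M_{1,j},\dots,M_{d,j})\in\{0,1\}^d$; since $R_1,\dots,R_d$ span the row space we have $M_{i,j}=\langle\alpha_i,v_j\rangle$, where $R_i=\sum_k(\alpha_i)_k R_k$. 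Hence columns with equal $v_j$ coincide, no $v_j$ is $0$, and the $v_j$ span $\mathbb R^d$. So the number of columns of $M$ equals $|T|$ for some spanning $T\subseteq\{0,1\}^d\setminus\{0\}$, the number of rows equals $|A|$ for $A=\{\alpha_i\}$ with $0\notin A$ and $\{e_1,\dots,e_d\}\subseteq A$, and the (row/column symmetric) compatibility condition $\langle\alpha,v\rangle\in\{0,1\}$ holds for all $\alpha\in A$ and $v\in T$.

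The core is a combinatorial bound on the kernel: $\min\{|A|,|T|\}\le 4$ when $d=3$, and $\min\{|A|,|T|\}\le 7$ when $d=4$; this is precisely the second bullet of the statement. I would prove it by case analysis on $T$, reduced modulo the $S_d$-action permuting coordinates. If $|T|$ is small there is nothing to do; otherwise I bound $|A|$, which is at most the number of common solutions of the constraints $\langle\alpha,v\rangle\in\{0,1\}$, $v\in T$: choosing $d$ independent vectors of $T$ shows $\alpha$ is pinned to a $0,1$-tuple, and every further vector of $T$ imposes an affine equation on that tuple, so enumerating the few ways $T$ can omit vectors of $\{0,1\}^d\setminus\{0\}$ and intersecting the resulting constraints yields the bound (the arithmetic must be carried out carefully). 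Since $\Rbin(M)$ never exceeds the number of rows, nor the number of columns, this already gives $\Rbin(M)\le 4=2d-2$ for $d=3$. For $d=4$ it only gives $\Rbin(M)\le 7$; here I would further check, using the rigidity of the compatibility relations in the finitely many configurations attaining $\min\{|A|,|T|\}=7$, that each such matrix still admits a cover of its ones by $6$ rectangles --- the extremal example being the circulant $D_{6,3}$ with a redundant row and column appended, i.e.\ the matrix of Theorem~\ref{theodoublegap} of order $7$ --- so $\Rbin(M)\le 6=2d-2$ in every case. Then $i(M)\le\Rbool(M)\le\Rbin(M)\le 2d-2$, while $d=\Rreal(M)\le\Rbin(M)$ is immediate.

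It remains to prove $i(M)\ge d-1$, which also gives $\Rbool(M)\ge i(M)\ge d-1$. A matrix of real rank $d$ has $d$ distinct nonzero rows, so for $d=3$ it suffices that two distinct nonzero rows $R\ne R'$ force $i(M)\ge 2$: after possibly interchanging $R$ and $R'$, some column $a$ has $M_{R,a}=1$ and $M_{R',a}=0$; choosing a column $b$ with $M_{R',b}=1$ (necessarily $b\ne a$), the ones $(R,a)$ and $(R',b)$ lie in distinct rows and distinct columns and, since $M_{R',a}=0$, not in a common all-ones $2\times 2$. For $d=4$ one needs an isolation set of size $3$, and I would extract it inside the case analysis above, verifying that every admissible pair $(A,T)$ with $|A|,|T|\ge 4$ and both spanning contains three ones $(\alpha_1,v_1),(\alpha_2,v_2),(\alpha_3,v_3)$ with distinct $\alpha_i$, distinct $v_i$, and no two in an all-ones $2\times 2$. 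The main obstacle is this $d=4$ analysis: there are many $S_4$-orbits of large sets $T\subseteq\{0,1\}^4\setminus\{0\}$ to treat, and the extremal instances must be pinned down precisely enough both to exhibit their size-$6$ rectangle covers and to locate their size-$3$ isolation sets.
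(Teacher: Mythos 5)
The survey itself gives no proof of this theorem (it is quoted from~\cite{parnas2025study}), so your proposal has to stand on its own. Its skeleton is sound: kernelization preserves all four quantities (your observation that two ones in duplicate lines always span an all-ones $2\times 2$ correctly handles $i(M)$), the coordinatization by a row basis --- the pair $(A,T)$ with $\{e_1,\dots,e_d\}\subseteq A$, $T\subseteq\{0,1\}^d\setminus\{0\}$ spanning, and $\langle\alpha,v\rangle\in\{0,1\}$ for all pairs --- is correct, and for $d=3$ the plan is essentially complete: a short check over the $S_3$-orbits of sets $T$ with $|T|\ge 5$ does force $|A|\le 4$, which gives the kernel bound and $\Rbin(M)\le 4=2d-2$, and your two-row argument gives $i(M)\ge 2$.

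The genuine gap is the $d=4$ half, where every decisive step is announced rather than proved, and one is stated incorrectly. First, the kernel bound $\min\{|A|,|T|\}\le 7$ rests on an enumeration of the (many) $S_4$-orbits of sets $T\subseteq\{0,1\}^4\setminus\{0\}$ with $|T|\ge 8$ that you never carry out; this is the bulk of the theorem. Second, the passage from the trivial consequence $\Rbin(M)\le 7$ to the claimed $\Rbin(M)\le 2d-2=6$ is the delicate point, and your sketch handles it by saying the configurations with $\min\{|A|,|T|\}=7$ ``admit a cover of the ones by $6$ rectangles, so $\Rbin(M)\le 6$'': a cover only bounds $\Rbool$; for the binary rank you need a \emph{partition} into $6$ rectangles, and you must exhibit one for \emph{every} rank-$4$ kernel having exactly $7$ rows (or columns) --- not just for the extremal matrix of Theorem~\ref{theodoublegap}, whose added row and column are distinct from (not duplicates of) those of $D_{6,3}$ and hence survive kernelization. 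Third, the bound $i(M)\ge 3$ for $d=4$ is deferred wholesale to the same unexecuted analysis; unlike your explicit $d=3$ argument, no mechanism is given, and this inequality is genuinely rank-specific (it fails for larger rank, e.g.\ $i(C_n)=3$ while $\Rreal(C_n)=n$), so it cannot be waved through. As written, the proposal is a credible plan whose $d=4$ portion --- precisely the part that makes the theorem nontrivial --- is missing.
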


\iffalse
Rule 3: If there is a vertex $v$ with exactly one neighbour $u$ in $G$, then remove both $v$ and $u$, and decrease the parameter by one.
RULE 4: Assume Rule 3 does not apply. If there is a vertex $v$ in $G$ which is adjacent to all vertices in the opposite bipartition
class of $G$, then remove $v$ without decreasing the parameter.
\fi

\subsection{Communication complexity protocols}
\label{subsec-protocols}

The close connection between communication complexity and the Boolean and binary rank described in Section~\ref{subsec-communication},
makes it possible to find the Boolean and the binary rank of certain families of matrices using known results from communication complexity.
Theorem~\ref{theo-communication-partition} describes connections between the various  communication complexity measures and the Boolean and binary rank.
For example,  the non-deterministic communication complexity, $N_1(M)$, of a matrix $M$ is equal to $\lceil\log_2 \Rbool(M)\rceil$,
and therefore, the cost of any non-deterministic communication complexity protocol provides immediately an upper bound on the Boolean rank of the corresponding matrix.
Furthermore, the binary rank can be determined exactly by the unambiguous communication complexity,
or determined approximately by the deterministic communication complexity, $D(M)$, of $M$, since $\log_2 (\Rbin(M)) \leq D(M)  \leq O(\log^2 \Rbin(M))$.
We now describe two classic problems in communication complexity and show how they determine the Boolean and binary rank of the relevant matrices.

The first is the {\em non-equality} communication problem mentioned in the beginning of Section~\ref{Sec-lower-and-upper}.
Alice and Bob receive a row index $i$ and a column index $j$, respectively,
and their goal is to decide if $i \neq j$, while communicating as few bits as possible.
The matrix $C_n$ discussed previously, represents the non-equality function, and we already saw that $\Rbin(C_n) = n$ and $\Rbool(C_n) = \Theta(\log n)$.

Here is an alternative proof for an upper bound on the Boolean rank of $C_n$, which is based on a non-deterministic protocol between Alice and Bob.
Recall that in this case the two players are given some certificate,
and there exists a certificate and a transmission of bits between them where they both {\em accept} if and only if entry $(i,j)$ is $1$.
A possible certificate for the non-equality problem is the location of the bit on which $i$ and $j$ differ.
Given such a certificate, Alice and Bob can exchange the bit they each hold in this position.
If indeed $i \neq j$ and the certificate is a correct location of a bit on which $i$ and $j$ differ, they both accept.
In all other cases they would reject, as required.

As to the cost of this simple protocol, if the matrix $C_n$ is of size $n \times n$ then $i$ and $j$ can each be represented by $\log n$ bits,
and therefore, the location of a bit on which they differ can be represented by $\log \log n$ bits.
Thus, the length of the certificate is $\log \log n$, and then Alice and Bob exchange two more bits.
Hence, the non-deterministic communication complexity is $2+ \log \log n$ and thus, $\Rbool(C_n) = O(\log n)$.

The second communication problem is the {\em clique vs. independent set} problem defined by Yannakakis~\cite{Yannakakis91} in 1988.
Given some graph $G = (V,E)$, where $|V| = n$, Alice is given a subset of vertices $C \subseteq V$ which forms a clique in $G$,
and Bob is given a subset of vertices $I \subseteq V$ which is an independent set in $G$.
Their goal is to decide if $C \cap I \neq \emptyset$, in which case they accept, and otherwise, they reject.
Notice, that the intersection of a clique and an independent set contains at most one vertex, and thus, $|C \cap I| \in \{0, 1\}$.

The clique vs. independent set problem can be described by a matrix $M_G$, where there is a row for each clique of $G$ and a column for each independent set  in $G$,
and there is a $1$ in entry $i,j$
if and only if the clique and the independent set associated with row $i$ and column $j$, respectively, intersect. See Figure~\ref{fig:clique-independent} for an illustration.

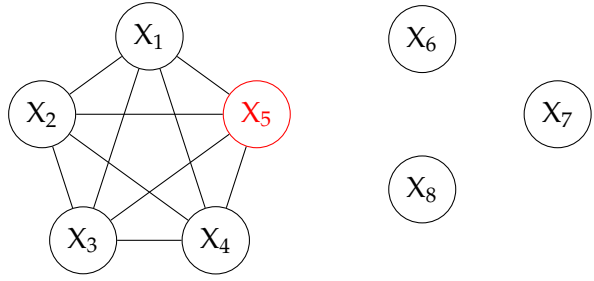
\begin{figure}[htb!]
\centering
\begin{tikzpicture}[
    vertex/.style={circle, draw, minimum size=6mm},
 %   filled/.style={circle, draw, fill=red, minimum size=6mm},
 shared/.style={circle, draw, color=red, linewidth=1.2pt, minimum size=6mm}
    every edge/.style={draw, thick}
]

% --- Clique vertices on a circle ---
\def\r{1.5}

\node[vertex] (c1) at (90:\r) {X$_{1}$};
\node[vertex] (c2) at (162:\r) {X$_{2}$};
\node[vertex] (c3) at (234:\r) {X$_{3}$};
\node[vertex] (c4) at (306:\r) {X$_{4}$};
\node[circle, draw, color=red] (v)  at (18:\r) {X$_{5}$}; % shared vertex (colored)

\foreach \x/\y in {
    c1/c2,c1/c3,c1/c4,c1/v,
    c2/c3,c2/c4,c2/v,
    c3/c4,c3/v,
    c4/v}
{
    \draw (\x) -- (\y);
}
\node[vertex] (i1) at ($(v)+(2.2,1)$) {X$_{6}$};
\node[vertex] (i2) at ($(v)+(4.0,0)$) {X$_{7}$};
\node[vertex] (i3) at ($(v)+(2.2,-1)$) {X$_{8}$};
\end{tikzpicture}
\caption{On the left is a clique on $5$ vertices, $X_1, X_2, X_3, X_4,X_5$, and on the right an independent set which contains the vertices $X_5,X_6,X_7,X_8$.
If Alice gets the clique and Bob the independent set, they should output $1$, since the clique and the independent set share the vertex $X_5$, which is colored red.}
\label{fig:clique-independent}
\end{figure}

For this problem it is not even clear how the matrix $M_G$ looks in general, since it depends on the structure of the graph $G$,
and thus, solving this problem by presenting a communication protocol is a convenient option for determining the Boolean and binary rank of $M_G$.

A non-deterministic protocol for this problem can proceed as follows. Alice and Bob receive as a certificate the name of the vertex $v$ on which $C$ and $I$ intersect,
and then they can each check if $v$ is indeed contained in the clique and the independent set they hold, respectively, and send to each other one more bit with their decision.

As to the cost of this protocol, if the number of vertices is $n$, then a vertex can be described by $\log n$ bits, and hence, the non-deterministic communication complexity
is $2 + \log n$ bits. Hence, $\Rbool(M_G) = O(2^{2 + \log n}) = O(n)$.
Furthermore, this non-deterministic protocol is unambiguous, since $C \cap I$ contains at most one vertex,
and thus, there is at most one certificate that will convince Alice and Bob that $C \cap I \neq \emptyset$. Hence, $\Rbin(M_G) = O(n)$ as well.

As to a lower bound on the Boolean and binary rank of $M_G$, the matrix $M_G$ contains as a sub-matrix the identity matrix of size $n\times n$,
since there are $n$ rows and columns in $M_G$ which correspond to cliques and independent sets of size $1$, one row and column for each vertex $v \in V$.
Therefore, $\Rbool(M_G), \Rbin(M_G) \geq n$.

It is also possible to describe a partition of the ones of $M_G$ into $n$ rectangles, which follows directly from the non-deterministic protocol presented above.
Define a rectangle $R_v$ for each $v \in V$, where $R_v$ contains all the one entries of $M_G$ for which the corresponding clique $C$ and independent set $I$ intersect on vertex $v$.
The rectangles are disjoint since each pair $C,I$ intersects in at most one vertex, and it is not hard to verify that they are indeed monochromatic rectangles.

We now describe a deterministic communication protocol for the clique vs. independent set problem given by Yannakakis~\cite{Yannakakis91},
whose deterministic communication complexity is $O(\log ^2 n)$.
The deterministic protocol has $\log n$ rounds, where in each round $O(\log n)$ bits are exchanged between Alice and Bob, for a total of $O(\log ^2 n)$ bits.
At the beginning of the protocol all vertices in $V$ are {\em candidates} which can belong to $C \cap I$, and
in each round Alice and Bob disqualify at least half of the remaining candidates.
Thus, after at most $\log n$ rounds there is at most one candidate and they can easily decide if $C \cap I \neq \emptyset$ by exchanging two more bits.

The complete protocol is described below and from it follows immediately the following upper bound on the deterministic communication complexity of the clique vs. independent set problem.

\begin{theorem}[\cite{Yannakakis91}]
The deterministic communication complexity of the clique vs. independent set problem of a graph on $n$ vertices is at most $O(log^2 n)$.
\end{theorem}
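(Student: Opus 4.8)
The plan is to exhibit an explicit deterministic protocol (the one of Yannakakis) and then just count rounds and bits. Throughout the protocol Alice and Bob maintain a common set $W \subseteq V$ of \emph{candidate} vertices, with two invariants: (i) $W$ is a deterministic function of $G$ and of the bits transmitted so far, hence it is common knowledge even though $C$ and $I$ are private; and (ii) $C \cap I \subseteq W$. Initially $W = V$. The whole point of the protocol is that in each round either it halts or $|W|$ shrinks by at least a factor of two, so after $O(\log n)$ rounds we are done.

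A single round proceeds as follows. Alice checks whether $C$ contains a vertex $v \in W$ with at most $|W|/2$ neighbours inside $W$. If so, she sends one control bit announcing this, together with the $\lceil \log_2 n\rceil$-bit name of $v$; Bob replies with one bit telling whether $v \in I$. If $v \in I$, then $C \cap I = \{v\}$ and both output $1$; otherwise both players replace $W$ by $N(v) \cap W$. This preserves invariant (ii): any $u \in C \cap I$ lies in $C$, hence is adjacent to $v$ (and $u \neq v$ because $v \notin I$ while $u \in I$), so $u \in N(v) \cap W$; and by the choice of $v$ the new $W$ has at most half the old size. If Alice has no such vertex she says so, and then Bob symmetrically looks for a vertex $w \in I$ with $w \in W$ having at least $|W|/2$ neighbours in $W$; if he finds one he announces it, Alice replies whether $w \in C$, and if not both replace $W$ by $(W \setminus N(w)) \setminus \{w\}$ — which again contains $C \cap I$ (any $u \in C \cap I$ is in $I$, hence a non-neighbour of $w$, and $u \neq w$ since $w \notin C$) and has size at most $|W|/2$. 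If neither player can make such a move, then every vertex of $C$ that lies in $W$ has strictly more than $|W|/2$ neighbours in $W$, while every vertex of $I$ that lies in $W$ has strictly fewer; hence $C \cap W$ and $I \cap W$ are disjoint, so by invariant (ii) we have $C \cap I = \emptyset$ and both players output $0$. (When $|W| \le 1$ this terminal case is reached after at most one further exchange, so the recursion bottoms out correctly.)

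It then only remains to tally the cost. Each round uses a constant number of control bits plus at most one vertex name, hence $O(\log n)$ bits. Since $|W|$ starts at $n$ and is at least halved in every round that does not already halt, the protocol stops after at most $\lceil \log_2 n \rceil + O(1)$ rounds. Multiplying the two bounds gives total communication $O(\log^2 n)$, as claimed.

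The step I expect to be the main obstacle is establishing correctness of the halving rule and of the terminal rejection simultaneously: one must check that the updated candidate set genuinely depends only on $G$ and the transmitted names (so it stays common knowledge), that invariant (ii) survives each of the two restriction steps, and that when no halving move exists the players can certify $C \cap I = \emptyset$ with no further communication. Once those points are in place, the round count and the per-round bit count — and hence the $O(\log^2 n)$ bound — follow immediately.
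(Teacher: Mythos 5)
Your proposal is correct and takes essentially the same approach as the paper: it is exactly Yannakakis's candidate-set protocol as presented there, with the same halving rules for Alice and Bob, the same terminal argument that $C \cap W$ and $I \cap W$ are disjoint when neither player can move, and the same accounting of $O(\log n)$ rounds times $O(\log n)$ bits per round. The only difference is that you spell out the invariants ($C \cap I \subseteq W$ and $W$ being common knowledge) more explicitly than the paper does, which is fine.
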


\bigskip
\begin{center}
\fbox{
\begin{minipage}{6in}
\leftline{\bf Protocol for Clique vs. Independent set $(C,I, G = (V,E))$}
%\label{alg:binary approx}
\begin{enumerate}
\item
Let  $ V$ be the initial set of candidate vertices that can belong to $C \cap I$.
\item
While $|V| > 1$:
\begin{enumerate}
\item
Alice checks if there exists a vertex $v \in C$ with at most $|V|/2$ neighbors in $V$.
\begin{itemize}
\item
If so she sends $v$ to Bob, and if $v \in I$ then Bob sends $1$ to Alice and they can both stop and accept.
\item
Otherwise, remove from $V$ all vertices which are not neighbors of $v$ (the vertices removed cannot be in the clique $C$).
\end{itemize}
\item
If Alice failed, Bob checks if there exists a vertex $u \in I$ with least $|V|/2$ neighbors in $V$.
\begin{itemize}
\item
If so he sends $u$ to Alice, and if $u \in C$ then Alice sends $1$ to Bob and they can both stop and accept.
\item
Otherwise, remove from $V$  all neighbors of $u$ (the vertices removed cannot be in the independent set $I$).
\end{itemize}
\item
Otherwise, if both Alice and Bob failed, then all vertices $v \in C$ have more than $|V|/2$ neighbors in $V$,
and all vertices $u \in I$ have less than $|V|/2$ neighbors in $V$. Hence, $C \cap I = \emptyset$ and Alice and Bob can stop and reject.
\end{enumerate}
\item If Alice and Bob didn't accept or reject so far, then $V = \{v\}$, and they can check if $v \in C, v \in I$ and accept or reject accordingly.
\end{enumerate}
\end{minipage}
}
\end{center}
\bigskip

Yannakakis~\cite{Yannakakis91} also provided a reduction from any given communication problem to the clique vs. independent set problem.
This reduction and the above protocol yield an upper bound on the deterministic communication complexity, $D(M)$, of any $0,1$ matrix $M$ in terms of its partition number,
as described by the next theorem. See Figure~\ref{fig-rec-reduction} for an illustration of the reduction.

\begin{theorem}[\cite{Yannakakis91}]
\label{theo-rec-reduction}
For any  $0,1$ matrix $M$, $D(M) \leq O(\log ^2 Partition_1(M))$.
\end{theorem}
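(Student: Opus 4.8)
The plan is to reduce the problem of computing $M$ to an instance of the clique vs.\ independent set problem and then invoke Yannakakis's deterministic protocol stated just above. First I would fix an optimal partition of the ones of $M$ into $t = Partition_1(M)$ monochromatic rectangles $R_1,\dots,R_t$, writing each $R_a = S_a \times T_a$ with $S_a$ a set of rows and $T_a$ a set of columns (so $M_{s,t}=1$ for all $s\in S_a$, $t\in T_a$). I then build a graph $G$ on the vertex set $\{R_1,\dots,R_t\}$, declaring $R_a$ and $R_b$ adjacent precisely when $S_a \cap S_b \neq \emptyset$, i.e.\ when the two rectangles occupy a common row of $M$. Since $G$ is determined by $M$ and the fixed partition, both players know it in advance.

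Next I would specify the encoding of inputs. For a row index $i$ let $C_i = \{R_a : i \in S_a\}$ and for a column index $j$ let $I_j = \{R_a : j \in T_a\}$. Any two rectangles in $C_i$ share row $i$ and are therefore adjacent, so $C_i$ is a clique. If two distinct rectangles $R_a \neq R_b$ both lay in $I_j$ and were adjacent, they would share some row $i'$ as well as the column $j$, and hence both would contain the entry $(i',j)$, contradicting the disjointness of the partition; thus $I_j$ is an independent set. Finally, $R_a \in C_i \cap I_j$ iff $i \in S_a$ and $j \in T_a$, which (as $R_a$ is a combinatorial rectangle of ones) happens iff $(i,j) \in R_a$; since the $R_a$ partition the ones of $M$, this shows $M_{i,j}=1 \iff C_i \cap I_j \neq \emptyset$, with the intersection having size at most one, exactly as required for a clique vs.\ independent set instance.

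With the reduction in place, Alice (holding $i$, hence $C_i$) and Bob (holding $j$, hence $I_j$) simply run the clique vs.\ independent set protocol on $G$, which has $t$ vertices; by the preceding theorem this costs $O(\log^2 t) = O(\log^2 Partition_1(M))$ bits (together with a constant number of additional bits to announce the final decision), and its output equals $M_{i,j}$. Therefore $D(M) \leq O(\log^2 Partition_1(M))$, recovering the upper bound in Theorem~\ref{theo-communication-partition}.

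The one point that I expect to require care is the verification that $I_j$ is an independent set: this is precisely where it matters that we started from a \emph{partition} of the ones and not merely a cover, since for a cover two distinct rectangles could both contain an entry $(i',j)$ and the independence would break down. The remaining steps are routine bookkeeping, and the quantitative bound is inherited directly from Yannakakis's protocol.
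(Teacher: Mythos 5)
Your proposal is correct and follows essentially the same argument as the paper: build the graph whose vertices are the rectangles of an optimal partition with adjacency given by sharing a row, map Alice's row to a clique and Bob's column to an independent set, and run Yannakakis's clique vs.\ independent set protocol, with the partition (rather than cover) property being exactly what guarantees independence. No gaps.
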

\begin{proof}
Let $m = Partition_1(M)$ be the minimal number of monochromatic rectangles required to partition the ones of $M$. 
Define the following graph $G_M$ on $m$ vertices. There is a vertex in $G_M$ for each rectangle in the partition of the ones of $M$.
Two vertices  will be connected by an edge in $G_M$ if and only if the corresponding rectangles share a row in $M$.
Note that any set of rectangles in $M$ which share a row are mapped to a clique in $G_M$, and any set of rectangles which share a column are mapped to an independent set of $G_M$,
since in a partition rectangles which share a column  cannot share a row.

Assume that Alice gets a row index $i$ and Bob gets a column index $j$ of $M$.
Alice can map her index to the clique $C$ in $G_M$ represented by the rectangles in row $i$,
and Bob can map his column index to the independent set $I$ represented by the rectangles in column $j$.
Now they can use the deterministic protocol described above for the clique vs. independent set to decide if $C$ and $I$ intersect.
If so, this means that there exists a rectangle in the partition of the ones of $M$, which includes entry $M_{i,j}$, and therefore, $M_{i,j} = 1$, and they can accept.
Otherwise, they reject.
The deterministic communication complexity of this protocol is $O(\log^2 m)$ as claimed.
\end{proof}

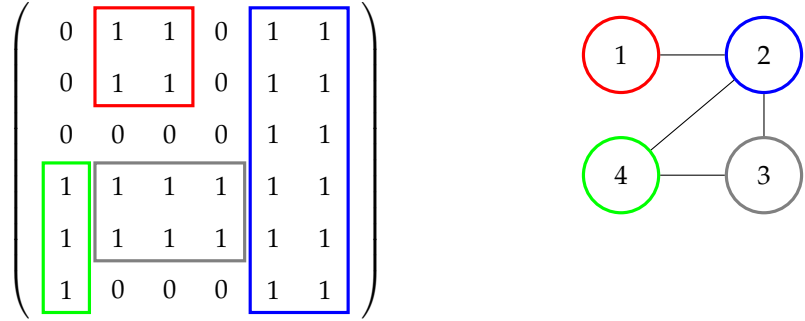
\begin{figure}[htb!]
\centering
\begin{tikzpicture}[every node/.style={font=\small}]

\matrix (A) [matrix of nodes,
             left delimiter={(},
             right delimiter={)},
             nodes={minimum width=6mm, minimum height=6mm, anchor=center},
             column sep=2.5pt,
             row sep=2.5pt] {
  0 & 1 & 1 & 0 & 1 & 1 \\
  0 & 1 & 1 & 0 & 1 & 1 \\
  0 & 0 & 0 & 0 & 1 & 1 \\
  1 & 1 & 1 & 1 & 1 & 1 \\
  1 & 1 & 1 & 1 & 1 & 1 \\
  1 & 0 & 0 & 0 & 1 & 1 \\
};

\draw[green, very thick]
    (A-4-1.north west) rectangle (A-6-1.south east);
\draw[blue, very thick]
    (A-1-5.north west) rectangle (A-6-6.south east);
\draw[red, very thick]
    (A-1-2.north west) rectangle (A-2-3.south east);
\draw[gray, very thick]
    (A-4-2.north west) rectangle (A-5-4.south east);

% graph
\coordinate (G0) at ([xshift=3.5cm]A.east);

\node[circle, draw=red,   very thick, minimum size=10mm] (v1) at ($(G0)+(0, 1.4)$) {1};
\node[circle, draw=blue,  very thick, minimum size=10mm] (v2) at ($(G0)+(1.9,1.4)$) {2};
\node[circle, draw=gray,  very thick, minimum size=10mm] (v3) at ($(G0)+(1.9,-0.2)$) {3};
\node[circle, draw=green, very thick, minimum size=10mm] (v4) at ($(G0)+(0,-0.2)$) {4};

\draw (v1) -- (v2);
\draw (v2) -- (v3);
\draw (v4) -- (v3);
\draw (v4) -- (v2);
\end{tikzpicture}
\caption{On the left is a matrix with a partition of the ones into $4$ rectangles, each represented with a different color.
On the right is the corresponding clique vs. independent set graph according to the reduction described in Theorem~\ref{theo-rec-reduction}.}
\label{fig-rec-reduction}
\end{figure}

For many years it was not clear if $O(\log^2 n)$ is a tight upper bound for the deterministic communication complexity of the clique vs. independent set of a graph with $n$ vertices.
By the reduction described in Theorem~\ref{theo-rec-reduction}, any lower bound on the deterministic communication complexity of a given matrix $M$ in terms of the partition number of $M$,
provides immediately a lower bound for the clique vs. independent set problem.
Finally, Goos,  Pitassi and Watson~\cite{Goos} used the lifting technique described in the next subsection and
proved Theorem~\ref{theo-deterministic-partition}, thus, showing that
there can be an almost quadratic gap between $\log Partition_1(M) $ and the deterministic communication complexity of $M$.
This theorem implies a lower bound of $\widetilde{\Omega}(\log^2 n)$ on the deterministic communication  complexity of
the clique vs. independent set problem for a graph with $n$ vertices, thus, providing an almost matching lower bound to the upper bound given by Yannakakis~\cite{Yannakakis91}.

Another problem which was open for many years is the co-non-deterministic communication complexity of
the clique vs. independent set problem. While it is easy to provide a small certificate for a non-deterministic protocol for this problem, as we saw above,
it was not clear if a small certificate is possible for a co-non-deterministic protocol for this problem.
See Section~\ref{Subsec-complemement} for a discussion on the lower bounds given over the years for the co-non-deterministic communication complexity of
the clique vs. independent set problem. These results are closely related to the maximal gap possible between the binary rank of a matrix and its complement,
as well as to the Alon, Saks, Seymour conjecture.

\subsection{The Lifting Technique}
\label{subsec-lifting}
The query to communication lifting technique is a powerful method in computer science which allows one to {\em lift} lower bounds from the easier model of query complexity
to the more complex model of communication complexity.
In the {\em query model} the algorithm has to compute $f(z)$ for some function $f:\{0,1\}^n \rightarrow \{0,1\}$, while querying as few bits as possible of the input $z$.
The queries of the algorithm can be described by a decision tree as follows.
The root and each internal node of the tree contain some bit $z_i$ of $z$, according to the query the algorithm makes in that round,
and the value of $z_i$ determines if the algorithm continues to the left or right sub-tree of that node.
The leaves contain the output of the algorithm, that is, the value of $f(z)$ corresponding to the answers the algorithm received for its queries.
The complexity of the algorithm is the length of the longest path from the root to a leaf in its query tree, and the query complexity $Query(f)$ of a function $f$
is the complexity of the algorithm with the smallest query complexity for $f$.

A communication complexity protocol can also be formulated as a tree.
Assume that our players are Alice and Bob, where Alice has some input $x$ and Bob has some input $y$,
and their goal is to compute $F(x,y)$ for some function $F:\{0,1\}^n \times \{0,1\}^n \rightarrow \{0,1\}$.
If  Alice is the first to send a bit to Bob, then the root of the tree contains the first bit sent by Alice, and according to the value of this bit
there are two internal nodes which correspond to the first bit sent by Bob, and so on.
The leaves contain the value of $F(x,y)$ according to the bits sent throughout the communication between Alice and Bob.
Thus, the deterministic communication complexity $D(F)$ of $F$, is bounded above by the height of this tree. See Figure~\ref{fig-tree}.

\forestset{
  decision/.style={
    circle,
    draw,
    inner sep=2pt,
    minimum size=10pt,
    align=center,
  },
  leaf/.style={
    rectangle,
    draw,
    inner sep=6pt,
    minimum width=8mm,
    align=center,
  },
  mytree/.style={
    for tree={
      grow'=south,
      parent anchor=south,
      child anchor=north,
      edge={->,line width = 1pt},
      l sep=12mm,
      s sep=7mm,
      anchor=center,
      font=\small
    }
  }
}

\begin{figure}[htb!]
\begin{center}
%\begin{tabular}{cc}
\begin{forest}
  mytree
  [{$z_2$}, decision
    [{$z_4$}, decision, edge label={node[midway,left]{0}}
      [ {1}, leaf, fill=green!40, edge label={node[midway,left]{0}} ]
      [ {$z_3$}, decision, edge label={node[midway,right]{1}}
        [ {0}, leaf, fill=red!40, edge label={node[midway,left]{0}} ]
        [ {1}, leaf, fill=green!40, edge label={node[midway,right]{1}} ]
      ]
    ]
    [{$z_5$}, decision, edge label={node[midway,right]{1}}
      [ {0}, leaf, fill=red!40, edge label={node[midway,left]{0}} ]
      [ {$z_7$}, decision, edge label={node[midway,right]{1}}
        [ {1}, leaf, fill=green!40, edge label={node[midway,left]{0}} ]
        [ {0}, leaf, fill=red!40, edge label={node[midway,right]{1}} ]
      ]
    ]
  ]
\end{forest}
\;\;\;\;\;\;\;\;
\begin{forest}
  mytree
  [{$Alice$}, decision
    [{$Bob$}, decision, edge label={node[midway,left]{0}}
      [ {1}, leaf, fill=green!40, edge label={node[midway,left]{0}} ]
      [ {$Alice$}, decision, edge label={node[midway,right]{1}}
        [ {0}, leaf, fill=red!40, edge label={node[midway,left]{0}} ]
        [ {1}, leaf, fill=green!40, edge label={node[midway,right]{1}} ]
      ]
    ]
    [{$Bob$}, decision, edge label={node[midway,right]{1}}
      [ {0}, leaf, fill=red!40, edge label={node[midway,left]{0}} ]
      [ {$Alice$}, decision, edge label={node[midway,right]{1}}
        [ {1}, leaf, fill=green!40, edge label={node[midway,left]{0}} ]
        [ {0}, leaf, fill=red!40, edge label={node[midway,right]{1}} ]
      ]
    ]
  ]
\end{forest}
%\end{tabular}
\end{center}
\caption{On the left is a decision tree in the query model and on the right a communication complexity protocol between Alice and Bob described by a tree.}
\label{fig-tree}
\end{figure}
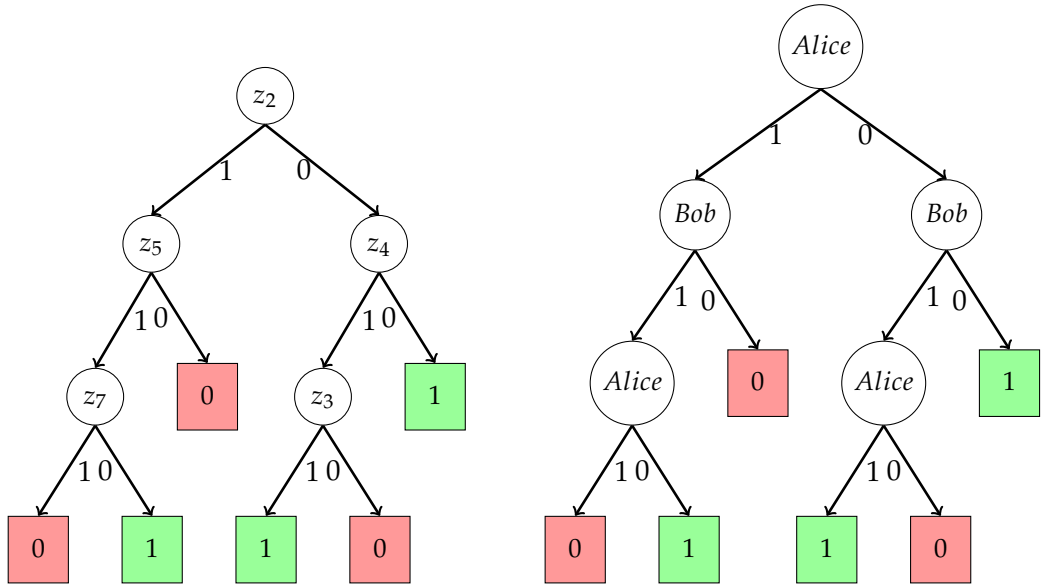

There are functions for which there is a considerable difference in the complexity of the two models.
For example, consider the parity function of a $0,1$ vector $z = (z_1,...,z_n)$:
$$Parity(z) =  z_1 \oplus z_2 \oplus ... \oplus z_n.$$
The query complexity of $Parity(z)$ is $n$,
since as long as the algorithm doesn't query all bits of $z$, the last bit queried can change the parity of the bits queried so far.
However, consider the scenario where Alice and Bob partition between them the bits of $z$, such that Alice gets the first half  of the bits, $x$, and Bob gets the second half, $y$.
Then they communicate to compute together the function $PARITY(x,y)$ defined as:
$$PARITY(x,y) =  (x_1 \oplus x_2 \oplus ... \oplus x_n) \oplus (y_1 \oplus y_2 \oplus ... \oplus y_n).$$
A possible communication protocol for the $PARITY$ function can proceed as follows.
Alice computes $Parity(x)$ and sends this one bit to Bob, and Bob computes $Parity(y)$ and sends this one bit to Alice.
But $PARITY(x,y) = Parity(x) \oplus Parity(y)$, and therefore, each of them knows $PARITY(x,y)$
after communicating only $2$ bits.

The idea of the {\em lifting technique} is to define some complex enough gadget function $g$ and some function $F$ which depends on both $g$ and on the
function $f$ computed by the query algorithm, such that when Alice and Bob want to compute  $F(x,y)$ they must in fact follow the steps of the query algorithm which computes $f$,
and in each such step they must communicate many bits in order to find the value of the gadget function $g$.

Specifically, the gadget is defined as a function $g:\{0,1\}^a \times \{0,1\}^b \rightarrow \{0,1\}$,
and  the function $F:\{0,1\}^{na} \times \{0,1\}^{nb} \rightarrow \{0,1\}$ is defined as:
$$F(x,y) = f(g(x_1,y_1),...,g(x_n,y_n)),$$
where the inputs $x_i,y_i$ transferred to the gadget $g$ are not necessarily bits, but can be blocks of bits.
The function $F$ is sometimes denoted as $F = f\circ g^n$.

A general lifting theorem will show that if Alice and Bob want to compute such a function $F(x,y)$, where
computing $g(x_i,y_i)$ requires them to communicate $C$ bits and the query complexity of $f$ is $Query(f)$,
then the best communication protocol for $F$ requires  them to communicate at least $C \cdot Query(f)$ bits.
Thus, the main ingredient in a proof of such a lifting theorem is to show that the gadget $g$ is complex enough in the sense that if Alice and Bob want to know the $i$'th bit, $z_i = g(x_i,y_i)$,
of the input of $f$, they must send to each other many bits in order to compute $g(x_i,y_i)$.

Note that for the parity function $f = Parity(z)$, a gadget such as $g(x_i,y_i) = x_i \oplus y_i = z_i$, which depends on one bit of Alice and one bit of Bob, will not work,
since as we saw:
$$F(x,y) = Parity(g(x_1,y_1),...,g(x_n,y_n)) = Parity(Parity(x) \oplus Parity(y)).$$
Thus, in this case Alice and Bob do not have to actually compute the value of  $g(x_i,y_i)$ for many $i$'s,
but can compress their inputs in advance and communicate only $2$ bits as explained above.

The heart of a lifting theorem is, therefore, to define a {\em hard} gadget $g$, and show that computing $F(x,y) = f(g(x_1,y_1),...,g(x_n,y_n))$
requires all communication complexity protocols to follow, in fact, the query process of the best query complexity algorithm for $f$,
such that whenever the query algorithm wants to know the $i$'th bit of $f$,
Alice and Bob must compute $g(x_i,y_i)$, and this requires them to communicate many bits.

Raz and McKenziee~\cite{RazM99} were the first to use this idea in 1997, which was then called a simulation argument.
Goos,  Pitassi and Watson~\cite{Goos} generalized their argument and proved the following general lifting theorem:

\begin{theorem}[\cite{goos2018deterministic}]
\label{theo-goos}
There is a gadget $g$ whose deterministic communication complexity is $\Theta(\log n)$, such that for all $f:\{0,1\}^n \rightarrow \{0,1\}$,
the deterministic communication complexity, $D(F)$, of $F = f\circ g^n$ is:
$$D(F) = Query(f) \cdot \Theta(\log n).$$
\end{theorem}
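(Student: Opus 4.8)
The plan is to prove the two inequalities hidden inside the $\Theta$ separately. The upper bound $D(F)\le Query(f)\cdot O(\log n)$ is the routine direction and is handled by a direct simulation: fix an optimal decision tree of depth $Query(f)$ for $f$, and have Alice and Bob walk down it together. Each time the tree queries a coordinate $z_i=g(x_i,y_i)$, the two players run the optimal deterministic protocol for the gadget $g$ on their local pieces $x_i,y_i$ — costing $\Theta(\log n)$ bits — learn $z_i$, and descend to the correct child. After at most $Query(f)$ queries they reach a leaf labelled $f(z)=F(x,y)$, for a total of $Query(f)\cdot\Theta(\log n)$ bits. So the real content is the matching lower bound $D(F)\ge Query(f)\cdot\Omega(\log n)$.

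For the lower bound I would use the simulation argument of Raz and McKenzie~\cite{RazM99}, in the generalized deterministic form of~\cite{goos2018deterministic}: from any deterministic protocol $\Pi$ computing $F$, extract a decision tree $T$ for $f$ of depth $O(D(\Pi)/\log n)$; since every decision tree for $f$ has depth at least $Query(f)$, this yields $D(\Pi)\ge\Omega(Query(f)\log n)$. Here one must choose $g$ to be something like the index function on a universe of polynomial size $m=\mathrm{poly}(n)$, whose deterministic communication complexity is $\Theta(\log m)=\Theta(\log n)$, and whose crucial \emph{robustness} property is that whenever $A\times B$ is a combinatorial rectangle of gadget inputs with both $A$ and $B$ sufficiently dense, $g$ is non-constant on $A\times B$ — indeed both values of $g$ are still realizable. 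The tree $T$ simulates $\Pi$ in its head against a worst-case input, maintaining the invariant that the set of inputs consistent with the transcript so far is a \emph{structured rectangle}: there is a set $I\subseteq[n]$ of \emph{free} blocks on which the rectangle stays blockwise dense (so the corresponding gadget values $z_i$ remain undetermined and can be pinned adversarially later), while on the complementary \emph{fixed} blocks the value $z_i=g(x_i,y_i)$ has already been determined. The tree $T$ queries its own bit $z_i$ precisely when block $i$ must move from free to fixed during the simulation, so the depth of $T$ equals the number of blocks ever fixed.

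The heart of the argument — and the step I expect to be the main obstacle — is the \emph{cleanup / density-restoration} lemma: after each transmitted bit of $\Pi$ one side of the consistent rectangle is intersected with a subset, possibly destroying blockwise density on some free blocks, and one must show density can be restored on all but a few free blocks by fixing (and querying in $T$) those few, with the \emph{total} number of blocks ever fixed bounded by $O(D(\Pi)/\log n)$. The accounting is amortized: a free block can absorb roughly $\Omega(\log n)$ bits' worth of density loss before it must be fixed — this threshold is exactly where the polynomial size of the gadget enters — while the $D(\Pi)$ transmitted bits can distribute only $D(\Pi)$ total units of density loss among the blocks, so at most $D(\Pi)/\Omega(\log n)$ blocks are fixed. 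Finally, when $\Pi$ halts, its transcript determines a leaf with a fixed output value; since the surviving structured rectangle still allows every free block to take an arbitrary gadget value, $F$ must be constant on it, which forces $f$ to be constant on the corresponding subcube — i.e. the already-fixed blocks form a certificate for $f$. Hence $T$ outputs the correct value $f(z)$, the reduction is valid, and combining with the upper bound gives $D(F)=Query(f)\cdot\Theta(\log n)$.
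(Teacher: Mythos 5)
Your upper-bound argument is exactly what the paper gives: since this is a survey, the text following Theorem~\ref{theo-goos} proves only the easy direction $D(f\circ g^n)\leq Query(f)\cdot\Theta(\log n)$, by having Alice and Bob walk down an optimal decision tree for $f$ and run the gadget protocol at each query, and then merely states that the lower bound amounts to extracting a shallow decision tree from a cheap protocol, deferring the actual argument to~\cite{goos2018deterministic}. So on the part the paper argues, you take the same route.

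For the lower bound you go beyond the paper and outline the Raz--McKenzie~\cite{RazM99} / G\"{o}\"{o}s--Pitassi--Watson simulation: an index-type gadget over a polynomial-size universe with $D(g)=\Theta(\log n)$, the invariant that the transcript-consistent rectangle stays structured (blockwise dense on free blocks, gadget value pinned on fixed blocks), a density-restoration step after each communicated bit, an amortized count showing a block is fixed only after absorbing $\Omega(\log n)$ bits of density loss, and the endgame in which the surviving rectangle realizes all gadget values on free blocks, so constancy of $F$ at the leaf forces $f$ to be constant on the corresponding subcube. This is indeed the correct architecture of the cited proof, and it is more than the survey itself supplies. Be aware, though, that as written it is a plan rather than a proof: the density-restoration lemma you yourself flag as the main obstacle is precisely the technical heart of~\cite{goos2018deterministic} and is only asserted here; in the actual argument the maintained invariant is not symmetric blockwise density on both sides but a more delicate, asymmetric combination (a thickness/min-entropy condition on one player's blocks together with control of the other side and of low-probability error events), and making the amortization work is exactly where the polynomial gadget size enters. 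So your proposal is consistent with, and more detailed than, the paper's treatment, but it does not by itself constitute a self-contained proof of the lower bound.
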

Proving that $D(f\circ g^n) \leq Query(f) \cdot \Theta(\log n)$ is easy, since Alice and Bob can simply simulate the best query algorithm for $f$,
where every time the query algorithm wants to query the $i'th$ bit of the input to $f$, Alice and Bob communicate $\Theta(\log n)$ bits to decide what $g(x_i,y_i)$ is.
The lower bound requires to show that if there was a better communication complexity protocol for $F = f\circ g^n$ then
there would be a query algorithm which decides $f$ with a lower query complexity.

Recall that the binary rank is closely related to the deterministic communication complexity and the Boolean rank determines the non-deterministic communication complexity.
G\"{o}\"{o}s, Pittasi and Watson~\cite{Goos} proved that there exists a matrix $M$ whose deterministic communication complexity is at least $\widetilde{\Omega}(\log^2 \Rbin(M))$
(see Theorem~\ref{theo-deterministic-partition}, Section~\ref{subsec-communication}). Their proof  uses the lifting technique and Theorem~\ref{theo-goos} mentioned above.
In Section~\ref{Subsec-complemement} we describe how the lifting technique and the connection to communication complexity were used in a series of works
which resulted in a quasi polynomial gap between the binary rank of a matrix and the Boolean rank of its complement.

\section{Mathematical properties of the rank functions}
\label{Sec-properties}

The real rank of a matrix is a well understood concept, supported by a rich collection of  algebraic tools and properties, which makes it
highly useful in many applications in mathematics and computer science.
For example, it is known that $\Rreal(M)$ equals the minimum size of a spanning set among the rows (columns) of $M$ and also equals the maximum size of an independent set among the rows (columns) of $M$.
Furthermore, the real rank satisfies fundamental properties, including:
\begin{itemize}
\item {\bf Sub-additivity:} $\Rreal(N + M) \leq \Rreal(N) + \Rreal(M)$.

\item {\bf Multiplicity under Kronecker product:} $\Rreal(N \otimes M) = \Rreal(N)\cdot \Rreal(M)$.

\item {\bf Base spanning property:} Every base for the rows (columns) of $M$ spans all other bases of the rows (columns) of $M$.

\item {\bf Full rank sub-matrix:} Every matrix $M$ of real rank $d$ contains a $d \times d$ sub-matrix  with real rank $d$.
\end{itemize}

In contrast, the behavior of the Boolean and binary rank remains less understood, although impressive progress was made over the years.
Given their importance in diverse applications,
it is essential to systematically investigate the properties of these rank functions and develop effective algorithms to approximate them.
Accordingly, prior research has examined if the basic properties of the real rank extend to
the Boolean and binary rank and under what conditions. These results help to clarify the limits of the linear algebraic tools,
and also sharpen the differences between these rank functions and the real rank.
In this section we survey some of these results.

\subsection{Spanning sets and Bases}
\label{Sec-base}

The concepts of a spanning set and a base are fundamental concepts in algebra, in particular, when discussing the rank of a matrix over the reals.
It is possible to define these concepts over any semiring equipped with two operations, addition and multiplication.
In our context this means that when discussing a $0,1$ matrix $M$,  a {\em base} is, as for the real rank, a minimal subset of vectors that spans the columns
of $M$, but here the vectors are $0,1$, the coefficients in a linear combination are restricted to $0,1$,
 and the addition and multiplication are the binary or Boolean operations, depending on the rank.

Cohen and Rothblum showed~\cite{Cohen} that the nonnegative rank of a matrix $M$ is $d$ if and only if there exists a base of size $d$ which spans the columns of $M$.
A similar claim holds for the binary and Boolean rank.

\begin{lemma}
The Boolean or binary rank of a $0,1$ matrix $M$ is $d$ if and only if there exists a $0,1$ base of size $d$ which spans the columns of $M$ under the binary or Boolean operations, respectively.
\end{lemma}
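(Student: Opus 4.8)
The plan is to build a dictionary between decompositions $M = A\cdot B$ and spanning sets of the columns of $M$, and then read both directions off it. Suppose $M = A\cdot B$ with $A$ of size $n\times d$ and $B$ of size $d\times m$, the operations being over $\mathbb{Z}$ in the binary case and Boolean in the Boolean case. Writing $A^{(i)}$ for the $i$-th column of $A$, the $j$-th column $M^{(j)}$ of $M$ equals $\sum_{i=1}^{d} B_{ij}\,A^{(i)}$ (respectively $\bigvee_{i=1}^{d} B_{ij}\,A^{(i)}$); since $B$ is a $0,1$ matrix, this exhibits $\{A^{(1)},\dots,A^{(d)}\}$ as a set of $0,1$ vectors spanning the columns of $M$ with $0,1$ coefficients under the relevant arithmetic. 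Conversely, given a spanning set $\{b_1,\dots,b_d\}$ and, for each column $j$, a $0,1$ coefficient vector $\beta^{(j)}$ with $M^{(j)} = \sum_i \beta^{(j)}_i b_i$ (respectively $\bigvee_i \beta^{(j)}_i b_i$), the matrices $A$ with columns $b_1,\dots,b_d$ and $B$ with columns $\beta^{(1)},\dots,\beta^{(m)}$ satisfy $M = A\cdot B$. Thus $\Rbin(M)$ (respectively $\Rbool(M)$) equals the minimum cardinality of a $0,1$ spanning set of the columns of $M$: one inequality is the first half of the dictionary applied to an optimal decomposition, the other is the second half applied to a minimum spanning set. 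Taking ``base'' to mean a minimum-cardinality spanning set (as in the nonnegative-rank statement of Cohen and Rothblum~\cite{Cohen}), this is exactly the claimed equivalence: $\Rbin(M)=d$ iff there is a base of size $d$.

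I would also record the sharper fact that the columns of an optimal factor $A$ can themselves be taken to form a base, by first discarding any zero column of $A$ together with the corresponding row of $B$, and then merging equal columns: if $A^{(i)}=A^{(i')}\neq 0$ with $i\neq i'$, replacing row $i$ of $B$ by the entrywise OR of rows $i$ and $i'$ and deleting row $i'$ (and column $i'$ of $A$) leaves the product unchanged. In the Boolean case this is immediate; in the binary case one first observes that $B_{ij}=B_{i'j}=1$ is impossible for any $j$ — it would force an entry $\ge 2$ in $M^{(j)}$ on the support of $A^{(i)}$ — so the OR and the integer sum agree and the new $B$ is still $0,1$. After these reductions no column of $A$ can be dropped without destroying the spanning property, so the columns of $A$ form a minimal spanning set of size $d$.

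The step that requires the most care — and the only real obstacle — is staying inside $\{0,1\}$ for the coefficients throughout. In the Boolean case everything is idempotent and this is automatic; in the binary case it is precisely why ``spans'' must mean ``with $0,1$ coefficients over $\mathbb{Z}$'' (matching the definition of the binary rank, not arbitrary integer combinations) and why the no-repeated-$1$ observation is needed before merging equal columns. One should also be explicit about the convention that a base is a minimum spanning set rather than merely an inclusion-minimal one: since the base-spanning property can fail for the Boolean and binary rank, there can be inclusion-minimal spanning sets strictly larger than $\Rbin(M)$, and the ``if'' direction of the lemma would break under that weaker reading.
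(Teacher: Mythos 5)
Your proof is correct and follows essentially the same route as the paper: reading the columns of an optimal factor $A$ (resp. $X$) as the spanning set and the columns of $B$ (resp. $Y$) as the $0,1$ coefficients, with optimality guaranteeing minimality. The paper's own proof is just a terse version of your dictionary, so your extra care about staying inside $\{0,1\}$ in the binary case and about the minimum-versus-inclusion-minimal reading of ``base'' is fine but not a different method.
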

\begin{proof}
It is easy to verify that if $M = X\cdot Y$ is an optimal binary or Boolean decomposition of $M$,
then the columns of $X$ are a binary or Boolean base, respectively, for the columns of $M$, and the columns of $Y$ are the coefficients in the corresponding linear combination.
The optimality of the decomposition guarantees that this is a base.
\end{proof}

Parnas and Shraibman~\cite{parnas2018augmentation} defined the concept of the {\em Base Graph} of a matrix.
The vertices of this graph are the different bases of the matrix $M$ for a given rank function,
and there is a directed edge from base $U$ to base $V$ if $U$ spans $V$.
Whereas this graph has a trivial structure for the real rank, since any base of  $M$ spans all other bases of $M$,
it has interesting properties for the Boolean and binary rank. Specifically it is always acyclic.

\begin{lemma}[\cite{parnas2018augmentation}]
The base graph of a $0,1$ matrix $M$ is always acyclic, for both the binary and the Boolean rank.
\end{lemma}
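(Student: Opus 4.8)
The plan is to prove a stronger and cleaner statement, namely that the spanning relation on bases is a \emph{partial order}; since the directed graph of a partial order contains no cycle of length at least two, this yields acyclicity. Reflexivity is immediate, and transitivity holds because a $0,1$-combination of $0,1$-combinations is again a $0,1$-combination of the required kind: in the Boolean case an OR of ORs is an OR, and in the binary case the inner sums are supported in pairwise disjoint blocks of coordinates (the outer sum being itself a $0,1$-vector), so substituting never makes any coordinate exceed $1$. Hence the edge relation is a preorder, and any directed cycle of length $\ge 2$ would yield two distinct bases $U\neq V$ of $M$ with $U$ spanning $V$ and $V$ spanning $U$. It therefore suffices to establish antisymmetry: if two bases $U$ and $V$ of $M$ span each other, then $U=V$.

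To do this I would write $U=\{u_1,\dots,u_d\}$ and $V=\{v_1,\dots,v_d\}$, where $d$ is the relevant rank, and fix representations $v_j=\bigoplus_{i\in A_j}u_i$ and $u_i=\bigoplus_{j\in B_i}v_j$, where $\oplus$ denotes the binary or Boolean sum as appropriate and $A_j,B_i$ are index sets. Substituting the first family into the second gives, for each $i$, an expression $u_i=\bigoplus_{\ell\in C_i}u_\ell$ with $C_i=\bigcup_{j\in B_i}A_j$, and this is again a genuine $0,1$-combination by the disjointness observation above (using also that a base contains no zero vector, so no single $u_\ell$ can sit inside the supports of two disjointly supported $v_j$'s). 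Since $U$ is a base, hence a minimal spanning set of the columns of $M$, no $u_i$ lies in the span of $U\setminus\{u_i\}$; therefore $i\in C_i$, and there is an index $j\in B_i$ with $i\in A_j$. For that $j$ we get $u_i\le v_j$ coordinatewise (because $v_j=\bigoplus_{i'\in A_j}u_{i'}$ is a sum of nonnegative vectors containing $u_i$) and $v_j\le u_i$ coordinatewise (because $u_i=\bigoplus_{j'\in B_i}v_{j'}$ contains $v_j$), so $u_i=v_j$. Thus every vector of $U$ occurs in $V$, that is, $U\subseteq V$; by the symmetric argument $V\subseteq U$, and hence $U=V$, completing the proof.

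The only genuinely delicate point — the thing that prevents this from being a one-line fact about rank over a field — is the binary-arithmetic bookkeeping in the substitution and minimality steps: one must check that rewriting a $0,1$-sum of vectors, each of which is a $0,1$-sum of base vectors, never produces a coordinate with value larger than $1$. I would isolate this once as three reusable facts: the outer summands have pairwise disjoint supports; each inner sum is supported inside the support of its outer summand; and no (nonzero) base vector can lie in the supports of two distinct, hence disjointly supported, summands. The Boolean case is then just the degenerate version of the same computation with $\oplus=\vee$, requiring no extra work, and everything else — transitivity, the reduction of acyclicity to antisymmetry, and the final set inclusions — is routine.
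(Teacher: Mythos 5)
Your argument is correct and complete. Note that the survey itself states this lemma only with a citation to~\cite{parnas2018augmentation} and includes no proof, so there is nothing in the text to compare against; judged on its own, your route — showing the spanning relation is a preorder (transitive) and then establishing antisymmetry, so that a directed cycle of length at least two among distinct bases is impossible — is a valid way to get acyclicity (self-loops being excluded by the intended reading of the base graph). The delicate points are exactly the ones you isolate, and they all check out: a base contains no zero vector (a zero vector could be dropped from any combination, contradicting minimality); in the binary case the summands of a $0,1$-sum have pairwise disjoint supports and each inner summand is supported inside its outer summand, so after substitution no nonzero base vector can occur twice and the rewritten expression is again a genuine $0,1$-combination; this same bookkeeping also justifies the step ``minimality implies no $u_i$ is spanned by $U\setminus\{u_i\}$,'' which is what forces $i\in C_i$. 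From there the coordinatewise inequalities $u_i\le v_j\le u_i$ give $u_i=v_j$, hence $U\subseteq V$ and, symmetrically, $V\subseteq U$. One small remark: writing both bases with the same cardinality $d$ is never actually used, so the argument applies verbatim even if bases are taken to be arbitrary minimal spanning sets rather than minimum-size ones.
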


The base graph is used in~\cite{parnas2018augmentation} to study the {\em augmentation property}.
Given a matrix $M$ and a column vector $x$, denote by $(M|x)$
the matrix resulting by augmenting $M$ with $x$ as the last column.
Denote by $(M|x_1,...,x_t)$ the matrix resulting  by augmenting $M$  with the column vectors $x_1,...,x_t$.

\begin{definition}
A matrix $M$ has the {\em Augmentation property} for a given rank $R()$ if
for any set of column vectors $x_1,...,x_t$ for which $R(M|x_i) = R(M)$, for
$1\leq i \leq t$, it holds that $R(M|x_1,...,x_t)= R(M)$.
\end{definition}

The following simple lemma shows that the augmentation property holds for the real rank for any matrix $M$. 

\begin{lemma}
\label{sumofrowaugmented}
Let $M$ be a real matrix and $x$ a column vector.
If $\Rreal(M) = \Rreal(M|x)$ then any dependence between the rows of $M$ is preserved in $(M|x)$.
A similar claim holds when the role of the columns and rows is reversed.
\end{lemma}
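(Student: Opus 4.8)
The plan is to rely on the standard fact that augmenting a matrix with a new column leaves the real rank unchanged precisely when that column already lies in the column space of the original matrix. So the first step is to observe that the hypothesis $\Rreal(M) = \Rreal(M|x)$ forces $x$ to belong to the column space of $M$, that is, there is a vector $v$ with $x = Mv$.

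Next I would make the bookkeeping explicit. Writing the rows of $M$ as $M_1,\dots,M_n$, a dependence between the rows of $M$ is a coefficient vector $c = (c_1,\dots,c_n)$ with $\sum_{i} c_i M_i = 0$, equivalently $c^t M = 0$ as a row vector. The $i$-th row of $(M|x)$ is $(M_i \mid x_i)$, where $x_i$ is the $i$-th entry of $x$; hence the dependence is \emph{preserved} in $(M|x)$ exactly when, in addition, $\sum_i c_i x_i = 0$, i.e.\ $c^t x = 0$.

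The key step is then a one-line computation: $c^t x = c^t (Mv) = (c^t M) v = 0 \cdot v = 0$, using $c^t M = 0$. Consequently $c^t (M|x) = (c^t M \mid c^t x) = 0$, so every linear dependence among the rows of $M$ remains one among the rows of $(M|x)$. The reversed statement is symmetric: if $M$ is augmented by a new row $y^t$ without changing the real rank, then $y^t$ lies in the row space of $M$, say $y^t = w^t M$, and for any column dependence $Mc = 0$ we get $y^t c = w^t (Mc) = 0$, so the column dependence survives. I do not expect a genuine obstacle here; the only points requiring care are the translation between the rank equality and membership in the span, and the observation that this argument actually shows the row-dependence spaces of $M$ and $(M|x)$ coincide, the reverse inclusion being immediate since deleting the last column cannot destroy a dependence.
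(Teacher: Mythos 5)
Your proof is correct: the hypothesis $\Rreal(M)=\Rreal(M|x)$ indeed forces $x=Mv$ for some $v$, and then $c^tM=0$ gives $c^tx=c^tMv=0$, so every row dependence of $M$ extends to $(M|x)$, with the symmetric argument handling the transposed statement. The paper states this lemma as a simple fact without supplying a proof, so there is nothing to compare against; your argument is exactly the standard one that is implicitly intended, and it is also the form in which the lemma is later used (in the analysis of the real-rank testing algorithm), so no gaps remain.
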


However, the augmentation property does not hold in general for the Boolean or binary  rank. The following example is from~\cite{parnas2018augmentation}.
$$
(M|x,y)=
\left(
     \begin{array}{ccc|cc}
 1 & 1 & 1 &0&0\\
 1 & 1 & 1 &1&0\\
 0 & 1 & 1 &1&1\\
 0 & 0& 1  &0&0\\
\end{array}
\right)
$$
The boolean rank of $M$ remains $3$, after augmenting it with $x$ or $y$ separately.
However, when $M$ is augmented with both $x$ and $y$, it is easy to verify that the Boolean rank increases to $4$.
The following theorem proves that a necessary and sufficient condition for the Augmentation property to hold
for the Boolean or binary rank, is having a (unique) base that spans all other bases of the matrix,
or in other words, the Base graph of the matrix should have exactly one source.

\begin{theorem}[\cite{parnas2018augmentation}]
A $0,1$ matrix $M$ has the Augmentation property for the Boolean (binary)
rank if and only if  $M$ has a (unique) base that spans all other bases of $M$ under the Boolean (binary) operations.
\end{theorem}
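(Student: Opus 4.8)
The plan is to derive both directions from a single auxiliary object. Given $M$ of size $n\times m$, let $\widehat M$ be the matrix obtained by appending to $M$ \emph{every} $0,1$ column vector $v\in\{0,1\}^n$ for which $R(M|v)=R(M)$; here $R$ denotes $\Rbool$ or $\Rbin$, and the two cases run in parallel. Before using $\widehat M$ I would record two elementary facts. \textbf{(i)} For a column $v$, $R(M|v)=R(M)$ \emph{iff} $v$ is spanned by some base of $M$: if a base $B$ spans $v$ then $B$ still spans all columns of $(M|v)$, so $R(M|v)\le|B|=R(M)$; conversely a base of $(M|v)$ has cardinality $R(M|v)=R(M)$ and spans the columns of $M$, hence is a base of $M$ that also spans $v$. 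Consequently the columns of $\widehat M$ are exactly the vectors spanned by some base of $M$, and in particular they include every column of every base of $M$. \textbf{(ii)} Spanning is transitive: if $B^*$ spans $B$ and $B$ spans $v$, then $B^*$ spans $v$.

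For the ``only if'' direction I would argue as follows. Assume $M$ has the Augmentation property and apply it to the (finite) list of all $v\in\{0,1\}^n$ with $R(M|v)=R(M)$; this gives $R(\widehat M)=R(M)$, and I write $d:=R(M)$. Pick a base $B^*$ of $\widehat M$ of size $d$. Since $B^*$ spans all columns of $\widehat M$, it in particular spans the columns of $M$, and as $|B^*|=d=R(M)$ it is therefore a base of $M$. Moreover, by fact (i), every column of every base $V$ of $M$ is a column of $\widehat M$, so $B^*$ spans every column of $V$, i.e.\ $B^*$ spans $V$. Thus $B^*$ is a base of $M$ that spans all other bases. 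Uniqueness is then immediate: two such bases would span one another and hence form a $2$-cycle in the base graph, contradicting its acyclicity~\cite{parnas2018augmentation}.

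For the ``if'' direction I would argue as follows. Assume $M$ has a base $B^*$ spanning all other bases, and let $x_1,\dots,x_t$ be columns with $R(M|x_i)=R(M)$ for each $i$. By fact (i) each $x_i$ is spanned by some base $B_i$ of $M$; since $B^*$ spans $B_i$ and $B_i$ spans $x_i$, transitivity (fact (ii)) yields that $B^*$ spans $x_i$. Hence $B^*$ spans every column of $(M|x_1,\dots,x_t)$, so $R(M|x_1,\dots,x_t)\le|B^*|=R(M)$; the reverse inequality is trivial, so equality holds and the Augmentation property is verified.

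The part that needs genuine care — the main obstacle — is the binary case of fact (ii), because a binary combination is an integer sum of columns forced to stay $0,1$. If $v$ is a disjoint sum $v=\sum_k v_{i_k}$ and each $v_{i_k}=\sum_{j\in S_k}u_j$ is a disjoint sum, then $v=\sum_k\sum_{j\in S_k}u_j$ is not obviously a disjoint expression in the $u_j$; but if a column $u_j$ lies in two blocks $S_k,S_{k'}$ with $k\neq k'$, then $\mathrm{supp}(u_j)$ lies in the disjoint supports of $v_{i_k}$ and $v_{i_{k'}}$, forcing $u_j=0$, so after discarding zero columns the representation is genuinely disjoint. (The Boolean case is immediate, as a union of unions is a union.) Secondary points to pin down are that a base of $\widehat M$ of cardinality $d=R(M)$ really is a base of $M$ — true because any spanning set of $M$ of cardinality $R(M)$ is minimal — and that all bases of $M$ share the common cardinality $R(M)$, which is what makes fact (i) go through cleanly.
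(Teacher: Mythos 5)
The survey states this theorem without proof, deferring to~\cite{parnas2018augmentation}, so there is no in-paper argument to compare yours against; judged on its own merits, your proof is correct and essentially self-contained given the survey's framework: it uses only the lemma that an optimal decomposition yields a base of size equal to the rank, and the acyclicity of the base graph for uniqueness. The auxiliary matrix $\widehat M$, fact (i), and the disjointness repair in the binary transitivity step are all sound (note that a base cannot contain the zero vector, which is what lets you discard the repeated blocks), and proving uniqueness in the Boolean case as well is harmless, since it follows from acyclicity there too.

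The one item you relegate to a ``secondary point'' is in fact the load-bearing convention and should be stated up front: every base of $M$ must have cardinality exactly the rank, i.e.\ ``minimal spanning set'' has to be read as minimum-cardinality spanning set (equivalently, a spanning set of size $R(M)$, writing $R$ for $\Rbool$ or $\Rbin$ as appropriate). This is the reading forced by the survey's own lemma that the rank is $d$ if and only if there exists a base of size $d$, and it is not cosmetic: if a base were merely an inclusion-minimal spanning set, then bases of different cardinalities exist, your fact (i) breaks (a large base spanning $v$ gives no bound $R(M|v)\le R(M)$), and the theorem itself becomes false. Concretely, for the matrix $M$ in the survey's own augmentation counterexample (the first three columns of $(M|x,y)$ in Section~\ref{Sec-base}), the unit vectors $e_1,e_2,e_3,e_4$ form an inclusion-minimal spanning set of the columns; it spans every $0,1$ vector and hence every other base, under both the Boolean and the binary operations, yet the augmentation property fails for that matrix. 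With the minimum-cardinality convention your argument is exactly right: a spanning set of size $R(M)$ is automatically a base, every base spans each of its own members $u$, so $R(M|u)=R(M)$ and $u$ is a column of $\widehat M$, and both directions go through as you wrote them.
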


\subsection{The full rank sub-matrix}
\label{Subsec-fullrank}
It is well known that if $\Rreal(M) = d$, then $M$ contains a $d \times d$ sub-matrix whose real rank is exactly $d$.
It was shown already in 1981 by de Caen, Gregory and Pullman~\cite{Caen2} that this property does not hold for the Boolean rank.
They showed that the matrix $C_7$ has Boolean rank $5$, whereas any $5 \times 5$ sub-matrix of $C_7$ has Boolean rank at most $4$.
The following lemma generalizes this argument for the family of matrices $C_n$.
Recall that $\sigma(n) = \min \left\{d\;|\; n \leq {d \choose \lfloor d/2\rfloor}\right\}$,
and that by Corollary~\ref{coro-Cn}, $\Rbool(C_{n}) = \sigma(n)$. See Section~\ref{Subsec-combinatorics}.

\begin{lemma}
\label{lem:crown}
Let $ n =  {k \choose \lfloor k/2\rfloor} + 1$.
Then $\Rbool(C_n) = k+1$ and every proper sub-matrix of $C_n$ has Boolean rank at most $k$.
\end{lemma}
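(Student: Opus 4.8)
The plan is to prove the two assertions separately, both resting on the combinatorial material of Section~\ref{Subsec-combinatorics}. Throughout I assume $k\ge 1$ (for $k=0$ we would have $n=2$ and $\Rbool(C_2)=2\ne k+1$, so the statement implicitly excludes that case). For the equality $\Rbool(C_n)=k+1$, I would invoke Corollary~\ref{coro-Cn}, which gives $\Rbool(C_n)=\sigma(n)$, and then check that $\sigma(n)=k+1$ when $n=\binom{k}{\lfloor k/2\rfloor}+1$. Since the central binomial coefficients $\binom{d}{\lfloor d/2\rfloor}$ are non-decreasing in $d$, every $d\le k$ satisfies $\binom{d}{\lfloor d/2\rfloor}\le\binom{k}{\lfloor k/2\rfloor}<n$, hence $\sigma(n)\ge k+1$. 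On the other hand $\binom{k+1}{\lfloor(k+1)/2\rfloor}>\binom{k}{\lfloor k/2\rfloor}$ for $k\ge 1$ — for even $k$ write $\binom{k+1}{k/2}=\binom{k}{k/2}+\binom{k}{k/2-1}$, and for odd $k$ write $\binom{k+1}{(k+1)/2}=2\binom{k}{(k-1)/2}$ — so $\binom{k+1}{\lfloor(k+1)/2\rfloor}\ge\binom{k}{\lfloor k/2\rfloor}+1=n$, giving $\sigma(n)\le k+1$. Thus $\Rbool(C_n)=\sigma(n)=k+1$.

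For the statement about proper sub-matrices, I would first reduce to a single case. The Boolean rank is monotone under passing to sub-matrices (a monochromatic-rectangle cover of the ones of a matrix restricts to one for any sub-matrix), and it is transpose-invariant; moreover $C_n$ is symmetric and invariant under any simultaneous permutation of its rows and columns. Hence every proper sub-matrix of $C_n$ sits, as a sub-matrix, inside the matrix obtained from $C_n$ by deleting one row (if at least one row is dropped) or inside the transpose of such a matrix (if no row but at least one column is dropped), and all ``$C_n$ minus one row'' matrices are isomorphic. So it suffices to show that the $(n-1)\times n$ matrix $C$ obtained from $C_n$ by deleting the last row has $\Rbool(C)\le k$; here $n-1=\binom{k}{\lfloor k/2\rfloor}$ and $C_{i,j}=1$ iff $i\ne j$, for $i\in\{1,\dots,n-1\}$, $j\in\{1,\dots,n\}$.

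To bound $\Rbool(C)$ I would exhibit a subset representation over the ground set $\{1,\dots,k\}$ and use the equivalence between Boolean rank and intersection number from Section~\ref{Sec:equivalent}. Enumerate the $\binom{k}{\lceil k/2\rceil}=\binom{k}{\lfloor k/2\rfloor}=n-1$ subsets of $\{1,\dots,k\}$ of size $\lceil k/2\rceil$ as $T_1,\dots,T_{n-1}$ and assign $T_i$ to row $i$; to column $j$ with $1\le j\le n-1$ assign $U_j=\{1,\dots,k\}\setminus T_j$, and to the extra column $n$ assign $U_n=\{1,\dots,k\}$. Then for $i,j\le n-1$ we have $T_i\cap U_j=T_i\setminus T_j$, which is empty iff $T_i\subseteq T_j$ iff $T_i=T_j$ (equal cardinalities) iff $i=j$; and $T_i\cap U_n=T_i\ne\emptyset$ since $\lceil k/2\rceil\ge1$. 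So $T_i\cap U_j\ne\emptyset$ exactly when $C_{i,j}=1$, the intersection number of $C$ is at most $k$, and equivalently the sets $R_t=\{i:t\in T_i\}\times\{j:t\in U_j\}$, $t=1,\dots,k$, form a cover of the ones of $C$ by $k$ monochromatic rectangles; hence $\Rbool(C)\le k$. Combined with the reduction above, every proper sub-matrix of $C_n$ has Boolean rank at most $k$.

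The routine parts are the verification that $\sigma(n)=k+1$ and the intersection check for the proposed representation. The one genuine design decision — and the step I expect to need care — is the treatment of the ``unmatched'' column $n$: the column sets must be chosen so that the diagonal-type zeros among the first $n-1$ columns are preserved while column $n$ still meets every row set, which is exactly why taking $U_n=\{1,\dots,k\}$ works, and why it is convenient to use size-$\lceil k/2\rceil$ (rather than size-$\lfloor k/2\rfloor$) subsets for the rows, so that $\lceil k/2\rceil\ge1$ also covers the borderline value $k=1$.
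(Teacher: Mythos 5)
Your proof is correct, and it differs from the paper's in how the upper bound for proper sub-matrices is obtained. The first assertion is handled the same way in both (Corollary~\ref{coro-Cn}); you additionally spell out the verification that $\sigma(n)=k+1$, which the paper leaves implicit, and you correctly flag the degenerate case $k=0$. For the second assertion, the paper splits into two cases: principal sub-matrices (again by Corollary~\ref{coro-Cn}), and the matrix obtained by deleting one row, which it views as $C_{n-1}$ concatenated with an all-one column; it then argues that \emph{any} rectangle cover of $C_{n-1}$, of size $\sigma(n-1)\leq k$, extends to cover the all-one column at no extra cost. You instead first reduce every proper sub-matrix to the single row-deleted matrix using monotonicity of the Boolean rank under sub-matrices plus the symmetry of $C_n$ (a reduction the paper uses only implicitly), and then exhibit an explicit $k$-element intersection representation: size-$\lceil k/2\rceil$ subsets for the rows, their complements for the first $n-1$ columns, and the full ground set for the extra all-one column. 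Your route is more self-contained and constructive — it is essentially the concrete Sperner-type cover underlying Corollary~\ref{coro-Cn}, adapted by hand to absorb the unmatched column — while the paper's route is shorter and more robust, since its extension argument works for an arbitrary cover of $C_{n-1}$ and does not require re-deriving the optimal one. Both yield exactly the bound $k$, so the proposal stands as a valid alternative proof.
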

\begin{proof}
By Corollary~\ref{coro-Cn}, it holds that $\Rbool(C_n) = k+1$.
Consider any proper sub-matrix of $C_n$.
If this is a principal sub-matrix of $C_n$ then by Corollary~\ref{coro-Cn} its Boolean rank is at most $k$,
since it is equal to $C_m$ for some  $m \leq {k \choose \lfloor k/2\rfloor}$.

Otherwise, consider a sub-matrix $M$ that is achieved from $C_n$ by removing one of the rows of $C_n$.
The sub-matrix $M$ is the concatenation of the matrix $C_{n-1}$ and the all-one column vector (maybe after rearranging rows).
But any cover of the ones of $C_{n-1}$ by monochromatic rectangles can be easily extended to cover
the all-one column of $M$ without increasing the number of rectangles.
To see this note that each row of $C_{n-1}$ contains at least one $1$, and thus,
the cover contains a rectangle which covers ones in this row, and this rectangle can be extended to cover the corresponding ones in the all-one column which share rows with this rectangle.
Hence, $\Rbool(M) = \Rbool(C_{n-1}) = \sigma(n-1) \leq k$.
A similar argument holds if we remove one of the columns of $C_n$.
\end{proof}

In fact, since $\Rbool(C_n) = \sigma(n) = \Theta(\log n)$, we can actually get a stronger result.

\begin{lemma}
\label{lem:submatrixCn}
Every sub-matrix of $C_n$ of size $ \sigma(n) \times  \sigma(n)$ has Boolean rank at most $O(\log \log n)$.
\end{lemma}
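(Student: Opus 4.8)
The plan is to reduce the statement to Alon's upper bound, Theorem~\ref{theo:Alon}, after observing that a $\sigma(n)\times\sigma(n)$ sub-matrix of $C_n$ has an extremely sparse set of zeros.

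First I would fix notation. Write $k=\sigma(n)$, and recall that any $k\times k$ sub-matrix $M$ of $C_n$ is specified by a set $R=\{r_1,\dots,r_k\}$ of row indices and a set $S=\{s_1,\dots,s_k\}$ of column indices of $C_n$, with $M$ the restriction of $C_n$ to these rows and columns. Since $(C_n)_{i,j}=0$ exactly when $i=j$, the entry of $M$ occupying original position $(r_a,s_b)$ is $0$ iff $r_a=s_b$. As the $r_a$ are pairwise distinct and the $s_b$ are pairwise distinct, each row of $M$ has at most one zero and each column of $M$ has at most one zero; in total $M$ has $|R\cap S|\le k$ zeros. If $R\cap S=\emptyset$ then $M$ is the all-ones matrix and $\Rbool(M)=1$, so from now on we may assume $M$ has between $1$ and $k$ zeros, with at most one per column.

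Then I would apply Theorem~\ref{theo:Alon} to the $k\times k$ matrix $M$ with parameter $d=1$: a square $0,1$ matrix of side $k\ge 2$ with at most one zero in each column has Boolean rank at most $c(1)\log_2 k$, where $c(1)=6e/\log_2 e$ is an absolute constant. Substituting $k=\sigma(n)$ and using $\sigma(n)=(1+o(1))\log_2 n$ from Section~\ref{Subsec-combinatorics} gives $\Rbool(M)\le c(1)\log_2\sigma(n)=O(\log\log n)$, which is the claim; the tiny cases $n\le 2$ where $\sigma(n)<2$ are immediate.

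I do not anticipate a genuine obstacle here: the content of the argument is entirely the one-line observation about where the zeros of the sub-matrix lie, after which the result is literally the $d=1$ instance of Theorem~\ref{theo:Alon}, applied over a ground set of size $\sigma(n)$ instead of $n$. If one prefers an explicit cover to the probabilistic one, the same bound follows by permuting the rows and columns of $M$ so that it takes a block form with a $C_t$ block in one corner (where $t=|R\cap S|$) and all-ones blocks in the remaining three corners; cover the $C_t$ block by $\sigma(t)$ monochromatic rectangles via Corollary~\ref{coro-Cn}, extend each of these rectangles across the all-ones columns (every row of $C_t$ carries a one, hence lies in some rectangle), and add a single rectangle for the all-ones rows, giving $\Rbool(M)\le\sigma(t)+1\le\sigma(\sigma(n))+1=O(\log\log n)$ by monotonicity of $\sigma$.
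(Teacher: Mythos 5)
Your proof is correct and takes essentially the same route as the paper: note that every sub-matrix of $C_n$ has at most one zero in each row and column, then apply Theorem~\ref{theo:Alon} (with $d=1$) to a matrix of side $\sigma(n)=O(\log n)$ to get Boolean rank $O(\log\log n)$. The explicit block-form cover giving $\sigma(t)+1$ rectangles is a nice constructive addition, but the core argument coincides with the paper's.
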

\begin{proof}
There is only one zero in each row and column of $C_n$. Therefore, any sub-matrix of $C_n$ contains at most one zero in each row and column.
Thus, applying Theorem~\ref{theo:Alon} to any sub-matrix of $C_n$ of size $\sigma(n) \times \sigma(n) = O(\log n \times \log n)$,
we can conclude that its Boolean rank is at most $O(c\times \log \log n)$, for some constant $c$.
\end{proof}

Hrube{\v{s}~\cite{hrubevs2025hard}  showed in 2025 that the full rank sub-matrix property doesn't hold for the binary rank.
He gave  an example of a $5 \times 6$ matrix whose binary rank is $5$, but every sub-matrix of size $5 \times 5$ of this matrix has binary rank at most $4$.
Theorem~\ref{lem-binary-fullrank} generalizes this construction and presents a family of matrices of size $(n+2)\times (n+3)$, $n \geq 3$,
such that each proper sub-matrix has a strictly smaller binary rank than that of the full matrix.
See Figure~\ref{fig:hrubevs} for an illustration.

\begin{figure}[htb!]
\begin{center}
$
\left(
\begin{array}{cccccc|ccc}
\textcolor{blue}{\bf 1} & 0 & 0 & \cdots & 0 & 0 & 0 & 1 & \textcolor{blue}{\bf 1} \\
0 & \textcolor{green}{\bf 1} & 0 & \cdots & 0 & 0 & 1 & 0 & 1  \\
0 & 0 & \textcolor{brown}{\bf 1} & 0 & \cdots & 0 &  1 &  1 & 0 \\
\cline{7-9}
\vdots & \vdots &  & \ddots & \vdots & \vdots &  1 &  1 & 1 \\
\vdots & \vdots &  &  & \ddots & \vdots & \vdots & \vdots & \vdots \\
0 & 0 & 0 & \cdots & 0 & 1 &  1&  1 & 1 \\
\hline
0 & 0 & 0 & \cdots & 0 & 0 & \textcolor{red}{\bf 1} & \textcolor{red}{\bf 1} & \textcolor{red}{\bf 1} \\
\hline
\textcolor{blue}{\bf 1} & \textcolor{green}{\bf 1} & \textcolor{brown}{\bf 1} & \cdots & 1 & 1 & 1 & 1 & \textcolor{blue}{\bf 1}
\end{array}
\right)
$
\;\;\;\;\;\;
$
\left(
\begin{array}{cccccc|ccc}
\textcolor{blue}{\bf 1} & 0 & 0 & \cdots & 0 & 0 & 0 & \textcolor{blue}{\bf 1} & \textcolor{cyan}{\bf 1} \\
0 & \textcolor{green}{\bf 1} & 0 & \cdots & 0 & 0 & \textcolor{green}{\bf 1} & 0 & \textcolor{cyan}{\bf 1}  \\
0 & 0 & \textcolor{brown}{\bf 1} & 0 & \cdots & 0 &  \textcolor{red}{\bf 1} &  \textcolor{red}{\bf 1} & 0 \\
\cline{7-9}
\vdots & \vdots &  & \ddots & \vdots & \vdots &  \textcolor{red}{\bf 1} &  \textcolor{red}{\bf 1} & \textcolor{cyan}{\bf 1} \\
\vdots & \vdots &  &  & \ddots & \vdots & \vdots & \vdots & \vdots \\
0 & 0 & 0 & \cdots & 0 & 1 &  \textcolor{red}{\bf 1}&  \textcolor{red}{\bf 1} & \textcolor{cyan}{\bf 1} \\
\hline
0 & 0 & 0 & \cdots & 0 & 0 & \textcolor{red}{\bf 1} & \textcolor{red}{\bf 1} & \textcolor{cyan}{\bf 1} \\
\hline
\textcolor{blue}{\bf 1} & \textcolor{green}{\bf 1} & \textcolor{brown}{\bf 1} & \cdots & 1 & 1 & \textcolor{green}{\bf 1} & \textcolor{blue}{\bf 1} & \textcolor{cyan}{\bf 1}
\end{array}
\right)
$
\end{center}
\caption{A generalization of the construction of~\cite{hrubevs2024hard}. On the left is an attempt to partition the ones of the matrix $M$ of size $(n+2) \times (n+3)$
into only $n+1$ rectangles.
Since the first $n+1$ ones on the main diagonal of $M$ are an isolation set and must belong to different rectangles, such a partition forces the partial partition described, where
each rectangle is described with a different color. The rightmost one on the second row of $M$ cannot belong to any of these $n+1$ rectangles.
Thus, $\Rbin(M) = n+2$, since each row can be covered by a different rectangle.
\newline
On the right we describe an alternative partition of the ones of $M$ into $n+2$ rectangles, which is used in the proof of Theorem~\ref{lem-binary-fullrank}.}
\label{fig:hrubevs}
\end{figure}

\begin{theorem}
\label{lem-binary-fullrank}
Let $M$ be the $(n+2)\times (n+3)$ matrix in Figure~\ref{fig:hrubevs}, where $n \geq 3$.
Then $\Rbin(M) = n+2$ whereas every $(n+2)\times (n+2)$ sub-matrix of $M$ has binary rank at most $n+1$.
\end{theorem}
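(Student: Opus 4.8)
The plan is to prove the two halves separately, starting with $\Rbin(M)=n+2$. For the lower bound I would begin with an isolation set of size $n+1$ and then add one rigidity step. The diagonal entries $M_{1,1},\dots,M_{n+1,n+1}$ form an isolation set: for $i<i'\le n$ the entry $M_{i,i'}$ lies in the identity block and equals $0$, and for $i'=n+1$ the entry $M_{n+1,i}$ lies in the all-zero part of row $n+1$, so no two of these ones lie in a common all-ones $2\times 2$ submatrix; hence $\Rbin(M)\ge n+1$. To rule out equality, I would assume the ones of $M$ are partitioned into rectangles $R_1,\dots,R_{n+1}$ with $M_{i,i}\in R_i$. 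Since column $i$ (for $i\le n$) has ones only in rows $i$ and $n+2$, each such $R_i$ has row set exactly $\{i,n+2\}$, and its column set is contained in $\{c:M_{i,c}=1\}$. Row $1$ lies in no $R_i$ other than $R_1$ (it is excluded from $R_{n+1}$ because column $n+1$ belongs to $R_{n+1}$ while $M_{1,n+1}=0$), so the ones $M_{1,n+2},M_{1,n+3}$ fall in $R_1$, forcing $R_1=\{1,n+2\}\times\{1,n+2,n+3\}$ and in particular $(n+2,n+3)\in R_1$. Likewise all three ones of row $n+1$ lie in $R_{n+1}$, so columns $n+1,n+2,n+3$ all belong to $R_{n+1}$, which excludes row $2$ from $R_{n+1}$ (since $M_{2,n+2}=0$); therefore the ones $M_{2,n+1},M_{2,n+3}$ lie in $R_2$, giving $R_2=\{2,n+2\}\times\{2,n+1,n+3\}$ and $(n+2,n+3)\in R_2$. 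As $R_1\ne R_2$, this contradicts disjointness of the partition, so $\Rbin(M)\ge n+2$.

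For the matching upper bound and the submatrix claim I would exhibit one explicit partition of the ones of $M$ into $n+2$ rectangles, namely $\{1,n+2\}\times\{1,n+2\}$, $\{2,n+2\}\times\{2,n+1\}$, $\{3,n+2\}\times\{3\}$, $\{3,\dots,n+1\}\times\{n+1,n+2\}$, $(\{1,\dots,n+2\}\setminus\{3\})\times\{n+3\}$, and $\{j,n+2\}\times\{j\}$ for $j=4,\dots,n$. A routine check (splitting the ones of $M$ into the identity ones, the ones of the all-ones row $n+2$, and the ones in the last three columns) shows these are pairwise disjoint all-ones rectangles covering every one of $M$, so $\Rbin(M)\le n+2$; with the lower bound this gives $\Rbin(M)=n+2$. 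Now an $(n+2)\times(n+2)$ submatrix is exactly $M$ with one column deleted. If the deleted column is $3$, is $n+3$, or is some $j$ with $4\le j\le n$, then respectively the rectangle $\{3,n+2\}\times\{3\}$, $(\{1,\dots,n+2\}\setminus\{3\})\times\{n+3\}$, or $\{j,n+2\}\times\{j\}$ becomes empty, and the remaining $n+1$ rectangles partition the submatrix. The four remaining cases (deleting column $1,2,n+1,$ or $n+2$) I would dispatch by symmetry: for every $\rho\in S_3$, simultaneously applying $\rho$ to rows $1,2,3$, to columns $1,2,3$, and to columns $n+1,n+2,n+3$ leaves $M$ unchanged, because the relevant $3\times 3$ block is $C_3$, which is invariant under this simultaneous action, while the identity block, the all-zero row $n+1$, and the all-ones row $n+2$ are preserved. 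Hence deleting column $1$ or $2$ produces a row-and-column permutation of $M$ with column $3$ deleted, and deleting column $n+1$ or $n+2$ produces a row-and-column permutation of $M$ with column $n+3$ deleted; since binary rank is permutation-invariant and those two cases are already handled, every $(n+2)\times(n+2)$ submatrix has binary rank at most $n+1$.

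The step I expect to be the main obstacle is the strengthening of the lower bound from $n+1$ to $n+2$: the isolation set by itself only yields $n+1$, so one must genuinely exploit the rigidity forced on a hypothetical optimal partition, pinning down first the row sets of $R_1,\dots,R_n$, then the column set of $R_{n+1}$, then those of $R_1$ and $R_2$, and finally locating the forced overlap at the single cell $(n+2,n+3)$. By contrast, the upper-bound partition and its verification are bookkeeping, and the $S_3$-symmetry of $M$ reduces the submatrix analysis to the two transparent cases.
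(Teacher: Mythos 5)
Your proposal is correct and follows essentially the same route as the paper's proof: the same forced-placement analysis of a hypothetical $(n+1)$-rectangle partition (the paper derives the contradiction by showing $M_{2,n+3}$ has nowhere to go, you by forcing $M_{n+2,n+3}$ into both $R_1$ and $R_2$ --- substantively the same clash), the identical explicit partition into $n+2$ rectangles, and the same submatrix argument (delete the single-column rectangle for columns $3,\dots,n$ and $n+3$, handle the remaining four columns by symmetry). A small bonus: your explicit simultaneous $S_3$ action pairs the symmetric cases the right way (columns $1,2$ with column $3$, and columns $n+1,n+2$ with column $n+3$), whereas the paper's closing sentence states this pairing the other way around, which cannot hold as written since the corresponding column degrees differ.
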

\begin{proof}
First note that the first $n+1$ ones on the main diagonal of $M$ form an isolation set of size $n+1$.
Thus, each of these ones should be in a separate monochromatic rectangle of ones in any partition of the ones of $M$.
Denote these rectangles by $R_1,...,R_{n+1}$, where $R_i$ contains the $i'th$ one on the main diagonal of $M$.

Assume by contradiction that $\Rbin(M) = n+1$, and that $R_1,...,R_{n+1}$ are the only rectangles in the partition.
Note that by the structure of $M$ we must have:
$M_{1,n+3}, M_{n+2,1} \in R_1$ and so also $M_{n+2,n+3} \in R_1$.
Furthermore, $M_{n+2,i} \in R_i$ for $2 \leq i \leq n$, and $M_{n+1,n+2}, M_{n+1,n+3}\in R_{n+1}$.
But then the one in position $M_{2,n+3}$ cannot belong to any of these $n+1$ rectangles and we get a contradiction.
Thus, $\Rbin(M) > n+1$, and so $\Rbin(M) = n+2$, since the ones in each row can be covered by a single rectangle.

An alternative partition into $n+2$ rectangles $R_1,...,R_{n+2}$ is shown in Figure~\ref{fig:hrubevs} and it will be needed shortly:
$$R_1 = \{M_{1,1},M_{1,n+2}, M_{n+2,1}, M_{n+2,n+2}\}$$
$$R_2 = \{M_{2,2},M_{2,n+1}, M_{n+2,2}, M_{n+2,n+1}\}$$
$$R_i = \{M_{i,i},M_{n+2,i}\}, \;\; for \;\; 3 \leq i \leq n$$
and $R_{n+1}$ covers all ones in the last column of $M$ and $R_{n+2}$ covers all remaining ones in columns $n+1$ and $n+2$ of $M $ not covered by the previous rectangles.

Now we show that by removing any column of $M$ we get a sub-matrix $M'$ with binary rank at most $n+1$.
First, remove the last column of $M$.
The ones in the resulting sub-matrix can be partitioned into $n+1$ rectangles, since the last column of $M$ was covered by a separate rectangle $R_{n+1}$ in
the alternative partition of $M$. Therefore, the first $n+2$ columns of $M'$ can be partitioned into $n+1$ rectangles as in $M$.

Next remove the $n$'th column of $M$.
The ones in the resulting sub-matrix $M''$ can partitioned into $n+1$ rectangles by taking the alternative partition of $M$ into $n+2$ rectangles,
and just removing the rectangle covering the ones in column $n$ of $M$. The same argument holds if we remove any column $3 \leq i \leq n$ of $M$.

Finally, note that, up to a permutation of rows and columns, removing columns $1$ or $2$ of $M$ results in $M'$, and removing columns $n+1$ or $n+2$
of $M$ results in $M''$.
\end{proof}

It is interesting to note that the example given by Hrube\v{s}~\cite{hrubevs2025hard} is minimal.

\begin{claim}
There is no $0,1$ matrix $M$ with binary rank $4$, such that each sub-matrix of $M$ has a strictly smaller binary rank.
\end{claim}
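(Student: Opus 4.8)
The claim is equivalent to the assertion that every $0,1$ matrix $M$ with $\Rbin(M)=4$ contains a $4\times 4$ sub-matrix of binary rank $4$; that is, the ``full rank sub-matrix'' property of the real rank, which fails for binary rank $5$ by Theorem~\ref{lem-binary-fullrank}, still holds for binary rank $4$. The plan is to first reduce, using the real rank, to a very restricted family of matrices, and then settle that family by a structural analysis. As a first step, pass to the kernel of $M$: deleting all-zero and duplicate rows and columns affects neither $\Rbin(M)$ nor the existence of a binary-rank-$4$ $4\times 4$ sub-matrix (a sub-matrix of the kernel is a sub-matrix of $M$), so assume $M$ has distinct nonzero rows and columns; since $\Rbin(M)=4$ never exceeds the number of nonzero rows or columns, $M$ then has at least $4$ rows and $4$ columns. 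If $\Rreal(M)\ge 4$, then $\Rreal(M)=4$ (because $\Rreal\le\Rbin$), and since the real rank \emph{does} have the full rank sub-matrix property, $M$ has a $4\times 4$ sub-matrix $N$ with $\Rreal(N)=4$; then $4=\Rreal(N)\le\Rbin(N)\le 4$, so $\Rbin(N)=4$ and we are done. If $\Rreal(M)\le 2$ then $\Rbin(M)=\Rreal(M)\le 2\ne 4$, a contradiction. So we are left with $\Rreal(M)=3$ and $\Rbin(M)=4$.

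For this remaining case, Theorem~\ref{theoremLowerBound} says that the kernel of $M$ --- which is $M$ itself --- has at most $4$ rows or at most $4$ columns. Binary and real rank are transpose-invariant, and transposition carries a $4\times 4$ binary-rank-$4$ sub-matrix to another such, so assume $M$ has at most $4$ rows, hence exactly $4$; thus $M$ is $4\times m$, and, as a $0,1$ matrix of real rank $3$ has at most $2^{3}=8$ distinct columns, $4\le m\le 8$. Matrices of this kind are highly rigid: their four rows admit a single (up to scaling) linear dependence $\beta_1 r_1+\beta_2 r_2+\beta_3 r_3+\beta_4 r_4=0$, and every column of $M$ is a $0,1$ solution of $\beta_1 c_1+\beta_2 c_2+\beta_3 c_3+\beta_4 c_4=0$; running through the possibilities for $\beta$ up to permuting rows and columns and scaling shows that, after such permutations, $M$ belongs to a short explicit list --- the generic entry of which is a column-augmentation of $D_{4,2}$, the rest of which are a few degenerate families. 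For the generic entry $M$ has a $4\times 4$ sub-matrix carrying an isolation set of size $4$, just as $D_{4,2}$ itself does (see Subsection~\ref{subsec-isolation}), hence a $4\times 4$ sub-matrix of binary rank $4$; for each degenerate family one exhibits four columns whose restriction provably cannot be partitioned into three monochromatic rectangles. Alternatively, and perhaps more cleanly, one can induct on $m$ (the base case $m=4$ being immediate): it suffices to produce a single column $c$ with $\Rbin(M\setminus c)=4$; if instead $\Rbin(M\setminus c)\le 3$ for every $c$, then in fact $\Rbin(M\setminus c)=3$ and $\Rreal(M\setminus c)=3$ for every $c$ (using $\Rbin(M)\le\Rbin(M\setminus c)+1$ and that $\Rreal\le 2$ would force $\Rbin=\Rreal\le 2$), and the rigidity of a four-row real-rank-$3$ matrix is exactly what forces one fixed family of three monochromatic rectangles to extend across every deletion, giving $\Rbin(M)\le 3$ --- a contradiction.

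The main obstacle is this last step. It cannot be handled ``softly'': the corresponding statement for binary rank $5$ is false, since Theorem~\ref{lem-binary-fullrank} exhibits a matrix of binary rank $5$ with no $5\times 5$ sub-matrix of binary rank $5$, so the argument must genuinely use $\Rreal(M)=3$ together with the fact that $M$ has only four rows in order to pin down the structure of $M$. The reductions to the cases $\Rreal(M)\in\{3,4\}$ are routine; the real content lies in the classification of $4\times m$ real-rank-$3$ $0,1$ matrices and in checking, for each, that a $4\times 4$ sub-matrix of binary rank $4$ can be located.
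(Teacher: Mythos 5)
Your opening reduction is correct and is exactly the paper's: if $\Rreal(M)=4$ the full-rank-sub-matrix property of the real rank produces a $4\times 4$ sub-matrix $N$ with $\Rreal(N)=4$, hence $\Rbin(N)=4$; if $\Rreal(M)\le 2$ then $\Rbin(M)=\Rreal(M)\le 2$, a contradiction; so only the case $\Rreal(M)=3$, $\Rbin(M)=4$ remains. The problem is that this remaining case, which is the entire content of the claim, is never actually settled in your write-up. Both routes you offer are sketches: the classification of $4\times m$ real-rank-$3$ $0,1$ matrices "up to permuting rows and columns" into "a column-augmentation of $D_{4,2}$ plus a few degenerate families" is asserted, not carried out, and the check that each degenerate family still contains four columns that cannot be partitioned into three monochromatic rectangles is likewise deferred; the alternative induction hinges on the unproved assertion that if every column deletion drops the binary rank to $3$, then "rigidity" forces a single family of three rectangles to extend across all deletions, yielding $\Rbin(M)\le 3$. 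That last assertion is precisely the kind of statement that fails one rank higher (Theorem~\ref{lem-binary-fullrank} gives a matrix of binary rank $5$ all of whose proper sub-matrices have strictly smaller binary rank), so it cannot be waved through; it needs an argument that genuinely exploits $\Rreal(M)=3$, and you concede this yourself in your final paragraph.

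The paper closes exactly this gap by citing a structural result of Parnas and Shraibman~\cite{parnas2025study}: a $0,1$ matrix with $\Rreal(M)=3$ and $\Rbin(M)=4$ must contain a specific $4\times 4$ sub-matrix of binary rank $4$ (stated there as containing $C_4$), which finishes the claim in one line. So your proposal reproduces the easy half of the paper's argument but replaces its key external ingredient with an unexecuted case analysis; as written it is an outline with a genuine hole, not a proof. (Your use of Theorem~\ref{theoremLowerBound} to force $M$, after kernelization and transposition, to be $4\times m$ with $4\le m\le 8$ is a reasonable way to make the classification finite, so the plan is salvageable --- but the finite check, or an appeal to the cited result, must actually be done.)
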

\begin{proof}
We first claim that if $\Rbin(M) = 4$ then $\Rreal(M) = 3$.
Otherwise, if $\Rreal(M) = 4$ then $M$ contains a sub-matrix of size $4 \times 4$ with real rank $4$, and this sub-matrix has, of course, binary rank $4$.
If $\Rreal(M) \leq 2$ then $\Rbin(M) = \Rreal(M)$, since the binary rank is equal to the real rank for real rank $1,2$. Hence, $\Rreal(M) = 3$.

But if $\Rbin(M) = 4$ and $\Rreal(M) = 3$, then  using a result of Parnas and Shraibman~\cite{parnas2025study}, the matrix $M$ must contain $C_4$ as a sub-matrix,
and, thus, $M$ has a sub-matrix of size $4 \times 4$ with binary rank $4$, since $\Rbin(C_4) = 4$.
\end{proof}

We note that the non-negative rank also does not have the full-rank sub-matrix property as the following theorem of Moitra~\cite{moitra2016almost} shows.
Hrube{\v{s} asked if a larger gap can be achieved between the binary rank of a matrix and the binary rank of any of its proper sub-matrices,
as is the case for the Boolean and non-negative rank.

\begin{theorem}[\cite{moitra2016almost}]
\label{theo-moitra}
For any $r$, there is a non-negative matrix $M$ of size $3rn \times 3rn$ with non-negative rank  at least $4r$,
such that every sub-matrix of $M$ with at most $n-1$ rows has non-negative rank at most $3r$.
\end{theorem}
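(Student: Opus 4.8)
The $3r$ versus $4r$ shape of the statement suggests constructing $M$ as a block‑diagonal direct sum of $r$ identical copies of a single $3n\times 3n$ gadget $M_0$, so I would first record the two facts that make this reduction work and then put all the effort into $M_0$. The first fact is additivity of the nonnegative rank on block‑diagonal matrices: $\Rnon(\mathrm{diag}(A,B))=\Rnon(A)+\Rnon(B)$. The inequality ``$\le$'' is immediate (stack nonnegative factorizations of $A$ and $B$), and for ``$\ge$'' one takes a nonnegative factorization $\mathrm{diag}(A,B)=UV$ of minimum width and runs a short support argument: because both off‑diagonal blocks are zero and all entries are nonnegative, every rank‑one term $u_\ell v_\ell^{t}$ must be supported entirely inside the $A$‑block or entirely inside the $B$‑block, so the factorization splits into one of $A$ and one of $B$. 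The second fact is that a submatrix $M[S,T]$ of a block‑diagonal matrix is itself block‑diagonal with blocks $M_0[S_i,T_i]$, where $S_i,T_i$ are the parts of $S$ and $T$ in block $i$; hence $\Rnon(M[S,T])=\sum_i\Rnon(M_0[S_i,T_i])$, and if $|S|\le n-1$ then each $|S_i|\le n-1$. Given these, it suffices to build a nonnegative $M_0$ of size $3n\times 3n$ with $\Rnon(M_0)\ge 4$ such that every submatrix of $M_0$ with at most $n-1$ rows has nonnegative rank at most $3$; the $r$‑fold direct sum then has nonnegative rank at least $4r$, and every submatrix on at most $n-1$ rows has nonnegative rank at most $3r$, as required.

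For $M_0$ I would first note that its real rank is forced to be exactly $3$: if $\Rreal(M_0)\le 2$ then $\Rnon(M_0)=\Rreal(M_0)\le 2$, and if $\Rreal(M_0)\ge 4$ then $M_0$ contains a $4\times 4$ submatrix of real rank $4$, whose nonnegative rank is therefore at least $4$, contradicting the row‑deletion bound already for $n\ge 5$. I would then pass to the geometric description available for the nonnegative rank of real‑rank‑$3$ nonnegative matrices, as used in the analysis of~\cite{moitra2016almost}: such an $M_0$ encodes a nested pair of convex polygons $Q_{\mathrm{in}}\subseteq Q_{\mathrm{out}}$ in the plane, with the rows of $M_0$ corresponding, say, to the facets of $Q_{\mathrm{out}}$ and the columns to the vertices of $Q_{\mathrm{in}}$, and $\Rnon(M_0)$ equals the least number of vertices of a polygon $P$ with $Q_{\mathrm{in}}\subseteq P\subseteq Q_{\mathrm{out}}$. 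In this language, deleting rows means discarding facets of $Q_{\mathrm{out}}$, i.e.\ enlarging the outer region, which can only help a nested triangle to exist.

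With this dictionary the lower bound $\Rnon(M_0)\ge 4$ is easy to arrange: pick $Q_{\mathrm{in}}$ and $Q_{\mathrm{out}}$ to be nested near‑squares whose areas are in ratio less than $2$, and invoke the classical fact that a triangle containing a square has at least twice its area; then no triangle fits between $Q_{\mathrm{in}}$ and $Q_{\mathrm{out}}$, so $\Rnon(M_0)\ge 4$. The crux — and the step I expect to be the main obstacle — is to distribute the $3n$ facets of $Q_{\mathrm{out}}$ so that this obstruction is genuinely \emph{global}: it must survive when all $3n$ facets are present but be destroyed as soon as a $2/3$‑fraction of them is removed (which is exactly what keeping only $n-1$ of the $3n$ rows does). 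This rules out any design in which $O(1)$ facets already force $\Rnon\ge 4$, since those facets would survive in some $(n-1)$‑row submatrix, and — as one checks — it also rules out the obvious uniformly rounded near‑square, since a uniformly chosen third of its facets still hugs $Q_{\mathrm{in}}$ tightly enough to exclude every triangle. The construction must instead calibrate the shape so that \emph{any} $n-1$ of the defining halfplanes of $Q_{\mathrm{out}}$ leave enough slack in some direction to admit a minimal triangle still containing $Q_{\mathrm{in}}$; making this precise, and verifying simultaneously that the full family of $3n$ facets blocks all nested triangles while every $(n-1)$‑subfamily fails to, is the technical heart of the proof. Once such an $M_0$ is produced, the bookkeeping of the first paragraph finishes the argument.
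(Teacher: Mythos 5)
First, a point of reference: the survey does not prove this statement at all — it is quoted from Moitra's paper \cite{moitra2016almost} — so there is no in-paper argument to compare your route against; your proposal has to stand on its own. The parts you actually carry out are correct: the support argument for additivity of the nonnegative rank over block-diagonal direct sums is sound (zero off-diagonal blocks force every nonnegative rank-one term to live entirely in one diagonal block), the bookkeeping for submatrices of the direct sum is right, and the observation that any valid gadget must have real rank exactly $3$ (for $n\geq 5$) is a legitimate consistency check. One caveat on the dictionary you invoke: the nested-polygon number characterizes the \emph{restricted} nonnegative rank in general, not the nonnegative rank; at the $3$-versus-$4$ threshold for rank-$3$ matrices it is nevertheless valid, since an inner-dimension-$3$ factorization of a rank-$3$ matrix is automatically rank-preserving, and a nested triangle always yields a size-$3$ factorization, so this is a phrasing issue rather than an error.

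The genuine gap is that the gadget $M_0$ — which is the entire content of the theorem — is never constructed; you explicitly defer "the technical heart" and stop. Moreover, the one concrete mechanism you do propose for the lower bound cannot be made compatible with the required upper bound. If $Q_{\mathrm{out}}$ is a near-square whose area is less than twice that of $Q_{\mathrm{in}}$, then already a constant number of its defining halfplanes (for instance the four facets supporting the middles of the sides, whose intersection is a square of essentially the same area as $Q_{\mathrm{out}}$) excludes every triangle containing $Q_{\mathrm{in}}$ by the very area fact you invoke. Keeping those few rows, together with any others up to $n-1$ of them, therefore gives a submatrix with at most $n-1$ rows and nonnegative rank at least $4$, contradicting the property the gadget must have. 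More generally, keeping roughly every third facet of a uniformly rounded outer body enlarges it only by an $O(1/n^2)$ relative factor, so any "global area ratio below $2$" obstruction survives in some $(n-1)$-row submatrix. In other words, the blocking of nested triangles by the full matrix cannot come from an area obstruction at all: the construction must make every individual halfplane (indeed, every subfamily of $n-1$ of them) too weak to exclude triangles, while the full family of $3n$ still does — and it must be verified that \emph{every} $(n-1)$-subset of rows admits a nested triangle. That calibration is precisely what Moitra's proof supplies and what is missing here; as written, the proposal reduces the theorem to its hardest case rather than proving it.
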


\subsection{Kronecker products}
\label{subsec-kronecker}

The Kronecker product $A \otimes B$ of $0,1$-matrices $A$ and $B$, is defined as the block matrix whose $i,j$'th block is formed by multiplying  $A_{i,j}$
with the entire matrix $B$. That is, if $A$ is a matrix of size $n \times m$ and $B$ is of size $p \times q$, then $ A \otimes B$ is of size $np \times mq$ and is given by:
$$
A \otimes B =
\left(
  \begin{array}{ccc}
    A_{1,1}\cdot B & \cdots & A_{1,m} \cdot B \\
    \vdots & \ddots & \vdots \\
    A_{n,1}\cdot B & \cdots & A_{n,m} \cdot B \\
  \end{array}
\right)
$$
The Kronecker product has important applications in mathematics and computer science.
As we show in this section, it can be used to amplify a given gap between rank functions, and has strong connections to direct-sum questions in communication complexity.
See also Sch\"{a}cke~\cite{schacke2004kronecker} for other applications of Kronecker products.

It is well known that the real rank is multiplicative under the Kronecker product, that is,
$$\Rreal(A \otimes B) = \Rreal(A)\cdot \Rreal(B).$$
It is also easy to  verify that the Boolean and binary rank are sub-multiplicative with respect to this product, that is,
$\Rbool(A \otimes B) \leq  \Rbool(A) \cdot \Rbool(B)$ and similarly for the binary rank.
Recall that $i(A)$ denotes the size of a maximum isolation set in $A$. The following claim was observed by de Caen, Gregory and Pullman~\cite{de1988boolean}:
\begin{claim}
\label{amplify-gap}
Let $A$ and $B$ be two $0,1$ matrices. Then:
$$\Rbool(A \otimes B) \geq \max\{i(A) \cdot \Rbool(B),\Rbool(A) \cdot i(B)\},$$
and
$$
i(A)\cdot i(B) \leq i(A \otimes B) \leq \min \{i(A)\cdot \Rbool(B), \Rbool(A) \cdot i(B)\}.
$$
A similar result holds for the binary rank.
\end{claim}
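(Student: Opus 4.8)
The plan is to treat the three inequalities separately, in each case using the block structure of the Kronecker product: the rows of $A\otimes B$ are indexed by pairs $(a,p)$ (a row $a$ of $A$, a row $p$ of $B$) and the columns by pairs $(b,q)$, with $(A\otimes B)_{(a,p),(b,q)}=A_{a,b}\cdot B_{p,q}$, so the block in position $(a,b)$ is a copy of $B$ exactly when $A_{a,b}=1$. I assume throughout that $\Rbool$ equals the minimum number of monochromatic rectangles covering the ones (and $\Rbin$ the minimum partitioning them), and that $A\otimes B$ and $B\otimes A$ agree up to a permutation of rows and columns.

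For the lower bound $\Rbool(A\otimes B)\geq i(A)\cdot\Rbool(B)$, I would fix a maximum isolation set $S_A$ of $A$ and prove the key step that no monochromatic rectangle of ones $U\times V$ in $A\otimes B$ contains a one in the block $(i,j)$ and a one in the block $(k,\ell)$ whenever $(i,j),(k,\ell)$ are distinct elements of $S_A$. Indeed, if $((i,p),(j,q))$ and $((k,p'),(\ell,q'))$ both lie in $U\times V$, then so do the crossed pairs $((i,p),(\ell,q'))$ and $((k,p'),(j,q))$, and since all entries of the rectangle are ones this gives $A_{i,\ell}=A_{k,j}=1$ in addition to $A_{i,j}=A_{k,\ell}=1$; as $(i,j),(k,\ell)$ lie in distinct rows and columns of $A$ (being distinct members of an isolation set), this is an all-one $2\times 2$ submatrix of $A$ through both, contradicting that $S_A$ is an isolation set. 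Hence, in any Boolean cover of $A\otimes B$ the rectangles meeting a fixed block $(i,j)\in S_A$ have restrictions to that block covering the copy of $B$ sitting there, so there are at least $\Rbool(B)$ of them, and the families obtained for the $i(A)$ blocks indexed by $S_A$ are pairwise disjoint; thus $\Rbool(A\otimes B)\geq i(A)\cdot\Rbool(B)$. Running the same argument with the roles of $A$ and $B$ exchanged and using permutation invariance of the Boolean rank gives $\Rbool(A\otimes B)\geq\Rbool(A)\cdot i(B)$, whence the maximum. For the binary rank one repeats this with ``partition'' in place of ``cover'', since the restriction of a partition to a block is again a partition.

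For $i(A)\cdot i(B)\leq i(A\otimes B)$, I would take maximum isolation sets $S_A$ of $A$ and $S_B$ of $B$ and show that $S:=\{\,((a,p),(b,q)) : (a,b)\in S_A,\, (p,q)\in S_B\,\}$ is an isolation set of $A\otimes B$ of size $i(A)\cdot i(B)$. Each listed entry equals $A_{a,b}B_{p,q}=1$. If two listed entries share a row of $A\otimes B$ then $a=a'$ and $p=p'$, so $(a,b),(a,b')\in S_A$ lie in a common row and coincide, giving $b=b'$, and likewise $q=q'$, so the entries coincide; columns are symmetric. Finally, if two distinct listed entries $e_1,e_2$ lay in an all-one $2\times 2$ submatrix, I would split on whether they share their $A$-coordinate: if not, the four corners being ones forces an all-one $2\times 2$ submatrix of $A$ through the (then necessarily distinct) members of $S_A$ involved, a contradiction; and if they do, the same computation forces an all-one $2\times 2$ submatrix of $B$ through two distinct members of $S_B$, again a contradiction.

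For the remaining (and most delicate) inequality $i(A\otimes B)\leq i(A)\cdot\Rbool(B)$, I would take a maximum isolation set $S$ of $A\otimes B$ and an optimal Boolean cover $R_1,\dots,R_d$ of $B$ with $d=\Rbool(B)$, and colour each $e=((a,p),(b,q))\in S$ by an index $t$ with $(p,q)\in R_t$. The plan is to prove that for each colour $t$ the projection $\pi:((a,p),(b,q))\mapsto(a,b)$ is injective on the colour class $S_t$ and maps it to an isolation set of $A$; then $i(A\otimes B)=|S|=\sum_{t=1}^{d}|S_t|\leq d\cdot i(A)=\Rbool(B)\cdot i(A)$, and colouring instead by an optimal cover of $A$ gives $i(A\otimes B)\leq\Rbool(A)\cdot i(B)$, hence the minimum; the $\Rbin$ version is then immediate from $\Rbool\leq\Rbin$. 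The verification mirrors the previous paragraph, the new ingredient being that $R_t$ is a combinatorial rectangle of ones, so $(p,q),(p',q')\in R_t$ implies $(p,q'),(p',q)\in R_t$, i.e.\ the corresponding entries of $B$ are ones: whenever two elements of $S_t$ would violate injectivity of $\pi$, or their images would share a row or a column of $A$, or would sit in an all-one $2\times 2$ submatrix of $A$, one assembles from the two elements together with these extra ones of $B$ an all-one $2\times 2$ submatrix of $A\otimes B$ through them, contradicting that $S$ is an isolation set. I expect this last case analysis — coordinating the rectangle structure of the $R_t$ with the isolation conditions on both $A\otimes B$ and $A$ — to be the main obstacle; the other two inequalities are essentially bookkeeping with the block structure of the Kronecker product.
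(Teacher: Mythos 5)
Your proof is correct. Note that the paper states Claim~\ref{amplify-gap} without proof, attributing the observation to de Caen, Gregory and Pullman~\cite{de1988boolean}, so there is no in-paper argument to compare against; your block-structure argument is the natural one, and each step checks out: the key step for the first inequality (two ones lying in blocks indexed by distinct isolation-set entries of $A$ cannot share a rectangle), the product construction for $i(A)\cdot i(B)\leq i(A\otimes B)$, and the projection/colouring argument for the upper bound on $i(A\otimes B)$ all go through, the last one because distinct elements of an isolation set of $A\otimes B$ never share a row or column, so the $2\times 2$ submatrices you assemble from the rectangle property of the $R_t$ are always non-degenerate. The reductions of the binary-rank statements (partition restrictions for the lower bound, $\Rbool\leq\Rbin$ for the upper bound) are also handled correctly.
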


This claim can be used in some cases to amplify the gap between two rank functions as described next.
Let $M^{\otimes k}$ be the $k$'th Kronecker power of $M$, which can be defined recursively as follows:
$$M^{\otimes 1} = M, \;\;\; M^{\otimes k} = M \otimes M^{\otimes (k-1)} \;\; \textrm{for} \;\; k \geq 2.$$
Recall that Theorem~\ref{theodoublegap} guarantees a family of matrices $M$ for which $\Rreal(M) =  \lfloor n/2\rfloor + 1$, and $i(M) = \Rbin(M) = 2\lfloor n/2\rfloor$.
Now, using Claim~\ref{amplify-gap}, we can amplify this gap as follows.
Dietzfelbinger,  Hromkovi{\v{c}} and Schnitger~\cite{dietzfelbinger1996comparison} used the Kronecker product to prove a similar result for  $n = 4$.

\begin{corollary}
There exists a family of matrices $M$ of size $n \times n$, for all $n \geq 4$, $n \neq 5$, such that for all $k \geq 1$,
$\Rreal(M^{\otimes k}) = (\lfloor n/2\rfloor + 1)^k$ and $\Rbin(M^{\otimes k}) = (2\lfloor n/2\rfloor)^{k}$.
\end{corollary}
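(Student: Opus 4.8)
The plan is to take $M$ to be the family of $n\times n$ matrices furnished by Theorem~\ref{theodoublegap}, so that $\Rreal(M)=\lfloor n/2\rfloor+1$ and $i(M)=\Rbool(M)=\Rbin(M)=2\lfloor n/2\rfloor$ for every $n\ge 4$ with $n\ne 5$. The corollary then follows by combining three facts about the Kronecker product that are already recorded in this subsection: the real rank is exactly multiplicative, the binary rank is sub-multiplicative, and (via Claim~\ref{amplify-gap}) the maximum isolation set is super-multiplicative.

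For the real rank, multiplicativity of $\Rreal$ under $\otimes$ gives, by a trivial induction on $k$ using $M^{\otimes k}=M\otimes M^{\otimes(k-1)}$, that $\Rreal(M^{\otimes k})=\Rreal(M)^k=(\lfloor n/2\rfloor+1)^k$, which is the first claimed equality. For the binary rank I would prove the two matching inequalities separately. The upper bound $\Rbin(M^{\otimes k})\le \Rbin(M)^k=(2\lfloor n/2\rfloor)^k$ is immediate from sub-multiplicativity of $\Rbin$, again by induction on $k$. For the lower bound I would invoke the binary-rank analogue of Claim~\ref{amplify-gap}, namely $\Rbin(A\otimes B)\ge i(A)\cdot \Rbin(B)$. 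Applying this repeatedly with $A=M$ and $B=M^{\otimes(k-1)}$ and unwinding the recursion,
$$\Rbin(M^{\otimes k})\ \ge\ i(M)\cdot \Rbin(M^{\otimes(k-1)})\ \ge\ \cdots\ \ge\ i(M)^{\,k-1}\cdot \Rbin(M)\ =\ (2\lfloor n/2\rfloor)^{k-1}\cdot 2\lfloor n/2\rfloor\ =\ (2\lfloor n/2\rfloor)^{k},$$
where the last equalities use $i(M)=\Rbin(M)=2\lfloor n/2\rfloor$ from Theorem~\ref{theodoublegap}. (Equivalently, one could first deduce $i(M^{\otimes k})\ge i(M)^k$ from the super-multiplicativity half of Claim~\ref{amplify-gap} and then use that the isolation number lower bounds $\Rbin$; this yields the same bound.) The two inequalities together give $\Rbin(M^{\otimes k})=(2\lfloor n/2\rfloor)^k$.

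There is essentially no serious obstacle here: the corollary is a packaging of Theorem~\ref{theodoublegap} together with the Kronecker-product behaviour already established. The only points deserving care are that the base case $k=1$ of each induction is exactly the content of Theorem~\ref{theodoublegap}, and that the inequality drawn from Claim~\ref{amplify-gap} is the binary-rank version (the claim states it holds ``similarly for the binary rank''), so that at the final step the factor $i(M)$ — which is precisely what makes the lower bound as large as the sub-multiplicative upper bound — may be replaced by $\Rbin(M)$.
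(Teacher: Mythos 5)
Your proof is correct and follows essentially the same route the paper intends: it instantiates $M$ from Theorem~\ref{theodoublegap} and amplifies the gap via multiplicativity of $\Rreal$, sub-multiplicativity of $\Rbin$, and the binary-rank version of Claim~\ref{amplify-gap}. Nothing further is needed.
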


In 1988,  de Caen, Gregory, and Pullman~\cite{de1988boolean} asked if, as for the real rank, the Boolean rank is multiplicative under the Kronecker product,
or if $\Rbool(A \otimes B) < \Rbool(A)\cdot \Rbool(B)$ for some matrices $A,B$.
They suggested the matrix $C_n$ as a candidate for inequality, and further proposed to estimate $\Rbool(C_n \otimes C_n)$ as a function of $n$.

The question of~\cite{de1988boolean} is closely related to {\em direct-sum} questions in communication complexity.
Direct-sum questions ask if solving two instances of a communication problem requires twice the communication needed to solve one instance of the problem.
Specifically, when considering the non-deterministic setting, Alice gets two row indices $i_1,i_2$ of a $0,1$ matrix $A$ and Bob gets two column indices $j_1,j_2$ of $A$.
Their goal is to decide if both instances $A_{i_1,j_1}$ and $A_{i_2,j_2}$ are $1$, or in other words, to determine the value of $A_{i_1,j_1} \cdot A_{i_2,j_2}$.
The non-deterministic communication complexity of this problem is exactly the non-deterministic communication complexity associated with the Kronecker product $A \otimes A$.
Hence, the question of~\cite{de1988boolean} is, in fact, equivalent to a direct sum question for the non-deterministic communication complexity, $N_1(C_n)$, of the matrix $C_n$,
since $\lceil\log \Rbool(C_n)\rceil = N_1(C_n)$ (see Section~\ref{subsec-communication}).
See also~\cite{FederKNN95,karchmer1995fractional,KN97} for more information on direct-sum questions in communication complexity.

Claim~\ref{amplify-gap} provides a simple lower bound of $\Rbool(C_n \otimes C_n) \geq 3 \cdot \Rbool(C_n)$, since $i(C_n)=3$ for every $n \geq 3$.
Recall also that $\Rbool(C_n) = \Theta(\log n)$.
As to an upper bound on $\Rbool(C_n \otimes C_n)$, Watts~\cite{watts2001boolean} proved in 2001 that $\Rbool(C_4 \otimes C_4) < \Rbool(C_4) \times \Rbool(C_4)$,
thus, answering the question of~~\cite{de1988boolean}. She also showed that this example is minimal in terms of the size of the matrix.

It should be noted that the question of~\cite{de1988boolean} was in fact already answered in 1995 by Karchmer, Kushilevitz, and Nisan~\cite{karchmer1995fractional},
who studied direct-sum questions in communication complexity. Their analysis shows that
the non-deterministic communication complexity of two instances of the non-equality problem associated with the matrix $C_n$,
is larger only by some additive constant than the non-deterministic complexity of a single instance.
This additive constant translates into a multiplicative constant in the Boolean rank of $C_n \otimes C_n$ compared to the Boolean rank of $C_n$.
Thus, their result implies that $\Rbool(C_n \otimes C_n) < \Rbool(C_n)^2$, for a sufficiently large $n$,
and also shows that $\Rbool(C_n \otimes C_n) = O(\log n)$. See~\cite{KN97}, Chapter 4, for a nice description of the argument of~\cite{karchmer1995fractional}.

Using the probabilistic method described in Section~\ref{Sec-prob},
Jukna~\cite{Jukna2022} proved in 2022 an upper bound on $\Rbool(A \otimes B)$, for matrices $A$ and $B$ that have a bounded number of zeros in each column (row).

\begin{theorem}[\cite{Jukna2022}]
Let $A$ and $B$ be two $0,1$ matrices of size $n \times n$, such that $A$ has at most $c_1$ zeros in each column and $B$ has at most  $c_2$ zeros in each column.
Then  $\Rbool(A \otimes B) = O(c_1 \cdot c_2 \cdot \log n)$.
\end{theorem}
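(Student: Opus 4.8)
The plan is to lift the probabilistic argument behind Theorem~\ref{theo:Alon} from a single matrix to the Kronecker product, but applied to the two \emph{factors} $A$ and $B$ rather than to $A\otimes B$ itself. Applying Theorem~\ref{theo:Alon} directly to $A\otimes B$ is useless: a column of $A\otimes B$ indexed by a pair $(j_1,j_2)$ has $n^2-(\text{ones of column }j_1\text{ of }A)\cdot(\text{ones of column }j_2\text{ of }B)$ zeros, which can be as large as about $(c_1+c_2)n$, so that route would only yield the trivial bound $O((c_1+c_2)n\log n)$. Instead I would run Alon's random sampling simultaneously and independently on $A$ and on $B$, and cover $A\otimes B$ by products of the two resulting rectangles.

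Concretely, following the proof of Theorem~\ref{theo:Alon}: pick a random set $R_1$ of rows of $A$, each row independently with probability $1/(c_1+1)$, and let $C_1$ be the set of columns $j_1$ of $A$ with $A_{i_1,j_1}=1$ for all $i_1\in R_1$; then $R_1\times C_1$ is a monochromatic all-ones rectangle of $A$. Independently, pick a random set $R_2$ of rows of $B$, each with probability $1/(c_2+1)$, and let $C_2$ be the columns $j_2$ of $B$ that are all-ones on $R_2$, so $R_2\times C_2$ is a monochromatic all-ones rectangle of $B$. The structural point is that the entry set $(R_1\times R_2)\times(C_1\times C_2)$ is a monochromatic all-ones rectangle of $A\otimes B$: for $(i_1,i_2)\in R_1\times R_2$ and $(j_1,j_2)\in C_1\times C_2$ the corresponding entry of $A\otimes B$ equals $A_{i_1,j_1}\cdot B_{i_2,j_2}=1$.

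Next I would bound the probability that a fixed one-entry $\bigl((i_1,i_2),(j_1,j_2)\bigr)$ of $A\otimes B$ lies in this rectangle. Since $A_{i_1,j_1}=1$ and $B_{i_2,j_2}=1$, the estimate in the proof of Theorem~\ref{theo:Alon} gives $\Pr[(i_1,j_1)\in R_1\times C_1]\ge \frac{1}{e(c_1+1)}$ and $\Pr[(i_2,j_2)\in R_2\times C_2]\ge \frac{1}{e(c_2+1)}$, and because the two samplings use independent randomness, the entry is covered with probability at least $\frac{1}{e^2(c_1+1)(c_2+1)}$. Choosing $t$ such product rectangles independently, the probability a given one-entry is missed by all of them is at most $\bigl(1-\tfrac{1}{e^2(c_1+1)(c_2+1)}\bigr)^t\le e^{-t/(e^2(c_1+1)(c_2+1))}$. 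As $A\otimes B$ is $n^2\times n^2$ it has at most $n^4$ ones, so taking $t=\bigl\lceil 4e^2(c_1+1)(c_2+1)\ln n\bigr\rceil=O(c_1 c_2\log n)$ makes the expected number of uncovered ones strictly less than $1$; hence some family of $t$ product rectangles covers all ones of $A\otimes B$, and $\Rbool(A\otimes B)\le t=O(c_1 c_2\log n)$.

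The only new ingredient over Theorem~\ref{theo:Alon} is the ``product of rectangles'' observation together with the factorization of the covering probability, so I do not expect a serious obstacle; the points that need care are that the two random experiments must be carried out on independent random bits so that the covering events genuinely factor, and that the union bound now runs over the $n^4$ ones of the Kronecker product, which is what replaces the constant $3$ in the exponent of Theorem~\ref{theo:Alon} by the constant $4$ here (harmless, as it only affects the hidden constant).
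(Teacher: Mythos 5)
Your proof is correct and follows essentially the same route as the paper's own argument for this family of statements (compare the proof of Theorem~\ref{theo-block}, which is the $k$-fold analogue): sample rows independently within each Kronecker factor with probability $1/(c_i+1)$, observe that the product of the two all-ones rectangles is an all-ones rectangle of $A\otimes B$, factor the covering probability using the independence of the two samplings, and union-bound over the ones of the product. The only nitpick is the boundary case $n^4\cdot n^{-4}=1$, which is not strictly below $1$; take the constant $5$ instead of $4$ (or dispose of the trivial cases $c_1=0$ or $c_2=0$ and use that the number of ones is then strictly less than $n^4$) to make the union bound strict.
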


Thus, we get yet another proof that $\Rbool(C_n \otimes C_n) < \Rbool(C_n)^2$ for a sufficiently large $n$,
since $C_n$ has one zero in each column, and therefore, $\Rbool(C_n \otimes C_n)  = O(\log n)$.
As we show next, a similar analysis to that of~\cite{Jukna2022} can be used to give an upper bound on  $\Rbool(M^{\otimes k})$
for any matrix $M$ which has at most $d$ zeros in each column (row).

\begin{figure}[htb!]
\centering
$$
M =
\left(
\begin{array}{cc}
1 & 1 \\
0 & 1 \\
\end{array}
\right)
\;\;\;\;\;\;\;\;
M^{\otimes 3} =
\left(
\begin{array}{@{}c|c@{}}
\begin{array}{cc|cc}
    1 & 1 & 1 & 1 \\
    0 & 1 & 0 & 1  \\\hline
    0 & 0 & 1 & 1  \\
    0 & 0 & 0 & 1  \\
\end{array}
&
\begin{array}{cc|cc}
     1 & 1 &  1 & 1\\
     0 & 1 &  0 & 1 \\\hline
      0 & 0 &  1 & 1 \\
     0 & 0 &  0 & 1 \\
\end{array}\\\hline
\begin{array}{cc|cc}
    0 & 0 & 0 & 0  \\
    0 & 0 & 0 & 0  \\\hline
    0 & 0 & 0 & 0  \\
    0 & 0 & 0 & 0  \\
\end{array}
&
\begin{array}{cc|cc}
  1 & \textcolor{red}{\bf 1} &  1 & 1 \\
  0 & 1 &  0 & 1 \\\hline
  0 & 0 &  1 & 1 \\
 0 & 0 &  0 & 1 \\
    \end{array}
\end{array}
\right)
$$
\vspace{2em}
\begingroup
   \renewcommand{\arraystretch}{1.6}
$$
M^{\otimes 3} = = \left(\begin{array}{c|c}
  B_{1,1}  &   B_{1,2}\\
\hline
 B_{2,1}  &  {\bf B_{2,2}}
\end{array}\right)
\;\;\;\;\;\;\;\;\;
B_{2,2} = \left(
\begin{array}{c|c}
  {\bf  B'_{1,1}} & B'_{1,2} \\\hline
  B'_{2,1} & B'_{2,2}
  \end{array}\right)
\;\;\;\;\;\;\;\;\;
B'_{1,1} = \left(
\begin{array}{c|c}
   1 & \textcolor{red}{\bf 1} \\\hline
  0 & 1
  \end{array}\right)
$$
\endgroup
\caption{The Kronecker product $M^{\otimes 3}$ of the matrix $M$.
The red one in entry $(5,6)$, that is in row $5$ and column $6$ of $M^{\otimes 3}$,
can be uniquely represented by two vectors $r = (2,1,1)$ and $c = (2,1,2)$, which represent the blocks to which row $5$ and column $6$ belong to, respectively.
For example, row $5$ belongs to block $B_{2,2}$ of $M^{\otimes 3}$, and in this
block to block $B'_{1,1}$ and finally it is in the first row of block $B'_{1,1}$.
Similarly the vector $c = (2,1,2)$ represents the blocks to which column $6$ belongs.}
\label{fig-kronekerblocks}
\end{figure}

\begin{theorem}
\label{theo-block}
Let $M$ be a $0,1$ matrix of size $n \times n$ with at most $d$ zeros in each column. Then for any $k \geq 1$,
$\Rbool(M^{\otimes k}) \leq c(d,k) \cdot \log n$, where $c(d,k)  = 3k\cdot e^k \cdot (d+1)^k \cdot \log_2 e$.
\end{theorem}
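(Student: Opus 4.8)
The plan is to generalize the probabilistic argument of Alon behind Theorem~\ref{theo:Alon}, the crucial twist being to randomize separately within each of the $k$ Kronecker factors rather than over the full index set of $M^{\otimes k}$. Index the rows and columns of $M^{\otimes k}$ by $k$-tuples over $\{1,\dots,n\}$ (this is the block-address description illustrated in Figure~\ref{fig-kronekerblocks}), so that the entry of $M^{\otimes k}$ in row $(i_1,\dots,i_k)$ and column $(j_1,\dots,j_k)$ equals $\prod_{\ell=1}^{k} M_{i_\ell,j_\ell}$, and in particular equals $1$ exactly when $M_{i_\ell,j_\ell}=1$ for every $\ell$. Note that a naive application of Theorem~\ref{theo:Alon} to $M^{\otimes k}$ is useless, since a single column of $M^{\otimes k}$ may contain on the order of $kdn^{k-1}$ zeros rather than only $d$.

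Assume $d\ge 1$; the case $d=0$ is trivial since $M^{\otimes k}$ is then all ones. I would form a random monochromatic rectangle of ones in $M^{\otimes k}$ as follows. For each coordinate $\ell\in\{1,\dots,k\}$ choose a subset $R_\ell\subseteq\{1,\dots,n\}$ by including each index independently with probability $1/(d+1)$, with $R_1,\dots,R_k$ mutually independent. Put $R=R_1\times\cdots\times R_k$, and let $C$ consist of all column-tuples $(j_1,\dots,j_k)$ such that for every $\ell$ the column $j_\ell$ of $M$ has a $1$ in every row of $R_\ell$. Then $R\times C$ is a monochromatic rectangle of ones: if $(i_1,\dots,i_k)\in R$ and $(j_1,\dots,j_k)\in C$, then $i_\ell\in R_\ell$ forces $M_{i_\ell,j_\ell}=1$ for all $\ell$, so the corresponding entry of $M^{\otimes k}$ is $1$.

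Next I would bound, for a fixed one entry in row $(i_1,\dots,i_k)$ and column $(j_1,\dots,j_k)$ (so $M_{i_\ell,j_\ell}=1$ for each $\ell$), the probability that it is covered by $R\times C$. This event is the intersection over $\ell$ of the events ``$i_\ell\in R_\ell$ and no row having a $0$ in column $j_\ell$ of $M$ lies in $R_\ell$'', and these are independent across $\ell$ because $R_1,\dots,R_k$ are. Fixing $\ell$, the index $i_\ell$ is distinct from the at most $d$ zero-rows of column $j_\ell$ (as $M_{i_\ell,j_\ell}=1$), so the estimate from the proof of Theorem~\ref{theo:Alon} applies verbatim and gives that the $\ell$-th event has probability at least $\tfrac{1}{d+1}\bigl(1-\tfrac{1}{d+1}\bigr)^{d}\ge \tfrac{1}{e(d+1)}$. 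Multiplying over the $k$ independent coordinates, the fixed one is covered with probability at least $\bigl(e(d+1)\bigr)^{-k}$.

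Finally, choose $t=c(d,k)\log_2 n$ rectangles $S_1,\dots,S_t$ independently in this way. For a fixed one, the probability of being missed by all of them is at most $\bigl(1-(e(d+1))^{-k}\bigr)^{t}\le \exp\bigl(-t/(e(d+1))^{k}\bigr)$, and with the stated value $t=3k\,e^k(d+1)^k(\log_2 e)\log_2 n$ this is at most $n^{-3k}$. Since $M^{\otimes k}$ has at most $n^{2k}$ ones, a union bound gives that the probability that some one is uncovered is at most $n^{2k}\cdot n^{-3k}=n^{-k}<1$; hence some choice of $S_1,\dots,S_t$ covers all ones of $M^{\otimes k}$, so $\Rbool(M^{\otimes k})\le t=c(d,k)\log_2 n$. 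The step requiring the actual idea is the per-factor sampling: it makes the covering probability degrade only by the explicit factor $(e(d+1))^{k-1}$ relative to $k=1$ and, crucially, keeps it independent of $n$, which is exactly what preserves the $\log_2 n$ dependence. Everything else — checking that $R\times C$ is a legitimate rectangle and tracking the constant — is routine bookkeeping, so I do not anticipate a serious obstacle.
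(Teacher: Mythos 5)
Your proof is correct and follows essentially the same route as the paper: both generalize Alon's argument by sampling independently within each of the $k$ Kronecker factors (the paper phrases this via block addresses, you via product sets $R_1\times\cdots\times R_k$), obtain a per-one coverage probability of at least $(e(d+1))^{-k}$, and finish with the same union bound over at most $n^{2k}$ ones using $t=c(d,k)\log_2 n$ independent rectangles. Your explicit factorization of the coverage event into independent per-coordinate events is just a cleaner restatement of the paper's computation, not a different argument.
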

\begin{proof}
If $d = 0$ the theorem clearly holds as the Boolean rank of $M^{\otimes k}$ is  $1$ in this case, and if $k = 1$  the theorem follows from Theorem~\ref{theo:Alon}.
Thus, assume that $d \geq 1$ and $k \geq 2$.
Recall that $M^{\otimes k}$ is a matrix of blocks, where each block contains smaller blocks of a certain structure and so on.
Specifically, since $M^{\otimes k} = M \otimes M^{\otimes k-1}$, then at the topmost level $M^{\otimes k}$ is composed of blocks of $M^{\otimes k-1}$ or all zero blocks of the corresponding size.
Inside each non-zero block of the form $M^{\otimes k-1} = M \otimes M^{\otimes k-2}$ there are blocks of the form $M^{\otimes k-2}$ or smaller all-zero blocks, and so on,
until the innermost blocks are either all-zero or equal to $M$, and inside such a block each element can be described as a block of size $1 \times 1$ which is either a one or a zero.

Therefore, any entry $(i,j)$ of $M^{\otimes k}$ can be represented by two vectors $r = (r_1,r_2,...,r_k)$ and $c = (c_1,c_2,...,c_k)$,
which describe the sequence of blocks to which $(i,j)$ belongs, where each block is described by two indices $r_i,c_i$,
such that $r_i$ and $c_i$ are indices of a row and a column of blocks in that level, respectively.
Thus, $(i,j)$ belongs to block $r_1,c_1$ at the topmost level, and then to block  $r_2,c_2$ inside block $r_1,c_1$ and so on,
until the pair $r_k,c_k$ represents a row and a column index inside the innermost block of $M^{\otimes k}$. See Figure~\ref{fig-kronekerblocks} for an illustration.

Let $R$ be a subset of row indices of $M^{\otimes k}$, where each row $i \in R$ is chosen independently and randomly as follows,
according to the sequence of blocks it is contained in. If row $i$ is represented by the row sequence $(r_1,r_2,...,r_k)$,
then  $r_1$ is chosen with probability $1/(d+1)$, and then  $r_2$ is chosen with probability $1/(d+1)$ and so on.
Thus, the row sequence $(r_1,r_2,...,r_k)$ which uniquely represents row $i$, is chosen with probability $1/(d+1)^k$.

Let $C$ be a subset of column indices $j$ of $M^{\otimes k}$, such that for each $i \in R$ it holds that $M^{\otimes k}_{i,j} = 1$.
Let $S  = R \times C$ be the resulting set of pairs $(i,j)$.
The set $S$ is a monochromatic rectangle of ones in $M^{\otimes k}$, since all column indices in $C$ include only ones in the rows in $R$.

What is the probability that a pair $(i,j)$ for which $M^{\otimes k}_{i,j} = 1$ is included in the rectangle defined by $S$?
The probability that row $i$ is selected initially into $R$  is $1/(d+1)^k$.
The pair $(i,j)$ will be in $S$ if for all $i'\in R$ it holds $M^{\otimes k}_{i',j} = 1$, and then $j$ will be included in $C$ (that is, all rows in $R$ have a $1$ in column $j$).
Thus, $(i,j)$ will be in $S$ if we didn't select any row whose row sequence includes an all zero block which includes column $j$,
and since there are at most $d$ zeros in each column of $M$,  this happens with probability at least:
$$
\frac{1}{(d+1)^k}\left( 1 - \frac{1}{(d+1)}\right)^{k\cdot d} \geq \frac{1}{e^k (d+1)^k}
$$
Now if we choose $t =  c(d,k)\cdot \log_2 n  $ rectangles $S_1,...,S_{t}$ independently at random as described above, where $c(d,k)$ is as stated in the theorem,
the probability that $(i,j)$ is not covered by one of these rectangles is at most:
$$
\left(1- \frac{1}{e^k(d+1)^k} \right)^t \leq e^{-t/(e^k(d+1)^k)} = e^{-3k\log_2 n/\log_2 e } = n^{-3k}.
$$
Since there are at most $n^{2k}$ ones in $M^{\otimes k}$, the probability that one of them is not covered by these $t$ rectangles is at most:
$$n^{2k} \cdot n^{-3k} < 1$$
Hence, there exists at least one choice of rectangles $S_1,...,S_t$ for which all ones of $M^{\otimes k}$ are covered,
and thus, $\Rbool(M) \leq t = c(d,k) \cdot \log_2 n $ as required.
\end{proof}

Since $C_n$ has exactly one zero in each column we get the following corollary.

\begin{corollary}
There exists a constant $c> 1$ such that $\Rbool(C_n^{\otimes k}) = O(c^k \cdot \Rbool(C_n))$.
\end{corollary}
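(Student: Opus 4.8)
The plan is to derive this directly from Theorem~\ref{theo-block} by specializing to the matrix $C_n$ and then converting the $\log_2 n$ factor into a factor of $\Rbool(C_n)$ using Corollary~\ref{coro-Cn}. First I would observe that $C_n$, being the complement of the identity matrix, has exactly one zero in each column, so we may apply Theorem~\ref{theo-block} with $d = 1$. This gives
$$
\Rbool(C_n^{\otimes k}) \;\le\; c(1,k)\cdot \log_2 n \;=\; 3k\cdot e^k\cdot 2^k\cdot \log_2 e\cdot \log_2 n \;=\; 3\log_2 e\cdot k\cdot (2e)^k\cdot \log_2 n .
$$

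Next I would invoke Corollary~\ref{coro-Cn}, which states that $\Rbool(C_n) = \sigma(n) = \Theta(\log n)$; in particular there is an absolute constant $\gamma > 0$ with $\log_2 n \le \gamma\cdot \Rbool(C_n)$ for all $n \ge 2$. Substituting this bound into the inequality above yields $\Rbool(C_n^{\otimes k}) \le 3\gamma \log_2 e\cdot k\,(2e)^k\cdot \Rbool(C_n)$. The only remaining step is to absorb the polynomial factor $k$ into the exponential one: since $k \le 2^k$ for every $k \ge 1$, we have $k\,(2e)^k \le (4e)^k$, and hence $\Rbool(C_n^{\otimes k}) = O\big((4e)^k\cdot \Rbool(C_n)\big)$, so the statement holds with $c = 4e$ (any $c > 2e$ works after adjusting the hidden constant).

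There is essentially no obstacle here — the content is entirely in Theorem~\ref{theo-block}, and the corollary is just a matter of plugging in $d=1$ and rewriting $\log_2 n$ as $\Theta(\Rbool(C_n))$. The one point worth stating explicitly is that the constant $c$ may be taken independent of both $n$ and $k$, which is immediate from the computation above. If one wanted the cleanest statement, I would note that the same argument applies verbatim to any family of matrices $M_n$ of size $n\times n$ with a bounded number of zeros per column whose Boolean rank is $\Theta(\log n)$, giving $\Rbool(M_n^{\otimes k}) = O(c^k\,\Rbool(M_n))$.
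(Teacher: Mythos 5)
Your proposal is correct and follows exactly the route the paper intends: the corollary is stated as an immediate consequence of Theorem~\ref{theo-block} with $d=1$ (one zero per column of $C_n$), combined with $\Rbool(C_n)=\Theta(\log n)$ from Corollary~\ref{coro-Cn}, which is precisely your computation giving $c(1,k)=3\log_2 e\cdot k(2e)^k$ and then absorbing $k$ into the exponential. Nothing is missing; your explicit constant $c=4e$ is a fine instantiation of the paper's unspecified $c>1$.
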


Another approach for upper bounding the Boolean rank of Kronecker products was developed by Haviv and Parnas~\cite{haviv2022upper} who
generalized the method of Watts~\cite{watts2001boolean} mentioned above, which was used to give an upper bound on $\Rbool(C_4 \otimes C_4)$.
Theorem~\ref{theorem:RankTensorIntro} presented next, which was proved in~\cite{haviv2022upper}, reduces the challenge of proving upper bounds on  $\Rbool(A \otimes B)$ to
finding two sequences $\mathcal{M}$ and $\mathcal{N}$ of low-rank sub-matrices for each of the matrices $A$ and $B$,
such that every $1$-entry of $A$ and $B$ is covered by many of them,
where a sequence of $0,1$-matrices is said to {\em cover} a matrix if their sum is equal to that matrix under the Boolean operations.
See Figure~\ref{fig:Watts} for an illustration.

\begin{figure}[htb!]
%\captionsetup{width=0.9\textwidth}
\begin{center}
\begin{tabular}{ccc}
$ \mathcal{M} \; = \;\;$
$
\left(
  \begin{array}{cccc}
    0 & 0 & 1 & 1 \\
    0 & 0 & 1 & 1 \\
    1 & 1 & 0 & 0 \\
    1 & 1 & 0 & 0 \\
  \end{array}
\right)$
&
$
\left(
  \begin{array}{cccc}
    0 & 1 & 0 & 1 \\
    1 & 0 & 1 & 0 \\
    0 & 1 & 0 & 1 \\
    1 & 0 & 1 & 0 \\
  \end{array}
\right)
$
&
$
\left(
  \begin{array}{cccc}
    0 & 1 & 1 & 0 \\
    1 & 0 & 0 & 1 \\
    1 & 0 & 0 & 1 \\
    0 & 1 & 1 & 0 \\
  \end{array}
\right)$
\end{tabular}
\end{center}
\caption{A sequence of matrices $\mathcal{M}$ used in~\cite{watts2001boolean}, as generalized in~\cite{haviv2022upper}.
Here $\mathcal{N} = \mathcal{M}$. The Boolean rank of each of these three matrices is $2$,
their Boolean sum is equal to $C_4$, and every one in $C_4$ is covered by two of them.
Thus, by Theorem~\ref{theorem:RankTensorIntro}, $\Rbool(C_4 \otimes C_4) \leq \sum_{t=1}^{3} \Rbool(M_{t}) \cdot \Rbool(N_{t}) = 3\cdot 2 \cdot 2 = 12 < 16 = \Rbool(C_4)\cdot \Rbool(C_4)$. }
\label{fig:Watts}	
\end{figure}

\begin{theorem}[\cite{haviv2022upper}]
\label{theorem:RankTensorIntro}
Let $A$ and $B$ be two $0,1$ matrices.
Suppose that there exist two sequences of $0,1$ matrices, $\mathcal{M} = M_{1},...,M_{s}$ and $\mathcal{N} = N_{1},...,N_{s}$, where repetitions of matrices are allowed,
such that:
\begin{enumerate}
  \item $\mathcal{M}$ is a cover of $A$, and
  \item for every $i,j$ such that $A_{i,j}=1$, the matrices $N_t$ with ${(M_t)}_{i,j} = 1$ form a cover of $B$.
\end{enumerate}
Then, $\Rbool(A \otimes B) \leq \sum_{t=1}^{s} \Rbool(M_{t}) \cdot \Rbool(N_{t}).$
\end{theorem}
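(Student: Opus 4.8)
The plan is to build, directly from the two sequences, an explicit cover of the $1$-entries of $A\otimes B$ by monochromatic (all-ones) rectangles whose total number is exactly $\sum_{t=1}^{s}\Rbool(M_t)\cdot\Rbool(N_t)$; since $\Rbool(A\otimes B)=Cover_1(A\otimes B)$, this proves the bound. First I would record two entrywise inequalities that will be used throughout. Because $\mathcal{M}$ is a cover of $A$, i.e. $\bigvee_t M_t=A$ under Boolean arithmetic, each $M_t$ satisfies $M_t\le A$ entrywise; in particular $(M_t)_{i,j}=1$ forces $A_{i,j}=1$. Because of condition~2, for every $(i,j)$ with $A_{i,j}=1$ the matrices $N_t$ with $(M_t)_{i,j}=1$ Boolean-sum to $B$, so each such $N_t$ has ones only where $B$ has ones, i.e. $N_t\le B$; hence whenever $M_t\ne 0$ (choose any $1$-entry $(i,j)$ of $M_t$, which forces $A_{i,j}=1$) we get $N_t\le B$.

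For each $t$ fix a minimum all-ones rectangle cover $\mathcal{R}_t$ of $M_t$ of size $\Rbool(M_t)$, and a minimum all-ones rectangle cover $\mathcal{R}'_t$ of $N_t$ of size $\Rbool(N_t)$. Given $R=I\times J\in\mathcal{R}_t$ and $R'=K\times L\in\mathcal{R}'_t$, form the combinatorial rectangle $R\boxtimes R':=(I\times K)\times(J\times L)$ of $A\otimes B$, whose rows are the pairs in $I\times K$ and whose columns are the pairs in $J\times L$. I claim $R\boxtimes R'$ is an all-ones rectangle of $A\otimes B$: for $i'\in I$, $k'\in K$, $j'\in J$, $\ell'\in L$ one has $(M_t)_{i',j'}=1$ and $(N_t)_{k',\ell'}=1$, so $A_{i',j'}=1$, and since $\mathcal{R}_t$ is nonempty we have $M_t\ne 0$, hence $N_t\le B$ and $B_{k',\ell'}=1$; therefore $(A\otimes B)_{(i',k'),(j',\ell')}=A_{i',j'}\cdot B_{k',\ell'}=1$.

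Next I would verify that the collection $\{\,R\boxtimes R' : 1\le t\le s,\ R\in\mathcal{R}_t,\ R'\in\mathcal{R}'_t\,\}$ covers every $1$-entry of $A\otimes B$. Take $((i,k),(j,\ell))$ with $(A\otimes B)_{(i,k),(j,\ell)}=1$, so $A_{i,j}=1$ and $B_{k,\ell}=1$. Since $\bigvee_t M_t=A$, the set $T=\{t:(M_t)_{i,j}=1\}$ is nonempty; since by condition~2 $\bigvee_{t\in T}N_t=B$ and $B_{k,\ell}=1$, there is $t^\ast\in T$ with $(N_{t^\ast})_{k,\ell}=1$. Then $(i,j)$ is a $1$-entry of $M_{t^\ast}$ and lies in some $R\in\mathcal{R}_{t^\ast}$, while $(k,\ell)$ is a $1$-entry of $N_{t^\ast}$ and lies in some $R'\in\mathcal{R}'_{t^\ast}$; hence $((i,k),(j,\ell))\in R\boxtimes R'$. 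The size of the collection is $\sum_{t=1}^{s}|\mathcal{R}_t|\cdot|\mathcal{R}'_t|=\sum_{t=1}^{s}\Rbool(M_t)\cdot\Rbool(N_t)$, giving the claimed upper bound.

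The cover/decomposition bookkeeping is routine; the one place requiring care — and the conceptual heart of the argument — is recognizing that condition~2 is precisely what is needed on \emph{both} sides: it forces $N_t\le B$ exactly for the indices $t$ that contribute rectangles, so the product rectangles stay inside the $1$-entries of $A\otimes B$ and are genuinely monochromatic, and it simultaneously supplies, for each $1$-entry $((i,k),(j,\ell))$, a single index $t^\ast$ witnessing both coordinates at once, so the cover is complete. Condition~1 only handles the easy half, producing the rows and columns on the $A$-side. (Equivalently, the same rectangles arise algebraically from the mixed-product identity $(X_tY_t)\otimes(P_tQ_t)=(X_t\otimes P_t)(Y_t\otimes Q_t)$, valid over the Boolean semiring, applied to optimal Boolean decompositions $M_t=X_tY_t$ and $N_t=P_tQ_t$; one still needs the two monotonicity observations above to pass from $M_t\otimes N_t$ up to $A\otimes B$.)
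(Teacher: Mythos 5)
Your proof is correct, and it is essentially the paper's argument in unrolled form: your monochromaticity and coverage checks are exactly the two block-by-block cases ($A_{i,j}=0$ and $A_{i,j}=1$) used in the paper to establish the Boolean identity $A\otimes B=\sum_{t=1}^{s} M_t\otimes N_t$, and your explicit product rectangles $R\boxtimes R'$ are precisely what the paper obtains by then invoking sub-additivity and sub-multiplicativity of $\Rbool$. So the approach matches; you have simply inlined those two standard rank facts as an explicit rectangle cover, as you yourself note in your closing parenthetical.
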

\begin{proof}
We first show that
$$
A \otimes B =  \sum_{t = 1}^s (M_{t} \otimes N_{t}),
$$
where the operations are the Boolean operations. This is done by comparing the matrices on both sides of this equality, block by clock.

If $A_{i,j}=0$ then block $i,j$ of $A \otimes B$ is the all-zero matrix.
Since $\mathcal{M}$ is a cover of $A$, then $(M_t)_{i,j}=0$ for every $1 \leq t \leq s$.
Hence, block $i,j$ of $M_t \otimes N_t$ is also the all-zero matrix for all $1 \leq t \leq s$. Thus, the corresponding block of their sum is all-zero as well.

Now if $A_{i,j}=1$ then block $i,j$ of $A \otimes B$ is $B$.
As to block $i,j$ of the matrix $\sum_{t = 1}^s (M_{t} \otimes N_{t})$,
it is precisely the sum of the matrices $N_t$ for which $(M_t)_{i,j}=1$. By the second assumption of the theorem, these matrices form a cover of $B$. Hence, this block is $B$ as well.

Finally, using the sub-additivity and sub-multiplicativity of the Boolean rank:
\[\Rbool(A \otimes B) = \Rbool \bigg ( \sum_{t = 1}^s (M_{t} \otimes N_{t}) \bigg ) \leq \sum_{t=1}^{s}{\Rbool(M_{t} \otimes N_{t})}  \leq \sum_{t=1}^{s} \Rbool(M_{t}) \cdot \Rbool(N_{t}),\]
as required.
\end{proof}

Using the method introduced in Theorem~\ref{theorem:RankTensorIntro}, Haviv and Parnas~\cite{haviv2022upper} prove the
following theorem which settles the question of~\cite{de1988boolean} regarding the Boolean rank of $C_n \otimes C_n$, for all  $n \neq 5, 6$.

\begin{theorem}[\cite{haviv2022upper}]
Let $n,m \geq 7$. Then $\Rbool(C_n \otimes C_m) < \Rbool(C_n) \cdot \Rbool(C_m)$ .
\end{theorem}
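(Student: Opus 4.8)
The plan is to use Theorem~\ref{theorem:RankTensorIntro} as the only tool: it reduces the task to producing two sequences of $0,1$ matrices of the same length $s$, $\mathcal M=M_1,\dots,M_s$ and $\mathcal N=N_1,\dots,N_s$, such that $\mathcal M$ covers $C_n$ and, for every $1$-entry $(i,j)$ of $C_n$, the sub-sequence $\{N_t:\ (M_t)_{i,j}=1\}$ covers $C_m$; then $\Rbool(C_n\otimes C_m)\le\sum_{t=1}^s\Rbool(M_t)\cdot\Rbool(N_t)$, and by Corollary~\ref{coro-Cn} the goal is to make this sum strictly smaller than $\sigma(n)\sigma(m)=\Rbool(C_n)\cdot\Rbool(C_m)$. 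Taking $\mathcal M$ to be the $\sigma(n)$ rank-one rectangles of a standard optimal cover of $C_n$ and each $N_t$ a full optimal cover of $C_m$ only reproduces the sub-multiplicative bound $\sigma(n)\sigma(m)$, so any saving must come from \emph{redundancy}: a cover of $C_n$ by somewhat more, but still low-rank, matrices in which every $1$-entry is covered by at least $p$ of them, paired with a cover of $C_m$ in which any $p$ of the selected blocks already suffice, mirroring the three rank-$2$ matrices Watts uses for $C_4$ in Figure~\ref{fig:Watts} (there $s=3$, $r=r'=2$, $p=2$, and $s\,r\,r'=12<16$). Concretely I would, for each $q\in\{n,m\}$, build such a ``$p$-redundant'' decomposition of $C_q$ from a Sperner-type set system: label the rows and columns of $C_q$ by distinct near-balanced subsets $S_i$ of a slightly enlarged ground set $[\sigma(q)+O(1)]$, chosen to form a constant-weight code of large minimum distance, so that $|S_i\setminus S_j|$ — which is exactly the number of standard rectangles covering the entry $(i,j)$ — is uniformly at least $p$ while each block keeps Boolean rank $O(1)$; the two sequences are then aligned to a common length $s$ by padding, giving $\Rbool(C_n\otimes C_m)\le s\,r\,r'$.

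To cut down the number of cases, observe that $C_q$ is a principal sub-matrix of $C_{q'}$ for $q\le q'$, so $\Rbool(C_n\otimes C_m)$ is monotone non-decreasing in $n$ and in $m$, whereas $\sigma(\cdot)$ is constant on each interval $\big({d-1 \choose \lfloor(d-1)/2\rfloor},\,{d \choose \lfloor d/2\rfloor}\big]$; hence it suffices to prove the inequality when $n$ and $m$ are the right endpoints of such intervals, i.e.\ $n={d \choose \lfloor d/2\rfloor}$ and $m={d' \choose \lfloor d'/2\rfloor}$ with $d,d'\ge5$ (the first value for which $n\ge7$). When both $d$ and $d'$ are large, the asymptotic estimate $\Rbool(C_n\otimes C_m)=O(\log n+\log m)$ — obtained by a straightforward adaptation of Theorem~\ref{theo-block} (padding to $C_{\max(n,m)}^{\otimes 2}$, using that each column of $C_n$ and of $C_m$ contains a single zero) — already lies well below $\sigma(n)\sigma(m)=\Theta(\log n\cdot\log m)$, so only finitely many pairs $(d,d')$ survive, and for these one exhibits the explicit redundant decompositions above, or amplifies Watts' $C_4$-decomposition via the sub-multiplicativity of $\Rbool$ under $\otimes$.

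The hard part will be the tightness right at the threshold. When $n=m=7$ we have $\sigma(7)=5$, so $\sigma(n)\sigma(m)=25$ and there is essentially no slack, and the generic redundancy/asymptotic bounds are far too lossy there; worse, when $\sigma(n)$ is a small constant such as $5$ or $6$ while $\sigma(m)$ grows, a purely asymptotic argument fails because the constant hidden in the $O(\log m)$ bound dwarfs $\sigma(n)$, so one genuinely needs a uniform instance of Theorem~\ref{theorem:RankTensorIntro} that saves \emph{at least one} block for every pair with $\sigma(n),\sigma(m)\ge5$. Designing the Sperner-type set systems so that this saving is guaranteed across all such pairs — and so that the redundancy parameter $p$ can be matched on both sides — together with the hand-verification of the few smallest values of $\sigma$, is the delicate combinatorial core of the argument; everything else is routine bookkeeping around Theorem~\ref{theorem:RankTensorIntro}.
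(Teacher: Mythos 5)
You have correctly identified the paper's tool: the result is indeed obtained via Theorem~\ref{theorem:RankTensorIntro}, by exhibiting Watts-style ``redundant'' sequences of low-rank covers of $C_n$ and $C_m$ whose total cost $\sum_t \Rbool(M_t)\Rbool(N_t)$ falls below $\sigma(n)\sigma(m)$. Your monotonicity reduction (restricting to $n=\binom{d}{\lfloor d/2\rfloor}$, $m=\binom{d'}{\lfloor d'/2\rfloor}$, since $C_n\otimes C_m$ sits inside $C_{n'}\otimes C_{m'}$ while $\sigma$ is constant on the corresponding intervals) is sound, and so is the observation that in the standard optimal cover of $C_q$ the entry $(i,j)$ is covered exactly $|S_i\setminus S_j|$ times. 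But what you have written is a plan, not a proof: the sequences $\mathcal{M},\mathcal{N}$ required by Theorem~\ref{theorem:RankTensorIntro} are never constructed. The hypotheses of that theorem are demanding --- for \emph{every} $1$-entry $(i,j)$ of $C_n$, the specific sub-collection $\{N_t:\ (M_t)_{i,j}=1\}$ must itself cover $C_m$ --- and your ``$p$-redundant Sperner-type set system with blocks of Boolean rank $O(1)$, aligned by padding'' is exactly the step where one must verify this condition and simultaneously check that $s\,r\,r' < \sigma(n)\sigma(m)$. You concede this yourself when you call it ``the delicate combinatorial core,'' so the decisive content of the theorem is missing.

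There is also an internal inconsistency in the case analysis. You first assert that the asymptotic bound $\Rbool(C_n\otimes C_m)=O(\log n+\log m)$ (via Theorem~\ref{theo-block} with $k=2$, $d=1$, whose constant is in the hundreds) leaves ``only finitely many pairs'' to handle, and then correctly note that for every pair in which one of $\sigma(n),\sigma(m)$ is a small constant the hidden constant dwarfs the budget $\sigma(n)\sigma(m)$. Those pairs form an \emph{infinite} family (fix $\sigma(n)\in\{5,6,\dots\}$ small and let $m\to\infty$), so the asymptotic argument does not reduce the problem to a finite check plus hand-verification at $n=m=7$; a uniform explicit construction is needed for all $n,m\ge 7$, including the tight case $\sigma(7)=5$ with budget $25$, and none is given. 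Until such covers are exhibited and verified against the two conditions of Theorem~\ref{theorem:RankTensorIntro}, the inequality $\Rbool(C_n\otimes C_m)<\Rbool(C_n)\cdot\Rbool(C_m)$ is not established by your argument.
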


As can be seen by the above discussion, there are several methods for proving upper bounds on the Boolean rank of Kronecker products.
A  long standing open problem is the intriguing question  concerning the multiplicativity of the binary rank under Kronecker products.
This question was also considered by Watson in~\cite{watson2016nonnegative} who gave an example the following matrix for which the binary and non-negative rank are not equal:
$$M = \left(
  \begin{array}{cccccc}
    1 & 1 & 1 & 0 & 0 & 0 \\
    1 & 0 & 0 & 1 & 1 & 0 \\
    0 & 1 & 0 & 1 & 0 & 1 \\
    0 & 0 & 1 & 0 & 1 & 1 \\
    1 & 1 & 1 & 1 & 1 & 1 \\
  \end{array}
\right)
$$
Watson showed that $\Rbin(M) = 5$ and $\Rnon(M) = 4$.
While trying to increase the gap between the binary and the non-negative rank, Watson suggested that for this matrix $M$ it holds that
$\Rbin(M^{\otimes k}) = \Rbin(M)^k$.
To the best of our knowledge, no progress was made so far regarding this problem and it is still not known if
there exist matrices $A,B$ for which $\Rbin(A \otimes B) < \Rbin(A)\Rbin(B)$.

\subsection{The rank of a matrix and its complement}
\label{Subsec-complemement}

Given a $0,1$ matrix $M$, its {\em complement} $\overline{M}$ is the matrix achieved by replacing each $0$ of $M$ with a $1$ and each $1$ with a $0$.
The problem of determining the relationship between the rank of $M$ and the rank of $\overline{M}$, for a given rank function, is another example of a question which was
raised by both the mathematical community and the computer science community, mostly in the context of communication complexity.
Therefore, in this sub-section we consider the known bounds on the rank of a matrix and its complement, for the real, Boolean and binary rank functions,
where some of the methods used are combinatorial or algebraic and others employ the powerful lifting technique described in Section~\ref{subsec-lifting}.

It is well known that there is a gap of at most one between $\Rreal(\overline{M})$ and $\Rreal(M)$, as the following simple claim proves.
As we show in this section, such a statement does not hold in general for the Boolean and binary rank.

\begin{claim}
Let $M$ be a $0,1$ matrix. Then $|\Rreal(M) - \Rreal(\overline{M})| \leq 1$.
\end{claim}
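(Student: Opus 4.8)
The plan is to reduce everything to the single algebraic identity $\overline{M} = J - M$, where $J$ denotes the all-ones matrix of the same size as $M$, and then invoke the sub-additivity of the real rank recorded at the beginning of Section~\ref{Sec-properties}, namely $\Rreal(N+P) \le \Rreal(N) + \Rreal(P)$. The first step is to observe that $J$, being the outer product of two all-ones vectors, has $\Rreal(J) = 1$, and that $\Rreal(-M) = \Rreal(M)$ since scaling rows by $-1$ does not change linear (in)dependence.

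Applying sub-additivity to $\overline{M} = J + (-M)$ then gives $\Rreal(\overline{M}) \le \Rreal(J) + \Rreal(M) = \Rreal(M) + 1$. Symmetrically, writing $M = J - \overline{M}$ and running the same argument with the roles of $M$ and $\overline{M}$ interchanged yields $\Rreal(M) \le \Rreal(\overline{M}) + 1$. Combining the two inequalities gives $-1 \le \Rreal(M) - \Rreal(\overline{M}) \le 1$, which is exactly the claim.

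There is no real obstacle in this argument; the only point worth stressing is \emph{why} it works only for the real rank (indeed for the rank over any field): it uses the additive inverse $-M$ and the sub-additivity of rank under matrix addition, and it exploits that $J$ has rank one. None of these ingredients is available for the Boolean rank — addition is idempotent rather than invertible — which is precisely the reason the surrounding discussion goes on to show that the analogue of this claim fails, and fails badly, for the Boolean and binary rank.
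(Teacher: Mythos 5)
Your proof is correct and follows essentially the same route as the paper's: write $\overline{M}$ and $M$ in terms of the all-ones matrix $J$ of rank $1$ and apply sub-additivity of the real rank in both directions. You simply spell out the symmetric application that the paper leaves implicit.
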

\begin{proof}
The claim follows directly by using the sub-additivity of the real rank and since $\overline{M} = M - J$, where $J$ is the all one matrix of rank $1$.
\end{proof}

As described in Section~\ref{subsec-communication}, the non-deterministic communication complexity, $N_1(M)$,  of $M$ is equal to $\lceil\log \Rbool(M)\rceil$, and thus,
also equal to $\log$ of the minimal number of monochromatic rectangles required to cover all ones of $M$.
The co-non-deterministic communication complexity, $N_0(M)$, of $M$ is equal to the non-deterministic communication complexity of $\overline{M}$,
and therefore, also equal to $\log$ of the minimal number of monochromatic rectangles  required to cover the zeros of $M$.
Thus, the question of determining the relationship between $\Rbool(M)$ and  $\Rbool(\overline{M})$ is equivalent to determining the relationship between $N_1(M)$ and $N_0(M)$.

Indeed, there can be an exponential gap between $\Rbool(\overline{M})$ and $\Rbool(M)$.
For example, we saw such a gap for the matrix $C_n$, since the complement of $C_n$ is the identity matrix $I_n$, and
$\Rbool(I_n) = n$, whereas $\Rbool(C_n)  = \Theta(\log n)$.
The matrix $C_n$ corresponds to the non-equality function considered in communication complexity,
and therefore, there is an exponential gap between the non-deterministic and co-nondeterministic communication complexity of non-equality.

What about the binary rank of a matrix and its complement?
A result of Yannakakis~\cite{Yannakakis91}  proves that the deterministic communication complexity, $D(M)$, of any $0,1$ matrix $M$,
is at most $O(\log^2 \Rbin(M))$ (see Theorem~\ref{theo-rec-reduction}, Section~\ref{subsec-protocols}).
But the deterministic communication complexity of a matrix $M$ is equal to the deterministic communication complexity of its complement $\overline{M}$,
since the two players can simply flip the output of the deterministic protocol for $M$. Therefore:
$$\log \Rbin(\overline{M}) \leq D(\overline{M}) \leq O(\log^2 \Rbin(M)).$$
This implies that if $\Rbin(M) = d$ then
$$
\Rbool( \overline{M}) \leq \Rbin( \overline{M}) \leq d^{O(\log d)}.
$$
Since $\lceil\log \Rbool(\overline{M})\rceil = N_0(M)$, then any lower bound on the co-non-deterministic communication complexity of some matrix $M$ or on the
Boolean rank of $\overline{M}$, will provide  a lower bound for $\Rbin(\overline{M})$.

For many years it was an open problem to find a tight lower bound on the co-non-deterministic communication complexity of the
clique vs. independent set problem discussed in Section~\ref{subsec-protocols}.
Furthermore, it was shown that this problem is also related to the Alon, Saks, and Seymour conjecture mentioned in Section~\ref{subsec-linear}.
This conjecture suggested that $\chi(G) \leq bp(G)+1$, where $bp(G)$  is the minimal number of bicliques  needed to partition the edges of a graph $G$ and $\chi(G)$ is its chromatic number.
See  Huang and Sudakov~\cite{HuangS12} and Bousquet, Lagoutte, and Thomass{\'{e}}~\cite{BousquetLT14} for a description of the connections and equivalence of these problems.

We now describe the progress made regarding these problems in terms of a lower bound on $\Rbool(\overline{M})$ for a matrix with $\Rbin(M) = d$.
However,  due to the above discussion, the results mentioned imply similar lower bounds on
the con-nondeterministic communication complexity of the clique vs. independent set problem,
as well as provide separations between $bp(G)$ and $\chi(G)$.

The first non-trivial lower bound was given by Huang and Sudakov~\cite{HuangS12} who disproved the Alon, Saks, and Seymour conjecture.
Building on a construction of Razborov~\cite{Razborov92}, the result of~\cite{HuangS12} provides a
family of graphs $G$,  with $bp(G)=d$ and $\chi(G) \geq \Omega(d^{6/5})$, for infinitely many integers $d$.
See~\cite{CioabaT11} for extended constructions.
The result of Huang and Sudakov can be translated into a family of  matrices $M$ for which $\Rbool( \overline{M}) \geq \Omega(d^{6/5})$.

The constant $6/5$ in the exponent was improved over the years, as can be seen in Table~\ref{tab:lowerbounds}, until finally,
 Balodis, Ben-David, G{\"{o}}{\"{o}}s, Jain, and Kothari~\cite{BBGJK21} proved the following theorem which gives a lower bound almost matching
 the upper bound given by Yannakakis.

 \begin{theorem}[\cite{BBGJK21}]
 \label{theo-complement}
 There exists  a family of matrices $M$ with $\Rbin(M) = d$ and $\Rbool( \overline{M}) \geq d^{\widetilde{\Omega}(\log d)}$.
 \end{theorem}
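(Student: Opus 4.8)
The plan is to work entirely in the language of the alternative formulations of Section~\ref{Sec:equivalent} and of communication complexity, and to obtain the matrix $M$ by \emph{lifting} a purely combinatorial query-complexity object rather than constructing it by hand. Recall from Theorem~\ref{theo-communication-partition} and the discussion following it that $\Rbin(M)=Partition_1(M)$ governs the unambiguous non-deterministic complexity $U(M)$, while $\Rbool(\overline M)=Cover_1(\overline M)=Cover_0(M)$ governs the co-non-deterministic complexity $N_0(M)$. So it suffices to exhibit, for infinitely many $d$, a $0,1$ matrix whose ones partition into $d$ monochromatic rectangles but whose zeros cannot be covered by fewer than $d^{\widetilde{\Omega}(\log d)}$ monochromatic rectangles; equivalently (via the translations of Huang--Sudakov and Bousquet--Lagoutte--Thomass\'e) a graph with biclique partition number $d$ and chromatic number $d^{\widetilde{\Omega}(\log d)}$, which refutes the Alon--Saks--Seymour conjecture in the sharpest possible quantitative form. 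The exponent $\log d$ is the natural target because $N_0(M)\le D(M)\le O(\log^2 Partition_1(M))$ by Theorem~\ref{theo-rec-reduction}, so the separation can be no larger than quasi-polynomial.

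First I would carry out the combinatorial step, which lives at the level of decision trees and is the heart of the matter. For every parameter $k$ I want a Boolean function $f_k$ that admits an \emph{unambiguous $k$-DNF}: a DNF in which every term has at most $k$ literals and every $1$-input satisfies exactly one term, so that the corresponding subcubes \emph{partition} $f_k^{-1}(1)$; and yet such that any covering of $f_k^{-1}(0)$ by subcubes must use a subcube that fixes $\widetilde{\Omega}(k^2)$ coordinates, i.e. $f_k$ has CNF width $\widetilde{\Omega}(k^2)$. This quadratic tradeoff is essentially the best one can hope for, since a Yannakakis-type argument in the query world shows that any function computed by an unambiguous $k$-DNF has CNF width $O(k^2)$ up to lower-order factors. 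The construction would build on the graph of Razborov and the refinements of Huang--Sudakov and G\"oos, which gave only polynomial tradeoffs (exponents like $6/5$ or $3/2$); the new ingredient is a carefully designed combinatorial family together with a potential/bottleneck argument showing that any cheap subcube cover of the zero set would ``make progress'' too slowly and therefore be forced to contain a subcube of large co-dimension. I expect getting the \emph{tight} exponent $2$ here, rather than a weaker constant, to be the principal obstacle; the remaining steps are by now standard machinery.

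Next I would transport this separation from queries to communication using the lifting technique of Section~\ref{subsec-lifting}. Fix a gadget $g$ of the type used in Theorem~\ref{theo-goos} on an alphabet of size $\mathrm{poly}(k)$, and let $M$ be the communication matrix of $f_k\circ g^n$. On the upper side, an unambiguous $k$-DNF lifts coordinate-by-coordinate through $g$ to a partition of the ones of $M$ into rectangles: each width-$k$ term becomes a Kronecker product of $k$ small constraint matrices and all-ones matrices, whose partition number is at most $n^{O(k)}$ by sub-multiplicativity of the binary rank under Kronecker products (Section~\ref{subsec-kronecker}), and distinct terms yield disjoint rectangles; hence $\Rbin(M)=Partition_1(M)=d$ with $\log_2 d = O(k\cdot\mathrm{polylog}\,k)$. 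On the lower side, a lifting theorem for the cover number (of the same flavour as Theorem~\ref{theo-goos}) converts the CNF-width bound into $\log_2 Cover_0(M)\ge\widetilde{\Omega}(k^2)$, since any rectangle cover of the zeros of $M$ projects to a subcube cover of $f_k^{-1}(0)$ in which a rectangle of $\log$-size $c$ yields a subcube of co-dimension $\approx c/\log n$. Therefore $\Rbool(\overline M)=Cover_0(M)\ge 2^{\widetilde{\Omega}(k^2)} = d^{\widetilde{\Omega}(\log d)}$. Finally I would check that after passing to the kernel (deleting duplicate and all-zero rows and columns, which changes neither rank) the matrix $M$ realises these parameters, and phrase the conclusion back in terms of $\Rbin$ and $\Rbool$ via the equivalences of Section~\ref{Sec:equivalent}.
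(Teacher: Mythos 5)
Your outline matches the route actually taken in~\cite{BBGJK21} (and described, at the level of technique, in Section~\ref{Subsec-complemement}): separate unambiguous certificate/DNF complexity from CNF width in the query world, then lift through a gadget so that the unambiguous $k$-DNF becomes a partition of the ones of $M$ of size $d$ with $\log d = \widetilde{O}(k)$, while the CNF-width lower bound becomes $\log Cover_0(M) \geq \widetilde{\Omega}(k^2)$, i.e.\ $\Rbool(\overline{M}) \geq d^{\widetilde{\Omega}(\log d)}$. The framework, the identification $\Rbin(M)=Partition_1(M)$, $\Rbool(\overline M)=Cover_0(M)$, and the observation that Theorem~\ref{theo-rec-reduction} caps the separation at quasi-polynomial are all correct.

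However, as written the proposal is a plan rather than a proof, and the gap sits exactly at the one place you defer: the existence of a Boolean function $f_k$ admitting an unambiguous $k$-DNF whose CNF width is $\widetilde{\Omega}(k^2)$. You state this as a target and remark that achieving the tight exponent $2$ ``is the principal obstacle; the remaining steps are by now standard machinery'' --- but that obstacle is the entire new content of~\cite{BBGJK21}. The lifting machinery and the translation between rectangles, bicliques and colorings were already available to~\cite{Goos15}, which with the same pipeline only reaches $d^{\Omega(\log^{0.128} d)}$ (see Table~\ref{tab:lowerbounds}); without an explicit construction and a lower-bound argument for the query-level separation (in~\cite{BBGJK21} this requires a genuinely new combinatorial design together with a delicate bottleneck argument), the claimed bound $d^{\widetilde{\Omega}(\log d)}$ has no foundation. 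A secondary point to be careful about, though more routine, is that the lower-bound side needs a lifting theorem for the co-non-deterministic/cover measure compatible with the same gadget used for the cheap unambiguous lift; Theorem~\ref{theo-goos} as stated concerns deterministic communication, so you must invoke (or prove) the appropriate non-deterministic lifting theorem and track how the gadget size enters the polylogarithmic losses.
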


  \begin{table}[ht]
    \centering
    \begin{tabular}{|Sc|Sc|Sc|}
    \hline
        $\Rbool(\overline{M}) = $ & Authors & Year \\ [6pt]
        \hline \hline
        $\Omega(d^{6/5})$ & Huang and Sudakov~\cite{HuangS12} & 2012  \\ [6pt] \hline
        $\Omega(d^{3/2})$ & Amano~\cite{Amano14} & 2014  \\ [6pt] \hline
        $\Omega(d^{2})$ & Shigeta and Amano~\cite{ShigetaA15} & 2015 \\ [6pt] \hline
        $d^{\Omega(\log^{0.128} d)}$ & G{\"{o}}{\"{o}}s~\cite{Goos15} & 2015 \\ [6pt] \hline
        $d^{\widetilde{\Omega}(\log d)}$ & Balodis, Ben-David, G{\"{o}}{\"{o}}s, Jain, Kothari~\cite{BBGJK21} & 2021 \\ [6pt] \hline
    \end{tabular}
    \caption{Constructions of families $M$ for which $\Rbin(M) = d$, and $\Rbool(\overline{M})$ is bounded below as stated in the table.}
    \label{tab:lowerbounds}
\end{table}

It is interesting to also consider the gap between the rank of a matrix and the rank of its complement for specific families of matrices.
One such family is the family  of  {\em regular} $0,1$ matrices.

\begin{definition}
A $0,1$ matrix $M$ is called $k$-regular if all of its rows and columns have exactly $k$ ones.
\end{definition}

Brualdi, Manber, and Ross~\cite{BrualdiMR86} proved in 1986 the following for the real rank of regular matrices and their complements:

\begin{theorem}[\cite{BrualdiMR86}]
\label{theo-real-acomplement}
For every $k$-regular $0,1$ matrix $M$ of size $n \times n$, where $0 < k < n$,
it holds that $\Rreal(M) = \Rreal(\overline{M})$.
\end{theorem}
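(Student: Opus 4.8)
The plan is to show that $M$ and $\overline{M}$ have exactly the same column space, which immediately gives $\Rreal(M) = \Rreal(\overline{M})$. Throughout I would write $J$ for the $n \times n$ all-ones matrix and $\mathbf{1}$ for the all-ones column vector, so that $\overline{M} = J - M$ and every column of $\overline{M}$ is $\mathbf{1}$ minus the corresponding column of $M$.

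The key observation I would establish first is that $\mathbf{1}$ lies in the column space of $M$. Indeed, $M$ being $k$-regular means every row of $M$ sums to $k$, i.e. $M\mathbf{1} = k\mathbf{1}$; since $0 < k$, this gives $\mathbf{1} = \tfrac1k M\mathbf{1} \in \mathrm{Col}(M)$. The same reasoning applies to $\overline{M}$: as $M$ has $k$ ones per row, $\overline{M}$ has $n-k$ ones per row, so $\overline{M}$ is $(n-k)$-regular, $\overline{M}\mathbf{1} = (n-k)\mathbf{1}$, and since $k < n$ we get $\mathbf{1} = \tfrac1{n-k}\overline{M}\mathbf{1} \in \mathrm{Col}(\overline{M})$. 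This is the only place the hypothesis $0 < k < n$ enters, and it is genuinely needed: for $k=0$ one has $\Rreal(M) = 0$ but $\Rreal(\overline{M}) = 1$.

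Given this, I would finish as follows. Each column of $\overline{M} = J - M$ is a linear combination of $\mathbf{1}$ and columns of $M$, all of which lie in $\mathrm{Col}(M)$ by the previous paragraph; hence $\mathrm{Col}(\overline{M}) \subseteq \mathrm{Col}(M)$. Exchanging the roles of $M$ and $\overline{M}$ (using $\overline{\overline{M}} = M$ together with $\mathbf{1} \in \mathrm{Col}(\overline{M})$) gives the reverse inclusion, so $\mathrm{Col}(M) = \mathrm{Col}(\overline{M})$ and in particular $\Rreal(M) = \Rreal(\overline{M})$.

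There is essentially no hard step here; the one point to get right is using $k$-regularity to place $\mathbf{1}$ inside \emph{both} column spaces, after which the rank-one correction $J$ becomes invisible to the column space and the equality drops out. An equivalent route, which I would mention in passing, is to use the invariant orthogonal decomposition $\R^n = \langle\mathbf{1}\rangle \oplus \mathbf{1}^{\perp}$: both $M$ and $\overline{M}$ preserve it, $\overline{M}$ acts as $-M$ on $\mathbf{1}^{\perp}$, and $\overline{M}$ acts as multiplication by the nonzero scalar $n-k$ on $\langle\mathbf{1}\rangle$, so $\Rreal(M) = 1 + \Rreal(M|_{\mathbf{1}^{\perp}}) = \Rreal(\overline{M})$. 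Either description sharpens the general bound $|\Rreal(M) - \Rreal(\overline{M})| \le 1$ to an equality in the regular case.
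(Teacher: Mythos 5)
Your proof is correct. Note that the paper only states this theorem as a cited result of Brualdi, Manber and Ross and does not reproduce their argument, so there is no in-paper proof to compare against; judged on its own, your argument is complete. Row-regularity gives $M\mathbf{1}=k\mathbf{1}$ and $\overline{M}\mathbf{1}=(n-k)\mathbf{1}$, the hypotheses $k>0$ and $k<n$ are used exactly where needed to place $\mathbf{1}$ in $\mathrm{Col}(M)$ and in $\mathrm{Col}(\overline{M})$ respectively, and then $\overline{M}=J-M$ forces $\mathrm{Col}(M)=\mathrm{Col}(\overline{M})$, hence equal ranks. (In fact this route uses only row-regularity, so it proves slightly more than the stated theorem.) Your alternative via the invariant splitting $\R^n=\langle\mathbf{1}\rangle\oplus\mathbf{1}^{\perp}$ is also sound, with the small caveat that invariance of $\mathbf{1}^{\perp}$ under $M$ uses column-regularity ($M^{t}\mathbf{1}=k\mathbf{1}$), which the paper's definition of $k$-regular does supply; there $\overline{M}$ acts as $-M$ on $\mathbf{1}^{\perp}$ and as the nonzero scalar $n-k$ on $\langle\mathbf{1}\rangle$, giving $\Rreal(\overline{M})=1+\Rreal(M|_{\mathbf{1}^{\perp}})=\Rreal(M)$. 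Either way, your observation that this sharpens the paper's general bound $|\Rreal(M)-\Rreal(\overline{M})|\leq 1$ to an equality in the regular case is exactly the right framing.
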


Following their work, Pullman~\cite{Pullman88} asked in 1988 whether every regular matrix $M$ satisfies $\Rbin(M) = \Rbin(\overline{M})$,
and Hefner,  Henson,  Lundgren, and  Maybee  conjectured in~\cite{HefnerHLM90} that the claim is false for the binary rank.
Again, the regular matrix $C_n$ is an example for which this claim is false for the Boolean rank.

Inspired by the result of~\cite{BBGJK21}, Haviv an Parnas~\cite{haviv2023binary2}  proved the following regular analogue, thus, showing
that the conjecture of~\cite{HefnerHLM90} is correct. Their proof uses a query-to-communication lifting theorem in non-deterministic communication complexity,
and as achieved by~\cite{BBGJK21} this result  almost matches the upper bound of Yannakakis~\cite{Yannakakis91} stated above. They further
provide regular counterexamples to the Alon-Saks-Seymour conjecture, and show that
for infinitely many integers $d$, there exists a regular graph $G$ with $bp(G) = d$ and $\chi(G) = d^{\widetilde{\Omega}(\log d)}$.

\begin{theorem}[\cite{haviv2023binary2}]
\label{theo-regular-gap}
For infinitely many integers $d$, there exists a regular matrix $M$ with  $\Rbin(M) = d$ and  $\Rbool(\overline{M}) = d^{\widetilde{\Omega}(\log d)}$.
\end{theorem}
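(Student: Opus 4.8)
The plan is to reprove the gap of Theorem~\ref{theo-complement} by a query-to-communication lifting argument in the non-deterministic setting, but to control the construction so that the final communication matrix is \emph{regular} — in sharp contrast to the real rank, for which Theorem~\ref{theo-real-acomplement} says complementation costs nothing on regular matrices. The first step is to translate the target into communication language: using $\Rbin(M)=Partition_1(M)$, $\Rbool(\overline M)=Cover_1(\overline M)=Cover_0(M)$ and Theorem~\ref{theo-communication-partition}, a regular matrix with $\Rbin(M)=d$ and $\Rbool(\overline M)=d^{\widetilde\Omega(\log d)}$ is exactly a regular matrix whose unambiguous non-deterministic complexity $U(M)=\log_2 d$ is separated, up to poly-logarithmic factors, \emph{quadratically} from its co-non-deterministic complexity $N_0(M)$. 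On the query side the analogous object is a (partial) Boolean function $f$ whose $1$-inputs carry a small unambiguous certificate structure (equivalently, a small unambiguous DNF) while its $0$-inputs require a super-polynomially larger DNF cover; such an $f$, built recursively, underlies~\cite{BBGJK21}, and composing it with a suitable gadget $g$ — so that $M=f\circ g^n$ — yields the (non-regular) matrix of Theorem~\ref{theo-complement} once the non-deterministic lifting theorem is invoked to transfer both the small partition number of the ones and the large cover number of the zeros.

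The core of the proof is then to make $M=f\circ g^n$ regular. The composed matrix has a natural block decomposition indexed by the inputs of $f$, and inside each block the pattern of ones is dictated by the gadget $g$; choosing $g$ so that its own communication matrix is $k$-regular (for instance an inner-product gadget, or an index gadget padded to equalize row and column sums) reduces regularity of $M$ to a uniform-counting condition on the outer function $f$. When $f$ fails this condition, I would balance it by adjoining a constant number of dummy gadget coordinates and/or replacing $f$ by a direct-sum variant that pads the offending inputs with trivial blocks; these operations leave the query complexity of $f$ — hence, by the lifting theorem, the orders of $\Rbin(M)$ and $\Rbool(\overline M)$ — unchanged after an innocuous re-parametrization of $d$. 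The delicate point is that this modification must not weaken the \emph{lower} bound on $Cover_0(M)=\Rbool(\overline M)$, so rather than quoting the lifting theorem as a black box I would run its simulation argument against the padded gadget directly and check that the padded gadget is still ``hard'' in the precise sense the simulation requires. As a byproduct, the resulting regular matrix immediately produces, via the clique-versus-independent-set correspondence of~\cite{HuangS12,BousquetLT14}, a regular graph $G$ with $bp(G)\le d$ and $\chi(G)=d^{\widetilde\Omega(\log d)}$, i.e.\ a regular counterexample to the Alon--Saks--Seymour conjecture.

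The main obstacle is precisely this regularization step. The obvious symmetrization — passing from $M$ to the $2n\times 2n$ block matrix $\left(\begin{smallmatrix} M & \overline M\\ \overline M & M\end{smallmatrix}\right)$, which is automatically $n$-regular — destroys the separation: restricting any rectangle partition of its ones to the top-right block gives a partition of the ones of $\overline M$, so its binary rank is at least $\Rbin(\overline M)=d^{\widetilde\Omega(\log d)}$ rather than $d$. Hence regularity cannot be imposed after the fact; it must be engineered from inside the lifting construction, and the real work is to certify that the gadget padding forced by regularity leaves the lower-bound half of the lifting argument intact.
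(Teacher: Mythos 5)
Your high-level route is the one the survey attributes to~\cite{haviv2023binary2}: take the near-quadratic separation of~\cite{BBGJK21} between unambiguous $1$-certificates and $0$-covers on the query side, lift it via a query-to-communication lifting theorem in the non-deterministic setting, and arrange the construction so that the lifted matrix is regular. Your observation that post-hoc symmetrization via $\bigl(\begin{smallmatrix} M & \overline M\\ \overline M & M\end{smallmatrix}\bigr)$ destroys the separation (its ones restricted to an off-diagonal block partition the ones of $\overline M$, so its binary rank is already $d^{\widetilde\Omega(\log d)}$) is correct, and it is indeed why regularity must be engineered inside the lifting rather than imposed afterwards.

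As a proof, however, there is a genuine gap exactly at the step you yourself flag as ``the real work''. The entire content of the theorem beyond Theorem~\ref{theo-complement} is the regularity, and your plan for it is asserted rather than carried out: the gadgets for which the relevant lifting theorems are actually available do not have regular communication matrices (the index gadget is row- but not column-regular; the inner-product gadget has an all-zero row and column), and ``adjoin dummy coordinates / pad the gadget and re-run the simulation'' is precisely the unproved technical claim. Both halves must survive the modification: the unambiguous certificate structure of $f$ must still yield a partition of the ones of $M$ of size about $d$, and the $0$-cover lower bound must still lift to $Cover_0(M)=\Rbool(\overline M)\geq d^{\widetilde\Omega(\log d)}$; without this, you have only re-derived Theorem~\ref{theo-complement}, which is the non-regular statement. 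Two further points. First, your reduction of regularity to a ``uniform-counting condition on the outer function'' is not needed: if the gadget's communication matrix is regular, then for a \emph{total} outer function $f$ every row (and every column) of $f\circ g^n$ automatically has the same number of ones, because the number of $y$ with $g^n(x,y)=z$ depends only on the Hamming weight of $z$ and not on $x$; so padding $f$ buys nothing, and the difficulty is concentrated entirely in obtaining a lifting theorem that tolerates a regular gadget (plus handling the outer function if it is partial). Second, the statement asserts $\Rbin(M)=d$ exactly for infinitely many $d$, which requires a short normalization on top of the upper bound the lifting provides; this is minor, but it is not addressed.
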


As an immediate corollary we get a family of matrices with a quasi-polynomial gap between the real and the binary and Boolean rank.

\begin{corollary}
\label{coro-real-binary-gap}
For infinitely many integers $d$, there exists a regular matrix $A$ for which $\Rreal(A) \leq d$ and $\Rbool(A), \Rbin(A) = d^{\widetilde{\Omega}(\log d)}$.
\end{corollary}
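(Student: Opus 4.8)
The plan is to obtain $A$ directly as the complement of the regular matrix furnished by Theorem~\ref{theo-regular-gap}. Let $M$ be such a matrix, so that $M$ is $k$-regular of size $n \times n$ with $\Rbin(M) = d$ and $\Rbool(\overline{M}) = d^{\widetilde{\Omega}(\log d)}$; for all but finitely many $d$ we must have $0 < k < n$, since a $k$-regular matrix with $k = 0$ is all-zero and with $k = n$ is all-one. Set $A = \overline{M}$. The first step is the trivial observation that $A$ is again regular: the complement of a $k$-regular $0,1$ matrix of size $n \times n$ is $(n-k)$-regular of the same size, and $0 < n-k < n$.

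Next I would bound the real rank of $A$. Since $\Rreal(M) \leq \Rbin(M) = d$, it suffices to transfer this bound to $\overline{M}$. The naive route through the claim $|\Rreal(M) - \Rreal(\overline{M})| \leq 1$ only gives $\Rreal(A) \leq d+1$, and this is exactly the place where regularity earns its keep: because $A$ (equivalently $M$) is regular with $0 < k < n$, Theorem~\ref{theo-real-acomplement} of Brualdi, Manber and Ross yields $\Rreal(\overline{M}) = \Rreal(M)$, and hence $\Rreal(A) = \Rreal(M) \leq d$, as required. For the other two rank functions, the lower bound is immediate from the choice of $M$, namely $\Rbool(A) = \Rbool(\overline{M}) = d^{\widetilde{\Omega}(\log d)}$, and then $\Rbool(A) \leq \Rbin(A)$ gives $\Rbin(A) \geq d^{\widetilde{\Omega}(\log d)}$. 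To pin both quantities down rather than merely lower-bound them, I would invoke the discussion surrounding Theorem~\ref{theo-rec-reduction}: from $\Rbin(M) = d$ one has $\log \Rbin(\overline{M}) \leq D(\overline{M}) = D(M) \leq O(\log^2 \Rbin(M))$, so $\Rbin(A) = \Rbin(\overline{M}) \leq d^{O(\log d)}$. Thus both $\Rbool(A)$ and $\Rbin(A)$ lie between $d^{\widetilde{\Omega}(\log d)}$ and $d^{O(\log d)}$, and are therefore $d^{\widetilde{\Omega}(\log d)}$. Since Theorem~\ref{theo-regular-gap} provides such an $M$ for infinitely many $d$, the corollary follows.

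There is, frankly, no real obstacle here: the entire substance of the statement is already contained in Theorem~\ref{theo-regular-gap}, and the corollary is a short unpacking of it. The only point requiring a moment of care is that, in order to get the clean bound $\Rreal(A) \leq d$ rather than $\Rreal(A) \leq d+1$, one cannot merely use sub-additivity of the real rank but must exploit the regularity of the construction together with the Brualdi--Manber--Ross identity $\Rreal(M) = \Rreal(\overline{M})$; this is also the reason the corollary is phrased in terms of a \emph{regular} matrix $A$. A secondary subtlety is to remember the Yannakakis upper bound $\Rbin(\overline M) \leq d^{O(\log d)}$, which is what turns the easy lower bound on $\Rbin(A)$ into the asserted $d^{\widetilde{\Omega}(\log d)}$ equality.
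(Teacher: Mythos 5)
Your proposal is correct and follows essentially the same route as the paper's proof: take the regular matrix $M$ from Theorem~\ref{theo-regular-gap}, set $A = \overline{M}$, use the Brualdi--Manber--Ross identity (Theorem~\ref{theo-real-acomplement}) to get $\Rreal(A) = \Rreal(M) \leq \Rbin(M) = d$, and read off the lower bounds on $\Rbool(A)$ and $\Rbin(A)$ from $\Rbool(\overline{M}) = d^{\widetilde{\Omega}(\log d)}$. Your extra checks (that the complement is again regular with $0 < n-k < n$, and the Yannakakis upper bound $\Rbin(\overline{M}) \leq d^{O(\log d)}$) are harmless refinements the paper leaves implicit, since the $\widetilde{\Omega}$ notation only asserts a lower bound.
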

\begin{proof}
Let $M$ be the regular matrix guaranteed by Theorem~\ref{theo-regular-gap}. Therefore,  $\Rbin(M) = d$ and $\Rbool(\overline{M}) = d^{\widetilde{\Omega}(\log d)}$.
By Theorem~\ref{theo-real-acomplement}, $\Rreal(\overline{M}) = \Rreal(M) \leq \Rbin(M) = d$.
Set $A = \overline{M}$. The result follows since $\Rreal(A) \leq d$, whereas, $\Rbool(A), \Rbin(A) = d^{\widetilde{\Omega}(\log d)}$.
\end{proof}

The results of~\cite{Goos15, BBGJK21, haviv2023binary2} all use the query-to-communication lifting technique,
and show again the fruitful connections between communication complexity and the study of the binary and Boolean rank,
which required also to develop new lifting theorems in communication complexity.
Although this technique achieves the maximal gap possible in both the results of~\cite{ BBGJK21, haviv2023binary2},
it is worthwhile to find simpler combinatorial constructions which demonstrate a gap between the binary rank of a matrix and its complement,
as was done by~\cite{HuangS12,Amano14, ShigetaA15}.

\begin{figure}[htb!]
%\captionsetup{width=0.9\textwidth}
\begin{center}
$
M = \left(
  \begin{array}{cccc|cccc}
    0 & 0 & 1 & 1 & 0 & 0 & 0 & 0 \\
    1 & 0 & 0 & 1 & 0 & 0 & 0 & 0 \\
    1 & 1 & 0 & 0 & 0 & 0 & 0 & 0 \\
    0 & 1 & 1 & 0 & 0 & 0 & 0 & 0 \\\hline
    0 & 0 & 0 & 0 & 0 & 0 & 1 & 1 \\
    0 & 0 & 0 & 0 & 1 & 0 & 0 & 1 \\
    0 & 0 & 0 & 0 & 1 & 1 & 0 & 0 \\
    0 & 0 & 0 & 0 & 0 & 1 & 1 & 0 \\
\end{array}
\right)
\;\;\;\;\;\;\;\;\;\;
\overline{M} = \left(
  \begin{array}{cccc|cccc}
  \textcolor{blue}{\bf 1} & \textcolor{blue}{\bf 1} & 0 & 0 & \textcolor{blue}{\bf 1} & \textcolor{blue}{\bf 1} & \textcolor{orange}{\bf 1} & \textcolor{orange}{\bf 1} \\
    0 & \textcolor{red}{\bf 1} & \textcolor{red}{\bf 1} & 0 & \textcolor{brown}{\bf 1} & \textcolor{darkgray}{\bf 1} & \textcolor{darkgray}{\bf 1} & \textcolor{brown}{\bf 1} \\
    0 & 0 & \textcolor{green}{\bf 1} & \textcolor{green}{\bf 1} & \textcolor{brown}{\bf 1} & \textcolor{darkgray}{\bf 1} & \textcolor{darkgray}{\bf 1} & \textcolor{brown}{\bf 1} \\
    \textcolor{gray}{\bf 1} & 0 & 0 & \textcolor{gray}{\bf 1} & \textcolor{brown}{\bf 1} & \textcolor{darkgray}{\bf 1} & \textcolor{darkgray}{\bf 1} & \textcolor{brown}{\bf 1} \\\hline
    \textcolor{blue}{\bf 1} & \textcolor{blue}{\bf 1} & \textcolor{green}{\bf 1} & \textcolor{green}{\bf 1} & \textcolor{blue}{\bf 1} & \textcolor{blue}{\bf 1} & 0 & 0 \\
    \textcolor{gray}{\bf 1} & \textcolor{red}{\bf 1} & \textcolor{red}{\bf 1} & \textcolor{gray}{\bf 1} & 0 & \textcolor{darkgray}{\bf 1} & \textcolor{darkgray}{\bf 1} & 0 \\
    \textcolor{gray}{\bf 1} & \textcolor{red}{\bf 1} & \textcolor{red}{\bf 1} & \textcolor{gray}{\bf 1} & 0 & 0 & \textcolor{orange}{\bf 1} & \textcolor{orange}{\bf 1} \\
    \textcolor{gray}{\bf 1} & \textcolor{red}{\bf 1} & \textcolor{red}{\bf 1} & \textcolor{gray}{\bf 1} & \textcolor{brown}{\bf 1} & 0 & 0 & \textcolor{brown}{\bf 1} \\
   \end{array}
\right)
$
\end{center}
\caption{An example of a $2$-regular matrix $M$ with binary rank $8$, such that the binary rank of its complement $\overline{M}$ is at most $7$.
It is easy to verify that $\Rbin(M) = 8$, by noticing that each block on the main diagonal is in fact $D_{4,2}$.
As to the binary rank of $\overline{M}$, a partition of the ones into $7$ monochromatic rectangles is presented, where each rectangle is represented by a different color.  }
\label{fig:d-regular}	
\end{figure}

Such a combinatorial construction for regular matrices, although with a modest gap between $\Rbin(M)$ and $\Rbin(\overline{M})$, is given by Haviv and Parnas in~\cite{haviv2023binary}, who considered
the special case of block diagonal $0,1$ matrices, where each block is a circulant regular sub-matrix.
They present a general method for proving upper bounds on the complement of such circulant block matrices, and also prove matching lower bounds for
various families of matrices.
%See Figure~\ref{fig:d-regular} for an illustration of the proof of the following theorem.

\begin{theorem}[\cite{haviv2023binary}]
For every $k \geq 2$, there exist $k$-regular $0,1$ matrices $M$, such that $\Rbin(M) > \Rbin(\overline{M})$.
\end{theorem}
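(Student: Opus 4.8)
The plan is to build $M$ as a block-diagonal matrix whose diagonal blocks are $k$-regular circulant matrices of full binary rank, so that $\Rbin(M)$ is large for a transparent reason, and then to exploit the all-ones off-diagonal blocks of $\overline{M}$ to partition its ones into strictly fewer rectangles. Concretely, for $k \ge 2$ I would take $M = D_{2k,k} \oplus D_{2k,k}$, the $4k \times 4k$ block-diagonal matrix with two copies of $D_{2k,k}$ on the diagonal; since $D_{2k,k}$ has $2k-k = k$ ones in every row and column, $M$ is $k$-regular, and up to a permutation of columns this is the matrix of Figure~\ref{fig:d-regular} when $k=2$. First I would record that binary rank is additive over block-diagonal matrices: a monochromatic rectangle of ones cannot use a row of one diagonal block together with a column of the other, because the corresponding off-diagonal entry of $M$ is $0$, so any rectangle partition of the ones of $A \oplus B$ restricts to partitions of $A$ and of $B$; hence $\Rbin(A \oplus B) = \Rbin(A) + \Rbin(B)$. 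Combining this with the isolation-set lemma for $D_{n,k}$ when $k \ge \lfloor n/2 \rfloor$ (here $n = 2k$, so the $2k$ ones on the main diagonal of $D_{2k,k}$ form an isolation set and $\Rbin(D_{2k,k}) = 2k$) yields $\Rbin(M) = 4k$.

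It then remains to prove $\Rbin(\overline{M}) \le 4k-1$. Writing $J$ for the $2k \times 2k$ all-ones matrix and $E := \overline{D_{2k,k}}$ — which is again a $k$-regular circulant, being $D_{2k,k}$ up to a cyclic shift of columns — we have $\overline{M} = \begin{pmatrix} E & J \\ J & E \end{pmatrix}$. The key point is that the two off-diagonal blocks are all-ones, so a monochromatic rectangle of ones in $\overline{M}$ may freely combine rows and columns from the top-left block with rows and columns from the bottom-right block, provided only that its restriction to each diagonal block lands inside the ones of $E$. A first consequence is a clean partition into exactly $4k$ rectangles: take a column-by-column partition of the ones of the top-left copy of $E$ into $2k$ thin rectangles and extend each of them downward through the entire (all-ones) bottom-left block, and symmetrically extend a column partition of the bottom-right copy of $E$ upward through the top-right block; these $2k+2k$ rectangles are pairwise disjoint and cover every one of $\overline{M}$. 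To save the last rectangle one reorganizes a bounded number of these pieces, using the explicit cyclic-interval structure of the rows of $E$ to merge (or re-cut) two of them into one without creating an overlap; this is precisely the general method of Haviv and Parnas~\cite{haviv2023binary} for bounding the binary rank of the complement of a circulant block matrix, and for $k=2$ it produces the explicit $7$-rectangle partition shown in Figure~\ref{fig:d-regular}. Putting the two halves together gives $\Rbin(\overline{M}) \le 4k-1 < 4k = \Rbin(M)$.

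\textbf{Main obstacle.} The lower bound $\Rbin(M) = 4k$ is routine: block-diagonal additivity together with the main-diagonal isolation set pins it down exactly, which also rules out any cheaper cover of $M$ itself. The difficulty is entirely on the side of $\overline{M}$. Covering its ones cheaply is trivial since each $J$-block is a single rectangle, but what is needed is an honest \emph{partition} with strictly fewer than $4k$ parts, so every one must be covered exactly once; the threading and merging that pushes the count below $4k$ forces one to control how the extended rectangles interact inside the two all-ones blocks, and it is here that the precise circulant structure of $D_{2k,k}$ and of its complement has to be used. Establishing disjointness of the final family — i.e. that the clever reorganization which shaves the $+1$ really keeps the partition valid — is the crux of the argument.
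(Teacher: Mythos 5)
You follow the paper's proof exactly: the same matrix $M$ with two diagonal blocks $D_{2k,k}$, the same lower bound $\Rbin(M)=2\cdot\Rbin(D_{2k,k})=4k$, and the same target $\Rbin(\overline{M})\le 4k-1$. Your lower-bound half is correct and in fact more carefully argued than the paper's one line: block-diagonal additivity holds because a monochromatic rectangle of ones cannot mix a row of one block with a column of the other (the corresponding off-diagonal entry of $M$ is $0$), and the main-diagonal isolation set gives $\Rbin(D_{2k,k})=2k$.

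The genuine gap is the upper bound, which is the substance of the theorem: you never exhibit a $(4k-1)$-rectangle partition of $\overline{M}$ for general $k$, and you say so yourself. The paper is admittedly in a similar position --- it states that the bound ``can be verified'', displays the $k=2$ partition in Figure~\ref{fig:d-regular}, and defers the general circulant-block argument to~\cite{haviv2023binary2} --- but one point in your sketch needs correcting: the known partition is not obtained by locally merging or re-cutting two strips of your naive $4k$-strip partition. In the $k=2$ figure the saving comes from a rectangle spanning all four blocks at once (rows $\{1,5\}$, columns $\{1,2,5,6\}$), which pairs a row of the top copy of $\overline{D_{2k,k}}$ with a row of the bottom copy whose intervals of ones line up, after which the remaining ones are regrouped by column intervals; the family is rearranged globally, not perturbed locally. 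A self-contained proof therefore still requires writing down such a cross-block pairing for general $k$ and verifying disjointness --- exactly the step you identify as the crux but do not carry out.
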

\begin{proof}
The full proof for general circulant block matrices can be found in~\cite{haviv2023binary2},
 but the basic idea is as follows. Consider the matrix $M$ which has two blocks of $D_{2k,k}$ on the main diagonal, and zeros elsewhere.
Then $\Rbin(M) = 2 \cdot \Rbin(D_{2k,k}) = 4k$. On the other hand, the matrix $\overline{M}$ has two blocks of $D_{2k,k}$ on the main diagonal,
up to a permutation of rows and columns, and ones elsewhere.
It can be verified that $\Rbin(\overline{M}) \leq 4k-1$. See Figure~\ref{fig:d-regular} for an illustration.
\end{proof}

\section{Algorithmic results}
\label{Sec-algorithms}

As computing the Boolean and binary rank is $NP$-hard, it is necessary to find relaxations for these problems.
This section presents algorithmic directions taken for computing or approximating the Boolean or binary rank,
such as parameterized algorithms, approximation algorithms, property testing and the approximate Boolean factorization problem.

\subsection{Parameterized algorithms}
\label{SubSec-parameter}

The field of parameterized complexity focuses on determining the complexity of a problem according to the specified parameters of the inputs to the problem.
In particular, this approach tries to find  algorithms which  solve the problem efficiently in some of the parameters, while having an exponential dependence on other parameters.
This approach is especially useful for $NP$-hard problems that have some fixed parameter which is a part of the problem.
For example, the input of the vertex coloring problem includes a graph $G$ with $n$ vertices and $m$ edges, and some parameter $k$,
and the goal is to decide if the chromatic number of the graph is at most $k$. Thus, the parameters are $n,m,k$.
In the context of this survey, the relevant parameters of the rank problem are the size $n \times m$ of the matrix $M$ and the rank parameter $d$,
and the goal is to decide if the rank of $M$ is at most $d$ for a given rank function.

A parameterized problem is called {\em fixed-parameter tractable} with respect to a parameter $k$ if there exists an algorithm that solves it
in time $n^c \cdot f(k)$ on an input of size $n$, where $c$ is an absolute constant and $f$ is a computable function.
%In our case a parameterized algorithm will look for a solution for the Boolean or binary rank which runs in time $f(k)poly(n,m)$.
This allows to solve the problem efficiently in polynomial time for a {\em fixed} $k$, even if $f(k)$ is exponential in $k$.
See Downey and Fellows~\cite{downey1995fixed} for a detailed description of parameterized complexity.

As described in Section~\ref{Sec-kernel}, {\em kernelization} is a central technique in parameterized algorithms,
where the goal is to extract a small {\em kernel} of the original problem whose size depends only on a chosen parameter $k$ of the problem,
so that solving the problem on this reduced instance yields a solution to the full problem.
The total complexity of the algorithm depends on the time required to find the small kernel, in addition to the time of solving the problem for the small kernel.
The steps required to find the kernel are called {\em kernelization rules}.

Fleischner,  Mujuni, Paulusma, and Szeider~\cite{fleischner2009covering} suggested the set of kernelization rules described in section~\ref{Sec-kernel},
which construct a kernel suitable for solving the minimum biclique edge partition or cover problem for a bipartite graph.
Nor et al.~\cite{nor2012mod} showed how to use this kernel to get fixed-parameter algorithms for these problems.

For simplicity we consider here the equivalent formulation of finding the binary or Boolean rank of a $0,1$ matrix.
Thus, the kernel of $M$ described by~\cite{fleischner2009covering}  is a sub-matrix of $M$ achieved by removing all-zero rows and columns and removing duplicate rows and columns.
The result of~\cite{nor2012mod} shows how to use this kernel
to get a parameterized algorithm for the binary and Boolean rank with $f(d) = O(2^{2^{2d} \cdot \log d})$.

Basically, the algorithm of~\cite{nor2012mod} goes over all possible partitions or covers of the ones in the kernel into monochromatic rectangles, depending on the rank in question.
By Lemma~\ref{lem:binary_size of matrix}, the resulting kernel has at most  $2^{d}$ rows and at most $2^d$ columns, for a $0,1$ matrix with binary or Boolean rank at most $d$.
Thus, the number of ones in the kernel is bounded by $2^d \cdot 2^d = 4^d$, and this is what determines the upper bound stated on $f(d)$.

As we describe next, it is possible to slightly improve the complexity of this simple algorithm for the binary rank by using Lemma~\ref{lem-product-rows-cols}
which gives an upper bound on the product of the number of rows and  columns of the kernel.

\begin{lemma}
Let $M$ be a $0,1$ matrix of size $n \times m$. It is possible to check if the ones of $M$ can be partitioned into $d$ monochromatic rectangles
in time complexity  $O(2^{2^{d} \cdot d\log d}) + poly(n,m)$.
\end{lemma}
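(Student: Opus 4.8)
The plan is to combine the kernelization rules of Section~\ref{Sec-kernel} with an exhaustive search over partitions, where the point is to use Lemma~\ref{lem-product-rows-cols} to keep the kernel, and hence the search, small. First I would kernelize $M$: delete all-zero rows and columns, then delete duplicate rows and columns, keeping one representative of each distinct row and of each distinct column. As observed in Section~\ref{Sec-kernel}, this takes $poly(n,m)$ time and leaves the binary rank unchanged, so it suffices to decide the question for the resulting kernel $M'$; write $R$ and $C$ for its numbers of distinct rows and columns, so that $M'$ is an $R\times C$ matrix. If $R\cdot C>(d+1)\cdot 2^d$, then $\Rbin(M)=\Rbin(M')>d$ by Lemma~\ref{lem-product-rows-cols}, and we answer ``no''.

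Otherwise $M'$ has at most $R\cdot C\le(d+1)\cdot 2^d$ one-entries, and I would simply enumerate all functions $\phi$ from the set of one-entries of $M'$ to $\{1,\dots,d\}$; there are at most $d^{(d+1)2^d}$ of them. For a fixed $\phi$ and each class $t\in\{1,\dots,d\}$, let $S_t=\phi^{-1}(t)$, let $A_t$ be the set of rows and $B_t$ the set of columns that occur among the entries of $S_t$, and test whether $A_t\times B_t\subseteq S_t$ and whether $M'_{a,b}=1$ for every $(a,b)\in A_t\times B_t$. If some $\phi$ passes this test for every $t$, then the rectangles $A_t\times B_t$ over the non-empty classes form a partition of the ones of $M'$ into at most $d$ monochromatic rectangles, so $\Rbin(M)\le d$ and we answer ``yes''; if no $\phi$ passes, no such partition exists and we answer ``no''. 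Completeness is clear: a partition of the ones of $M'$ into at most $d$ monochromatic rectangles is realized by the function sending each one-entry to the index of its rectangle. If one insists on exactly $d$ rectangles, the same enumeration suffices once one notes that any monochromatic rectangle with at least two cells can be split into two smaller monochromatic rectangles.

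For the running time, kernelization costs $poly(n,m)$; the number of functions $\phi$ is at most $d^{(d+1)2^d}=2^{(d+1)2^d\log_2 d}=2^{O(2^d d\log d)}$; and processing a single $\phi$ only touches the kernel, whose number of cells $R\cdot C$ is $(d+1)2^d=2^{O(d)}$, so each test costs $2^{O(d)}$ time, which is absorbed into the exponent. This yields the claimed bound $O(2^{2^d d\log d})+poly(n,m)$.

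There is no real obstacle here: the entire improvement over the straightforward $O(2^{2^{2d}\log d})$-time algorithm --- which bounds the number of kernel ones by $2^d\cdot 2^d$ using only Lemma~\ref{lem:binary_size of matrix} --- comes from replacing that bound with $(d+1)2^d$ via Lemma~\ref{lem-product-rows-cols}. The only points that need a little care are checking that enumerating assignments of the one-entries to $d$ classes --- rather than enumerating $d$-tuples of combinatorial rectangles --- is both complete and verifiable in $2^{O(d)}$ time per assignment, and confirming that this per-assignment overhead does not worsen the stated exponent.
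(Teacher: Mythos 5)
Your proposal is correct and follows essentially the same route as the paper's proof: kernelize in $poly(n,m)$ time, reject if the kernel violates the $(d+1)2^d$ bound of Lemma~\ref{lem-product-rows-cols}, and otherwise brute-force over all ways of assigning the at most $(d+1)2^d$ one-entries to $d$ classes, checking each class for being a monochromatic rectangle. The only cosmetic difference is that you enumerate colorings of the ones by $\{1,\dots,d\}$ while the paper counts set partitions into $d$ non-empty parts via Stirling numbers; both give the same asymptotic bound (up to the paper's own slight looseness in the exponent), and your handling of the ``at most $d$'' versus ``exactly $d$'' reading is fine.
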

\begin{proof}
Let $M'$ be the kernel of $M$ achieved by removing all zero rows and columns and removing duplicate rows and columns of $M$.
By sorting the rows of $M$ and then the columns and removing duplicate rows and columns, the kernel can be constructed in time $O((n^2 + m^2)\log n)$.

By Lemma~\ref{lem-product-rows-cols}, if $\Rbin(M) \leq d$ then the product of the number of rows and columns of $M'$ is at most $(d+1)2^d$,
and so the number of ones of $M'$ is at most $(d+1)2^d$. If this product is larger, we can already determine that $\Rbin(M) > d$.

Now, using a brute force algorithm, go over all partitions  $S_1,...,S_d$ of the ones of $M'$ into $d$ non-empty disjoint subsets.
For each partition, check if each one of the subsets $S_i$ is a monochromatic rectangle of ones.
If a partition is found for which all $d$ subsets are monochromatic rectangles of ones, then the algorithm can accept, and otherwise it rejects.

As for the complexity of this simple algorithm:
The number of possible partitions of at most $(d+1)2^d$ ones into $d$ non-empty disjoint sets is the Stirling number $S((d+1)2^d,d)$ of the second kind and it holds:
$$S((d+1)2^d,d) = O\left(\frac{d^{(d+1)2^d}}{d!}\right) = O\left(\frac{2^{2^{d} \cdot d\log d}}{d!}\right).$$
Checking for each $d$ subsets $S_1,...,S_d$, if they are a partition of the ones of $M$ into $d$ monochromatic rectangles, takes a total of $O((d+1)2^d)$ for each partition, and thus, the total complexity is:
$$O\left((d+1)\cdot 2^d \cdot \frac{2^{2^{d} \cdot d\log d}}{d!}\right) = O(2^{2^{d} \cdot d\log d})$$
as claimed.
\end{proof}

Nor et al.~\cite{nor2012mod} also describe an algorithm for the Boolean rank for which $f(d) =  O(2^{d2^{d-1} + 3d})$.
Chandran, Issac and  Karrenbauer~\cite{chandran2017parameterized} give an improved algorithm for the binary rank for which $f(d) = O^*(2^{2d^2 + d \log d + d})$.
Moreover,~\cite{chandran2017parameterized} prove that there is no algorithm for the Boolean rank with less than a double exponential complexity in $d$
unless the Exponential Time Hypothesis (ETH) is false, thus, matching up to polynomial factors the algorithm given by~\cite{nor2012mod}.

\subsection{Approximation Algorithms}
\label{SubSec-approx}

Approximation algorithms are a common and useful approach in computer science for finding  an approximate solution to optimization problems which are $NP$-hard.
The quality of an approximation algorithm is determined, among other things, by its {\em approximation ratio} $C$, which guarantees that the size of the solution found by the algorithm should by at most $C$ times larger or smaller than the size of the optimal solution, depending on the problem being a minimization or a maximization problem.

Both the binary and Boolean rank problems are minimization problems, and both are $NP$-hard.
Thus, an approximation algorithm for the binary rank of a matrix $M$ should output some approximation $Approx$ such that
$$
\Rbin(M) \leq Approx \leq C \cdot \Rbin(M)
$$
where $C \geq 1$ is the approximation ratio of the algorithm. A similar definition holds for an Approximation algorithm for the Boolean rank.

Unfortunately, we cannot hope for a small approximation ratio in general, since Chalermsook,  Heydrich,  Holm, and  Karrenbauer~\cite{chalermsook2014nearly} showed that it is hard to approximate the Boolean rank of a $0,1$ matrix $M$ of size $n \times n$  to within a
factor of $n^{1 - \epsilon}$ for any given $\epsilon > 0$.
However, we will show that in some cases a better approximation is possible.

Chandran, Issac and  Karrenbauer~\cite{chandran2017parameterized} give a simple polynomial approximation algorithm
for the minimum biclique edge cover or partition of  a bipartite graph.
For simplicity, we present their algorithm for the equivalent formulation of the Boolean and binary rank,
and show that this algorithm achieves an approximation ratio of $n/\log n$ for a matrix $M$ of size $n \times n$.

The algorithm first finds the kernel of $M$ as described in Section~\ref{Sec-kernel}, that is, removes all-zero and duplicate rows and columns of $M$.
Let $M'$ be the resulting kernel of $M$.
Now, the algorithm covers the ones in each row of the kernel by a separate rectangle
(this corresponds to partitioning the edges into stars originating from all vertices on the left side in the corresponding kernel of the bipartite graph represented by $M'$).

By Lemma~\ref{lem:binary_size of matrix} the number of distinct rows of $M$ is at most $2^{\Rbin(M)}$, and so for any $n \geq 3$,
the size of the resulting partition into monochromatic rectangles is at most:
$$
\min\{n,2^{\Rbin(M)}\} = \frac{\min\{n,2^{\Rbin(M)}\} }{\Rbin(M)} \cdot \Rbin(M) \leq \frac{n }{\log n} \cdot \Rbin(M).
$$
A similar bound holds for the Boolean rank, since any partition of the ones is also a cover of the ones of $M$.

The following algorithm of Haviv~\cite{Haviv25} from 2025 improves this approximation ratio for the binary rank, while
exploiting the connection between the binary rank and the deterministic communication complexity.
Recall that Sudakov and Tomon~\cite{sudakov2023matrix} proved that there exists a constant $c > 1$ such that $D(M) \leq c\sqrt{\Rreal(M)}$,
where $D(M)$ is the deterministic communication complexity of $M$, and it always holds that $\log_2 \Rbin(M) \leq D(M)$.

\bigskip
\begin{center}
\fbox{
\begin{minipage}{4.5in}
\leftline{\bf Algorithm Approximate binary rank ($M$ of size $n \times n$, constant $c$)}
%\label{alg:binary approx}
\begin{enumerate}
\item
Compute $\Rreal(M)$.
\item
If $\Rreal(M) \leq log_2^2 n /c^4$, output $Approx = n^{1/c}$.
\item
Else output $Approx = n$.
\end{enumerate}
\end{minipage}
}
\end{center}
\bigskip

\begin{theorem}[\cite{Haviv25}]
Algorithm "Approximate binary rank" is an approximation algorithm for the binary rank with an approximation ratio of $O(n/\log_2^2 n)$, for any $0,1$ matrix $M$ of size $n \times n$.
\end{theorem}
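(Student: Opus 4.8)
The plan is to check directly the two defining requirements of an approximation algorithm: that the output $Approx$ always satisfies $\Rbin(M)\le Approx$, and that $Approx \le O(n/\log_2^2 n)\cdot \Rbin(M)$. I would split the analysis according to which branch of the algorithm is taken, i.e.\ whether $\Rreal(M)\le \log_2^2 n/c^4$ or not, and use two ingredients already recalled in the excerpt: the Sudakov--Tomon bound $D(M)\le c\sqrt{\Rreal(M)}$~\cite{sudakov2023matrix}, and the inequality $\log_2\Rbin(M)\le D(M)$, which follows from $\Rbin(M)=Partition_1(M)$ together with the first item of Theorem~\ref{theo-communication-partition}.

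Consider first the branch $\Rreal(M)\le \log_2^2 n/c^4$, where $Approx=n^{1/c}$. Combining the two facts above with the case hypothesis gives $\log_2\Rbin(M)\le D(M)\le c\sqrt{\Rreal(M)}\le c\cdot\sqrt{\log_2^2 n/c^4}=\log_2 n/c$, hence $\Rbin(M)\le 2^{\log_2 n/c}=n^{1/c}=Approx$, so the output is a valid upper bound. For the ratio I would simply note that $Approx=n^{1/c}=O(n/\log_2^2 n)$, since $\frac{n^{1/c}\log_2^2 n}{n}=\frac{\log_2^2 n}{n^{1-1/c}}\to 0$ as $n\to\infty$ because $c>1$; as $\Rbin(M)\ge 1$, this already yields $Approx\le O(n/\log_2^2 n)\cdot\Rbin(M)$.

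Next I would handle the branch $\Rreal(M)> \log_2^2 n/c^4$, where $Approx=n$. Validity is immediate, since the ones of each row of $M$ form a monochromatic rectangle (a $1\times k$ all-ones block) and these at most $n$ rectangles partition the ones of $M$, so $\Rbin(M)=Partition_1(M)\le n=Approx$. For the ratio, use $\Rbin(M)\ge\Rreal(M)> \log_2^2 n/c^4$, which gives $n=\frac{n}{\log_2^2 n}\cdot\log_2^2 n < c^4\cdot\frac{n}{\log_2^2 n}\cdot\Rreal(M)\le c^4\cdot\frac{n}{\log_2^2 n}\cdot\Rbin(M)=O(n/\log_2^2 n)\cdot\Rbin(M)$, as desired. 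Together the two cases give the claimed approximation ratio; moreover the algorithm runs in polynomial time, since $\Rreal(M)$ is computed by Gaussian elimination and the remaining steps are trivial.

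I do not anticipate a genuine obstacle here: the whole proof is a short two-case estimate, and the work lies entirely in the two cited external inputs, not in the argument itself. The only points requiring care are keeping the exponent bookkeeping with the constant $c$ consistent (in particular that $c>1$ is exactly what makes $n^{1/c}=O(n/\log_2^2 n)$), and observing that the asymptotic estimates hold for all sufficiently large $n$ — the finitely many small values, and the degenerate zero matrix, being absorbed into the hidden constant — so one should read the statement for $n$ large enough, or with a suitably large constant.
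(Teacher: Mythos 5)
Your proposal is correct and follows essentially the same route as the paper's own proof: the same two-case split on whether $\Rreal(M)\le \log_2^2 n/c^4$, the same chain $\log_2\Rbin(M)\le D(M)\le c\sqrt{\Rreal(M)}$ in the first case, and the same use of $\Rbin(M)\ge\Rreal(M)$ in the second. Your explicit remarks that $c>1$ is what makes $n^{1/c}=O(n/\log_2^2 n)$ and that $\Rbin(M)\le n$ trivially (rows as rectangles) are just slightly more detailed versions of steps the paper leaves implicit.
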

\begin{proof}
Let $c$ be the constant guaranteed by the result of Sudakov and Tomon~\cite{sudakov2023matrix}.  Therefore, by the discussion above:
\begin{enumerate}
\item
If $\Rreal(M) \leq \log_2^2 n /c^4$:
$$\Rbin(M) \leq 2^{D(M)}\leq 2^{c\sqrt{\Rreal(M)}}\leq 2^{c\sqrt{ \log_2^2 n /c^4 }} = n^{1/c}$$
and thus, the approximation  ratio in this case is:
$$
Approx =  n^{1/c}  \leq O\left(\frac{ n}{\log_2^2 n}\right) \leq O\left(\frac{ n}{\log_2^2 n}\right) \cdot \Rbin(M)
 $$
\item
If $\Rreal(M) > \log_2^2 n /c^4 $: then $\Rbin(M) \geq \Rreal(M) > \log_2^2 n /c^4 $,
and therefore:
$$ Approx = n = \frac{c^4 \cdot n}{ \log_2^2 n}\cdot \frac{ \log_2^2 n}{ c^4} \leq  \frac{c^4 \cdot n}{ \log_2^2 n} \cdot \Rbin(M) = O\left(\frac{ n}{\log_2^2 n}\right) \Rbin(M)$$
\end{enumerate}
It is also easy to verify that in both cases $Approx \geq \Rbin(M)$.
Thus, in both cases we get an approximation ration of $O(n/\log_2^2 n)$ as claimed.
\end{proof}
The result does not go through for the Boolean rank since the Boolean rank can be smaller than the real rank.
However, a similar result holds for the non-negative rank as the non-negative rank is bounded between the real and the binary rank.
Note also that the approximation algorithm of~\cite{Haviv25} for the binary rank only finds an approximation for the binary rank with a guaranteed approximation ratio of $O(n/log^2 n)$,
and does not give an actual partition of the ones of the matrix into monochromatic rectangles,
whereas the algorithm of~\cite{chandran2017parameterized} gives an actual approximate cover or partition of the ones, although a very simple one.

Using the reduction described in Section~\ref{Sec-complexity} from the Boolean rank to the vertex coloring problem,
it is possible to use the known approximation algorithms for vertex coloring to get an approximation for the Boolean rank.
This connection was noticed by Chalermsook,  Heydrich1,  Holm and  Karrenbauer~\cite{chalermsook2014nearly} who stated the following result, using a result of
Halld{\'o}rsson~\cite{halldorsson1993still} who described a polynomial time approximation algorithm for the chromatic number with an approximation ratio of $O(n(\log\log n)^2/(\log n)^3)$.

\begin{corollary}
There is an approximation algorithm for the Boolean rank of a matrix $M$ with an approximation ratio of $O(s(\log\log s)^2/(\log s)^3)$, where $s$ is the number of ones in the matrix $M$.
\end{corollary}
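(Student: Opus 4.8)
The plan is to reduce the Boolean rank problem to the vertex coloring problem via the conflict graph of Lemma~\ref{lem-boolean-color}, and then to invoke an off-the-shelf approximation algorithm for the chromatic number, as was noticed in~\cite{chalermsook2014nearly}. First I would construct the conflict graph $G_M$ exactly as in the proof of Lemma~\ref{lem-boolean-color}: it has one vertex $x_{i,j}$ for every entry with $M_{i,j}=1$, so it has precisely $s$ vertices, and two vertices are joined by an edge whenever the corresponding ones cannot lie in a common monochromatic rectangle of ones, i.e. when $M_{i,\ell}=0$ or $M_{k,j}=0$. This graph can clearly be built in time polynomial in the size of $M$, and by Lemma~\ref{lem-boolean-color} it satisfies $\chi(G_M)=\Rbool(M)$ exactly.

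Next I would run the polynomial-time approximation algorithm of Halld\'orsson~\cite{halldorsson1993still} on $G_M$, which on an $n$-vertex graph $G$ produces a proper coloring using at most $O\!\left(\frac{n(\log\log n)^2}{(\log n)^3}\right)\cdot\chi(G)$ colors. Applied to $G_M$, which has $n=s$ vertices, this yields a proper coloring of $G_M$ using at most $O\!\left(\frac{s(\log\log s)^2}{(\log s)^3}\right)\cdot\Rbool(M)$ colors.

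Finally I would translate this coloring back into a cover of the ones of $M$ by monochromatic rectangles, using verbatim the second half of the proof of Lemma~\ref{lem-boolean-color}: for each color class $t$ one first places in $R_t$ all one-entries of color $t$, then closes $R_t$ under the sub-matrix induced by those entries; the argument there shows that each resulting $R_t$ is a monochromatic rectangle of ones, and the number of rectangles produced equals the number of colors used. Hence the algorithm outputs a valid rectangle cover of the ones of $M$, so its size is at least $\Rbool(M)$, and by the previous step it is at most $O\!\left(\frac{s(\log\log s)^2}{(\log s)^3}\right)\cdot\Rbool(M)$, which gives the claimed approximation ratio.

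The main point to get right is simply that the reduction of Lemma~\ref{lem-boolean-color} is \emph{exact} (not merely approximate) and computable in polynomial time, so that the chromatic-number approximation guarantee transfers with $n$ replaced by the vertex count $s$ of $G_M$; once this is noted there is essentially nothing left to do, since both directions of the reduction preserve the number of colors/rectangles. The only mild subtlety is to check that the back-translation does not inflate the number of rectangles beyond the number of color classes, and the construction in Lemma~\ref{lem-boolean-color} already handles this by permitting an entry to lie in several rectangles while still assigning exactly one rectangle per color.
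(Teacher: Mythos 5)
Your proposal is correct and follows exactly the route the paper intends: build the conflict graph $G_M$ of Lemma~\ref{lem-boolean-color} (with one vertex per one-entry, hence $s$ vertices and $\chi(G_M)=\Rbool(M)$), run Halld\'orsson's chromatic-number approximation, and translate the coloring back into a rectangle cover. The paper gives no further proof beyond citing this reduction and Halld\'orsson's ratio, so there is nothing to add.
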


This result is better for sparse matrices $M$ than the simple approximation ratio of $n /\log n$ described above.
Note also, that if each column (row) of $M$ has at most $d$ zeros, then by Theorem~\ref{theo:Alon},
the Boolean rank of $M$ is at most $c(d) \log n$ for some constant $c(d)$ which depends on $d$.
Thus, in this case it is possible to simply approximate the Boolean rank by $c(d) \log n$.
Another simple approximation algorithm for the Boolean rank is given in the next theorem proved by Miettinen~\cite{miettinen2010sparse}, using a reduction to the set cover problem.
Here, in order to get polynomial time complexity,  the requirement is that  each column of the matrix has  at most $\log n$ ones.
We give a slightly different proof than that of~\cite{miettinen2010sparse} by using the alternative formulation of covering the ones of the matrix by monochromatic rectangles.

\begin{theorem}[\cite{miettinen2010sparse}]
Let $M$ be a $0,1$ matrix of size $n \times m$ with at most $\log n$ ones in each column.
It is possible to approximate the Boolean rank of $M$ up to a ratio of $O(\log m + \log \log n)$ in polynomial time in $n,m$.
\end{theorem}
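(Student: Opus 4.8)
The plan is to recast the problem as an instance of \textsc{Set Cover} of polynomial size and apply the greedy algorithm, using the sparsity hypothesis only to control the number of candidate sets. Recall from Section~\ref{Sec:equivalent} that $\Rbool(M) = Cover_1(M)$, the minimum number of all-ones combinatorial rectangles needed to cover the ones of $M$. For a column $j$ write $S_j \subseteq \{1,\dots,n\}$ for the set of rows where column $j$ has a one, so $|S_j| \le \log n$ for every $j$ by hypothesis. Any all-ones rectangle $R \times C$ satisfies $R \subseteq S_j$ for every $j \in C$, and it is contained in a \emph{maximal} all-ones rectangle; a maximal rectangle $R^{\ast} \times C^{\ast}$ necessarily has $C^{\ast} = \{ j : R^{\ast} \subseteq S_j \}$ and $R^{\ast} = \bigcap_{j \in C^{\ast}} S_j$. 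Since every rectangle in a cover can be replaced by a maximal rectangle containing it without increasing the cover size, the minimum number of \emph{maximal} all-ones rectangles covering the ones of $M$ is exactly $\Rbool(M)$.

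First I would enumerate, in time $\mathrm{poly}(n,m)$, a polynomial-size list of all-ones rectangles that contains every maximal one. By the characterization above, if $R^{\ast} \times C^{\ast}$ is maximal then $R^{\ast}$ is a subset of $S_j$ for any $j \in C^{\ast}$. So it suffices, for each of the $m$ columns $j$ and each of the at most $2^{|S_j|} \le 2^{\log n} = n$ subsets $R \subseteq S_j$, to record the all-ones rectangle $R \times \{ j' : R \subseteq S_{j'} \}$. This produces at most $nm$ rectangles, every maximal rectangle appears among them, and the computation is clearly polynomial in $n$ and $m$.

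Next I would build the \textsc{Set Cover} instance whose universe $U$ is the set of one-entries of $M$ --- of size at most $m \log n$, since each column contributes at most $\log n$ ones --- and whose family of sets consists of the enumerated rectangles, each identified with the set of one-entries it covers. By the first paragraph the optimum of this instance equals $\Rbool(M)$. Running the classical greedy algorithm for \textsc{Set Cover} then returns, in time polynomial in the instance size and hence in $\mathrm{poly}(n,m)$, a cover of size at most $(1 + \ln |U|) \cdot \Rbool(M)$. Since $|U| \le m \log n$, the approximation ratio is $1 + \ln(m \log n) = O(\log m + \log \log n)$, as claimed; moreover the rectangles returned yield an explicit Boolean decomposition of this size.

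The only real obstacle, and the single point where the hypothesis is essential, is the polynomial bound on the number of maximal all-ones rectangles: without the constraint $|S_j| \le \log n$ there can be exponentially many of them, and both the enumeration step and the greedy set-cover step would fail to run in polynomial time. The remaining points are routine: verifying the stated characterization of maximal rectangles, checking that passing to maximal rectangles does not change the optimal cover size, and recalling that greedy \textsc{Set Cover} runs in time polynomial in the number of sets times the universe size, which here is $\mathrm{poly}(n,m)$.
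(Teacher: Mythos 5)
Your proposal is correct and follows essentially the same route as the paper: reduce to \textsc{Set Cover} over the at most $m\log n$ one-entries, bound the number of candidate (maximal) rectangles by $n\cdot m$ using the fact that any relevant row set is a subset of some column's support of size at most $\log n$, and invoke the greedy $O(\log |U|)$ approximation. The paper's proof differs only in presentation (it phrases the enumeration as the rectangles the greedy algorithm must consider per iteration rather than as an explicit precomputed list), so there is nothing substantive to add.
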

\begin{proof}
We reduce the problem of approximating the Boolean rank to a set covering problem as follows. The ones of $M$ are the elements of the universe $U$ that we want to cover.
By our assumption on the number of ones in each column, the total number of ones is at most $|U| \leq m \log n$. The sets used to cover the ones are monochromatic rectangles of ones in $M$.
The classic greedy approximation algorithm of Johnson~\cite{johnson1973approximation} for the set cover problem selects in each step a set (rectangle) which covers the maximal number of uncovered elements (ones) so far.
This greedy algorithm has an approximation ratio of $O(\log |U|) = O(\log (m \log n)) = O(\log m + \log \log n)$.

We next show that the number of rectangles considered by the algorithm in each iteration is at most $n \cdot m$, and thus, the running time is polynomial in $n,m$,
as checking which rectangle covers the most uncovered ones can be done also in polynomial time in $n,m$.
Indeed, for any given subset $R$ of rows of $M$, we can consider the maximal rectangle defined by the rows in $R$ which includes {\em all} columns that have only ones in these rows.
Moreover, it is enough to consider only subsets $R$ of rows, such that there exists a column $j$ which has only ones in the rows of $R$.
Otherwise, if there is no such column $j$ for a specific subset of rows $R$, then the monochromatic rectangle defined by these rows in empty.
Hence, the number of possible maximal rectangles is bounded by the number of possible ways to select such a subset $R$ of rows. But since there are at most $\log n$ ones in each column,
there are at most $2^{\log n}$ ways to select a subset of rows which has only ones in column $j$, and therefore, going over all columns, that are at most $m \cdot 2^{\log n} = m \cdot n$ ways to select subsets of rows.
Thus, the number of monochromatic rectangles the algorithm has to consider in each iteration is at most $n \cdot m$ as claimed.
\end{proof}

It is interesting to find better approximation algorithms for other specific families of matrices, possibly when an upper bound on the rank is known in advance,
as was done by Blum~\cite{blum1994new} for the chromatic number.
For example, a result of Parnas and Shraibman~\cite{parnas2025study}, shows that for a matrix with real rank  $d \leq 4$, the binary and Boolean rank are at most $2d - 2$.
This gives  a trivial  algorithm which can approximate the binary and Boolean rank of such matrices by simply computing the real rank.

\subsection{Property testing}
\label{SubSec-property}

{\em Property testing} is another approach for relaxing the problem of computing the Boolean and binary rank.
The field of property testing originated in a paper of Rubinfeld and Sudan~\cite{rubinfeld1996robust} from 1996, where they
introduced the notion of testing functions, and a paper by Goldreich, Goldwasser, and Ron~\cite{GGR98} from 1998, where they
initiated the study of testing graph  properties.
%(see e.g.,~\cite{alon2003testing, alon2001regular, GGR98,  rubinfeld1996robust}).
Since then the field has become an active and fruitful area of research in theoretical computer science,
and many property testing algorithms were devised. For example, properties of graphs such as $k$-colorability, connectivity or bipartiteness,
properties of functions such as linearity, monotonicity and dictatorship, regular languages, clustering, properties of distributions and more
(see a book by Goldreich on property testing and references within~\cite{goldreich2017introduction}).

The basic idea of property testing is to design ultra-fast algorithms that provide a relaxation to classical decision algorithms.
A {\em decision algorithm} should determine if a given object has some pre-specified property, such as a graph being $k$-colorable, a function being linear or a matrix having rank at most $d$.
Traditionally, one would like the decision algorithm to always give a correct answer, that is, accept an object which has the given property and reject it otherwise.
However, this usually requires the algorithm to inspect the entire object, and thus, its running time is at least linear in the size of the input object.
When the object is very large this poses a problem, and in particular, it is unlikely to have a polynomial decision algorithm for $NP$-hard problems.

The idea proposed in~\cite{rubinfeld1996robust,GGR98} was to relax the decision algorithm as follows:
a {\em testing algorithm} should {\em accept} an object that has the given property studied, and {\em reject} with high constant probability an object that is {\em $\epsilon$-far}
from having the property, that is, at least an $\epsilon$-fraction of the object should be modified so that it has the property.
For example, in case of the vertex coloring problem, at least an $\epsilon$-fraction of the edges of the graph should be removed to make it $k$-colorable.
This relaxation allows testing algorithms to be ultra-efficient in many cases in comparison with regular decision algorithms.
Specifically, in many cases their query complexity is sub-linear and even independent of the input size, but depends on $\epsilon$.
In the context of testing the rank of a matrix, the requirement from a testing algorithm is as follows:

\begin{definition}
A {\em testing algorithm} for a given rank function is given query access to an $n \times m$ matrix $M$, an error parameter $\epsilon$ and a parameter $d$.
The algorithm  should accept with probability at least $2/3$ if $M$ has rank at most $d$, and should reject with probability at least $2/3$ if $M$ is $\epsilon$-far
from every matrix of rank at most $d$, that is, more than $(n \cdot m )/\epsilon$ of the entries of $M$ should be modified so that it has rank at most $d$.

The {\em query complexity} of the algorithm is the number of entries of $M$ which it inspects.
The algorithm is {\em non-adaptive} if it determines its queries based only on the input and the random coin tosses, independently of the answers provided to previous queries.
Otherwise, the algorithm is {\em adaptive}.
\end{definition}

Although the real rank can be computed in polynomial time, Krauthgamer and Sasson~\cite{krauthgamer2003property} showed
that there exists a non-adaptive property testing algorithm for the real rank whose query complexity is  $O(d^2 / \epsilon^2)$ and, thus, independent of the size of  $M$.
Li, Wang and Woodruff~\cite{Li} gave an adaptive algorithm for the real rank with a reduced query complexity of $O(d^2 / \epsilon)$.
Balcan, Woodruff and Zhang~\cite{BLWZ} gave a non-adaptive testing algorithm for the real rank with query complexity $\tilde{O}(d^2 / \epsilon)$.

Parnas, Ron and Shraibman~\cite{parnas2021property} provide a non-adaptive testing algorithm for the Boolean rank with query complexity $\tilde{O}(d^4/\epsilon^6)$.
For the binary rank they show a non-adaptive algorithm with query complexity $O(2^{2d} /\epsilon^2)$,  and an adaptive algorithm with query complexity $O(2^{2d} /\epsilon)$.
The query complexity of all three algorithms of~\cite{parnas2021property} is, thus, independent of the size of the matrix.
Bshouty~\cite{bshouty2023property} improved the result of~\cite{parnas2021property} for the binary rank by a factor of $\tilde{\Theta}(2^{d})$,
but the query complexity remains exponential in $d$. He also noted that a small adjustment to the analysis given by~\cite{parnas2021property}
for the Boolean rank results in an improved query complexity of $\tilde{\Theta}\left(d^4/ \epsilon^4\right)$.

The non-adaptive algorithms of~\cite{krauthgamer2003property} and~\cite{parnas2021property}
are a variant of the following simple algorithm, for different settings of the rank function and of the parameter $s$ in the algorithm.

\bigskip
\begin{center}
\fbox{
\begin{minipage}{5in}
\leftline{\bf Algorithm Test rank($M, d, s$)}
\centering
%\begin{algorithm}{{\sf(Test if the rank of $M$ is at most $d$)}}
%\label{alg:basic}
\begin{enumerate}
\item
Select uniformly, independently and at random $s$ entries $(i,j)$ of $M$.
\item
Let $S$ be the $s \times s$ sub-matrix of $M$  induced by these entries.
\item
If the rank of $S$ is at most $ d$ then accept. Otherwise, reject.
\end{enumerate}
\end{minipage}
}
\end{center}
\bigskip

This simple algorithm is similar in flavour to many property testing algorithms.
That is, the algorithm samples the object tested and checks if the sub-object induced by these samples has the desired property and if so, accepts.
For properties which are {\em hereditary} this algorithm always accepts objects having the property,
and the challenge is to prove that the algorithm rejects with probability at least $2/3$ every object which is $\epsilon$-far from having the property.
Usually one shows that if the object is $\epsilon$-far from having the property then it has many imperfections,
and thus, the sub-object defined by the sample will have such an imperfection with a high enough probability and the algorithm will catch this.
To prove this, the contrapositive is shown. That is, if the number of imperfections is bounded by some function which depends on the size of the object and $\epsilon$,
then it is possible to modify the object in at most an $\epsilon$-fraction of positions, and therefore, it is not $\epsilon$-far as claimed.

The analysis in~\cite{krauthgamer2003property} and~\cite{parnas2021property} considers the probabilistic process of choosing
the sub-matrix $S$ in Algorithm "Test rank" in a slightly different way, as done in many property testing algorithms.
The algorithm is viewed as if it chooses the sample iteratively.
Specifically, the algorithm starts with an empty sub-matrix, and in each iteration adds to the current sub-matrix a randomly chosen row and column of $M$,
until finally reaching a sub-matrix $S$ of size $s \times s$.
Then, it is shown that with  probability at least $2/3$, if $M$ is $\epsilon$-far from rank $d$, the final sub-matrix $S$ has rank strictly larger than $d$, and so the algorithm rejects.

%Specifically,  a row (column) is {\em new} in $M$ for $S$ if it is not in $S$.
Specifically, both results show that if $M$  is $\epsilon$-far from rank $d$ then in each iteration, with a high enough probability,
a row and a column are added to $M$ so that some progress is made.
Krauthgamer and Sasson~\cite{krauthgamer2003property} use the linearity of the real rank, and show that as long as the current sub-matrix has real rank less than $d$,
the algorithm has a high enough probability to add to the sub-matrix a row and/or column which increase
its rank by at least $1$, and thus, s = $O(d/\epsilon)$ iterations suffice for the algorithm to reject.
Parnas, Ron and Shraibman~\cite{parnas2021property} prove that with high enough probability,
 the number of distinct rows and/or columns of the sub-matrix increases by one in each iteration, and thus, after $s = O(2^d/\epsilon)$ iterations the algorithm rejects,
since a matrix with binary rank $d$ has at most $2^d$ distinct rows and at most $2^d$ distinct columns (see Section~\ref{Sec-kernel}).

We next describe how~\cite{krauthgamer2003property} and~\cite{parnas2021property} define
rows, columns and entries which are {\em useful} to the algorithm, in the sense that adding them to the sub-matrix results in the desired progress described above.
See Figure~\ref{fig:propertytesting} for an illustration of the following definitions and proofs.

\begin{figure}[htb!]
%\captionsetup{width=0.9\textwidth}
\begin{center}
%\begin{tabular}{cc}
$
A = \left(
  \begin{array}{ccc|c}
    1 & 0 & 0 & 1 \\
    0 & 1 & 0 & 2 \\
    0 & 0 & 1 & 3 \\\hline
    1 & 1 & 1 & {\bf 7} \\
  \end{array}
\right)
$
\;\;\;\;\;\;\;\;\;\;\;\;
$
B = \left(
  \begin{array}{ccc|c}
    1 & 0 & 0 & 1 \\
    0 & 1 & 0 & 0 \\
    0 & 0 & 1 & 0 \\\hline
    1 & 0 & 0 & {\bf 0} \\
  \end{array}
\right)
$
%\end{tabular}
\end{center}
\caption{On the left is a matrix $A$ with real rank $4$. The principal sub-matrix $S_A$ defined by the first $3$ rows and columns of $A$ has real rank $3$.
The last row and last column of $A$ are not useful for $S_A$.
But $A_{4,4}$ is a useful corner entry for $S_A$, since the rank of $A$ increases when augmenting $S_A$ with both the last row and column.
However, if we modify $A_{4,4}$ to a $6$, the modified matrix has real rank $3$.
\newline
On the right is a matrix $B$ with binary rank $4$. All rows and columns are distinct. The principal sub-matrix $S_B$ defined by the first $3$ rows and columns of $B$ has binary rank $3$.
The last row of $B$ is identical to the first row of $S_B$, and the last column of $B$ is identical to the first column of $S_B$.
Entry $B_{4,4}$ is a useful corner entry for $S_B$ since $B_{4,4} \neq B_{1,1}$.
If we modify $B_{4,4}$ to be equal to $B_{1,1} = 1$, then the modified matrix has binary rank $3$.}
\label{fig:propertytesting}	
\end{figure}

\begin{definition}[\cite{krauthgamer2003property}]
%A row (column) of $M$ is {\em augmenting} for a sub-matrix $S$ if it is not included in $S$.
A row (column) in $M$  is {\em useful} for $S$ if augmenting $S$ with it increases the real rank of $S$.
%An entry $M_{i,j}$ is {\em augmenting} if both row $i$ and column $j$ are augmenting and consistent with $S$.
An entry $M_{i,j}$ is a {\em corner entry} for $S$ if both row $i$ and column $j$ are not in $S$.
A corner entry $M_{i,j}$  is  {\em useful} for $S$ if row $i$ and column $j$ are not useful for $S$,
but augmenting $S$ with both row $i$ and column $j$ increases its real rank.
\end{definition}

\begin{lemma}[\cite{krauthgamer2003property}]
\label{lem-real-testing}
Let $M$ be a $0,1$ matrix of size $n \times m$ and let $S$ be a sub-matrix of $M$ such that $\Rreal(S) \leq d$.
Suppose that the following three conditions hold:
\begin{enumerate}
\item At most $\epsilon \cdot n/3$ rows in $M$ are useful for $S$.
\item At most $\epsilon \cdot m/3$ columns in $M$ are useful for $S$.
\item At most $\epsilon \cdot n\cdot m/3$ corner entries in $M$ are useful for $S$.
\end{enumerate}
Then $M$ is $\epsilon$-close to having real rank at most $d$.
\end{lemma}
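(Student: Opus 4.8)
The plan is to exhibit a single matrix $M'$ with $\Rreal(M') \le d$ that agrees with $M$ on all but at most $\epsilon n m$ entries; by definition this shows that $M$ is $\epsilon$-close to having real rank at most $d$. The construction rests on the elementary rank-completion identity: if $N$ is any matrix whose $I^*\times J^*$ submatrix equals a fixed invertible $r\times r$ matrix $S^*$, then $\Rreal(N)\le r$ if and only if $N_{ij}=N_{i,J^*}\,(S^*)^{-1}\,N_{I^*,j}$ for all $i,j$; we call this right-hand side the \emph{forced value} at $(i,j)$. So once we fix an invertible ``core'' inside $S$, a completion of rank $\le r$ is pinned down entry by entry.

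Concretely, let $I_S,J_S$ be the row- and column-index sets of $S$, put $r:=\Rreal(S)\le d$, and (assuming $r\ge 1$; the case $r=0$ makes $S$ and $M'$ identically zero and is trivial) choose $I^*\subseteq I_S$, $J^*\subseteq J_S$ with $|I^*|=|J^*|=r$ so that $S^*:=S[I^*,J^*]$ is invertible. Note that every row in $I_S$ and every column in $J_S$ is automatically not useful for $S$. I would define $M'$ as follows: set $M'_{i,J^*}:=M_{i,J^*}$ for each row $i$ not useful for $S$ and $M'_{i,J^*}:=0$ for each useful row; symmetrically set $M'_{I^*,j}:=M_{I^*,j}$ for each column $j$ not useful for $S$ and $M'_{I^*,j}:=0$ for each useful column; and fill in every remaining entry with its forced value. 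By construction $\Rreal(M')\le r\le d$.

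The heart of the argument is counting the entries where $M'$ differs from $M$. The key point is that ``row $i$ not useful for $S$'' means exactly that $M_{i,J_S}$ lies in the row span of $S[I_S,J_S]$; since the $r$ rows $S[I^*,J_S]$ already span that row space, the unique coefficient vector $\mu$ with $M_{i,J_S}=\mu\,S[I^*,J_S]$ satisfies $\mu=M_{i,J^*}(S^*)^{-1}=M'_{i,J^*}(S^*)^{-1}$. Hence for a non-useful row $i$ and any $j\in J_S$ the forced value is $\mu\,S[I^*,j]=M_{ij}$, so $M'$ changes no entry lying in a core column inside a non-useful row; symmetrically for core rows inside non-useful columns, and in particular $M'$ equals $S$ on $I_S\times J_S$. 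Next, if $(i,j)$ is a corner entry with $i\notin I_S$, $j\notin J_S$ and row $i$, column $j$ and the corner $(i,j)$ are all not useful, then adjoining row $i$ and column $j$ to $S$ does not raise its rank, i.e.\ $M$ restricted to $(I_S\cup\{i\})\times(J_S\cup\{j\})$ has rank $r$; combined with the previous observation, $M_{ij}$ equals its forced value and stays unchanged. Therefore the only entries $M'$ can alter lie in a useful row, a useful column, or a useful corner, and by the three hypotheses on $S$ their number is at most
\[
\frac{\epsilon n}{3}\cdot m \;+\; n\cdot\frac{\epsilon m}{3} \;+\; \frac{\epsilon n m}{3} \;=\; \epsilon n m .
\]
Thus $M$ is $\epsilon$-close to $M'$, which has real rank at most $d$, as required.

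The anticipated obstacle is precisely the linear algebra in the third paragraph: verifying that ``not useful'' forces the forced value to coincide with the actual value along the core rows and columns (and then, via the corner condition, along the good corners). This hinges on the fact that the coefficient vector expressing a non-useful row over $J_S$ is already determined by that row's $r$ entries in the basis columns $J^*$ — which is what lets us keep the entire submatrix $S$, all core columns, and all core rows untouched. Once that is in place, the three counting bounds are immediate from conditions (1)–(3), and the total comes out to exactly $\epsilon n m$.
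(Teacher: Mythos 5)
Your proof is correct, and the matrix $M'$ you build is in effect the same one the paper builds: all useful rows and columns become zero (your zeroing of the core entries $M'_{i,J^*}$, $M'_{I^*,j}$ propagates through the forced values to wipe out the whole row or column), each useful corner gets exactly the value that keeps the augmented rank at $r$, and the count of modified entries is the identical $\epsilon n/3\cdot m + n\cdot \epsilon m/3 + \epsilon nm/3$. Where you diverge is in how the bound $\Rreal(M')\le d$ is certified: the paper argues that every surviving or modified row/column is a linear combination of the rows/columns of $S$ and then invokes the augmentation property of the real rank (via its Lemma on preserved dependencies), whereas you fix an invertible $r\times r$ core $S^*$ of $S$ and define $M'$ outright by the factorization $M'=M'[\cdot,J^*]\,(S^*)^{-1}\,M'[I^*,\cdot]$, so the rank bound is immediate and the remaining work is the verification—which you carry out—that non-useful rows, non-useful columns, and non-useful corners retain their original values. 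This buys a self-contained argument that does not rely on the augmentation property, and it makes explicit a point the paper leaves implicit (that the repair of a useful corner really touches only that single entry); the paper's route is shorter because it outsources exactly this linear algebra to its earlier lemmas.
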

\begin{proof}
We show that if all the three conditions in the lemma hold, it is possible to modify at most $\epsilon \cdot n \cdot m$ entries in $M$
so that the resulting modified matrix $M'$ has real rank at most $d$.

First modify to zero all rows and columns which are useful for $S$.
Next, consider a corner entry $M_{i,j}$ which is useful for $S$.
By definition, column $i$ is not useful, and therefore, any dependence between the rows of $S$ is preserved when $S$ is extended with column $j$  (see Lemma~\ref{sumofrowaugmented}).
Furthermore, row $i$ is also not useful, and thus, it is some linear combination of the rows of $S$.
Therefore, we can change only  entry $M_{i,j}$, so that when $S$ is extended with both $i$ and $j$ its rank does not increase.

Hence, the total number of entries modified is at most $\epsilon \cdot n \cdot m$.
Any row or column which are not useful for $S$, were not modified, and are thus, some linear combination of the rows or columns of $S$,
and all other rows and columns were modified so that they are some linear combination of the rows of $S$.
Since the augmentation property holds for the real rank (see Section~\ref{Sec-base}), then augmenting $S$ with all these rows and columns does not change its rank,
and therefore, $\Rreal(M') = \Rreal(S) \leq d$.
\end{proof}

The proof of Lemma~\ref{lem-real-testing} relies on Lemma~\ref{sumofrowaugmented} and on the augmentation property which holds for the real rank, but does not
hold in general for the Boolean and binary rank (see Section~\ref{Sec-base}).
Therefore, the algorithm of~\cite{parnas2021property} for the binary rank uses the following definition for useful rows, columns and corner entries.
%See Figure~\ref{fig:propertytesting} for an illustration of these definitions and of the proof of Lemma~\ref{lem-real-testing} and Lemma~\ref{lem-binary-testing}.

\begin{definition}[\cite{parnas2021property}]
\label{def-prs}
A row (column) of $M$ is {\em useful} for $S$ if it is different from all rows (columns) in $S$.
A corner entry $M_{i,j}$ is  {\em useful} for $S$ if there exists a row $k$ and a column $\ell$  in $S$ which are identical to row $i$ and to column $j$, respectively,
such that $M_{i,j} \neq M_{k,\ell}$.
\end{definition}

\begin{lemma}[\cite{parnas2021property}]
\label{lem-binary-testing}
Let $M$ be a $0,1$ matrix of size $n \times m$ and let $S$ be a sub-matrix of $M$ such that $\Rbin(S) \leq d$.
Suppose that the three conditions mentioned in Lemma~\ref{lem-real-testing} hold,
where useful rows, columns and corner entries are as in Definition~\ref{def-prs}.
Then $M$ is $\epsilon$-close to having binary rank at most $d$.
\end{lemma}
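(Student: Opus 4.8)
The plan is to mirror the structure of the proof of Lemma~\ref{lem-real-testing}: under the three counting hypotheses I would exhibit an explicit modification of $M$ touching at most $\epsilon\cdot n\cdot m$ entries and producing a matrix $M'$ with $\Rbin(M')\le d$. Write $S=M[R_0,C_0]$, where $R_0$ and $C_0$ are the row and column index sets of $S$, so $\Rbin(S)\le d$. Since, in the sense of Definition~\ref{def-prs}, a row of $M$ is \emph{non-useful} for $S$ exactly when its restriction to the columns of $C_0$ equals that of some row indexed by $R_0$ (and rows indexed by $R_0$ are never useful, as each equals its own copy inside $S$), I would first fix once and for all a representative map $\rho$: for every non-useful row $i$ pick some $\rho(i)\in R_0$ whose restriction to $C_0$ agrees with that of row $i$, and set $\rho(i)=i$ for $i\in R_0$. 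Define $\gamma$ analogously on columns, with $\gamma(j)\in C_0$.

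Next I would perform the modifications in two steps. (i) Replace every useful row and every useful column of $M$ by all zeros. (ii) For every corner entry $M_{i,j}$ — that is, $i\notin R_0$, $j\notin C_0$ — for which both row $i$ and column $j$ are non-useful and $M_{i,j}\ne M_{\rho(i),\gamma(j)}$, reset $M_{i,j}$ to $M_{\rho(i),\gamma(j)}$. For the count: step (i) changes at most $(\epsilon n/3)\cdot m+(\epsilon m/3)\cdot n=2\epsilon nm/3$ entries by Conditions~1 and~2; and every entry altered in step~(ii) is a \emph{useful} corner entry in the sense of Definition~\ref{def-prs}, witnessed by $k=\rho(i)$ and $\ell=\gamma(j)$, so Condition~3 bounds the number of these by $\epsilon nm/3$. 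Hence at most $\epsilon nm$ entries are changed. One should also record that $S$ itself (the $R_0\times C_0$ submatrix) is untouched, and that an entry lying in a row of $R_0$ or a column of $C_0$ but outside $S$ is never changed: a non-useful column $j$ satisfies $M_{i,j}=M_{i,\gamma(j)}$ for all $i\in R_0$, and dually for rows, so step~(ii)'s rewriting rule is vacuous there.

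It remains to show $\Rbin(M')\le d$, and this is the one place the argument genuinely departs from the real case: the augmentation property, which the real‑rank proof uses to conclude, fails in general for the binary rank (see Section~\ref{Sec-base}), so I build the factorization by hand. Observe that in $M'$ every useful row and every useful column is all zero; moreover, for a non-useful row $i$, an entrywise comparison of the full rows $M'_{i,\cdot}$ and $M'_{\rho(i),\cdot}$ shows they coincide — both vanish on useful columns, and on a non-useful column $j$ both equal $M_{\rho(i),\gamma(j)}$ — and dually every non-useful column of $M'$ equals the $M'$-column indexed by $\gamma(j)\in C_0$. Therefore $M'$ is obtained from $S$ by first adjoining zero columns and columns duplicating columns of $S$ (giving the $R_0\times[m]$ submatrix of $M'$) and then adjoining zero rows and rows duplicating rows of that submatrix. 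Adjoining an all-zero row or column, or duplicating an existing row or column, never increases the binary rank: concretely, from an optimal binary decomposition $S=X\cdot Y$ one assembles a decomposition $M'=X'\cdot Y'$ with the same inner dimension $d$ by copying the appropriate rows of $X$ and columns of $Y$ and inserting zero rows and columns. Thus $\Rbin(M')\le\Rbin(S)\le d$, and $M$ is $\epsilon$-close to binary rank at most $d$.

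I expect the main point requiring care to be keeping steps~(i) and~(ii) mutually consistent: one must rewrite corner entries toward the \emph{fixed} representatives $\rho(i),\gamma(j)$ rather than toward arbitrary witnesses, so that the resulting rows and columns become exact duplicates (needed for the factorization argument) while still lying inside the set of useful corner entries counted by Condition~3; and one must double-check that the three counting bounds, allowing for the overlaps among useful rows, useful columns and useful corner entries, really do sum to at most $\epsilon nm$, and that no entry outside the cases listed above is disturbed.
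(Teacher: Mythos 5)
Your proposal is correct and follows essentially the same route as the paper: zero out the useful rows and columns, overwrite each useful corner entry with the value $M_{k,\ell}$ of its matching row/column pair in $S$, and bound the number of modifications by the three conditions. The only difference is that you also carry out the final step in full --- fixing representative maps $\rho,\gamma$ and showing $M'$ arises from $S$ by adjoining zero and duplicate rows/columns, hence $\Rbin(M')\le d$ --- a detail the paper defers to the cited reference~\cite{parnas2021property}, and your fixed-representative device correctly handles the consistency issue that an arbitrary choice of witnesses would leave open.
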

\begin{proof}
We show that if all three conditions hold, then it is possible to modify at most $\epsilon \cdot n \cdot m$ entries in $M$
so that the resulting matrix $M'$ has binary rank at most $d$.

First modify to zero all rows and columns which are useful in $M$ for $S$.
Next, consider a useful corner entry $M_{i,j}$. By definition, there exits a row $k$ and a column $\ell$  in $S$,
such that row $k$ is identical to row $i$, and column $\ell$ is identical to column $j$, and $M_{i,j} \neq M_{k,\ell}$.
Set $M'_{i,j} =  M_{k,\ell}$.

Hence, we modified at most $\epsilon \cdot n \cdot m$ entries, where all rows and columns which were not modified are not useful for $S$,
and, therefore, identical to some row or column in $S$.
Now it is possible to prove that a partition of $S$ into at most $d$ monochromatic rectangles induces a partition of the modified matrix $M'$ into at most $d$ rectangles.
We refer the interested reader to the full proof in~\cite{parnas2021property}.
\end{proof}

Using Lemma~\ref{lem-real-testing} and Lemma~\ref{lem-binary-testing} it is possible to prove that if $M$ is $\epsilon$-far from a matrix with real or binary rank $d$,
then in each iteration the algorithm has a high enough probability of adding a useful row or column or corner entry to the current sub-matrix so that progress is made as described previously.
That is, either the real rank increases or the number of distinct rows and/or columns increases, depending on the rank in question.
The following two theorems follow.

\begin{theorem}[\cite{krauthgamer2003property}]
Algorithm "Test rank" is a non-adaptive testing algorithm for the real rank for $s = O(d/\epsilon)$ and its query complexity is $O(d^2 /\epsilon^2)$.
\end{theorem}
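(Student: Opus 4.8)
The plan is to verify two things, perfect completeness and soundness with the claimed value of $s$, and then simply read off the query complexity. For \textbf{completeness}, observe that the real rank is monotone under taking sub-matrices: if $\Rreal(M)\le d$ then for every $s\times s$ sub-matrix $S$ of $M$ we have $\Rreal(S)\le \Rreal(M)\le d$, since the rows of $S$ are restrictions of rows of $M$. Hence Step 3 always accepts, and the algorithm accepts every matrix of real rank at most $d$ with probability $1$.

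For \textbf{soundness}, assume $M$ is $\epsilon$-far from having real rank at most $d$; I want the algorithm to reject with probability at least $2/3$. I would use the standard iterative-exposure view of the sampling process: reveal the $s$ sampled entries in $s$ rounds, where in round $t$ we expose $(i_t,j_t)$ and so adjoin row $i_t$ and column $j_t$ to the current sub-matrix $S_{t-1}$, obtaining $S_t$. The crux is the per-round progress claim: whenever $\Rreal(S_{t-1})\le d$, the rank strictly increases, $\Rreal(S_t)>\Rreal(S_{t-1})$, with probability at least $\epsilon/3$ over the uniform choice of $(i_t,j_t)$. This is exactly where Lemma~\ref{lem-real-testing} is applied in contrapositive form: since $M$ is $\epsilon$-far from real rank $d$ while $\Rreal(S_{t-1})\le d$, at least one of the three hypotheses of that lemma must fail for $S_{t-1}$, so either more than $\epsilon n/3$ rows are useful, or more than $\epsilon m/3$ columns are useful, or more than $\epsilon nm/3$ corner entries are useful for $S_{t-1}$. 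In the first case a uniformly random row is useful with probability exceeding $\epsilon/3$, and adjoining a useful row raises the rank; the second case is symmetric; in the third, the uniformly random entry $(i_t,j_t)$ is a useful corner entry with probability exceeding $\epsilon/3$, and by definition of a useful corner entry, adjoining both its row and column raises the rank of $S_{t-1}$. (One minor wrinkle: a useful corner entry requires its row and column to lie outside $S_{t-1}$; since $S_{t-1}$ has fewer than $s$ rows and columns this costs at most an additive $2s/n$ in the probability, negligible once $n$ exceeds, say, $6s/\epsilon$, and for smaller $n$ or $m$ the whole matrix fits within the query budget and can be read exactly.)

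Given the per-round claim, the number of rounds in which the rank strictly increases, counted up to the first time the rank exceeds $d$, stochastically dominates a $\mathrm{Binomial}(s,\epsilon/3)$ variable; equivalently, the waiting time to accumulate $d+1$ rank increases is a sum of $d+1$ independent geometric-type variables of mean at most $3/\epsilon$. Choosing $s=Cd/\epsilon$ for a large enough absolute constant $C$, a Chernoff (or Markov) bound shows that with probability at least $2/3$ we collect at least $d+1$ such increases within the $s$ rounds, so $\Rreal(S)=\Rreal(S_s)\ge d+1>d$ and Step 3 rejects; this gives soundness with $s=O(d/\epsilon)$. Finally, the algorithm inspects the entries of an $s\times s$ sub-matrix, hence at most $s^2=O(d^2/\epsilon^2)$ entries, and the queried positions depend only on the random coins (the sampled indices), not on the answers received, so the tester is non-adaptive.

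The step I expect to be the main obstacle is the per-round progress claim: converting the combinatorial dichotomy of Lemma~\ref{lem-real-testing} into a genuine $\epsilon/3$ lower bound on the probability that a single fresh row/column pair increases the rank, while correctly handling the dependence across rounds (the success probability is conditioned on the random $S_{t-1}$) and the boundary cases of small $n$ or $m$. The completeness direction and the surrounding probabilistic bookkeeping are routine.
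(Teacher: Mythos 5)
Your proposal is correct and follows essentially the same route the paper attributes to Krauthgamer and Sasson: perfect completeness from monotonicity of the real rank under sub-matrices, soundness by exposing the sample iteratively and applying Lemma~\ref{lem-real-testing} in contrapositive to get a per-round rank increase with probability at least $\epsilon/3$, and then a concentration argument showing $s=O(d/\epsilon)$ rounds yield $d+1$ increases with probability $2/3$, giving query complexity $s^2=O(d^2/\epsilon^2)$. The only remark is that your ``wrinkle'' is unnecessary: useful corner entries are by definition entries whose row and column lie outside the current sub-matrix, so the bound of more than $\epsilon nm/3$ such entries already gives the $\epsilon/3$ probability without any collision correction.
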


\begin{theorem}[\cite{parnas2021property}]
Algorithm "Test rank" is a non-adaptive testing algorithm for the binary rank for $s = O(2^d/\epsilon)$ and its query complexity is $O(2^{2d} /\epsilon^2)$.
\end{theorem}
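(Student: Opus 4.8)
The plan is to follow the template used for the real-rank analysis (the theorem of Krauthgamer and Sasson), with the linear-algebraic progress measure replaced by the number of distinct rows and columns, and with Lemma~\ref{lem-binary-testing} invoked in place of Lemma~\ref{lem-real-testing}. First, for completeness, I would note that the binary rank is hereditary under taking sub-matrices: if $M = X \cdot Y$ is an optimal binary decomposition, then restricting $X$ and $Y$ to the rows and columns appearing in $S$ yields a binary decomposition of $S$ of size at most $d$. Hence $\Rbin(S) \le d$ whenever $\Rbin(M) \le d$, so Algorithm "Test rank" always accepts matrices of binary rank at most $d$, and it remains only to prove soundness: if $M$ is $\epsilon$-far from binary rank $d$, then with probability at least $2/3$ the sampled sub-matrix $S$ satisfies $\Rbin(S) > d$, causing the algorithm to reject.

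For soundness I would adopt the iterative view of the sampling process described before Definition~\ref{def-prs}: $S$ begins empty and, in each of $s = O(2^{d}/\epsilon)$ iterations, a uniformly random row and a uniformly random column of $M$ are added to $S$, so that the final $S$ has at most $s$ rows and at most $s$ columns and the query complexity is $s^2 = O(2^{2d}/\epsilon^2)$. Let $\phi(S)$ be the number of distinct rows of $S$ plus the number of distinct columns of $S$; then $\phi$ is non-decreasing along the process, and by Lemma~\ref{lem:binary_size of matrix}, whenever $\Rbin(S) \le d$ we have $\phi(S) \le 2^{d+1}$. The key step is the per-iteration progress claim: in any iteration in which $\Rbin(S) \le d$, conditioned on the history, $\phi(S)$ strictly increases with probability at least $c\epsilon$ for an absolute constant $c$. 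Here Lemma~\ref{lem-binary-testing} is the lever: since $M$ is $\epsilon$-far and $\Rbin(S) \le d$, at least one of its three hypotheses must fail, so either $\Omega(\epsilon n)$ rows of $M$ are useful for $S$ (a random row is then distinct from all rows of $S$ with probability $\Omega(\epsilon)$, and appending it strictly raises the count of distinct rows, since a row differing from all rows of $S$ on the old columns still differs after one more column is appended), or symmetrically $\Omega(\epsilon m)$ columns are useful, or $\Omega(\epsilon n m)$ corner entries are useful, in which case the random pair $(i,j)$ is a useful corner entry with probability $\Omega(\epsilon)$, and then the appended column $j$ separates row $i$ from the class of $S$-rows it previously matched, because $M_{i,j} \neq M_{k,\ell} = M_{k,j}$, again strictly increasing the number of distinct rows.

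Finally I would turn this per-iteration progress into the overall success bound. Let $T$ be the first iteration at which $\Rbin(S) > d$; on the event $\{T > s\}$ every iteration has $\Rbin(S) \le d$, so the total number of iterations in which $\phi$ increases is at most $2^{d+1}$. Coupling the process with i.i.d.\ Bernoulli$(c\epsilon)$ variables $B_1,\dots,B_s$ (using the conditional lower bound above plus fresh independent randomness), on $\{T > s\}$ we obtain $\sum_{t=1}^{s} B_t \le 2^{d+1}$. Choosing $s = \Theta(2^{d}/\epsilon)$ so that $\mathbb{E}\!\left[\sum_{t=1}^{s} B_t\right] \ge 6\cdot 2^{d+1}$, a Chernoff bound gives $\Pr\!\left[\sum_{t=1}^{s} B_t \le 2^{d+1}\right] \le 1/3$, hence $\Pr[T \le s] \ge 2/3$ and the algorithm rejects with probability at least $2/3$. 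The main obstacle is the per-iteration progress claim — in particular the useful-corner-entry case, where one must argue cleanly that a random entry is a useful corner entry for the current $S$ with probability $\Omega(\epsilon)$ and that adding its row and column increases the number of distinct rows, all while respecting the dependence between iterations in the concentration step; the full treatment of these points is carried out in~\cite{parnas2021property}.
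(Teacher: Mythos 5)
Your proposal is correct and follows essentially the same route as the paper: completeness via hereditariness, and soundness via the iterative sampling view, the contrapositive of Lemma~\ref{lem-binary-testing} to get a useful row, column, or corner entry with probability $\Omega(\epsilon)$ per iteration, and the bound of Lemma~\ref{lem:binary_size of matrix} on the number of distinct rows and columns to cap the number of progress steps at $O(2^d)$, yielding $s = O(2^d/\epsilon)$. The paper itself only sketches this argument and defers the details (including the corner-entry separation and the concentration step you flag) to~\cite{parnas2021property}, so your added detail is a faithful filling-in rather than a different approach.
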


Bshouty~\cite{bshouty2023property} gives an adaptive algorithm with a reduced query complexity for the binary rank by using
Lemma~\ref{lem-product-rows-cols}, which provides a tighter bound of $(d+1)2^d$ on the product of the number of distinct rows and columns of a $0,1$ matrix with binary rank $d$.
The basic idea behind his algorithm is to maintain a sub-matrix that is not necessarily square, as is the case in the algorithms of~\cite{krauthgamer2003property}
and~\cite{parnas2021property}, and furthermore, the algorithm makes sure that this sub-matrix contains only distinct rows and columns.

We conclude this subsection by noting that Lemma~\ref{lem-binary-fullrank} in Section~\ref{Subsec-fullrank}
presents an example of a matrix $M$ with binary rank larger than $d$, such that every sub-matrix of $M$ has binary rank at most $d$.
This implies that any algorithm testing for binary rank at most $d$ which simply samples a small sub-matrix of $M$ will accept $M$.
But, this does not imply that there is no efficient algorithm for testing the binary rank,
since this matrix $M$ is  {\em close} to having binary rank at most $d$, that is,
only a small fraction of the entries of $M$ have to be modified so that it has rank at most $d$.
Hence, in this case the testing algorithm is allowed by definition to accept $M$.

The question is if there exist small witnesses which can be sampled efficiently in a matrix that is {\em $\epsilon$-far} from having  binary rank $d$.
Of course, it is possible that there is no efficient algorithm for the binary rank, that is, there is no  algorithm whose query complexity is at most polynomial in $d$ and $\epsilon$.
Thus, it is interesting to try and find lower bounds for the query complexity of testing the binary rank.
See~\cite{goldreich2017introduction} for the approaches used previously in the field of property testing for proving lower bounds.
%In this case, any one of the approaches used previously in property testing may be found useful
%(see~\cite{goldreich2017introduction} and  Blais, Brody and Matulef~\cite{blais2012property} who showed %how to derive lower bounds for property testing from lower bounds in communication complexity).

\subsection{Boolean matrix factorization}
\label{subsec-BMF}

Although computing the Boolean and binary rank is $NP$- hard,
many applications require to find an approximate factorization of a $0,1$ matrix $M$ into a product of low rank $0,1$ matrices.
This problem is known as the {\em Boolean matrix factorization} problem, sometimes denoted by BMF.
This topic is mentioned here only briefly as it is outside the scope of this survey, and we refer the interested reader to a survey by
Miettinen and Neuman~\cite{miettinen2021recent} for a description of the various applications and algorithms known for the BMF problem.

Formally, given a $0,1$ matrix $M$ of size $n \times m$ and an integer parameter $k$,
the goal is to find $0,1$ matrices $A,B$, where $A$ is of size $n \times k$ and $B$ is of size $k \times m$,
such that the product $A \cdot B$ approximates $M$ as well as possible. Note that the parameter $k$ does not have to be the rank of the matrix $M$.
Usually the goal is to minimize the Forbenius norm defined by
 $$
 \| M- A\cdot B\|_{F}^2 = \sum_{i=1}^n \sum_{j = 1}^m | M_{i,j} - (A\cdot B)_{i,j}|^2
 $$
where the summations in the definition of the norm are over the real numbers.
As to the product $A\cdot B$, several variations were considered in the literature, and these variations influence the complexity of the resulting approximation algorithms.
The main variations researched so far are computing $A\cdot B$ using the Boolean operations (i.e. $1+1 = 1$),
or over the integers (i.e. $1 + 1 = 2$), and in some cases over the binary field $GF(2)$ (that is, $1 + 1 = 0$).

Kumar, Panigrahy, Rahimi, and Woodruff~\cite{kumar2019faster} give  a constat factor approximation algorithm for the BMF problem with operations
over the integers and running time  $2^{O(k^2\log k)}poly(m \cdot n)$.
A  result by Velingker,  V\"{o}tsch, Woodruff, and  Zhou~\cite{velingker2023fast} from 2023 gives a $(1 + \epsilon)$-approximation for the problem over the integers
in time $2^{\tilde{O}(k^2/\epsilon^4)}poly(n,m)$, for any $\epsilon > 0$.
In both results of~\cite{kumar2019faster,velingker2023fast}  the Forbenius norm is used, and both papers give also algorithms for the case of operations over $GF(2)$.
 %$(1 + \epsilon)$-approximation over $GF(2)$ in time $2^{poly(k/\epsilon)}poly(n,m)$.
These algorithms are of course not practical unless the parameter $k$ is very small.
On the other hand, there are other more practical algorithms, but these algorithms do not have a theoretical running time guarantee.
Such algorithms are based on combinatorial optimization and use various heuristics and iterative update approaches to overcome the difficulty of the problem
(see~\cite{miettinen2021recent}).

\section{Conclusion and Further Directions of Research}
\label{sec-discussion}

In this survey we presented a comprehensive overview of the known results concerning the binary and Boolean rank, from both a mathematical and a computational perspective,
with particular emphasis on their relationship to the real rank.
These rank functions were studied over the years by both the mathematical community, in fields such as algebra, combinatorics and graph theory,
and by the computer science community, mainly in the context of communication complexity, but also from an algorithmic perspective.
This wide applicability gave rise to the development of a variety of tools, techniques and algorithms,
and  significant progress was made in the understanding of these rank functions, although many open problems remain.

Together, the mathematical, computational and algorithmic results presented in the survey outline the current theoretical knowledge in this area and suggest directions for further research.
We summarize the main open problems presented throughout the survey, as well as suggest further directions of research.
The first two open problems are related to the mathematical properties of these rank functions as discussed in Section~\ref{Sec-properties},
and specifically relate to the binary rank.

\begin{open}
\label{SmallWitness}
Find an example of a $0,1$ matrix $M$ of size $n \times n$ with binary rank $d < n$, such that every $d \times d$ submatrix has a significantly smaller binary rank.
\end{open}

\begin{open}
Are there $0,1$ matrices $A,B$ for which $\Rbin(A \otimes B) < \Rbin(A)\cdot \Rbin(B)$?
\end{open}

Another issue discussed is the maximal gap possible between the real, binary and Boolean rank for various families of matrices.
We saw that the Boolean rank can be exponentially smaller than the real and the binary rank and that this is the maximal gap possible.
On the other hand, we saw an example of a family of matrices with real rank at most $d$ and Boolean and binary rank at least $d^{\widetilde{\Omega}(\log d)}$.
Is this the optimal gap, or is there a family of matrices with a larger gap between the real and the binary rank?

\begin{open}
Is there a family of $0,1$ matrices with a larger gap between the real and the binary rank?
\end{open}

Some of the results presented in this survey employed the powerful lifting technique to compare the rank of a matrix and its complement, and using
this technique an almost optimal gap was proved.
Are there simpler combinatorial or algebraic proofs for these results that will shed light on the structure of the matrices which achieve this gap?
\begin{open}
Give alternative combinatorial or algebraic proofs to Theorems~\ref{theo-complement} and Theorem~\ref{theo-real-acomplement}.
\end{open}

Finally, as the Boolean and binary rank have also many applications, we suggest the following  algorithmic open problems.

\begin{open}
Find better approximation algorithms for the binary and Boolean rank, perhaps for specific families of matrices, as discussed in Section~\ref{SubSec-approx}.
\end{open}

\begin{open}
Find a property testing algorithm for the binary rank with polynomial query complexity in the rank $d$ and the error parameter $\epsilon$, or prove that no such algorithm exists.
\end{open}

We conclude this survey by mentioning again that although it focuses on the real, binary and Boolean rank,
it is interesting to study the relationship of other rank functions in comparison to the functions surveyed here,
as well as use the results and techniques presented here to study other rank functions.
For example, the  non-negative rank, $\Rnon(M)$, mentioned in the introduction is also an important rank function which was studied extensively.

As the Boolean and binary rank, the non-negative rank  is also $NP$-hard (see Vavasis~\cite{vavasis2010complexity} and Shitov~\cite{shitov2017nonnegative}).
By definition, its relation with the binary and real rank is:
$$\Rreal(M) \leq \Rnon(M) \leq \Rbin(M).$$
Watson~\cite{watson2016nonnegative} showed that the non-negative rank and the binary rank are equal when they are at most $3$,
and gave an example of a matrix $M$ of size $5 \times 6$ for which $\Rnon(M) = 4$ and  $\Rbin(M) = 5$.
For real rank at most $2$, the non-negative rank is always equal to the real rank.
Shitov~\cite{shitov2014upper} proved that if $M$ is a non-negative matrix of size $n \times n$, $n > 6$,
and $\Rreal(M) = 3$, then $\Rnon(M) \leq 6n/7$, and showed that
there exist non-negative matrices of size $n \times n$ with real rank $3$ and non-negative rank $n$, for $n = 3,4,5,6$.

As the binary and Boolean rank, the non-negative rank does not have the full-rank sub-matrix property (see Section~\ref{Subsec-fullrank}, Theorem~\ref{theo-moitra}),
and it was also studied extensively in the context of communication complexity (see for example,~\cite{Yannakakis91,kol2014approximate, faenza2015extended}).

From an algorithmic point of view, the non-negative matrix factorization problem (NMF) is an
important practical problem, as the Boolean matrix factorization problem (BMF) mentioned in Section~\ref{subsec-BMF},
and it was studied extensively in this context.
See a survey by Wang and Zhang~\cite{wang2012nonnegative} and a book by Gillis~\cite{gillis2020nonnegative}.

Together, these results and the results presented in this survey, illustrate the depth of the connections among different rank functions,
the various techniques used from both mathematics and computer science, and of course, the many open questions which remain unsolved.
Further exploration of these connections and the relationship among rank functions remains a rich and challenging direction for future research, and
promises to yield new insights across areas of mathematics and theoretical computer science, as we encountered in this survey.

\section*{Acknowledgements}
I would like to thank Ishay Haviv for telling me about the approximation algorithm for the binary rank described in Section~\ref{SubSec-approx},
and thank Dana Ron for helpful comments on Section~\ref{SubSec-property} which describes property testing.

The latex code of some of the figures was written with the assistance of ChatGpt, and then the code was corrected by me when needed.

%\pagestyle{empty}

%%%%%%%%%%%%%%%%%%%%%%%%%%%%%%%%%%%%%%%%%%%%%%%%%%%%%%%%%%%%%%%%%%%%%%%%%%%%%%%%%%%%%%%%%%%%%%%%%%%%%%%%%%%%%%%%
%%%%%%%%%%%%%%%%%%%%%%%%%%%%%%%%%%%%%%%%%%%%%%%%%%%%%%%%%%%%%%%%%%%%%%%%%%%%%%%%%%%%%%%%%%%%%%%%%%%%%%%%%%%%%%%%
% \noindent
% {\large\bf 3.D}
% \setlength{\textheight}{10.3in}
% \renewcommand{\baselinestretch}{0.7}

%\bibliographystyle{abbrv}

\bibliographystyle{plain}
\bibliography{isf2024}

\end{document}